\let\oldtocsection=\tocsection
\let\oldtocsubsection=\tocsubsection
\renewcommand{\tocsection}[2]{%
    \bfseries\oldtocsection{#1}{#2}%
}
\renewcommand{\tocsubsection}[2]{%
  \hspace{1.5em}%
  \oldtocsubsection{#1}{#2}
}
\renewcommand{\section}{%
    \@startsection{section}%
    {1}% level
    {0em}% indent
    {1.5cm \@plus 0.1ex \@minus -0.05ex}% Beforeskip
    {0.75cm \@plus 0.2em}% Afterskip
    {\centering \Large \scshape}% Style
}
\renewcommand{\subsection}{%
    \@startsection{subsection}%
    {2}% level
    {0em}% indent
    {0.75cm \@plus 0.1ex \@minus -0.05ex}% Beforeskip
    {0.25cm \@plus 0.2em}% Afterskip
    {\bf\large}% Style
}
\renewcommand{\subsubsection}{%
  \@startsection{subsubsection}%
    {3}% level
    {0em}% indent
    {0.375cm \@plus 0.1ex \@minus -0.05ex}% Beforeskip
    {-0.25cm \@plus 0.2em}% Afterskip
    {\normalfont\normalsize\bfseries}% Style
}
\renewcommand{\paragraph}[1]{%
  \par% ensure starting on a new paragraph
  \addvspace{\medskipamount}% some vertical space
  \noindent\textit{#1\@addpunct{.}}\quad\ignorespaces
}
\theoremstyle{definition}
\newtheorem{defn}{Definition}
\newtheorem{assumption}{Assumption}
\theoremstyle{plain}
\newtheorem{note}{Remark}
\newtheorem{theo}{Theorem}[section]
\newtheorem{lemma}[theo]{Lemma}
\newtheorem{prop}[theo]{Proposition}
\newtheorem{cor}[theo]{Corollary}
\numberwithin{equation}{section}
\numberwithin{defn}{section}
\numberwithin{note}{section}
\DeclareMathOperator{\var}{\mathbb{V}ar}
\let\oldker\ker
\let\oldforall\forall
\let\oldexists\exists
\renewcommand\widehat[1]{%
\savestack{\tmpbox}{\stretchto{%
    \scaleto{%
        \scalerel*[\widthof{\ensuremath{#1}}]{\kern.1pt\mathchar"0362\kern.1pt}%
        {\rule{0ex}{\textheight}}%WIDTH-LIMITED CIRCUMFLEX
    }{\textheight}% 
}{2.4ex}}%
\stackon[-6.9pt]{#1}{\tmpbox}%
}
\NewDocumentCommand \fromToParser {m m m}{%
    \IfValueTF{#3}{%
        \fromTo{#1}{#2}[#3]%
    }{%
        \fromTo{#1}{#2}%
    }%
}
\NewDocumentCommand \fromTo {m m o}{%
    #1\! \IfValueTF{#3}{%
            \to[#3]
        }{%
            \to[\!\quad\!]
        } \!#2%
}
\newcommand{\integrand}[2]{\!#2\; #1}
\NewDocumentCommand \integralLimits {m m}{%
    _{#1}\IfValueT{#2}{^{\,#2}\!}%
}
\newcommand{\defineSymbol}[2]{#1 \vcentcolon = #2}
\newcommand{\enifedSymbol}[2]{#1 = \vcentcolon #2}
\RenewDocumentCommand \newline {o}{%
\hfill\\[\IfValueTF{#1}{#1}{0} pt]
}
\newcommand{\sns}{\mspace{-1.5mu}}
\newcommand{\sps}{\mspace{1.5mu}}
\newcommand{\nquad}{\mspace{-18mu}}
\newcommand{\nqquad}{\mspace{-36mu}}
\providecommand{\hide}[1]{}
\providecommand{\say}[1]{``#1''}
\ProvideDocumentCommand \comment {o m m}{\hl{#2}\footnote{\IfValueT{#1}{#1:} \textcolor{red}{#3.}}}
\renewcommand{\Im}{\mathrm{Im}}
\renewcommand{\Re}{\mathrm{Re}}
\newcommand{\ie}{\emph{i.e.}}
\newcommand{\eg}{\emph{e.g.}}
\newcommand{\cfr}{\emph{cf.}}
\newcommand{\etal}{\emph{et al.}}
\newcommand{\Id}{\mathds{1}}
\renewcommand{\forall}{\oldforall\,}
\RenewDocumentCommand \exists {s}{%
    \IfBooleanTF{#1}{\oldexists!}{\oldexists} \;%
}
\ProvideDocumentCommand \define {s >{\SplitArgument{1}{;}} m}{%
    \IfBooleanTF{#1}{\enifedSymbol #2}{\defineSymbol #2}%
}
\RenewDocumentCommand \to {o}{%
    \IfValueTF{#1}{%
        \xrightarrow[\,#1\,]{\;}%
    }{%
        \,\rightarrow\,%
    }%
}
\NewDocumentCommand \maps {m >{\SplitArgument{2}{;}} m}{%
    #1\!: \fromToParser #2%
}
\ProvideDocumentCommand \conjugate {s m}{%
    \IfBooleanTF{#1}{%
        \overline{#2}%
    }{%
        \bar{#2}%
    }%
}
\NewDocumentCommand \abs {s m}{%
    \IfBooleanTF{#1}{\left\lvert#2\right\rvert}{\lvert#2\rvert}%
}
\renewcommand*{\vec}[1]{\boldsymbol{#1}}
\NewDocumentCommand \ball {s m m}{%
    \IfBooleanTF{#1}{%
        \mathcal{B}^{\sps  c}_{#2}(#3)%
    }{%
        \mathcal{B}_{#2}(#3)%
    }%
}
\NewDocumentCommand \Char { m o }{% Characteristic function 
    \Id_{#1}
%   \mathbf{\chi}_{{}_{#2}}
    \IfValueT{#2}{( #2 )}
}
\NewDocumentCommand \oSmall {s o m}{%
    o\IfValueT{#2}{_{#2}}\IfBooleanTF{#1}{%
        \!\left(#3\right)\!%
    }{(#3)}
}
\NewDocumentCommand \oBig {s o m}{%
    \mathcal{O}\IfValueT{#2}{_{#2}}\IfBooleanTF{#1}{%
        \!\left(#3\right)\!%
    }{(#3)}
}
\NewDocumentCommand \integrate { >{\SplitArgument{1}{;}} o >{\SplitArgument{1}{;}} m}{%
    \int \IfValueT{#1}{\integralLimits #1 \!} \integrand #2%
}
\NewDocumentCommand \hilbert{s}{%
    \IfBooleanTF{#1}{\mathfrak{H}}{\mathscr{H}}%
}
\NewDocumentCommand \fock{o}{%
    \mathcal{F}\IfValueT{#1}{(#1)}%
}
\NewDocumentCommand \fockS{s o}{%
    \mathcal{F}\IfBooleanTF{#1}{_+}{_{\mathrm{s}}}\IfValueT{#2}{(#2)}%
}
\NewDocumentCommand \fockA{s o}{%
    \mathcal{F}\IfBooleanTF{#1}{_-}{_{\mathrm{a}}}\IfValueT{#2}{(#2)}%
}
\NewDocumentCommand \X{s}{%
    \IfBooleanTF{#1}{\mathcal{X}}{\mathfrak{X}}%
}
\NewDocumentCommand \scalar { m m o }{%
    \langle#1,\,#2\rangle\IfValueT{#3}{_{#3}}%
}
\NewDocumentCommand \norm { s m o }{%
    \IfBooleanTF{#1}{\left\lVert#2\right\rVert}{\lVert#2\rVert}\IfValueT{#3}{_{#3}}%
}
\NewDocumentCommand \normConverge {s o m m o}{
    \IfBooleanTF{#1}{%
        \IfValueTF{#5}{%
            \fromTo{%
                \IfValueTF{#2}{%
                    \norm*{#3 - #4}[#2]\!%
                }{%
                    \norm*{#3 - #4}%
                }%
            }{\,0}[#5]
        }{%
            \fromTo{%
                \IfValueTF{#2}{%
                    \norm*{#3 - #4}[#2]\!%
                }{%
                    \norm*{#3 - #4}%
                }%
            }{\,0}%
        }
    }{%
        \IfValueTF{#5}{%
            \fromTo{%
                \IfValueTF{#2}{%
                    \norm{#3 - #4}[#2]\!%
                }{%
                    \norm{#3 - #4}%
                }%
            }{\,0}[#5]
        }{%
            \fromTo{%
                \IfValueTF{#2}{%
                    \norm{#3 - #4}[#2]\!%
                }{%
                    \norm{#3 - #4}%
                }%
            }{\,0}%
        }
    }
}
\NewDocumentCommand \weakConverge {m m o}{%
    #1 \!\IfValueTF{#3}{%
            \xrightharpoonup[#3]{\quad}
        }{%
            \xrightharpoonup{\quad}%
        } \sns  #2
%   #1 \!\IfValueTF{#3}{\xrightarrow[#3]{w}}{\xrightarrow{\;w\;}} \!#2
}
\newcommand*{\dom}[1]{\mathfrak{D}(#1)}
\newcommand*{\fdom}[1]{\mathfrak{Q}(#1)}
\renewcommand*{\ker}[1]{\oldker(#1)}
\NewDocumentCommand \tr {s o}{%
    \mathop{\mathrm{Tr}}\IfValueT{#2}{%
        \IfBooleanTF{#1}{%
            \sns \left(#2\right)%
        }{%
        \sps (#2)%
        }%
    }%
}
\newcommand*{\adj}[1]{{#1}^{\ast}}
\newcommand*{\linear}[1]{\mathscr{L}\!\left(#1\right)}
\newcommand*{\bounded}[1]{\mathscr{B}\mspace{-1mu}\left(#1\right)}
\NewDocumentCommand \resolvent {m o}{%
    \mathcal{R}_{#1}\IfValueT{#2}{(#2)}%
}
\NewDocumentCommand \spectrum {o m}{%
    \IfValueTF{#1}{%
        \sigma_{\mathrm{#1}}(#2)%
    }{%
        \sigma(#2)%
    }%
}
\NewDocumentCommand \Lp {s m o} { L^{\sns  #2}\IfBooleanT{#1}{_{\,\mathrm{loc}}}\IfValueT{#3}{(#3)} }
\NewDocumentCommand \LpS {s m o}{
\IfBooleanTF{#1}{\Lp{#2}_{+}}{\Lp{#2}_{\mathrm{s}}}\IfValueT{#3}{(#3)}
}
\NewDocumentCommand \LpA {s m o}{
\IfBooleanTF{#1}{\Lp{#2}_{-}}{\Lp{#2}_{\mathrm{a}}}\IfValueT{#3}{(#3)}
}
\NewDocumentCommand \LpSA {m o}{
\Lp{#1}_{\pm}\IfValueT{#2}{(#2)}
}
\NewDocumentCommand \schatten {m o} {
    \mathfrak{L}^{#1}\IfValueT{#2}{(#2)}
}
\NewDocumentCommand \FT { s m o }{
    \IfBooleanTF{#1}{ \IfValueTF{#3}{ (\mathcal{F} \sps   #2)\sns (#3) }{%
    \mathcal{F} \sps   #2 } }{ \hat{#2}\IfValueT{#3}{(#3)} }
}
\newcommand{\C}{\mathbb{C}}
\newcommand{\R}{\mathbb{R}}
\newcommand{\N}{\mathbb{N}}
\newcommand{\No}{\mathbb{N}_0}
\NewDocumentCommand \Z {o}{%
    \IfValueTF{#1}{%
        \mathbb{Z}_{\,#1}%
    }{%
        \mathbb{Z}%
    }%
}
\title[Effective Dynamics of Weakly Interacting Bosons at High Density]{\large Effective Dynamics for Weakly Interacting Bosons\\ in an Iterated High-Density Thermodynamic Limit}
\author[D. Ferretti and K. Koskinen]{Daniele Ferretti \and Kalle Koskinen}
\address{{\small \textsf{daniele.ferretti@gssi.it} \hspace{1cm} \textsf{kalle.koskinen@gssi.it}}\\[5pt]
\emph{Gran Sasso Science Institute}, Via Michele Iacobucci, 2 - 67100 (AQ), Italy}
\thanks{Corresponding author: D. Ferretti; ORCID: \href{https://orcid.org/0000-0003-4088-6574}{link to publications}.\newline
The authors warmly thank Prof. Serena Cenatiempo for valuable discussions and helpful comments.
The authors are also grateful for the financial support provided by the European Research Council through the ERC Stg MaTCh, grant agreement no. 101117299.\newline
D. Ferretti acknowledges the support of the GNFM Gruppo Nazionale per la Fisica Matematica - INdAM}
\begin{document}

\begin{abstract}
    We study the time evolution of weakly interacting Bose gases on a three-dimensional torus of arbitrary volume.
    The coupling constant is supposed to be inversely proportional to the density, which is considered to be large and independent of the particle number.
    We take into account a class of initial states exhibiting \emph{quasi-complete} Bose-Einstein condensation.
    For each fixed time in a finite interval, we prove the convergence of the one-particle reduced density matrix towards the projection onto the normalised order parameter describing the condensate -- evolving according to the Hartree equation -- in the iterated limit where the volume (and therefore the particle number) and, subsequently, the density go to infinity.
    The rate of convergence depends only on the density and on the decay of both the expected number of particles and the energy of the initial \emph{quasi-vacuum} state.
\newline[10]
\begin{footnotesize}
\emph{Keywords: Many-Body Hamiltonians, Effective Bosonic Dynamics, Mean-Field Regime, High-Density Limit.}

\noindent \emph{MSC 2020:
    %35B25; %   Partial differential equations - Singular perturbations
    %35P15;%    Partial differential equations - Estimation of eigenvalues, upper and lower bounds
    35Q40; %  PDEs in connection with quantum mechanics
    46N50; %    Functional analysis - Applications in quantum physics
    47A99; %    Operator theory - None of the above, but in this section
    %47B25; %   Symmetric and self-adjoint operators (unbounded)
    %70F07; %    Mechanics of particles and systems - Three-body problems
    81Q05; %    Quantum theory - Closed and approximate solutions to the Schrödinger, Dirac, Klein-Gordon and other equations of quantum mechanics
    %81Q10; %    Self-adjoint operator theory in quantum theory, including spectral analysis
    %81Q15; %    Quantum theory - Perturbation theories for operators and differential equations
    %81Q20; %     Semiclassical techniques, including WKB and Maslov methods applied to problems in quantum theory.
    81V70; %     Quantum theory - Many-body theory; quantum Hall effect
    81V73; %     Bosonic systems in quantum theory
    82C10.%    Quantum dynamics and non-equilibrium statistical mechanics
}
\end{footnotesize}
\end{abstract}

\maketitle

{\vspace{-0.75cm} \small \tableofcontents}

%-------------------------------------------
%  ######  ########  ######            ##    
% ##    ## ##       ##    ##         ####   
% ##       ##       ##                 ##   
%  ######  ######   ##                 ##   
%       ## ##       ##                 ##   
% ##    ## ##       ##    ## ###       ##   
%  ######  ########  ######  ###     ###### 
%-------------------------------------------

\section{Introduction}\label{sec:Intro}

The analysis of many-body quantum systems aims to derive an effective description of macroscopic observables from the underlying fundamental microscopic perspective.
At low energies, the elementary constituents typically obey the Schr\"odinger equation; however, the sheer number of degrees of freedom precludes any attempt to compute an explicit solution.
Quantum statistical mechanics offers a rigorous framework both to justify the phenomenological laws arising from the collective behaviour of the particles and to clarify the limits of their validity.

\medskip

\noindent In particular, a significant interest in the thermodynamic properties of Bose gases has grown since the first theoretical predictions concerning the emergence of \emph{Bose-Einstein condensation} at low temperatures~(\cite{Bo1924},~\cite{Ei1924, Ei1925}) -- a phenomenon consisting in the macroscopic occupation of a single quantum state.
More precisely, in three dimensions, there exists a positive critical temperature depending on the density, below which the system undergoes a (second-order) phase transition.
Bose-Einstein condensates were later observed experimentally by the groups of Cornell and Wieman~\cite{CoWi95} and that of Ketterle~\cite{Ke95}.
Since then, the mathematical community has become increasingly active in this area, providing in~\cite{LiSe02} the first mathematical evidence of the existence of such a quantum phase in the ground state of an interacting Hamiltonian in the so-called \emph{Gross-Pitaevskii} regime (for a mathematical exposition about Bose-Einstein condensates in trapped systems, see \eg~\cite{LiSoSeYn05}).\newline
Over the past decades, several strategies and models grounded in the microscopic description of the many-body problem have been devised to extract the relevant degrees of freedom of low-energy bosonic systems.
One of the simplest non-trivial regimes employed to study weakly interacting bosons is the mean-field scaling\footnote{In this regime, the manifestation of Bose-Einstein condensation as a bound state of the Hamiltonian can be proven (see, \eg~\cite{LeNaRo14}).} (for a review on the subject, consult \eg~\cite{FrLe04}, \cite[Chapters 2--4]{BePoSc16}, or \cite{BoLeMiPe24}).
With this approach, one considers the coupling constant of the pairwise interaction to be inversely proportional to the particle number $N$.
This approximation has been extensively studied, since it both facilitates the derivation of explicit estimates and provides a guide to the treatment of more complex situations.\newline
For example, the analysis of the excitation spectrum (see~\citep{Se11, GrSe13, BrSe22} and \cite{BoBrCeSc20}) and of the next-to-leading order term of the ground state energy (\cfr~\cite{LeNaSeSo15, NaSe15, BoPeSe21}) has provided significant insights into Bogoliubov's theory~\cite{Bo1947}, which is essential for tackling tougher, more physically relevant models, such as the Gross-Pitaevskii regime.

\smallskip

\noindent In this paper, we investigate a more challenging scaling than  mean-field, which retains some of its key features.
Here, the coupling constant is set to be the inverse of the density of the system $\varrho$, which remains finite as the number of particles grows to infinity.
The motivation behind this choice lies in the dispersion relation of the energy $E_{\mathrm{Bog}}(\vec{p})$ carried by each quasi-particle of momentum $\vec{p}$ and predicted by Bogoliubov's approximation, that is
$$E_{\mathrm{Bog}}(\vec{p})=\sqrt{\frac{\:\abs{\vec{p}}^4\!}{4\sps m^2} +4\pi\hbar^2\sps  \varrho\,a_{\mathrm{s}}\,\frac{\:\abs{\vec{p}}^2\!}{\:m^2\!}\sps }\sps ,$$
where $a_{\mathrm{s}}$ stands for the $\mathrm{s}$-wave \emph{scattering length} associated with the pairwise potential.
This relation reduces to the usual kinetic dispersion in the absence of interaction, whereas it becomes linear at low momenta $\abs{\vec{p}}\sns \ll\sns  \hbar\sps  \sqrt{\varrho\,a_{\mathrm{s}}\sps }\sps $.
Specifically, the proportionality constant $c_s\!=\sns \frac{2\sps \hbar}{m}\sps \sqrt{\pi\sps  \varrho\,a_{\mathrm{s}}\sps }\sps $ for the linear dispersion $E_{\mathrm{Bog}}(\vec{p})\approx c_s \abs{\vec{p}}$ represents the speed at which fluctuations propagate through the condensate at zero temperature -- often referred to as the speed of sound, by analogy with the behaviour of mechanical waves.\newline
In our regime, the scattering length is of the same order as the $\Lp{1}$-norm of the pairwise potential; hence, a coupling constant proportional to the inverse of the density is intended to keep the speed of sound fixed.
We stress that this is also the case for the mean-field scaling, where the coupling constant is $1/N$ and the volume of the gas is of order $1$ (and therefore $\varrho\sim N$).
In contrast, we are interested in considering a system enclosed within an arbitrarily large volume with fixed density; in this sense, our model is closer to the thermodynamic setting.
However, one still observes an averaging mechanism typical of the mean-field scaling when the density is large, since the Hartree equation is found to play an important role in the effective dynamics -- meaning that the interaction felt by a single particle can be approximated by the convolution of the pairwise potential with the time-dependent density distribution.\newline
Our focus shall be on the time evolution of the Bose-Einstein condensate after its preparation, which is typically achieved by confining the gas with a proper external field and cooling it to populate low-energy states (in the case of a \emph{complete} condensate, almost\footnote{Technically, a complete Bose-Einstein condensate has $N\sns -\sns n$ of its $N$ particles in the same quantum state, where $n=\oSmall{N}$.} all the particles occupy the same state).
Indeed, upon release from the trap, the condensate is typically expected to remain stable: the complex many-body dynamics can still be approximated by the evolution of a single one-particle wave function -- known as the \emph{order parameter} -- describing the time-dependent condensate.
This collective behaviour is expected to persist until decoherence occurs through interaction with the measuring apparatus or until thermalisation with the environment supplies enough heat to overcome the critical temperature.

\paragraph{Our contribution} For a specific class of initial states exhibiting a (\emph{quasi-complete}) Bose-Einstein condensate, we prove that in the thermodynamic limit at high density, the one-particle reduced density matrix converges in trace norm towards the projection onto the time-dependent order parameter that evolves according to the Hartree equation.

\bigskip

In the remainder of this section, we give a precise formulation of the setting and regime under consideration, briefly review the state of the art, and then outline the main ideas of our proof strategy.

\subsection{The Model}\label{sec:Model}

We consider an isolated system of $N\!\in\sns \N$ non-relativistic, spinless bosons of mass $\frac{1}{2}$ confined on the three-dimensional torus $\Lambda_L\!=\big[\!-\!\tfrac{L}{2},\tfrac{L}{2}\sps \big)^3$, where $L\sns >\sns 0$ denotes the side length.
We focus on the high-density regime characterised by weak interactions in the thermodynamic limit, wherein the coupling constant is inversely proportional to the system density $\varrho\sns >\sns 0$.
Specifically, the number of particles increases proportionally to the volume of the box, so that the density remains $L$-independent.
Consequently, $\varrho$ can be treated as a large parameter once the limit $N,L\to\infty$ is taken.

\medskip

\noindent For \emph{indistinguishable} particles obeying Bose-Einstein statistics, the $N$-particle Hilbert space $\Lp{2}[\Lambda_L^N]$ must be restricted to the subspace symmetric under particle exchange.
More precisely, one defines
$$\LpS{2}[\Lambda_L^N]\sns \vcentcolon=\big\{\psi\!\in\!\Lp{2}[\Lambda_L^N] \,\big|\: \psi(\vec{x}_{\pi(1)},\ldots,\vec{x}_{\pi(N)})\sns =\psi(\vec{x}_1,\ldots,\vec{x}_N),\;\forall \pi\sns \in\sns \mathfrak{S}_N\big\},$$
where $\mathfrak{S}_N$ stands for the group of permutations of $N$ elements.
Then, we define the $N$-body Hamiltonian
\begin{equation}\label{def:Hamiltonian}
    H^N_{\sns \varrho,\sps L}\vcentcolon=-\sum_{i\,=\sps 1}^N\Delta_{\vec{x}_i}\sns +\frac{1}{\varrho}\sps \sum_{i\sps <\sps j}^N V_L(\vec{x}_i\sns -\vec{x}_j),\qquad \text{on } \LpS{2}[\Lambda^N_L].
\end{equation}
Here, the reduced Planck constant $\hbar$ has been set to $1$, and $\varrho^{-1}$ is the coupling constant.
The pairwise interaction $\maps{V_L}{\R^3;\R}$ is defined via periodization of a given real-valued, spherically symmetric, continuous function $V_\infty\sns \in\Lp{1}[\R^3]$.
This function satisfies, for some constants $C,\delta_1, \delta_2>0$, the decay condition
\begin{equation}\label{eq:potentialDecay}
    0\leq V_\infty(\vec{y}) \leq \frac{C}{\;(1+\abs{\vec{y}})^{\sps 3\sps +\sps \delta_1}\nquad}\quad,\qquad \abs{ \FT{V}_\infty(\vec{p})} \leq \frac{C}{\;(1+\abs{\vec{p}})^{\sps 3\sps +\sps \delta_2}\nquad}\quad,\qquad\forall \vec{y},\vec{p}\in\R^3,
\end{equation}
where $\maps{\FT{V}_\infty}{\R^3;\R}$ denotes the Fourier transform\footnote{By the Riemann-Lebesgue lemma, $\FT{V}_\infty\in C^{\sps \lceil\delta_1-1\rceil}\mspace{-2.25mu}(\R^3)$, and $\forall\epsilon\sns >\sns 0\quad\exists K\!\subset\sns \R^3$ compact set such that $\abs{\partial^{\sps \alpha}\FT{V}_\infty(\vec{p})}<\epsilon$, for any $\vec{p}\in \R^3\smallsetminus K$ and three-dimensional multi-index $\alpha$ with $|\alpha|\leq \lceil\delta_1\!-\sns 1\rceil$.} of $V_\infty$
\begin{equation}\label{eq:potentialFT}
    \FT{V}_\infty:\:\vec{p}\,\longmapsto\sns \integrate[\R^3]{e^{-i\sps \vec{p}\sps \cdot\sps \vec{y}}\, V_\infty(\vec{y});\!d\vec{y}}.
\end{equation}
The periodic potential is defined as
\begin{subequations}\label{eqs:pairwisePotentialConstruction}
\begin{equation}\label{def:pairwisePotential}
    V_L:\;\vec{x}\, \longmapsto\, \frac{1}{L^3}\mspace{-7.5mu}\sum_{\vec{p}\sps \in\frac{2\pi}{L}\Z^3} \!\! e^{i\sps \vec{p}\sps \cdot\sps \vec{x}}\:\FT{V}_\infty(\vec{p}),\qquad \vec{x}\in\Lambda_L.
\end{equation}
In this framework, the \emph{Poisson summation formula} holds (\cfr~\cite[Chapter VII, \S 2 - Corollary 2.6]{StWe71})
\begin{equation}\label{eq:Poisson}
    V_L(\vec{x})=\!\sum_{\vec{n}\sps \in\,\Z^3}V_\infty(\vec{x}+\vec{n} L),\qquad \vec{x}\in\Lambda_L,
\end{equation}
\end{subequations}
in the sense that the series on both sides of equation~\eqref{eq:Poisson} converge absolutely and uniformly in $\Lambda_L$ to the same limit.
This potential is meant to model problems where each particle interacts with all others and with their respective images inside the copies of the box provided by the periodic boundary conditions.\newline
Note that
\begin{itemize}
    \item identity~\eqref{eq:Poisson} implies that $V_L$ is a non-negative function on the torus;
    \item the uniform convergence in the r.h.s. of equation~\eqref{eq:Poisson} entails the continuity of $V_L$;
    \item $V_L\sns \xrightarrow[\mspace{-3.75mu}L\mspace{-0.75mu}\to\mspace{-0.75mu} \infty\mspace{-3.75mu}]{} \sns V_\infty$ pointwise, since definition~\eqref{def:pairwisePotential} recovers a Riemann sum in the limit;
    \item combining equation~\eqref{eq:Poisson} with the integrability of $V_\infty$ yields $V_L\!\in\sns \Lp{1}[\Lambda_L]$, with
    $$\norm{V_L}[1]\leq \norm{V_\infty}[\Lp{1}[\R^3]]=\vcentcolon\mathfrak{b}=\FT{V}_\infty(\vec{0})\vspace{-0.25cm}.$$
    Moreover, because of the decay condition~\eqref{eq:potentialDecay}
    $$\norm{V_L}[\infty]\leq \norm{V_\infty}[\Lp{\infty}[\R^3]]\!+\frac{1}{L^{3\sps +\sps \delta_1}}\!\!\sum_{\substack{\vec{n}\sps \in\,\Z^3 \sps :\\[1.5pt] \vec{n}\sps \neq\sps  \vec{0}}}\frac{C}{(\abs{\vec{n}}\sns -\sns 1/2)^{\sps 3\sps +\sps \delta_1}}\leq \norm{V_\infty}[\Lp{\infty}[\R^3]]\! + \oBig{L^{-3\sps -\sps \delta_1}},$$
    since $\abs{\vec{x}\sns +\sns \vec{n}L}\geq L\sps (\mspace{0.75mu}\abs{\vec{n}}\sns -\sns 1/2)$ for all $\vec{x}\sns \in\sns \Lambda_L$ and the series converges for any $\delta_1\sns >\sns 0$.
\end{itemize}

\smallskip

\noindent Due to the boundedness of $V_L$ and the periodic boundary conditions imposed on $\partial\Lambda_L$, the Hamiltonian~\eqref{def:Hamiltonian} is self-adjoint on the domain $H^2\big(\Lambda^N_L\big)\cap \LpS{2}\big(\Lambda^N_L\big)$.

\medskip

\noindent The thermodynamic limit is realised by fixing $\varrho,L\sns >\sns 0$ so that $N\!\in\sns \N$ depends on these two parameters, namely $N=\lceil\varrho L^3\rceil$.
Thus, as $L\sns \to\sns \infty$, $N$ also diverges, but the ratio $N/L^3\longrightarrow \varrho$ remains finite, although we are interested in the high-density regime.
Crucially, the limit $L\!\to\!\infty$ must be taken \emph{before} considering $\varrho$ large.\newline
Notably, self-interaction contributions -- arising from the force between each particle and its own images in the copies of the box produced by the periodic boundary conditions -- become negligible for large $L$:
$$\tfrac{N}{\varrho}\!\! \sum_{\substack{\vec{n}\sps \in\,\Z^3 \sps :\\[1.5pt] \vec{n}\sps \neq\sps  \vec{0}}}\!\! V_\infty(\vec{n}L)\leq \tfrac{N}{\varrho}\!\!\sum_{\substack{\vec{n}\sps \in\,\Z^3 \sps :\\[1.5pt] \vec{n}\sps \neq\sps  \vec{0}}} \!\frac{C}{(1+|\vec{n}| L)^{\sps 3\sps +\sps \delta_1}\nquad}\quad\leq \tfrac{N}{\varrho L^3\!}\,L^{-\delta_1}\!\!\sum_{\substack{\vec{n}\sps \in\,\Z^3 \sps :\\[1.5pt] \vec{n}\sps \neq\sps  \vec{0}}} \!\sns \frac{C}{\;|\vec{n}|^{3+\delta_1}\!\sns }\:=\oBig{L^{-\delta_1}}.$$
To clarify the physical interpretation, we also emphasise that the box is not intended to model a confining trap; rather, it simply designates the region of the system under consideration, whose size is ultimately taken to infinity in the thermodynamic limit -- reflecting its large scale with respect to the microscopic perspective.
In this framework, the use of a torus as the domain of the Hamiltonian does not represent a particular restriction, but a convenient mathematical choice that avoids unnecessary complications due to boundary conditions while faithfully retaining all the physics of interest.\newline
To better understand this point, we stress that by sending the volume of the system to infinity, the physical problem is endowed with the notion of two different length scales.
On the one hand, an external observer, whose reference frame is tied to the unitarily rescaled Hamiltonian on the domain $\Lambda_1$, perceives the system as having a volume of order one, even as $L$ goes to infinity.
On the other hand, for the microscopic particles within the system, there is no effective distinction between evolving in a very large finite volume or in an infinite one (justifying the simplification of the description represented by the limit $L\sns \to\sns \infty$).
Consequently, our analysis focusses on bulk phenomena, since the effects of the boundaries on the microscopic constituents of the system are negligible, as these are pushed far away.\newline
Thus, provided that pathological cases are excluded (\eg, if the size of the boundary scales too rapidly with respect to the volume, or if the domain does not expand homogeneously enough in the three spatial directions), the precise shape of the region is irrelevant.
Therefore, the torus stands as a particularly convenient, yet substantially general, choice among all physically equivalent domains.

\medskip

\noindent We are interested in investigating the time evolution of quantum states close to Bose-Einstein condensates.
Accordingly, the Hamiltonian~\eqref{def:Hamiltonian} is to be thought of as the energy operator acting \emph{after} the system has been prepared such that a Bose-Einstein condensate emerges as a bound state at the initial time of our analysis $t\sns =\sns 0$.
Actually, to the best of our knowledge, there exists no proof demonstrating that the Hamiltonian $H^{\lceil\varrho L^3\rceil}_{\varrho,\sps L}$ (possibly with the additional presence of a trapping potential) has a Bose-Einstein condensate as a bound state when $L$ goes to infinity, even in the high-density limit.
The most closely related results in the literature are found in~\cite{GiSe09} and~\cite{DeNa14}.
\begin{itemize}
    \item First, in~\cite{GiSe09} the authors consider a Hamiltonian acting on the torus with an exponentially decaying potential with an arbitrary \say{typical} range $R_0$ and a coupling constant $a_0$.
    Their model intersects with ours for the specific choices $R_0\sns \sim\sns 1$ and $a_0\sns \sim\sns \varrho^{-1}$; consequently, their results show that the \emph{Lee-Huang-Yang formula}~\cite{LeHuYa57} holds true for the ground state energy of~\eqref{def:Hamiltonian} whenever the scattering length associated with the potential (which is at most of order $a_0$) decays as $\varrho^{-1-\gamma}$, with $\gamma\in\big[0,\frac{4}{63}\big)$.
    \item By contrast,~\cite{DeNa14} adopts precisely the same Hamiltonian as ours, and the Bogoliubov approximation for the excitation spectrum is proven to be valid, but in a different regime.
    Indeed, in three-dimensions~\cite[Theorem 1.1]{DeNa14} assumes $L^5\sns \leq\sns \varrho$ for the lower bound and $\max\{L, L^4\}\sns \leq\sns \varrho$ for the upper bound.
    Neither of these requirements is satisfied in our case of interest, where the limit $L\sns \to\sns \infty$ precedes $\varrho\sns \to\sns \infty$.
\end{itemize}
Our goal is to establish an effective evolution of the condensate that is accurate at each fixed time in the thermodynamic limit for sufficiently large $\varrho$.
Remarkably, by sending $L$ to infinity, this approach prescribes, for every fixed $\varrho$, the construction of an infinite-particle system, whose definition can be found, for instance, in~\cite{St69},~\cite{FiFr91},~\cite{BrRo97}, or~\cite{BaBu21}.
However, understanding the effective dynamics of the entire infinite-particle system is far beyond our scope.
We do not pursue this direction further.

\smallskip

\noindent Instead, we are satisfied with a finite-volume dynamics that approximates the actual evolution when $L$ approaches infinity and $\varrho$ is large enough.
A similar result was first obtained for a Fermi gas by Fresta, Porta, and Schlein~\cite{FrPoSc23}, whose work deeply inspired our own (see also~\cite{FrPoSc25}, which investigates the convergence towards the effective dynamics of pseudo-relativistic fermions by testing local averages of observables).
Although a coupling constant equal to the inverse of the density is employed in their case as well, the mathematical setup differs: in the literature, when studying the dynamics, the trapping potential localising the gas in a given region $\Omega\sns \subset\sns \R^3$ is typically present only at the initial time $t\sns =\sns 0$.
Thereafter, the system evolves in the entire space $\R^3$.
In particular, for~\cite{FrPoSc23} and~\cite{PePiSo20}, the coupling constant of the Hamiltonian generating the dynamics is equal to the inverse of the \emph{initial} sample density\footnote{This simplification is justified by the fact that the typical velocities in the setting are of order $1$, and therefore one can provide a notion of volume $\Omega(t)$ within which the gas is \say{localised}, whose size is not too far from the initial one.
As a consequence, $\rho(t)$ is expected to be of the same order as $\rho$.} $\rho=N/|\Omega|$.
In this interpretation, the additional quantity $|\Omega|$ serves to parametrise the family of considered initial data -- it is not meant to be a dynamical variable.\newline
Within this setup, a natural macroscopic limit to be considered consists in taking both $N$ and $|\Omega|$ large.
In comparison with our framework, this approach is characterised by sending $L$ to infinity \emph{before} the double limit $N, |\Omega|\to\infty$, which is possibly coupled to maintain the initial sample density $\rho$ fixed -- as in~\cite{FrPoSc23}, where the convergence towards the effective dynamics is shown for large $\rho$, uniformly in $|\Omega|$.\newline
By contrast, we couple the double limit $N, L\to\infty$, and we are not concerned with confining our initial datum within a subregion $\Omega\sns \subset\sns  \Lambda_L\sps $, as our focus is on condensation -- a phenomenon involving the localisation of momenta.
Although the two approaches may appear very similar, the order of limits is conceptually significant and might be crucial, in principle, for these kinds of problems.
A key distinctive feature of our setting is indeed the boundedness of the domain within which the gas evolves;
this implies that the density $\varrho=N/L^3$ is an intrinsic property of the system, does not depend on the choice of initial data, and remains constant over time.

\medskip

\noindent Our main result is presented in the framework of second quantisation (see Section~\ref{sec:SQ} for definitions and details).
In particular, we shall work with the second quantised Hamiltonian $\mathcal{H}_{\varrho,\sps L}$ on the symmetric Fock space (defined by~\eqref{def:HamiltonianFock}) corresponding to the $N$-body operator $H^N_{\varrho,\sps L}$.
The family of initial states we are going to deal with shall be generated by the action of the \emph{Weyl operator} $\mathcal{W}(\Psi_{\!\varrho,\sps L})$ (introduced in~\eqref{def:Weyl}), which implements suitable coherence properties -- based on the order parameter $\Psi_{\!\varrho,\sps L}\!\in H^1(\Lambda_L)$ -- on such states.\newline
More precisely, we consider the initial state
$$\varphi^{\sps 0}_{\varrho,\sps L}=\mathcal{W}(\Psi_{\!\varrho,\sps L})\sps  \xi_{\varrho,\sps L},$$
which will be a \emph{quasi-canonical coherent state} (see Definition~\ref{def:Q-CCS}), while the order parameter $\Psi_{\!\varrho,\sps L}$ will be a \emph{quasi-complete Bose-Einstein condensate} for $\varphi^{\sps 0}_{\varrho,\sps L}$ (see Definition~\ref{def:QCBEC}).
Here,
\begin{itemize}
    \item the order parameter $\Psi_{\!\varrho,\sps L}$ is assumed to be such that there exists a macroscopic counterpart in the high-density thermodynamic limit (Assumption~\ref{ass:initialBEC}).
    Moreover, it is required to have proper decay conditions on the tails of its scaled Fourier series in the iterated limit (\cfr~Assumptions~\ref{ass:tailCondition} and~\ref{ass:kineticTailCondition}).
    These are sufficient conditions that strengthen the notion of convergence towards the macroscopic order parameter, creating a more comfortable framework in which to prove the well-posedness of the associated time evolution;
    \item $\xi_{\varrho,\sps L}$ is a \emph{quasi-vacuum state} (\cfr~Definition~\ref{def:Q-Omega}) with respect to $\Psi_{\!\varrho,\sps L}$ -- which contains, roughly speaking, few expected particles compared to $\norm{\Psi_{\!\varrho,\sps L}}[2]^2\sps $;
    \item $\varphi^{\sps 0}_{\varrho,\sps L}$ is required to be \emph{energetically quasi-self-consistent} (see Definition~\ref{def:Q-sC}), namely, its associated expected energy is close to the Hartree energy of $\Psi_{\!\varrho,\sps L}\!\in\sns  H^1(\Lambda_L)$.
\end{itemize}

\paragraph{Our main result, Theorem~\ref{th:gamma1Convergence}} Consider the many-body time evolution driven by the Hamiltonian $\mathcal{H}_{\varrho,\sps L}$ and let $\varphi^{\sps t}_{\varrho,\sps L}\!=\sns e^{-i\,\mathcal{H}_{\varrho,\sps L}\sps t}\, \varphi^{\sps 0}_{\varrho,\sps L}$.
Then, the corresponding \emph{one-particle reduced density matrix} $\gamma^{(1)}_{\varphi^{\sps t}_{\varrho,\sps L}}$ (see~\eqref{def:gamma1Kernel}) satisfies
$$\lim_{\varrho\to\infty}\limsup_{L\to\infty}\, \norm*{\gamma^{(1)}_{\varphi^{\sps t}_{\varrho,\sps L}}\!-\:\tfrac{|\Psi^{\sps t}_{\!\varrho,\sps L}\rangle\langle\Psi^{\sps t}_{\!\varrho,\sps L}|}{\varrho L^3}\,}[\mathrm{Tr}]\!=0,\qquad \forall t\in[0,T],\quad\text{given }\,T<(2\sps \norm{V_\infty}[\Lp{1}[\R^3]])^{-1},$$
where the wave function $\Psi^{\sps t}_{\!\varrho,\sps L}$ evolves according to the Hartree equation
$$i\partial_t\sps \Psi^{\sps t}_{\!\varrho,\sps L}\sns =-\Delta \Psi^{\sps t}_{\!\varrho,\sps L}\sns  + \tfrac{1}{\varrho}\sps \big(V_L\!\ast\sns \abs{\Psi^{\sps t}_{\!\varrho,\sps L}}^2\big)\sps  \Psi^{\sps t}_{\!\varrho,\sps L},\qquad \text{on }\,\Lambda_L,$$
with the initial datum $\Psi^{\sps 0}_{\!\varrho,\sps L}=\Psi_{\!\varrho,\sps L}$.\newline
We provide a more detailed roadmap to the proof of the main theorem in Section~\ref{sec:qcBEC}.

\bigskip

In the following, we briefly review the current state of the art concerning the dynamics of three-dimensional systems modelling weakly interacting non-relativistic bosons in mean-field-related scalings.

\subsection{Background and Related Works}\label{sec:stateOfArt}

The rigorous mathematical study of many-body bosonic dynamics has a long and rich history.
A foundational result for three-dimensional systems was established by Ginibre and Velo~\cite{GiVe79A}, who generalised the earlier one-dimensional work of Hepp~\cite{He74}.
In their paper,  they study a semiclassical limit $\hbar\sns \to\sns  0$, in which the mass of the bosons scales as $m_\hbar=\hbar\, m$ and the coupling constant for the pairwise potential is $\hbar^2$.
Given that the expected number of particles is $\oBig{\hbar^{-1}}$, this scaling is equivalent to the mean-field regime, described by the Hamiltonian $H^N_{N,\sps \infty}$, given by~\eqref{def:Hamiltonian}.
They prove that, as $\hbar$ approaches zero, the particle structure disappears, as the correlation functions in coherent states converge along the evolution to those expected by a classical field $t\longmapsto\varphi^{\sps t}$ (in a suitable Banach space) obeying the Hartree equation
$$i\partial_t\sps \varphi^{\sps t}\sns =-\tfrac{1}{2m}\Delta \varphi^{\sps t}\sns  + \big(V_\infty\sns \ast\sns \abs{\varphi^{\sps t}}^2\big)\sps  \varphi^{\sps t},\qquad \text{on }\,\R^3,$$
for a large class of potentials $V_\infty$.

\subsubsection*{BBGKY hierarchies}
An alternative approach was pursued by Spohn in~\cite{Sp81}.
He considered a model with $\hbar=N^{-1/3}$ and a bounded pairwise potential having a coupling constant $1/N$.
This corresponds to the Hamiltonian $\frac{1}{N^{1/3}} H^N_{N^{1/3},\sps \infty}$, which differs from the standard mean-field regime.
Here, the novelty is that the limits (in $N$) of the $n$-point correlation functions satisfy a \emph{Vlasov hierarchy} -- an infinite set of coupled linear PDEs, in which the limit of each $n$-point correlation function depends on that of the $(n+\sns 1)$-th one.
The solution to this hierarchy has been proven to exist and to be unique.\newline
Subsequent works, such as \cite{BagoMa00} and~\cite{ErYa01, ErScYa07}, have obtained improvements in this direction.
With the exception of~\cite{ErScYa07} -- which considers an \emph{intermediate scaling}\footnote{This class of regimes is meant to interpolate the behaviour of the system between the mean-field ($\beta=0$) and the Gross-Pitaevskii approximation ($\beta=1$).} with the replacement of $V_\infty(\vec{\cdot})$ with $N^{3\beta} V_\infty(N^\beta \sps \vec{\cdot})$ for $\beta\sns \in\sns  [0, 1/2)$ -- these works study the standard mean-field Hamiltonian $H^N_{N,\sps \infty}$.
Starting from a factorised initial state $\psi_N=\varphi^{\sps \otimes N}$, they derive, for fixed $N$, a Schr\"odinger hierarchy for the (normalised) $k$-particle reduced density matrices $\gamma^{(k)}_{N,\sps t}\in\schatten{1}[\Lp{2}[\R^{3k}]]$ associated with the time evolution of $\psi_N$.
Moreover, they prove (under mild assumptions on the potential) the convergence of the solution for the $N$-finite hierarchy to a solution for the infinite-particle hierarchy.
They also show (for bounded potentials in~\cite[Corollary 5.3, Theorem 5.4]{BagoMa00} and for the Coulomb potential in~\cite{ErYa01}) the uniqueness of such a solution and the conservation over time of the factorisation; specifically\footnote{The precise topology of the convergence varies among the three papers.},
$$\gamma^{(k)}_{N,\sps t}\approx (|\varphi^{\sps t}\rangle\langle\varphi^{\sps t}|)^{\otimes\sps  k},\qquad N\to\infty,$$
where $\varphi^{\sps t}$ solves the Hartree equation (or the cubic nonlinear Schr\"odinger equation in the case of~\cite{ErScYa07} for $\beta>0$) with initial datum $\varphi^{\sps 0}=\varphi\in\Lp{2}[\R^3]$ regular enough\footnote{We emphasise that the physically relevant initial states are those that are both eigenstates of the initial Hamiltonian and exhibit a Bose-Einstein condensate. Specifically, for $\varphi^{\sps \otimes N}$ to represent such a state, the system must be initially confined.
This confinement, encoded \eg~by the Hamiltonian $H^N_{N,\sps 1}$ at $t\sns =\sns 0$, enforces the \say{localisation} of the initial one-particle wave function $\varphi\sns \in\sns \Lp{2}[\R^3]$ within a volume of order one.}.
The proof of the convergence in these cases relies mainly on compactness arguments.
We refer to~\cite[Section 1.10]{Go16} for a review on the subject.\newline
The BBGKY approach was later connected to the formalism of \emph{Wigner measures} (see~\cite[Section 6]{AmNi08}) by Ammari and Nier in~\cite{AmNi11}.

\subsubsection*{Semiclassical Analysis}
A distinct framework was developed by Ammari and Nier in~\cite{AmNi09, AmNi11, AmFaPa16}, adopting a regime sometimes referred to as \say{quasi-classical} by some authors, where the creation and annihilation operators (defined in~\eqref{def:creationAnnihilation}) are rescaled by a factor $\sqrt{\varepsilon\sps }\sps $, so that the \emph{canonical commutation relations} (see~\eqref{eq:CCR}) mimic the dependence on the small constant $\hbar$ typically appearing in the commutators.
In these models, the unitary evolution involves the Hamiltonian in second quantisation divided by $\varepsilon$.
Hence, since the quadratic kinetic term comes with a factor $\varepsilon$ and the quartic term associated with the interaction is multiplied by $\varepsilon^2$, there is complete congruence with the regime studied in~\cite{GiVe79A} for $\hbar\sns \sim\sns \varepsilon$, which is in turn equivalent to the mean-field scaling, setting $\varepsilon=1/N$.
In these problems, a proper set of initial (mixed) states is considered so that there exists a \emph{Wigner measure} describing the expectation of observables obtained via Weyl or Wick quantisation in the limit $\varepsilon\sns \to\sns 0$.
In~\cite{AmNi09} (later improved in~\cite{AmNi11} by relaxing the hypotheses), it has been proved that such a description in terms of Wigner measures is preserved globally in time by means of a pushforward with the classical flow associated with the Hartree equation.
This result implies (\cite[Theorem 1.1]{AmNi11}) the trace norm convergence (when $\varepsilon\sns \to\sns 0$) of all $k$-particle reduced density matrices to a (normalised) compact operator expressed in terms of the integral of the projection onto the factorisation of $k$ one-particle states, with respect to the time-dependent Wigner measure.\newline
This line of inquiry is closely related to the works of Fr\"ohlich \etal~\cite{FrGrSc07, FrKnPi07, FrKnSc09}, who also employed the mean-field regime.
In these cases, a convergence of expectations of $p$-particle observables along factorised states is given in the Heisenberg picture.
In the limit $N\sns \to\sns \infty$, the expectation of these observables remains close to being computed along states that are still obtained by factorising $p$ one-particle wave functions evolving according to the Hartree equation\footnote{Importantly, this does \emph{not} imply that the time evolution of a $p$-particle factorised state is close to another factorised state in the topology induced by $\Lp{2}[\R^{3p}]$.
This topic was particularly addressed in~\cite{Pi11}, where a clever algorithm is developed to count in a biassed way the particles over time that do not fit into the description of the order parameter.
A control of this number in terms of the same quantity at $t\sns =\sns 0$ is proven, showing, in particular, the preservation over time of Bose-Einstein condensation.}.
Moreover, in~\cite{FrKnPi07, FrKnSc09} an Egorov type theorem is proven to hold true (for bounded pairwise potentials in~\cite{FrKnPi07} and more singular ones in~\cite{FrKnSc09}) -- namely, the Wick quantisation of a time-evolved classical system yields a result that is almost the same as the time evolution of the associated many-body quantum system when $N$ is large enough.\newline
Furthermore,~\cite{AmFaPa16} proved that the convergence in trace norm of the time-dependent reduced density matrices occurs at the optimal rate $1/N$ locally in time, provided that the initial state has associated reduced density matrices converging in trace norm to the infinite-particle counterpart (given in terms of a Wigner measure) not slower than $1/N$.

\subsubsection*{Rate of convergence}
Providing quantitative bounds on the speed of convergence to the mean-field approximation is relevant, as they clarify how effective the Hartree theory is for a system composed of a large but finite number of particles.
In~\cite{RoSc09}, Rodnianski and Schlein were the first to establish a quantitative bound of this type.
In particular, for a wide class of pairwise potentials (including Coulomb), they proved the optimal rate of convergence $1/N$ for the evolution of reduced density matrices associated with coherent states and a non-optimal rate $1/\sqrt{N}$ in the case of factorised states.
This gap was later filled by~\cite{ErSc09, ChOo11, ChOoSc11} considering potentials in $\Lp{\infty}[\R^3]$ (\cite{ErSc09}), in $\Lp{3}[\R^3]+\Lp{\infty}[\R^3]$ (\cite{ChOo11}), and finally in the wider class originally taken into account by Rodnianski and Schlein (\cite{ChOoSc11}).\newline
An alternative approach for the study of the mean-field dynamics of factorised states -- involving the analysis of the projection onto the time-dependent order parameter -- can be found in~\cite{KnPi10}.\newline
In~\cite{DiGi22}, the Thomas-Fermi regime is investigated, characterised by the interaction potential $g_N N^{3\beta-1}V_\infty(N^\beta \vec{\cdot})$, with $g_N\sns \gg\sns  1$, $\beta\sns \in\sns \big(0,\frac{1}{6}\big)$.
For this model, the authors establish the existence of Bose-Einstein condensation in the ground state of the trapped Hamiltonian and prove its persistence over short time scales upon the release of the trap.\newline
A more PDE-oriented work establishing results similar to those of~\cite{RoSc09} for the intermediate regime $\beta\sns \in\sns \big[0,\frac{1}{3}\big)$ is~\cite{DiLe23}, which pivots on the exploitation of a dispersive estimate for the Hartree equation.\newline
Moreover, following the ideas developed by Wu in~\cite{Wu61, Wu98}, a second-order correction to the Hartree theory of coherent states has been captured by effective time-dependent states obtained by means of the Weyl operator composed with a quadratic unitary transformation in~\cite{GrMaMa10, GrMaMa11}, where the latter work improves issues of the former concerning more singular potentials and global-in-time convergence (which occurs here in the topology induced by the norm of the Fock space).
These works were later improved by considering the same kind of approximation for the intermediate regime in~\cite{GrMa13, GrMa17} (for $\beta\sns \in\sns \big[0,\frac{1}{3}\big)$ and $\beta\sns \in\sns \big[0,\frac{2}{3}\big)$, respectively) and in~\cite{BoCeSc17} (achieving $\beta\sns <\sns 1$ globally in time, with different techniques).

\subsubsection*{Probabilistic interpretation} We also mention that one-particle observables in a factorised state $\varphi^{\sps \otimes N}$ can be interpreted as $N$ identically distributed independent random variables, each acting as the identity on the other $N\!-\sns 1$ one-particle sectors.
Therefore, the central limit theorem holds true (\cfr~\cite{ArKiSc13, BuSaSc14} and \cite{Ra20, Ra22}).
In particular, in \cite{ArKiSc13} it is proved that the sum of the deviations from the mean value of the $N$ one-particle observables aforementioned, rescaled with the factor $1/\sqrt{N}$, converges to a normal distribution in a distributional sense.
A similar outcome is obtained by~\cite{BuSaSc14} for several one-particle observables, but with an explicit rate of convergence, while~\cite{Ra20} generalises the results to the intermediate scaling for $0\sns \leq\sns \beta\sns <\sns  1$, and~\cite{Ra22} takes into account $k$-particle observables in coherent states.
Here, the relevant information is that, despite the fact that the time evolution of $\varphi^{\sps \otimes N}$ is no longer a factorised state, few correlations develop during the dynamics, so that the validity of the central limit theorem is preserved.
However, such correlations are strong enough to change the variance of the normal distribution, which is affected by the action of the Bogoliubov transformation describing the fluctuations around the Hartree approximation.\newline
Large deviation principles were discussed in~\cite{KiRaSc21} and later in~\cite{RaSe22} (enlarging the class of potentials adopted).\newline
Further probabilistic implications concerning corrections to the central limit theorem have been studied for the ground state of a trapped Bose gas in the mean-field regime in~\cite{BoPe23}.

\subsubsection*{Fluctuations around Hartree dynamics} Finally, based on the results of~\cite{LeNaSeSo15}, many papers, such as~\cite{LeNaSc15, NaNa17A, NaNa17B, BrNaNaSc19} and~\cite{MiPePi19, BoPaPiSo20, BoPePiSo21}, investigated a class of initial states with a fixed number of particles $N$, composed of a superposition of $N\!-\sns k$ states associated with a single wave function $\varphi^{\sps 0}$, and $k$ excitations, orthogonal to $\varphi^{\sps 0}$, with $k$ varying between $0$ and $N$.
They prove that, as $N\sns \to\sns \infty$, the $N$-particle wave function remains close in norm to a superposition of factorised states composed of $N\!-\sns k$ wave functions $\varphi^{\sps t}$ evolving according to a modified Hartree equation, and $k$ excitations evolving according to a Bogoliubov-type Hamiltonian (which is quadratic in the creation and annihilation operators), for again $k\in\{0,\ldots,N\}$.
We stress that this kind of result is stronger than the convergence in trace norm of the reduced density matrices (\cfr~\cite[Corollary 2]{LeNaSc15}).
While the setting of~\cite{LeNaSc15} and~\cite{MiPePi19, BoPePiSo21} is the mean-field scaling,~\cite{NaNa17A, NaNa17B, BrNaNaSc19} and~\cite{BoPaPiSo20} work more generally within the intermediate regime for $\beta<\frac{1}{3}$, $\beta<\frac{1}{2}$, $\beta<1$, and $\beta<\frac{1}{12}$, respectively.
However,~\cite{MiPePi19} provides the evolution of the excitations in terms of a first quantised Hamiltonian for the fluctuations, while~\cite{BoPaPiSo20, BoPePiSo21} explore a procedure involving several Bogoliubov-type Hamiltonians altogether in order to obtain a decomposition in terms of excitations evolving according to different generators.
This allows one to achieve arbitrary precision in the convergence in terms of powers of $1/N$ by increasing the number of terms considered in the decomposition.

\medskip

\noindent Along the same direction, Petrat, Pickl, and Soffer studied in~\cite{PePiSo20} (improving the results from~\cite{DeFrPiPi16}) the dynamics of fluctuations around the Hartree approximation for the Hamiltonian $H^{\lceil\rho \,|\Omega|\rceil}_{\rho,\sps \infty}$.
Indeed, the system is initially localised in a region $\Omega\sns \subset\sns \R^3$ by means of a trap that is subsequently removed.
This means they consider the inverse of the initial sample density $\rho$ as a coupling constant, and $|\Omega|$ represents a variable to be sent to infinity that parametrises the class of initial states.
Their result shows locally-in-time convergence in the same fashion as~\cite{LeNaSc15}, in any double limit $\rho,|\Omega|\sns \to\sns \infty$ satisfying $|\Omega|^3 \sns \ll\sns  \rho$ (\cite[Theorem 2.2]{PePiSo20}), which precludes taking $|\Omega|\to\infty$ before $\rho$.

\bigskip

In short, the extensive literature on mean-field and related regimes has firmly established the validity of the Hartree approximation for the effective dynamics of Bose-Einstein condensates, with various techniques yielding results on the convergence of states, correlation functions, and observables.
Our contribution differs by analysing, for each fixed time $t$ (in a finite interval), the high-density regime on a large torus, where the coupling constant is scaled as $1/\varrho$.
The novel challenge we address is the specific order in which the limits $L\sns \to\sns \infty$ and $\varrho\sns \to\sns \infty$ must be taken, an aspect not covered by prior results.

\medskip

\noindent We conclude this section by summarising the strategy for pursuing our goals.

\subsection{Outline of the Strategy}

We work in the grand canonical picture, which provides a natural framework to analyse sequences of states with an indefinite particle number.
Within this setting, we introduce the notions of quasi-vacuum and quasi-coherent states (Definitions~\ref{def:Q-Omega},~\ref{def:Q-CCS}, respectively), which extend the standard vacuum and coherent states to the high-density scaling considered here.
The initial datum is chosen starting from these building blocks in such a way that it exhibits quasi-complete condensation and permits the control of fluctuations over time, as the expected particle number grows together with the size of the system.

\medskip

\noindent More precisely, the goal is to approximate the many-body evolution of the class of initial quasi-canonical coherent states $\varphi^{\sps 0}_{\varrho,\sps L}$ of the form $\varphi^{\sps 0}_{\varrho,\sps L}\!=\mathcal{W}(\Psi_{\!\varrho,\sps L})\sps \xi_{\varrho,\sps L}$, where $\mathcal{W}(\Psi_{\!\varrho,\sps L})$ is the Weyl operator associated with the initial order parameter $\Psi_{\!\varrho,\sps L}$, and $\xi_{\varrho,\sps L}$ is the quasi-vacuum state.
Imposing that the evolution preserves the structure
$$\varphi^{\sps t}_{\varrho,\sps L}=\mathcal{W}(\Psi^{\sps t}_{\!\varrho,\sps L})\sps \xi^{\sps t}_{\varrho,\sps L},\qquad\xi^{\sps t}_{\varrho,\sps L}=\mathcal{U}_{\varrho,\sps L}(t)\sps \xi_{\varrho,\sps L}$$
leads to the definition of the fluctuation dynamics $\mathcal{U}_{\varrho,\sps L}(t)$, which encodes the time evolution of the quasi-vacuum state.
The problem then reduces to proving that $\xi^{\sps t}_{\varrho,\sps L}$ -- referred to as the \emph{excitation state} hereafter -- remains a quasi-vacuum state over time in the iterated limit.

\medskip

\noindent To this end, we analyse the generator of the fluctuation dynamics, building on the ideas developed in~\cite{RoSc09, BeOlSc15}:
algebraic manipulations allow us to estimate both the generator and its time-derivative in terms of the expected number of excitations and their associated energy (Corollaries~\ref{th:generatorAPEstimate},~\ref{th:generatorDerivativeAPEstimates}).
By carefully controlling the nonlinearity of the Hartree equation (Propositions~\ref{th:nonlinearityControl},~\ref{th:LaplaceNonlinearityControl}), we are able to propagate both the convergence of the macroscopic order parameter (Proposition~\ref{th:propagationExistenceBEC}) and key upper bounds derived from Assumptions~\ref{ass:tailCondition} and~\ref{ass:kineticTailCondition} over a finite time interval.
This enables us to close a combined Gr\"onwall estimate for the expectation of the number operator and the generator of the fluctuation dynamics (Lemma~\ref{th:numberPlusGeneratorGronwall}). 
The energy quasi-self-consistency (see Definition~\ref{def:Q-sC}) of the initial quasi-canonical coherent state $\varphi^{\sps 0}_{\varrho,\sps L}$ is crucial to provide the control of deviations from the Hartree energy functional~\eqref{def:HartreeFunctional}, which in turn guarantees that the energy of the quasi-vacuum state $\xi_{\varrho,\sps L}$ is small enough (Proposition~\ref{th:energyOfVoid}).\newline
Combining these ingredients -- the estimates on the generator, the control of the Hartree nonlinearity, and the initial energy bound -- we show that the expected number of excitations remains negligible compared to the system size when the density is large, at least for finite times (Lemma~\ref{th:fluctuationsNumber}).
As a result, we establish the convergence of the one-particle reduced density matrix towards the rank-one projection onto the Hartree evolution of the order parameter, with the rate of convergence determined by the decay properties of the initial data (Theorem~\ref{th:gamma1Convergence}).

\bigskip

The remainder of this paper is structured as follows.

\smallskip

\noindent In Section~\ref{sec:setting}, we provide the basics of Fock spaces that will serve as the environment for our discussion.
In this framework, we introduce the definition of a quasi-complete Bose-Einstein condensate -- the main object of our interest -- and then we formalise the obtained results.\newline
In Section~\ref{sec:Generator}, we recover known features of the generator associated with the fluctuation dynamics in order to collect the properties we will be using.\newline
In Section~\ref{sec:qcBECproperties}, we investigate the objects introduced in Section~\ref{sec:qcBEC} in greater depth and clarify their interconnections.\newline
In Section~\ref{sec:hartreeTorus}, we discuss some features of the Hartree equation on the torus, such as well-posedness, the representation in momentum space, and the control of its nonlinearity.\newline
In Section~\ref{sec:excitationsControl}, we develop the proof of the main results, focussing on controlling the expectation of the number of excitations.

%---------------------------------------------
%  ######  ########  ######          #######  
% ##    ## ##       ##    ##        ##     ## 
% ##       ##       ##                     ## 
%  ######  ######   ##               #######  
%       ## ##       ##              ##        
% ##    ## ##       ##    ## ###    ##        
%  ######  ########  ######  ###    ######### 
%---------------------------------------------

\section{Setting and Statement of Results}\label{sec:setting}

In this section, we introduce the necessary notions to understand the framework of our problem and formalise the results.

\subsection{Second Quantisation}\label{sec:SQ}

To treat the sequence of $N$-body Hamiltonians as a single operator acting on one Hilbert space, as $N$ and $L$ both grow to infinity in the thermodynamic limit, we work within the \emph{grand canonical picture}, which is characterised by the use of quantum states with indefinite particle number.\newline
To this end, we recall the construction of a symmetric Fock space.

\subsubsection*{Fock Spaces}

Given a complex, separable Hilbert space $\hilbert*$, and $n$ vectors $\phi_1,\ldots,\phi_n\!\in\sns \hilbert*$, define the multi-antilinear functional
\begin{equation*}
    \begin{split}
        &\maps{\phi_1\sns \otimes\cdots\otimes\phi_n}{\hilbert*^n;\C}\\[-5pt]
        &\nquad (f_1,\ldots, f_n)\:\longmapsto \;\prod_{i\sps =1}^n \,\scalar{f_i}{\phi_i}[\hilbert*].
    \end{split}
\end{equation*}
Let $D_n$ denote the set of linear combinations of such functionals endowed with the inner product
\begin{equation}\label{def:innerProductInTensorProduct}
    \scalar{\phi_1\sns \otimes\cdots\otimes\phi_n}{\psi_1\sns \otimes\cdots\otimes\psi_n}[\hilbert*^{\otimes\sps n}]\vcentcolon=\prod_{i\sps =1}^n \,\scalar{\phi_i}{\psi_i}[\hilbert*].\vspace{-0.25cm}
\end{equation}
The tensor product $\overset{n\text{ times}}{\hilbert*\otimes\cdots\otimes \hilbert*}=\vcentcolon\hilbert*^{\otimes\sps n}$, with $\hilbert*^{\otimes\sps 0}\sns \vcentcolon=\C$, is defined as the completion of $D_n$ under the norm induced by~\eqref{def:innerProductInTensorProduct}.\newline
To account for particle indistinguishability, we introduce a \emph{unitary representation} of the symmetric group $\mathfrak{S}_n$ (\ie~the group of permutations of $n$ elements) defined by
\begin{equation*}
\begin{split}
    &\maps{U}{\mathfrak{S}_n;\bounded{\hilbert*^{\otimes\sps n}}}\\
    &\qquad\pi\;\longmapsto\: U_\pi,
\end{split}
\end{equation*}
where $U_\pi$ is the \emph{permutation operator}
$$(U_\pi \sps \phi)(f_1,\ldots f_n)=\phi(f_{\pi^{-1}(1)},\ldots,f_{\pi^{-1}(n)}),\qquad \phi\in\hilbert*^{\otimes \sps  n}.$$
For a system of $n$ indistinguishable particles, the associated Hamiltonian must commute with all $\{U_\pi\}_{\pi\sps \in\sps \mathfrak{S}_N}$.\newline
The symmetric Fock space over $\hilbert*$ is
\begin{equation*}
    \fockS[\hilbert*]\vcentcolon=\bigoplus_{n\sps \in\,\No} S_n\, \hilbert*^{\otimes \sps  n}=\Big\{(\psi^{(n)})_{n\sps \in\,\No}\,\Big|\: \psi^{(n)}\!\in\sns  S_n\,\hilbert*^{\otimes\sps n}, \, \sum_{n\sps \in\,\No}\norm{\psi^{(n)}}[\hilbert*^{\otimes\sps n}]^2<\infty\Big\},
\end{equation*}
where $S_n\!\in\sns \bounded{\hilbert*^{\otimes\sps  n}}$ is the \emph{symmetrisation operator}, namely the orthogonal projection 
$$S_0=1,\quad S_1=\Id_{\hilbert*},\qquad S_n=\frac{1}{n!}\!\sum_{\pi\sps \in\sps \mathfrak{S}_n}\! U_\pi,\quad n\geq 2.$$
\begin{note}
    The time evolution of elements in $S_n\,\hilbert*^{\otimes\sps n}$ remains in that subspace, since $U_\pi$ commutes with the $n$-body Hamiltonian on $\hilbert*^{\otimes\sps n}$, and therefore $S_n$ is conserved along the evolution.
\end{note}
Note that $\fockS[\hilbert*]$ is equipped with the inner product
$$\scalar{\psi}{\phi}[\fockS[\hilbert*]]\vcentcolon= \!\sum_{n\sps \in\,\No} \scalar{\psi^{(n)}}{\phi^{(n)}}[\hilbert*^{\otimes\sps n}],\qquad \psi=(\psi^{(n)})_{n\sps \in\,\No}, \, \phi=(\phi^{(n)})_{n\sps \in\,\No}.$$
As a matter of fact, $\fockS[\hilbert*]$ is a complex, separable Hilbert space.\newline
For our problem, we have $\hilbert*=\Lp{2}[\Lambda_L]$, from which it follows that $S_n\,\hilbert*^{\otimes\sps n}$ is unitarily equivalent to $\LpS{2}[\Lambda_L^n]$.

\subsubsection*{Creation and Annihilation Operators}

The \emph{number operator} plays a central role in the Fock space
\begin{equation}\label{def:numberOperator}
    \begin{split}
        &\maps{\mathcal{N}}{\fockS[\hilbert*]; \fockS[\hilbert*]}\\
        &\nquad(\psi^{(n)})_{n\sps \in\,\No}\,\longmapsto\: (n\sps \psi^{(n)})_{n\sps \in\,\No},
    \end{split}
\end{equation}
which is self-adjoint on the domain
$$\dom{\mathcal{N}}=\Big\{\sns (\psi^{(n)})_{n\sps \in\,\No}\!\in\sns \fockS[\hilbert*]\,\Big|\sps  \sum_{n\sps \in\,\N} \sns  n^2 \sps \norm{\psi^{(n)}}[\hilbert*^{\otimes\sps n}]^2\!<\infty\Big\}.$$
Clearly, $N$-particle vectors $(\psi^{(n)}\delta_{n,\sps N})_{n\sps \in\,\No}\!\in\sns \fockS[\hilbert*]$ are eigenvectors of $\mathcal{N}$ with eigenvalue $N$.
In general, the number of particles is a \emph{random variable} in the grand canonical picture, where $\norm{\phi^{(n)}}[\hilbert*^{\otimes\sps n}]^2$ gives the probability of finding $n$ particles in the system described by the unit vector $(\phi^{(n)})_{n\sps \in\,\No}\!\in\sns \fockS[\hilbert*]$.
The unique non-zero element of the Fock space (up to a phase) in the kernel of $\mathcal{N}$ is called the \emph{vacuum state}
\begin{equation}\label{def:vacuum}
    \Omega\vcentcolon=(\sps 1,0,0,\ldots\sps )\sps .
\end{equation}
Next, define the continuous maps
\begin{align*}
    &\maps{b}{\hilbert*;\bounded{\hilbert*^{\otimes\sps n},\hilbert*^{\otimes\sps n-1}}} && \maps{\adj{b}}{\hilbert*;\bounded{\hilbert*^{\otimes\sps n},\hilbert*^{\otimes\sps n+1}}}\\
    &\quad f\longmapsto b(f) && \mspace{27mu} f\longmapsto \adj{b}(f),
\end{align*}
where, given $f\sns \in\hilbert*$ and $\phi_1\sns \otimes\cdots\otimes \phi_n\sns \in\sns  D_n$, the action in $D_n$ is
\begin{subequations}
\begin{align}
    b(f) \sps \phi_1\sns \otimes\cdots\otimes \phi_n &= \scalar{f}{\phi_1}[\hilbert*] \,\phi_2\otimes\cdots\otimes \phi_n, \quad n\geq 1,\qquad b(f) z = 0, \quad \forall z\in\C,\\
    \adj{b}(f)\sps \phi_1\sns \otimes\cdots\otimes \phi_n&=f\sns \otimes \phi_1\sns \otimes\cdots\otimes\phi_n.
\end{align}
\end{subequations}
Note that, while both $b(f)$ and $\adj{b}(f)$ are linear operators, the map $\adj{b}$ is linear, whereas $b$ is antilinear.
However, the following bounds hold
\begin{align*}
    \norm{b(f)}[\linear{\hilbert*^{\otimes\sps n}\sns ,\, \hilbert*^{\otimes\sps n-1}}]\!\leq \norm{f}[\hilbert*], && \norm{\adj{b}(f)}[\linear{\hilbert*^{\otimes\sps n}\sns ,\, \hilbert*^{\otimes\sps n+1}}]\!\leq \norm{f}[\hilbert*].
\end{align*}
The bounded linear transformation (BLT) theorem ensures the existence of a unique norm-preserving extension of $b(f)$ and $\adj{b}(f)$ from $D_n$ to $\hilbert*^{\otimes\sps n}$.
Moreover, $\adj{b(f)}\!=\adj{b}(f)$ for all $f\sns \in\hilbert*$.\newline
These quantities serve as building blocks for the definition of the \emph{creation} and \emph{annihilation operators}, denoted by $\adj{a}(f)$ and $a(f)\!\in\sns \linear{\fockS[\hilbert*]}$, respectively.
Specifically, given $\psi=(\psi^{(n)})_{n\sps \in\,\No}\!\in\dom{\mathcal{N}^\frac{1}{2}}=\fdom{\mathcal{N}}$,
\begin{align}\label{def:creationAnnihilation}
    ( a(f)\sps \psi )^{(n)}=\sqrt{n+\sns 1\sps }\sps b(f)\sps \psi^{(n+1)}, && (\adj{a}(f)\sps \psi)^{(n)}=\sqrt{n\sps }\sps  S_n\sps \adj{b}(f)\sps \psi^{(n-1)}.
\end{align}
The adjoint of $a(f),\fdom{\mathcal{N}}$ is $\adj{a}(f), \fdom{\mathcal{N}}$; hence, they are both closed operators.
Furthermore, for all $\psi\in\fdom{\mathcal{N}}$
\begin{align}\label{eq:trivialBoundForCreationAnnihilation}
    \norm{a(f)\sps \psi}\leq\norm{f}[\hilbert*]\,\norm{\mathcal{N}^\frac{1}{2}\psi}, && \norm{\adj{a}(f)\sps \psi}\leq\norm{f}[\hilbert*]\,\norm{(\mathcal{N}\!+\sns 1)^\frac{1}{2}\psi},
\end{align}
and they satisfy the \emph{canonical commutation relations}
\begin{equation}\label{eq:CCR}
    \begin{cases}
        \big[a(f), \adj{a}(g)\big] \psi = \scalar{f}{g}[\hilbert*] \,\psi,\\
        \big[a(f), a(g)\big] \psi=\big[\adj{a}(f), \adj{a}(g)\big] \psi=0,
    \end{cases}
    \qquad \forall f,g\in\hilbert*, \, \psi\in \dom{\mathcal{N}}.
\end{equation}
For $f\sns \in\hilbert*$, we also introduce the self-adjoint operator
\begin{equation}\label{def:Sigal}
    \phi(f)=a(f)+\adj{a}(f),\qquad \dom{\phi(f)}=\fdom{\mathcal{N}}.
\end{equation}
Given a basis $\{f_k\}_{k\sps \in\,\N}\sns \subset \sns \hilbert*$ and a sequence $\mathfrak{n}\sns =\sns \{n_k\}_{k\sps \in\,\N}\sns \subset \sns \No$, where\footnote{The only way in which a sequence of numbers in $\No$ can be summable is if its elements eventually vanish.} $\mathfrak{n}\sns \in\sns \ell_1(\N)$, let $|\mathfrak{n}\rangle\sns \in\sns \fockS[\hilbert*]$ be the $\norm{\mathfrak{n}}[\ell_1(\N)]$-particle state
$$|\mathfrak{n}\rangle \vcentcolon= \frac{1}{\!\sns \sqrt{\sns \prod\limits_{k\sps \in\,\N} \!n_k!\sps }\sps } \bigg[\prod_{k\sps \in\,\N}\adj{a}(f_k)^{n_k}\sns \bigg]\sps \Omega.$$
The set $\big\{|\mathfrak{n}\rangle\! \in\sns \fockS[\hilbert*]\,|\;\mathfrak{n}\sns \in\sns \ell_1(\N)\sns\big\}$ forms an orthonormal basis for $\fockS[\hilbert*]$, and $\mathfrak{n}\sns \in\sns \ell_1(\N)$ is called the \emph{occupation number representation} of $|\mathfrak{n}\rangle\sns \in\sns \fockS[\hilbert*]$, indicating that $n_k\!\in\sns \No$ bosons occupy the one-particle state $f_k\!\in\sns \hilbert*$.
Within this basis, the annihilation and creation operators read
\begin{align*}
    a(f_k)|\mathfrak{n}\rangle = \sqrt{n_k\sps }\: |\{n_\ell\sns -\sns \delta_{\ell,\sps  k}\}_{\ell\sps \in\,\N}\rangle, && 
    \adj{a}(f_k)|\mathfrak{n}\rangle = \sqrt{n_k\sns +\sns 1\sps }\: |\{n_\ell\sns +\sns \delta_{\ell,\sps  k}\}_{\ell\sps \in\,\N}\rangle,
\end{align*}
yielding
\begin{equation}\label{eq:numberAsSumOfOccupations}
    \sum_{k\sps \in\,\N}\adj{a}(f_k)\sps  a(f_k)\mspace{2.5mu} |\mathfrak{n}\rangle = {\textstyle \sum\limits_{k\sps \in\,\N} \sns n_k\mspace{2.5mu} |\mathfrak{n}\rangle} = \mathcal{N} \mspace{2.5mu}|\mathfrak{n}\rangle,
\end{equation}
since $|\mathfrak{n}\rangle$ is an eigenstate of $\mathcal{N}$.

\medskip

\noindent For our case of interest $\hilbert*=\Lp{2}[\Lambda_L]$, the creation and annihilation operators act as follows
\begin{gather*}
    (a(f)\sps \psi)^{(n)}(\vec{x}_1,\ldots,\vec{x}_n)=\sqrt{n+1}\integrate[\Lambda_L]{\conjugate*{f(\vec{x})}\,\psi^{(n+1)}(\vec{x},\vec{x}_1,\ldots,\vec{x}_n);\!d\vec{x}},\\[-2.5pt]
    (\adj{a}(f)\sps \psi)^{(n)}(\vec{x}_1,\ldots,\vec{x}_n)=\frac{1}{\!\sns \sqrt{n\sps }\sps }\sum_{j=1}^n f(\vec{x}_j) \,\psi^{(n-1)}(\vec{x}_1,\ldots,\vec{x}_{j-1},\vec{x}_{j+1},\ldots,\vec{x}_n),
\end{gather*}
where $\psi=(\psi^{(n)})_{n\sps \in\,\No}$ is such that $\psi^{(n)}\!\in\sns \LpS{2}[\Lambda_L^n]$ and $\big\{\sqrt{n\sps }\sps  \norm{\psi^{(n)}}[\Lp{2}[\Lambda_L^n]]\big\}_{n\sps \in\,\No}\!\in\sns \ell_2(\No)$.
In this framework, we introduce the operator-valued distribution
\begin{equation}\label{def:annihilationDistribution}
    (a_{\vec{x}}\psi)^{(n)}(\vec{x}_1,\ldots,\vec{x}_n)\vcentcolon=\sqrt{n+\sns 1\sps }\sps \psi^{(n+1)}(\vec{x},\vec{x}_1,\ldots,\vec{x}_n),
\end{equation}
that satisfies
$$\integrate[\Lambda_L]{\conjugate*{f(\vec{x})}\,(a_{\vec{x}}\psi)^{(n)};\!d\vec{x}}=(a(f)\sps \psi)^{(n)}.$$
By definition~\eqref{def:annihilationDistribution}, the number operator satisfies the quadratic form identity
\begin{equation}\label{eq:NumberFromIntegral}
    \norm{\mathcal{N}^\frac{1}{2}\psi}^2=\integrate[\Lambda_L]{\norm{a_{\vec{x}} \psi}^2;\!d\vec{x}},\qquad \forall \psi\in\fdom{\mathcal{N}}.
\end{equation}

\subsubsection*{The Hamiltonian in Second Quantisation}

We now introduce the second quantisation of the Hamiltonian $H^N_{\sns \varrho,\sps L}$ in the symmetric Fock space $\fockS\big(\Lp{2}(\Lambda_L)\sns \big)$, denoted by $\mathcal{H}_{\varrho,\sps L}\!\in\sns \linear{\fockS\big(\Lp{2}[\Lambda_L]\big)}$. Its action for all vectors in its domain $\psi=\big(\psi^{(n)}\big)_{n\sps \in\sps \No}\!\in\dom{\mathcal{H}_{\varrho,\sps L}}$ is
\begin{subequations}\label{eqs:HamiltonianFock}
\begin{equation}\label{def:HamiltonianFock}
    \begin{split}
        (\mathcal{H}_{\varrho,\sps L}\sps \psi)^{(0)}=0,&\qquad (\mathcal{H}_{\varrho,\sps L}\sps \psi)^{(1)}=-\Delta\psi^{(1)},\\(\mathcal{H}_{\varrho,\sps L}\sps \psi)^{(n)}&=H^{\mspace{0.75mu}n}_{\sns \varrho,\sps L}\sps \psi^{(n)},\qquad n\geq 2,
    \end{split}
\end{equation}
\begin{equation*}
        \dom{\mathcal{H}_{\varrho,\sps L}}=
        \big\{\psi\sns \in\sns \dom{\mathcal{N}^{\sps 2}}\,|\; \psi^{(n)}\!\in\sns H^2(\Lambda^n_L),\,\Delta\psi\sns \in\sns \fockS\big(\Lp{2}[\Lambda_L]\sns \big)\big\}.
\end{equation*}
The associated Hermitian quadratic form can be written in terms of the operator-valued distribution~\eqref{def:annihilationDistribution}
\begin{align}\label{eq:formHamiltonianFock}
    \mathcal{H}_{\varrho,\sps L}[\psi]&=\integrate[\Lambda_L]{\norm{\nabla_{\!\vec{x}\,} a_{\vec{x}}\psi}^2;\!d\vec{x}}+\frac{1}{2\varrho}\integrate[\Lambda^2_L]{V_L(\vec{x}\sns -\sns \vec{y})\,\norm{a_{\vec{y}}a_{\vec{x}}\psi}^2;\!d\vec{x}d\vec{y}}\\
    &=\vcentcolon \mathcal{K}_L[\psi]+\tfrac{1}{\varrho}\sps \mathcal{V}_{\sns L}[\psi],\qquad \psi\in\fdom{\mathcal{H}_{\varrho,\sps L}}.\nonumber
\end{align}
\end{subequations}
Observe that $\fdom{\mathcal{V}_{\sns L}}\sns =\sns \dom{\mathcal{N}}$, while $\fdom{\mathcal{K}_L}\sns =\sns \big\{\psi\sns \in\sns \fockS\big(\Lp{2}[\Lambda_L]\sns \big)\,|\;\psi^{(n)}\!\in\sns  H^1(\Lambda_L^n),\,\nabla\psi\sns \in\sns \fockS\big(\Lp{2}[\Lambda_L]\sns \big)\big\}$.
Consequently, the form domain of the Hamiltonian corresponds to the intersection
$$\fdom{\mathcal{H}_{\varrho,\sps L}}\sns =\sns  \big\{\psi\sns \in\sns \dom{\mathcal{N}}\,|\;\psi^{(n)}\!\in\sns H^1(\Lambda^n_L),\,\nabla\psi\sns \in\sns \fockS\big(\Lp{2}[\Lambda_L]\sns \big)\big\}.$$
By construction, the Hamiltonian defined in~\eqref{eqs:HamiltonianFock} coincides with $H^N_{\sns \varrho,\sps L}$ when applied to any $N$-particle vector $\big(\psi^{(n)}\delta_{n,N}\big)_{n\sps \in\,\No}$ of the Fock space.
Furthermore, this definition does not couple different $n$-particle sectors; therefore, the number operator $\mathcal{N}$ commutes with $\mathcal{H}_{\varrho,\sps L}$, and $\dom{\mathcal{N}}$ is left invariant under the unitary evolution generated by $\mathcal{H}_{\varrho,\sps L}$, owing to Noether's theorem.

\bigskip

Hereafter, we focus on studying the dynamics generated by the Hamiltonian $\mathcal{H}_{\varrho,\sps L}$, starting from suitable initial data.
In particular, we are interested in the time evolution of \emph{quasi-canonical coherent states} (see Definition~\ref{def:Q-CCS} below).

\smallskip

\noindent Having established the Fock space framework, we proceed to formalise the properties fulfilled by our initial state.

\subsection{Quasi-Complete Bose-Einstein Condensates}\label{sec:qcBEC}

We assume that the system is initially prepared in a state sufficiently close to a Bose-Einstein condensate (in a sense to be clarified) with the \emph{order parameter} $\Psi_{\!\varrho,\sps L}\sns \in\sns H^1(\Lambda_L)$ satisfying
\begin{equation}\label{eq:becNormalization}
\integrate[\Lambda_L]{\abs{\Psi_{\!\varrho,\sps L}(\vec{x})}^2;\!d\vec{x}}=\varrho L^3.
\end{equation}
Here, $\abs{\Psi_{\!\varrho,\sps L}}^2$ can be thought of as the density distribution of our system.
\begin{note}\label{rmk:massMustBeConserved}
    The conservation of the number operator $\mathcal{N}$ ensures that the way we perform the thermodynamic limit remains consistent over time.
    Specifically, although the particle density could, in principle, vary, both the volume and the expected number of particles in our system are time-independent.
    Therefore, the density $\varrho$ is constant.
    This, in turn, implies that any evolution $\Psi^{\sps t}_{\!\varrho,\sps L}$ of the order parameter must conserve the quantity $\norm{\Psi^{\sps t}_{\!\varrho,\sps L}}[2]\sps $.
\end{note}
To specify the conditions imposed on $\Psi_{\!\varrho,\sps L}\sps $, we first provide some definitions.
These will clarify to what extent $\Psi_{\!\varrho,\sps L}$ can be regarded as a Bose-Einstein condensate.

\medskip

\noindent First, we define a broad class of vectors generalising some properties of the vacuum state.
\begin{defn}[Quasi-Vacuum States]\label{def:Q-Omega}
    Given a non-zero $f_{\varrho,\sps L}\!\in\sns \Lp{2}(\Lambda_L)$, a vector $\Omega_{\varrho,\sps L}\!\in\dom{\mathcal{N}}$ is a \emph{quasi-vacuum} state with respect to the one-particle wave function $f_{\varrho,\sps L}$ if
    \begin{enumerate}[label=\roman*), font=\itshape]
        \item $\norm{\Omega_{\varrho,\sps L}}=1$, for all $\varrho,L\sns >\sns 0$;
        \item $\lim\limits_{\varrho\to\infty}\limsup\limits_{\vphantom{{}_{q}}L\to\infty} \frac{\norm{\mathcal{N}^{\frac{1}{2}}\Omega_{\varrho,\sps L}}}{\vphantom{|^{[]}}\norm{f_{\varrho,\sps L}}[2]}=0$;
        \item given $\maps{\phi}{\Lp{2}[\Lambda_L];\linear{\fockS\big(\Lp{2}[\Lambda_L]\sns \big)\sns }}$ introduced in~\eqref{def:Sigal}, one has
        $$\lim_{\varrho\to\infty}\limsup_{L\to\infty}\sps  \abs*{\frac{\var_{\sps \Omega_{\varrho,\sps L}}\!\big[\sps \mathcal{N}\sns +\phi(f_{\varrho,\sps L})\big]}{\vphantom{|^1}\norm{f_{\varrho,\sps L}}[2]^2}-1\sps }=0.$$
    \end{enumerate}
\end{defn}
\begin{note}
    Since $\mathcal{N}$ and $\phi(f_{\varrho,\sps L})$ do not commute, the variance of the sum is strictly positive (they have no eigenvectors in common).
    For instance, for the exact vacuum state $\Omega$ (defined in~\eqref{def:vacuum}), one has $${\textstyle\var_{\sps \Omega}}\big[\sps \mathcal{N}\sns +\phi(f_{\varrho,\sps L})\big]\sns =\norm{f_{\varrho,\sps L}}[2]^2.$$
    Furthermore, the expectation value $\mathbb{E}_{\,\Omega_{\varrho,\sps L}}\sns \big[\sps \mathcal{N}\sns +\phi(f_{\varrho,\sps L})\big]$ is small compared to $\norm{f_{\varrho,\sps L}}[2]^2$ in the iterated limit by means of the second point of this definition.
    Specifically,
    $$\mathbb{E}_{\,\Omega_{\varrho,\sps L}}\sns \big[\sps \mathcal{N}\sns +\phi(f_{\varrho,\sps L})\big]\leq \norm{\mathcal{N}^{\frac{1}{2}}\Omega_{\varrho,\sps L}}^2\sns +2\sps \norm{f_{\varrho,\sps L}}[2]\,\norm{\mathcal{N}^\frac{1}{2}\Omega_{\varrho,\sps L}},$$
    which implies
    $$\lim_{\varrho\to\infty}\limsup_{L\to\infty}\sps \frac{\mathbb{E}_{\,\Omega_{\varrho,\sps L}}\sns \big[\sps \mathcal{N}\sns +\phi(f_{\varrho,\sps L})\big]}{\norm{f_{\varrho,\sps L}}[2]^2}\leq \lim_{\varrho\to\infty}\limsup_{L\to\infty}\left[\frac{\norm{\mathcal{N}^{\frac{1}{2}}\Omega_{\varrho,\sps L}}^2\!}{\norm{f_{\varrho,\sps L}}[2]^2}+2\,\frac{\norm{\mathcal{N}^\frac{1}{2}\Omega_{\varrho,\sps L}}}{\norm{f_{\varrho,\sps L}}[2]}\right]\!=0.$$
    Thus, Definition~\ref{def:Q-Omega} covers sequences of vectors where both the variance and the expectation value of the operator $\mathcal{N}\sns +\phi(f_{\varrho,\sps L})$ behave the same as in $\Omega$, up to an error smaller than $\norm{f_{\varrho,\sps L}}[2]^2\sps $.
    Strictly speaking, not all such sequences are included, as we are requiring that the expectation of the number of particles has size smaller than $\norm{f_{\varrho,\sps L}}^2$, which is a stronger assumption than asking the same for $\mathcal{N}\sns +\phi(f_{\varrho,\sps L})$.
\end{note}
Before proceeding, we introduce the \emph{Weyl map} $\maps{\mathcal{W}}{\Lp{2}[\Lambda_L];\bounded{\fockS[\Lp{2}[\Lambda_L]]}}$
\begin{equation}\label{def:Weyl}
    \mathcal{W}:\, f \:\longmapsto\; \mathcal{W}(f)\vcentcolon=e^{-i\,\phi(if)}=e^{\adj{a}(f)\sps -\sps a(f)},
\end{equation}
where $\mathcal{W}(f)$ is unitary ($\phi(if),\fdom{\mathcal{N}}$ is self-adjoint) and leaves $\fdom{\mathcal{N}}$ invariant (see~\eg~\cite[Lemma~2.2]{RoSc09} for further details on the Weyl operator).
Then, let $\xi_{\varrho,\sps L}\!\in\sns \dom{\mathcal{N}}$ be a quasi-vacuum state with respect to $\Psi_{\!\varrho,\sps L}\sps $, and consider the initial state\footnote{Identity~\eqref{eq:WeylNConjugation} shows that also the set $\dom{\mathcal{N}}$ is invariant under the action of the Weyl operator.} $\varphi^{\sps 0}_{\varrho,\sps L}\!\vcentcolon=\sns \mathcal{W}(\Psi_{\!\varrho,\sps L})\sps \xi_{\varrho,\sps L}\!\in\sns \dom{\mathcal{N}}$.
Its time evolution is
\begin{equation}\label{def:evolutionQ-CCS}
    \varphi^{\sps t}_{\varrho,\sps L}=e^{-i\,\mathcal{H}_{\varrho,\sps L}\sps t}\,\mathcal{W}(\Psi_{\!\varrho,\sps L})\sps \xi_{\varrho,\sps L},\qquad t\geq 0.
\end{equation}
If $\xi_{\varrho,\sps L}$ were exactly the vacuum $\Omega$, then $\mathcal{W}(\Psi_{\!\varrho,\sps L})\sps \Omega$ would be a vector in the Fock space representing a superposition of factorised states $\Psi_{\!\varrho,\sps L}^{\sps \otimes \sps n}$, each occurring with probability $e^{-\varrho L^3}\frac{(\varrho L^3)^n\!\sns }{n!}$.
More precisely, the observable associated with the number of particles would be a random variable following a \emph{Poisson distribution} with both mean and variance equal to $\varrho L^3$.
Such properties stem from the fact that $\mathcal{W}(\Psi_{\!\varrho,\sps L})\sps \Omega$ -- known as a \emph{canonical coherent state} -- is an eigenvector of the annihilation operator $a(g)$ for any $g\sns \in\!\Lp{2}[\Lambda_L]$, with eigenvalue $\scalar{g}{\Psi_{\!\varrho,\sps L}}[2]$.\newline
These properties motivate the following definition.
\begin{defn}[Quasi-Coherent States]\label{def:Q-CCS}
    A vector $\phi_{\varrho,\sps L}\!\in\sns \dom{\mathcal{N}}$ is a \emph{quasi-coherent state} if
    \begin{enumerate}[label=\roman*), font=\itshape]
        \item $\vphantom{\lim\limits_{|}}\norm{\phi_{\varrho,\sps L}}=1$, for all $\varrho,L\sns >\sns 0$;
        \item $\lim\limits_{\varrho\to\infty}\limsup\limits_{\vphantom{|_q}L\to\infty} \sps \abs*{\frac{1}{\varrho L^3}\sps \mathbb{E}_{\phi_{\varrho,\sps L}}[\sps \mathcal{N}\sps ]-1}=0$;
        \item $\lim\limits_{\varrho\to\infty}\limsup\limits_{L\to\infty}\sps \abs*{\frac{1}{\varrho L^3}\sns \var_{\phi_{\varrho,\sps L}}[\sps \mathcal{N}\sps ]-1}=0$.
    \end{enumerate}
    Moreover, $\phi_{\varrho,\sps L}$ is a \emph{quasi-canonical coherent state} if, additionally, for any $g_{\varrho,\sps L}\sns \in\sns \Lp{2}(\Lambda_L)$ with $\norm{g_{\varrho,\sps L}}[2]=1$ for all $\varrho,L\sns >\sns 0$, there exists\footnote{A quasi-eigenvalue must satisfy $\limsup\limits_{\varrho\to\infty}\limsup\limits_{L\to\infty}\frac{\abs{z_{\varrho,\sps L}}}{\vphantom{\big|}\norm*{\mathcal{N}^{1/2}\phi_{\varrho,\sps L}}}\leq 1$.} $z_{\varrho,\sps L}\!\in\C\,$ such that
    $$\lim_{\varrho\to\infty}\limsup_{L\to\infty}\sps \frac{\norm*{\big(a(g_{\varrho,\sps L})-z_{\varrho,\sps L}\big)\sps \phi_{\varrho,\sps L}}}{\norm{\mathcal{N}^{\frac{1}{2}}\phi_{\varrho,\sps L}}}=0.$$
    In this case, $\phi_{\varrho,\sps L}$ is a \emph{quasi-eigenstate} of $a(g_{\varrho,\sps L})$ with \emph{quasi-eigenvalue} $z_{\varrho,\sps L}\sps $.
\end{defn}
\begin{note}
    By Proposition~\ref{th:WeylCoherent}, and because $\mathcal{N}$ commutes with $\mathcal{H}_{\varrho,\sps L}$ (\ie~$e^{-i\,\mathcal{H}_{\varrho,\sps L} \sps t}\mathcal{N}\subseteq \mathcal{N}e^{-i\,\mathcal{H}_{\varrho,\sps L} \sps t}$ for all $t\sns \in\sns \R$), $\varphi^{\sps t}_{\varrho,\sps L}$ introduced in~\eqref{def:evolutionQ-CCS} is a quasi-coherent state for all $t\sns \geq\sns 0$, and it is also quasi-canonical coherent at $t=0$.
\end{note}
To formalise the condition that $\Psi_{\!\varrho,\sps L}$ represents a Bose-Einstein condensate, we define the (normalised) one-particle reduced density matrix $\gamma^{(1)}_\psi\!\in\sns  \schatten{1}\big(\Lp{2}(\Lambda_L)\sns \big)$ for a vector $\psi\sns \in\sns \fdom{\mathcal{N}}$, as the integral operator with kernel
\begin{equation}\label{def:gamma1Kernel}
    \gamma^{(1)}_\psi(\vec{x},\vec{y})=\frac{\scalar{\sps a_{\vec{y}}\sps \psi}{a_{\vec{x}}\sps \psi}}{\norm{\mathcal{N}^{\frac{1}{2}}\psi}^2}.
\end{equation}
This operator is self-adjoint, non-negative, and has unit trace.
Generally speaking, if $\gamma^{(1)}_\psi$ has an eigenvalue close to $1$ (its largest possible value), the corresponding eigenfunction represents the one-particle wave function of the condensate.
We specify this notion in our framework with the following definition.
\begin{defn}[Quasi-Complete BEC]\label{def:QCBEC}
    A vector $\psi_{\varrho,\sps L}\sns \in\sns \fdom{\mathcal{N}}\sns \smallsetminus\sns \ker{\mathcal{N}}$, with $\norm{\psi_{\varrho,\sps L}}=1$ for all $\varrho,L\sns >\sns 0$, exhibits \emph{quasi-complete condensation} if there exist $\Phi_{\varrho,\sps L}\!\in\! H^1(\Lambda_L)$ and $\Phi\sns \in\! H^1(\Lambda_1)$ with $\norm{\Phi_{\varrho,\sps L}}[2]^2=\varrho L^3$ and $\norm{\Phi}[\Lp{2}[\Lambda_1]]\sns =1$ such that
    $$\lim_{\varrho\to\infty}\limsup_{L\to\infty}\,\frac{1}{\varrho L^3}\norm*{\Phi_{\varrho,\sps L}(\sps \vec{\cdot}\sps )-\sqrt{\varrho\sps }\sps \Phi\sns \big(\mspace{-0.75mu}\tfrac{\vec{\cdot}}{L}\mspace{-0.75mu}\big)\sns }[H^1(\Lambda_L)]^2\sns =0,\leqno i)$$
    \vspace{-0.35cm}
    $$\lim_{\varrho\to\infty}\limsup_{L\to\infty}\sps \left[1-\frac{\norm*{a\Big(\sps \tfrac{\Phi_{\varrho,\sps L}}{\!\sns \sqrt{\varrho L^3\sps }\sps }\!\Big)\sps \psi_{\varrho,\sps L}}^2\!\sns }{\norm{\mathcal{N}^{\frac{1}{2}}\psi_{\varrho,\sps L}}^2}\:\right]\!\sns =0.\leqno ii)$$
    In this case, $\Phi_{\varrho,\sps L}$ is a \emph{quasi-complete Bose-Einstein condensate} for $\psi_{\varrho,\sps L}\sps $.
\end{defn}
\begin{note}
    If the system is prepared in an initial state exhibiting quasi-complete condensation, the first condition is demanding that $\frac{1}{\sns \sqrt{\varrho\sps }\sps }\Psi_{\!\varrho,\sps L}(L\,\vec{\cdot}\sps )$ converges in $H^1(\Lambda_1)$ in the iterated limit.
    This means that from the macroscopic perspective (according to an external observer the wave function is normalised to $1$ and defined on a volume of order $1$), the mass of the order parameter and its kinetic energy must be well-defined in the high-density thermodynamic limit.\newline
    Furthermore, the second condition
    entails the existence of a constant $c_\varrho\!\in\sns [\sps 0,1]$ such that $\lim\limits_{\varrho\to\infty} \!c_\varrho=0$, and
    \begin{equation}
        \liminf_{L\to\infty}\frac{1}{\varrho L^3}\sps \scalar{\Psi_{\!\varrho,\sps L}}{\gamma^{(1)}_{\varphi^{\sps 0}_{\varrho,\sps L}}\!\Psi_{\!\varrho,\sps L}}=\liminf_{L\to\infty}\frac{\norm*{a\Big(\sps \tfrac{\Psi_{\varrho,\sps L}}{\!\sns \sqrt{\varrho L^3\sps }\sps }\!\Big)\sps \varphi^{\sps 0}_{\varrho,\sps L}}^2\!\sns }{\norm{\mathcal{N}^{\frac{1}{2}}\varphi^{\sps 0}_{\varrho,\sps L}}^2}= 1-c_\varrho.
    \end{equation}
    Roughly speaking, a macroscopic fraction of particles initially occupies the same one-particle state when $\varrho$ is large enough, approaching complete condensation as $\varrho\to\infty$.
    In particular, this means that a quasi-complete Bose-Einstein condensate is also a standard Bose-Einstein condensate when $\varrho$ is large enough (so that $c_\varrho \sns \lneq\sns  1$).
\end{note}
Since our initial state is given by $\varphi^{\sps 0}_{\varrho,\sps L}\!=\mathcal{W}(\Psi_{\!\varrho,\sps L})\sps \xi_{\varrho,\sps L}$, we actually select the class of quasi-canonical coherent states among all possible vectors exhibiting quasi-complete condensation.
Indeed, as shown by combining Propositions~\ref{th:WeylCoherent} and~\ref{th:eigenfunctionOfaMeansQBEC}, quasi-canonical coherent states always exhibit quasi-complete condensation, provided that $\frac{1}{\sns \sqrt{\varrho\sps }\sps }\Psi_{\!\varrho,\sps L}(L\,\vec{\cdot}\sps )$ converges to some wave function $\Psi\in H^1(\Lambda_1)$ in the iterated limit.

\medskip

\noindent We also require our initial state to have energy close to the Hartree energy of the associated quasi-complete Bose-Einstein condensate.
\begin{defn}[Energy Quasi-Self-Consistency]\label{def:Q-sC}
    Consider a vector $\psi_{\varrho,\sps L}\sns \in\sns \fdom{\mathcal{H}_{\varrho,\sps L}}$ such that $\norm{\psi_{\varrho,\sps L}}=1$ for all $\varrho,L\sns >\sns 0$, which exhibits quasi-complete condensation with $\Phi_{\varrho,\sps L}\!\in\sns  H^1(\Lambda_L)$ a quasi-complete Bose-Einstein condensate for $\psi_{\varrho,\sps L}\sps $.
    We say that $\psi_{\varrho,\sps L}$ is \emph{energetically quasi-self-consistent} if
    \begin{equation*}
        \lim_{\varrho\to\infty}\limsup_{L\to\infty}\frac{1}{\varrho L^3}\sps \Big\lvert\mathcal{H}_{\varrho,\sps L}[\sps \psi_{\varrho,\sps L}]\sns -\mathscr{E}_{\varrho,\sps L}[\Phi_{\varrho,\sps L}]\Big\rvert=0,    
    \end{equation*}
    where %$\mathcal{H}_{\varrho,\sps L}$ is the quadratic form given in~\eqref{eq:formHamiltonianFock}, and 
    for any $\phi\in\sns  H^1(\Lambda_L)$ we have introduced the \emph{Hartree energy functional}
    \begin{equation}\label{def:HartreeFunctional}
        \mathscr{E}_{\varrho,\sps L}[\sps \phi\sps ]\vcentcolon=\integrate[\Lambda_L]{\abs{\nabla_{\!\vec{x}\,}\phi(\vec{x})}^2+\tfrac{1}{2\varrho}\big(V_L\!\ast\sns \abs{\sps \phi\sps }^2\big)\sns (\vec{x})\,\abs{\phi(\vec{x})}^2;\!d\vec{x}}.
    \end{equation}       
\end{defn}
We emphasise that
\begin{equation}
    \mathcal{H}_{\varrho,\sps L}\big[\mathcal{W}(\phi)\sps \Omega\big] = \mathscr{E}_{\varrho,\sps L}[\sps \phi\sps ],\qquad\forall \phi\in H^1(\Lambda_L).
\end{equation}
In particular, we have already mentioned that $\varphi^{\sps 0}_{\varrho,\sps L}\!=\sns \mathcal{W}(\Psi_{\!\varrho,\sps L})\sps \xi_{\varrho,\sps L}$ exhibits the quasi-complete Bose-Einstein condensate $\Psi_{\!\varrho,\sps L}$.
Calling for $\varphi^{\sps 0}_{\varrho,\sps L}$ to be energetically quasi-self-consistent is essential for quantifying the energy of the quasi-vacuum state $\xi_{\varrho,\sps L}$.
Specifically, by Proposition~\ref{th:energyOfVoid}, the energy quasi-self-consistency of $\varphi^{\sps 0}_{\varrho,\sps L}$ is required to ensure that the expectation of the energy of the quasi-vacuum state $\xi_{\varrho,\sps L}$ is smaller than $\norm{\Psi_{\!\varrho,\sps L}}[2]^2\sps $.

\bigskip

Our goal is to find an effective description of relevant degrees of freedom that approximates the dynamics of the many-body system.
To achieve this, we seek an evolution $\xi^{\sps t}_{\varrho,\sps L}$ for our initial quasi-vacuum state, satisfying $\varphi^{\sps t}_{\varrho,\sps L}\!=\mathcal{W}(\Psi^{\sps t}_{\!\varrho,\sps L})\sps \xi^{\sps t}_{\varrho,\sps L}\sps $, for some $\Psi^{\sps t}_{\!\varrho,\sps L}$ solving a suitable nonlinear PDE with initial datum $\Psi_{\!\varrho,\sps L}\sps $.
We claim that if $\varphi^{\sps t}_{\varrho,\sps L}$ remains a quasi-canonical coherent state (not only quasi-coherent), then $\varphi^{\sps t}_{\varrho,\sps L}$ exhibits quasi-complete condensation over time, with $\Psi^{\sps t}_{\!\varrho,\sps L}$ as a quasi-complete Bose-Einstein condensate.
By construction, we set
$$\varphi^{\sps t}_{\varrho,\sps L}\!=\mathcal{W}(\Psi^{\sps t}_{\!\varrho,\sps L})\,\mathcal{U}_{\varrho,\sps L}(t)\sps \xi_{\varrho,\sps L},$$
leading to the definition of the so-called \emph{fluctuation dynamics}
\begin{equation}\label{def:fluctuationDynamics1}
    \mathcal{U}_{\varrho,\sps L}(t)\vcentcolon=\adj{\mathcal{W}\sps }\sns (\Psi^{\sps t}_{\!\varrho,\sps L})\sps e^{-i\,\mathcal{H}_{\varrho,\sps L}\sps  t}\sps \mathcal{W}(\Psi_{\!\varrho,\sps L}).
\end{equation}
In other words, we aim to prove that $\xi^{\sps t}_{\varrho,\sps L}\!=\mathcal{U}_{\varrho,\sps L}(t)\sps \xi_{\varrho,\sps L}$ is a quasi-vacuum state with respect to $\Psi^{\sps t}_{\!\varrho,\sps L}\sps $.
In this case, by Proposition~\ref{th:WeylCoherent}, $\varphi^{\sps t}_{\varrho,\sps L}$ is quasi-canonical coherent and a quasi-eigenfunction of $a\Big(\tfrac{\Psi^{\sps t}_{\!\varrho,\sps L}}{\!\sns \sqrt{\varrho L^3\sps }\sps }\!\Big)$ with quasi-eigenvalue $\scalar{\sps \tfrac{\Psi^{\sps t}_{\!\varrho,\sps L}}{\!\sns \sqrt{\varrho L^3\sps }\sps }}{\Psi^{\sps t}_{\!\varrho,\sps L}\sps }[2]=\sqrt{\sns \varrho L^3\sps }$, which meets the condition of Proposition~\ref{th:eigenfunctionOfaMeansQBEC} (see also equation~\eqref{eq:limsupOfInverse}).
Consequently, $\varphi^{\sps t}_{\varrho,\sps L}$ would exhibit quasi-complete condensation with $\Psi^{\sps t}_{\!\varrho,\sps L}\sps $ as the complete Bose-Einstein condensate (the existence of the macroscopic limit at positive times is guaranteed by Proposition~\ref{th:propagationExistenceBEC}) that represents the time evolution of the order parameter $\Psi_{\!\varrho,\sps L}\sps $.
In fact, to prove the preservation over time of the structure of a quasi-complete Bose-Einstein condensate, we show the convergence in Hilbert-Schmidt norm of the one-particle reduced density matrix associated with our quasi-canonical coherent state $\varphi^{\sps t}_{\varrho,\sps L}$ towards the projection onto $\tfrac{\Psi^{\sps t}_{\!\varrho,\sps L}}{\!\sns \sqrt{\varrho L^3\sps }\sps }$ (\cfr~Theorem~\ref{th:gamma1Convergence}).
\begin{note}\label{rmk:HartreeEnergyMustBeConserved}
    Since the energy of $\varphi^{\sps 0}_{\varrho,\sps L}$ is conserved (\ie~$\mathcal{H}_{\varrho,\sps L}[\varphi^{\sps t}_{\varrho,\sps L}]=\mathcal{H}_{\varrho,\sps L}[\varphi^{\sps 0}_{\varrho,\sps L}]$), the fact that we assume $\varphi^{\sps 0}_{\varrho,\sps L}$ to be energetically quasi-self-consistent means that also $\varphi^{\sps t}_{\varrho,\sps L}$ is, and therefore the Hartree energy of $\Psi^{\sps t}_{\!\varrho,\sps L}$ must be preserved, up to an error smaller than $\varrho L^3$.
\end{note}

\subsection{Main Results}\label{sec:Results}

We now list the assumptions we will resort to, in order to prove our results.
\begin{assumption}\label{ass:initialBEC}
    Given a wave function $\Psi_{\!\varrho,\sps L}\!\in\! H^1(\Lambda_L)$ such that $\norm{\Psi_{\!\varrho,\sps L}}[2]^2=\varrho L^3$, we assume there exists $\Psi\sns \in\! H^1(\Lambda_1)$ satisfying $\norm{\Psi}[\Lp{2}[\Lambda_1]]=1$ and
    $$\lim_{\varrho\to\infty}\limsup_{L\to\infty} \,\tfrac{1}{\!\sns \sqrt{\varrho L^3\sps }\sps }\norm*{\Psi_{\!\varrho,\sps L}\sns -\sns \sqrt{\varrho\sps }\sps \Psi\sns \big(\mspace{-0.75mu}\tfrac{\vec{\cdot}}{L}\mspace{-0.75mu}\big)}[H^1(\Lambda_L)]\sns =0.$$
\end{assumption}
This assumption ensures that $\Psi_{\!\varrho,\sps L}$ is a quasi-complete Bose-Einstein condensate for the initial state $\varphi^{\sps 0}_{\varrho,\sps L}$, which in turn is required to be energetically quasi-self-consistent.
\begin{assumption}\label{ass:energyConsistence}
    We consider a quasi-vacuum state $\xi_{\varrho,\sps L}\!\in\sns \fdom{\mathcal{H}_{\varrho,\sps L}}$
    with respect to $\Psi_{\!\varrho,\sps L}\!\in\sns H^1(\Lambda_L)$ such that the corresponding quasi-canonical coherent state\footnote{The Weyl operator leaves $\fdom{\mathcal{H}_{\varrho,\sps L}}$ invariant (\cfr~identity~\eqref{eq:HamiltonianWeylSandwich} below).\label{foo:invariantfdomH}} $\varphi^{\sps 0}_{\varrho,\sps L}\!=\mathcal{W}(\Psi_{\!\varrho,\sps L})\sps \xi_{\varrho,\sps L}$ is energetically quasi-self-consistent.
\end{assumption}
Let the operator $\langle\gamma^{(1)}_{\psi}\rangle\!\in \schatten{1}[\Lp{2}[\Lambda_L]]$ denote the translation-invariant projection of the normalised one-particle reduced density matrix of a state $\psi\in \fdom{\mathcal{N}}$.
The expression of its kernel is
\begin{equation}\label{def:traslationInvariantProj}
    \langle\gamma^{(1)}_{\psi}\rangle(\vec{x},\vec{y})\vcentcolon=\frac{1}{L^3}\integrate[\Lambda_L]{\gamma^{(1)}_\psi(\vec{x}\sns+\sns\vec{z},\vec{y}\sns+\sns\vec{z}); \!d\vec{z}},\qquad \vec{x},\vec{y}\in\Lambda_L.
\end{equation}
\begin{assumption}\label{ass:translationalInvariance}
    The one-particle reduced density matrix associated with the quasi-canonical coherent state $\varphi^{\sps 0}_{\varrho,\sps L}$ is required to be translational invariant in the high-density thermodynamic limit, namely
    $$\lim_{\varrho\to\infty}\limsup_{L\to\infty}\,\norm*{\gamma^{(1)}_{\varphi^{\sps 0}_{\varrho,\sps L}}\!\sns -\langle\gamma^{(1)}_{\varphi^{\sps 0}_{\varrho,\sps L}}\rangle}[\mathrm{HS}]\!\!=0.$$
\end{assumption}
This assumption asserts that in the high-density thermodynamic limit, the initial one-particle reduced density matrix becomes spatially homogeneous: only its translation-invariant component survives, whereas all local fluctuations scale sub-extensively with respect to the expected particle number $\varrho L^3$ and therefore vanish in the limit.
This requirement is particularly natural in our setting, since the underlying Hamiltonian is itself translation-invariant and Bose–Einstein condensation does not break translational symmetry; rather, it breaks the $U(1)$ symmetry generated by the number operator (\cfr, \eg, \cite{DeNANa25}).
As a consequence, the infinite-particle system obtained in the limit $L\to\infty$ is expected to inherit translation invariance as soon as the condensate becomes complete (that is, as $\varrho\to\infty$).\newline
However, from a broader perspective, different Hamiltonians -- possibly defined on domains of different shapes or subject to different boundary conditions -- may lead to distinct infinite-particle systems in principle.
Assumption~\ref{ass:translationalInvariance} precisely isolates the universal class of infinite-particle systems whose non–translation-invariant components become negligible as $\varrho\to\infty$ at the level of their one-particle structure.
Any such model, irrespective of its microscopic origin, describes exactly the same macroscopic model as ours.
Hence, once the condensate is complete, all these systems are expected to fall within the same effective description and, in particular, to give rise to the same effective dynamics.

\medskip

\noindent We rewrite the order parameter in its momentum representation (see Section~\ref{sec:HartreeInMomentumRepresentation} for additional details)
\begin{subequations}
\begin{gather}
    \Psi_{\!\varrho,\sps L}(\vec{x})=\sqrt{\varrho\sps }\!\sum_{\vec{n}\sps \in\,\Z^3}\sns  e^{\frac{2\pi\sps  i}{L}\mspace{2.25mu}\vec{n}\sps \cdot\,\vec{x}} \,\alpha^{\sps 0}_{\varrho,\sps L}(\vec{n}),\\
    \alpha^{\sps 0}_{\varrho,\sps L}(\vec{n})=\frac{1}{\!\sns \sqrt{\varrho\sps }\sps  L^3}\!\integrate[\Lambda_L]{e^{-\frac{2\pi\sps i}{L}\mspace{2.25mu}\vec{n}\sps \cdot\,\vec{x}}\,\Psi_{\!\varrho,\sps L}(\vec{x});\!d\vec{x}},\qquad\vec{n}\sns \in\Z^3.\label{eq:BECmomentum}
\end{gather}
\end{subequations}
Normalisation~\eqref{eq:becNormalization} implies
$$\sum_{\vec{n}\sps \in\,\Z^3}\abs{\alpha^{\sps 0}_{\varrho,\sps L}(\vec{n})}^2=1,\qquad\forall \varrho,L\sns >\sns 0.$$
We also assume that the tail sum of the Fourier coefficients decays sufficiently fast when $\varrho$ and $L$ are large.
\begin{assumption}\label{ass:tailCondition}
    Let $\{\alpha^{\sps 0}_{\varrho,\sps L}(\vec{n})\sns \}_{\vec{n}\sps \in\,\Z^3}\!\subset\sns \C$ be the sequence defined in~\eqref{eq:BECmomentum}.
    We require
    $$\lim_{M\to\infty}\limsup_{\varrho\to\infty}\limsup_{L\to\infty}\! \sum_{\substack{\vec{m}\sps \in\,\Z^3 \sps :\\[1.5pt] \abs{\vec{m}}\sps >M}} \abs{\alpha^{\sps 0}_{\varrho,\sps L}(\vec{m})}=0.$$
\end{assumption}
Due to Assumption~\ref{ass:translationalInvariance}, we know that if $\Psi_{\!\varrho,\sps L}$ stands for a quasi-complete Bose-Einstein condensate, then Proposition~\ref{th:BECmomentum} ensures the $\ell_2$-convergence (see equation~\eqref{eq:becFourierCoefficients1}) to a Kronecker delta, namely, there exist $\vec{k}_0\!\in\sns \Z^3$ and $\vartheta\in[0,2\pi)$ such that
\begin{equation*}
    \lim_{\varrho\to\infty}\limsup_{L\to\infty}\sum_{n\sps \in\,\Z^3}\abs*{\alpha^{\sps 0}_{\varrho,\sps L}(\vec{n})-e^{i\sps \vartheta}\delta_{\vec{n},\sps \vec{k}_0}}^2\!=0.
\end{equation*}
As pointed out in Remark~\ref{rmk:equivalenceOfConditions}, Assumption~\ref{ass:tailCondition} strengthens this convergence.
Specifically,
\begin{subequations}
\begin{equation}\label{eq:WienerControl}
    \lim_{\varrho\to\infty}\limsup_{L\to\infty}\sum_{\vec{n}\sps \in\,\Z^3}\abs*{\alpha^{\sps 0}_{\varrho,\sps L}(\vec{n})-e^{i\sps \vartheta}\delta_{\vec{n},\sps \vec{k}_0}}=0,
\end{equation}
which implies
\begin{equation}
    \lim_{\varrho\to\infty}\limsup_{L\to\infty}\sup_{\vec{y}\,\in\sps \Lambda_1}\abs*{\sps \tfrac{1}{\!\sns \sqrt{\varrho\sps }\sps }\Psi_{\!\varrho,\sps L}(L\sps \vec{y}) - e^{2\pi\sps i \mspace{2.25mu} \vec{k}_0\sps \cdot\,\vec{y}\,+\,i\sps \vartheta}}=0.
\end{equation}
\end{subequations}
Thus, Assumption~\ref{ass:tailCondition} suffices to change the topology of the convergence required in Definition~\ref{def:QCBEC}, item \textit{i)}:
the macroscopic order parameter is uniformly close to a plane wave in the high-density thermodynamic limit.\newline
For later purposes, we define the shortcut
\begin{equation}\label{def:shortcutSumFourierCoefficiets}
    S^{\sps 0}_{\sns \varrho,\sps L}\vcentcolon=\norm{\alpha^{\sps 0}_{\varrho,\sps L}}[\ell_1(\Z^3)],
\end{equation}
which fulfils, because of Assumption~\ref{ass:tailCondition} (see the proof of Proposition~\ref{th:nonlinearityControl})
\begin{equation}
    \lim_{\varrho\to\infty}\limsup_{L\to\infty} S^{\sps 0}_{\sns \varrho,\sps L}=1.
\end{equation}
Next, we state our final assumption, which enforces an appropriate behaviour of the energy.
\begin{assumption}\label{ass:kineticTailCondition}
    Given the sequence $\{\alpha^{\sps 0}_{\varrho,\sps L}(\vec{n})\sns \}_{\vec{n}\sps \in\,\Z^3}\!\subset\sns \C$ introduced in~\eqref{eq:BECmomentum}, we require that there exists $c\sns >\sns 0$ such that
    $$\limsup_{\varrho\to\infty}\limsup_{L\to\infty}\!\sum_{\substack{\vec{m}\sps \in\,\Z^3 \sps :\\[1.5pt] \abs{\vec{m}}>\sps  c\sps  L}} \mspace{-6mu}\tfrac{\abs{\vec{m}}^2\!\sns }{L^2} \,\abs{\alpha^{\sps 0}_{\varrho,\sps L}(\vec{m})}<\infty.$$
\end{assumption}
This condition controls the second derivative of the order parameter (see Proposition~\ref{th:LaplaceNonlinearityControl})
\begin{equation}
    \limsup_{\varrho\to\infty}\limsup_{L\to\infty}\,\tfrac{1}{\!\sns \sqrt{\varrho\sps }\sps }\sps \norm{\Delta \Psi_{\!\varrho,\sps L}}[\infty]<\infty,
\end{equation}
which is not guaranteed \emph{a priori}, as discussed in Remark~\ref{rmk:necessityOfKineticTailCondition}.
In particular, Assumption~\ref{ass:initialBEC} forces the kinetic contribution to the energy per particle of the quasi-complete Bose-Einstein condensate to vanish when $\varrho\sns \to\sns \infty$ (as pointed out in Remark~\ref{rmk:consequencesOfCOnvergence}), namely when the condensate becomes complete.
Assumption~\ref{ass:kineticTailCondition} is a stronger condition in the same direction, restricting the maximum speed at which the magnitude of escaping momenta can diverge (see Remarks~\ref{rmk:necessityOfKineticTailCondition},~\ref{rmk:vanishingKineticBEC}).

\bigskip

The first result concerns the global well-posedness of the Hartree equation in a suitable Banach space, driving the time evolution of the quasi-complete Bose-Einstein condensate.
\begin{lemma}\label{th:GWP}
    Assume the potential satisfies the decay condition~\eqref{eq:potentialDecay} with $\FT{V}_\infty\!\geq\sns 0$ and\footnote{Calling for such a decay for the Fourier transform of $V_\infty$ implies $V_\infty$ is at least $C^{\mspace{0.75mu}4}(\R^3)$.} $\delta_2\!>\sns 4$.
    Then, given the weighted Wiener algebra $\mathfrak{A}^r(\Lambda_L)$ defined in~\eqref{def:Wiener} for $r\sns \geq\sns 0$, the Hartree equation
    \begin{equation}\label{eq:HartreePDE}
    \begin{dcases}
        i\sps  \partial_t\sps  \Psi_{\!\varrho,\sps L}^{\sps t} \sns = - \Delta \Psi_{\!\varrho,\sps L}^{\sps t} + \tfrac{1}{\varrho} \big(V_L\!\ast\sns  \abs{\Psi_{\!\varrho,\sps L}^{\sps t}}^2\big)\, \Psi_{\!\varrho,\sps L}^{\sps t}\quad\text{in }\,\Lambda_L,\\
        \Psi^{\sps 0}_{\!\varrho,\sps L}\sns =\Psi_{\!\varrho,\sps L}\in \mathfrak{A}^2(\Lambda_L).
    \end{dcases}
    \end{equation}
    admits a unique solution $t\longmapsto\Psi^{\sps t}_{\!\varrho,\sps L}\!\in C^{\mspace{0.75mu}1}\big(\mspace{0.75mu}[0,\infty),\sps \mathfrak{A}^0(\Lambda_L)\mspace{-0.75mu}\big)\mspace{-0.75mu}\cap C^{\sps 0}\big(\mspace{0.75mu}[0,\infty),\sps  \mathfrak{A}^2(\Lambda_L)\mspace{-0.75mu}\big)$ for each fixed $\varrho, L \sns >\sns 0$.
\end{lemma}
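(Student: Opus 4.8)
The plan is to set up a contraction-mapping argument directly in the weighted Wiener algebra $\mathfrak{A}^2(\Lambda_L)$ for local-in-time solvability, and then to globalise by propagating the $\mathfrak{A}^2$-norm through conservation of mass rather than through the nonlinear estimate itself.

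\emph{Local well-posedness.} I would work with the Duhamel form
\[
\Psi^{\+t}_{\!\varrho,\+L}=e^{it\Delta}\Psi_{\!\varrho,\+L}-\frac{i}{\varrho}\int_0^t e^{i(t-s)\Delta}\big(V_L\ast|\Psi^{\+s}_{\!\varrho,\+L}|^2\big)\,\Psi^{\+s}_{\!\varrho,\+L}\,ds .
\]
Two features make the spaces $\mathfrak{A}^r(\Lambda_L)$ the natural environment. First, $e^{it\Delta}$ acts on the scaled Fourier series by the unimodular multiplier $e^{-it|2\pi\vec{n}/L|^2}$, hence is an isometry of each $\mathfrak{A}^r(\Lambda_L)$ for every $r\ge0$ and $t\in\R$, so that no dispersive or Strichartz input is needed on the torus. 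Second, $\mathfrak{A}^r(\Lambda_L)$ is a Banach algebra under pointwise multiplication (immediate from $1+|\vec{n}|\le(1+|\vec{m}|)(1+|\vec{n}-\vec{m}|)$ and definition~\eqref{def:Wiener}), and since $\delta_2>4$ forces $V_L\in\mathfrak{A}^2(\Lambda_L)$, the convolution $g\mapsto V_L\ast g$ is bounded on $\mathfrak{A}^2(\Lambda_L)$, its Fourier multiplier $\FT{V}_\infty(2\pi\vec{n}/L)$ being bounded in modulus by $\mathfrak{b}$. Hence the map $\Psi\mapsto(V_L\ast|\Psi|^2)\,\Psi$ sends $\mathfrak{A}^2(\Lambda_L)$ into itself, is locally Lipschitz, and obeys a cubic bound $\|(V_L\ast|\Psi|^2)\Psi\|_{\mathfrak{A}^2}\le C_L\|\Psi\|_{\mathfrak{A}^2}^3$. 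A standard Banach fixed-point argument on the ball of radius $2\|\Psi_{\!\varrho,\+L}\|_{\mathfrak{A}^2}$ in $C\big([0,T_0],\mathfrak{A}^2(\Lambda_L)\big)$ then yields, for $T_0$ comparable to $\varrho\,C_L^{-1}\|\Psi_{\!\varrho,\+L}\|_{\mathfrak{A}^2}^{-2}$, a unique local solution; it is continuous with values in $\mathfrak{A}^2(\Lambda_L)$ and, since $\partial_t\Psi^{\+t}_{\!\varrho,\+L}=i\Delta\Psi^{\+t}_{\!\varrho,\+L}-\tfrac{i}{\varrho}(V_L\ast|\Psi^{\+t}_{\!\varrho,\+L}|^2)\Psi^{\+t}_{\!\varrho,\+L}$ with $\Delta\colon\mathfrak{A}^2(\Lambda_L)\to\mathfrak{A}^0(\Lambda_L)$ bounded, continuously differentiable with values in $\mathfrak{A}^0(\Lambda_L)$.

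\emph{Global a priori bound.} Because the local solution is strong, $\tfrac{d}{dt}\|\Psi^{\+t}_{\!\varrho,\+L}\|_{L^2}^2=2\,\Im\scalar{\Psi^{\+t}_{\!\varrho,\+L}}{(-\Delta+W^{\+t})\Psi^{\+t}_{\!\varrho,\+L}}=0$, where $W^{\+t}\vcentcolon=\tfrac1\varrho\,V_L\ast|\Psi^{\+t}_{\!\varrho,\+L}|^2$; thus $\|\Psi^{\+t}_{\!\varrho,\+L}\|_{L^2}^2=\varrho L^3$ is conserved. The crucial point is that estimating directly in $\mathfrak{A}^2(\Lambda_L)$ only closes as a \emph{cubic} differential inequality, which would allow finite-time blow-up; instead I would regard~\eqref{eq:HartreePDE} as the \emph{linear} Schr\"odinger equation $i\partial_t\Psi^{\+t}_{\!\varrho,\+L}=(-\Delta+W^{\+t})\Psi^{\+t}_{\!\varrho,\+L}$ whose potential is controlled in high regularity \emph{uniformly in time} by mass conservation: the $\vec{n}$-th Fourier coefficient of $W^{\+t}$ is $\tfrac1\varrho\FT{V}_\infty(2\pi\vec{n}/L)$ times that of $|\Psi^{\+t}_{\!\varrho,\+L}|^2$, the latter bounded in modulus by $\tfrac1{L^3}\|\Psi^{\+t}_{\!\varrho,\+L}\|_{L^2}^2=\varrho$, so $\sup_{t\ge0}\|W^{\+t}\|_{\mathfrak{A}^2(\Lambda_L)}<\infty$ — a finiteness guaranteed precisely by the decay hypothesis~\eqref{eq:potentialDecay} with $\delta_2>4$ (which also makes $V_\infty\in C^{\+4}(\R^3)$). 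Inserting this into Duhamel and using the tame product estimate in $\mathfrak{A}^2(\Lambda_L)$ together with the embedding $\mathfrak{A}^2(\Lambda_L)\hookrightarrow\mathfrak{A}^0(\Lambda_L)$ gives a \emph{linear} integral inequality $\|\Psi^{\+t}_{\!\varrho,\+L}\|_{\mathfrak{A}^2}\le\|\Psi_{\!\varrho,\+L}\|_{\mathfrak{A}^2}+c_L\int_0^t\|\Psi^{\+s}_{\!\varrho,\+L}\|_{\mathfrak{A}^2}\,ds$, whence Gr\"onwall yields the a priori bound $\|\Psi^{\+t}_{\!\varrho,\+L}\|_{\mathfrak{A}^2}\le\|\Psi_{\!\varrho,\+L}\|_{\mathfrak{A}^2}\,e^{c_L t}$ on every interval of existence.

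\emph{Globalisation and regularity.} The exponential a priori bound contradicts the blow-up alternative attached to the local theory, so the maximal existence time is $+\infty$. Uniqueness in $C^{\+0}\big([0,\infty),\mathfrak{A}^2(\Lambda_L)\big)$ and the claimed regularity $t\mapsto\Psi^{\+t}_{\!\varrho,\+L}\in C^{\+1}\big([0,\infty),\mathfrak{A}^0(\Lambda_L)\big)\cap C^{\+0}\big([0,\infty),\mathfrak{A}^2(\Lambda_L)\big)$ then follow from the Duhamel identity by the usual continuation argument. I expect the genuine obstacle to be exactly this passage from the (routine) local theory to a \emph{global} $\mathfrak{A}^2$-bound; the idea that unlocks it is to absorb $|\Psi^{\+t}_{\!\varrho,\+L}|^2$ into a self-consistent potential whose $\mathfrak{A}^2$-norm stays bounded in time by conservation of mass, the summability built into the assumption $\delta_2>4$ being what makes that potential admissible.
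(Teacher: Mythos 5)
Your proposal is correct, and its skeleton coincides with the paper's: Duhamel formula, isometry of $e^{it\Delta}$ on $\mathfrak{A}^r(\Lambda_L)$, the algebra property of the weighted Wiener space, a Banach fixed point on a ball of $C^{\+0}\big([0,T_0],\mathfrak{A}^2(\Lambda_L)\big)$, the blow-up alternative, and finally a \emph{linear} Gr\"onwall inequality for $\norm{\Psi^{\+t}_{\!\varrho,\+L}}[\mathfrak{A}^2(\Lambda_L)]$ that rules out finite-time blow-up (this is exactly Proposition~\ref{th:WienerBanachAlgebra}, Lemma~\ref{th:contractiveMap}, Corollary~\ref{th:LWP} and Proposition~\ref{th:blowupAlternative}). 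The genuine difference is the conserved quantity you use to linearise the estimate. The paper bounds $\norm{V_L\!\ast\-\abs{\Psi^{\+t}_{\!\varrho,\+L}}^2}[\mathfrak{A}^r(\Lambda_L)]$ ($r=0,2$) by a Cauchy--Schwarz inequality in momentum space and controls the factor $\sum_{\vec m}\FT{V}_\infty\big(\tfrac{2\pi}{L}\vec m\big)\abs{\beta^{\+t}_{\varrho,\+L}(\vec m)}^2$ through the conserved Hartree energy (Proposition~\ref{th:energyInFourierCoefficients}); this is precisely where the sign condition $\FT{V}_\infty\geq0$ and the decay $\delta_2>4$ enter, and it produces a Gr\"onwall rate of order $L^{3/2}\sqrt{e_{\varrho,L}}$. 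You instead use only mass conservation: each Fourier coefficient of $\abs{\Psi^{\+t}_{\!\varrho,\+L}}^2$ is bounded in modulus by $\tfrac{1}{L^3}\norm{\Psi^{\+t}_{\!\varrho,\+L}}[2]^2=\varrho$, so the coefficients of $W^{\+t}=\tfrac1\varrho V_L\!\ast\-\abs{\Psi^{\+t}_{\!\varrho,\+L}}^2$ are dominated by $\abs*{\FT{V}_\infty\big(\tfrac{2\pi}{L}\vec m\big)}$ uniformly in $t$, giving $\sup_{t}\norm{W^{\+t}}[\mathfrak{A}^2(\Lambda_L)]<\infty$ for fixed $L$, whence the linear inequality via the algebra property. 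Your route is more elementary: it needs neither $\FT{V}_\infty\geq0$ nor $\delta_2>4$ (summability of $(1+4\pi^2\abs{\vec m}^2/L^2)\abs*{\FT{V}_\infty(\tfrac{2\pi}{L}\vec m)}$ only requires $\delta_2>2$, so ``precisely'' in your last step is an overstatement rather than a necessity), while the paper's energy-based bound yields a better $L$-dependence of the Gr\"onwall constant ($\mathcal{O}(L^{3/2})$ versus your $\mathcal{O}(L^{3})$) --- a difference immaterial for the fixed-$(\varrho,L)$ statement of Lemma~\ref{th:GWP}, since the quantitative, $(\varrho,L)$-uniform and time-pointwise controls actually used later are derived separately in Section~\ref{sec:controlOfNonlinearity}. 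No gap remains in your argument.
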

Hereafter, $\Psi^{\sps t}_{\!\varrho,\sps L}\!\in\sns \mathfrak{A}^2(\Lambda_L)$ shall denote the order parameter evolved through the Hartree flow.
\begin{note}
    We recall two of the conserved quantities of the Hartree equation on the torus $\Lambda_L\,$: for all $t\sns \geq\sns 0$
    \begin{align*}
        &\text{mass} \qquad \norm{\Psi^{\sps t}_{\!\varrho,\sps L}}[2]^2=\norm{\Psi_{\!\varrho,\sps L}}[2]^2,\\
        %\vec{\mathscr{P}}[\Psi^{\sps t}_{\!\varrho,\sps L}]=\Im\!\integrate[\Lambda_L]{\conjugate*{\Psi_{\!\varrho,\sps L}(\vec{x})}\:\nabla_{\!\vec{x}}\Psi_{\!\varrho,\sps L}(\vec{x});\! d\vec{x}},\\
        &\text{energy} \qquad \mathscr{E}_{\varrho,\sps L}[\Psi^{\sps t}_{\!\varrho,\sps L}]=\mathscr{E}_{\varrho,\sps L}[\Psi_{\!\varrho,\sps L}].
    \end{align*}
    Here, $\mathscr{E}_{\varrho,\sps L}$ is the Hartree functional~\eqref{def:HartreeFunctional}.
    Notice that the Hartree evolution meets the conditions pointed out in Remarks~\ref{rmk:massMustBeConserved}, \ref{rmk:HartreeEnergyMustBeConserved}.\newline
    Furthermore, as proven by Proposition~\ref{th:quantifyTotalEnergy}, Assumptions~\ref{ass:translationalInvariance},~\ref{ass:initialBEC}, and~\ref{ass:tailCondition} allow us to quantify the total energy of the quasi-complete Bose-Einstein condensate, \ie $$\lim_{\varrho\to\infty}\limsup_{L\to\infty}\abs*{\frac{\mathscr{E}_{\varrho,\sps L}[\Psi^{\sps t}_{\!\varrho,\sps L}]}{\varrho L^3}-\frac{\mathfrak{1}}{2}\,\FT{V}_\infty(\vec{0})}=0,\qquad\forall t\geq 0.$$
    Because of Assumption~\ref{ass:energyConsistence}, we stress that the Hartree energy of the order parameter is close to the expected energy per particle of the quasi-coherent state $\varphi^{\sps t}_{\varrho,\sps L}$. 
\end{note}
Before proceeding, we observe that the quantity $\var_{\,\mathcal{U}_{\varrho,\sps L}(t)\sps \xi_{\varrho,\sps L}}[\sps \mathcal{N}\sns +\phi(\Psi^{\sps t}_{\!\varrho,\sps L})]$ is time-independent, where $\mathcal{U}_{\varrho,\sps L}(t)$ has been introduced in~\eqref{def:fluctuationDynamics1}.
Indeed, recalling that the following holds for all $f\sns \in\sns  \Lp{2}\big(\Lambda_L\sns \big)$
\begin{subequations}\label{eqs:WeylConjugation}
    \begin{gather}\label{eq:WeylaConjugation}
        \adj{\mathcal{W}\sps }\sns (f)\,a(g)\sps \mathcal{W}(f)\psi=a(g)\psi+\scalar{g}{f}[2]\sps \psi,\qquad\forall g\in\Lp{2}(\Lambda_L),\,\psi\in\fdom{\mathcal{N}},\\
        \adj{\mathcal{W}\sps }\sns (f)\sps \mathcal{N}\sps \mathcal{W}(f)\psi=\mathcal{N}\psi+\phi(f)\psi+\norm{f}[2]^2\,\psi,\qquad\psi\in\dom{\mathcal{N}},\label{eq:WeylNConjugation}
    \end{gather}
\end{subequations}
one has
$${\textstyle \var_{\,\mathcal{U}_{\varrho,\sps L}(t)\sps \xi_{\varrho,\sps L}}}[\sps \mathcal{N}\sns +\phi(\Psi^{\sps t}_{\!\varrho,\sps L})]={\textstyle \var_{\varphi^{\sps t}_{\varrho,\sps L}}}[\sps \mathcal{N}\sns -\varrho L^3\sps \Id],$$
with $\varphi^{\sps t}_{\varrho,\sps L}$ defined in~\eqref{def:evolutionQ-CCS}.
Then, since both $\mathcal{N}$ and $\varrho L^3\sps \Id$ commute with the Hamiltonian, one can exploit~\eqref{eq:WeylNConjugation} again, so that
\begin{equation}
    {\textstyle \var_{\,\mathcal{U}_{\varrho,\sps L}(t)\sps \xi_{\varrho,\sps L}}}[\sps \mathcal{N}\sns +\phi(\Psi^{\sps t}_{\!\varrho,\sps L})]={\textstyle \var_{\,\xi_{\varrho,\sps L}}}[\sps \mathcal{N}\sns +\phi(\Psi_{\!\varrho,\sps L})].
\end{equation}
Furthermore, since $\xi_{\varrho,\sps L}\!\in\sns \dom{\mathcal{N}}$ is a quasi-vacuum state with respect to $\Psi_{\!\varrho,\sps L}\sps $
$$\lim_{\varrho\to\infty}\limsup_{L\to\infty}\sps  \abs*{\frac{\var_{\,\mathcal{U}_{\varrho,\sps L}(t)\sps \xi_{\varrho,\sps L}}[\sps \mathcal{N}\sns +\phi(\Psi^{\sps t}_{\!\varrho,\sps L})]}{\varrho L^3}-1\sps }=0.$$
Therefore, proving that $\mathcal{U}_{\varrho,\sps L}(t)\sps \xi_{\varrho,\sps L}$ is a quasi-vacuum state with respect to $\Psi^{\sps t}_{\!\varrho,\sps L}\sps $ reduces to showing that the expected number of particles in the state $\mathcal{U}_{\varrho,\sps L}(t)\sps \xi_{\varrho,\sps L}$ is smaller than $\varrho L^3$.\newline
The kernel of $\gamma^{(1)}_{\varphi^{\sps t}_{\varrho,\sps L}}$ can be expressed in terms of $\mathcal{U}_{\varrho,\sps L}(t)\sps \xi_{\varrho,\sps L}\sps $, making use of~\eqref{eq:WeylaConjugation}
\begin{equation}\label{eq:gamma1MinusLimit}
    \begin{split}
        \gamma^{(1)}_{\varphi^{\sps t}_{\varrho,\sps L}}\!(\vec{x},\vec{y})-\frac{\Psi^{\sps t}_{\!\varrho,\sps L}(\vec{x}) \sps \conjugate*{\Psi^{\sps t}_{\!\varrho,\sps L}(\vec{y})}}{\norm{\mathcal{N}^{\frac{1}{2}}\mathcal{W}(\Psi_{\!\varrho,\sps L})\sps \xi_{\varrho,\sps L}}^2}=&\:\frac{\scalar{\sps a_{\vec{y}}\,\mathcal{U}_{\varrho,\sps L}(t)\sps \xi_{\varrho,\sps L}}{a_{\vec{x}}\,\mathcal{U}_{\varrho,\sps L}(t)\sps \xi_{\varrho,\sps L}}}{\norm{\mathcal{N}^{\frac{1}{2}}\mathcal{W}(\Psi_{\!\varrho,\sps L})\sps \xi_{\varrho,\sps L}}^2}\sps +\\
        &+\Psi^{\sps t}_{\!\varrho,\sps L}(\vec{x})\,\frac{\scalar{a_{\vec{y}}\,\mathcal{U}_{\varrho,\sps L}(t)\sps \xi_{\varrho,\sps L}}{\mathcal{U}_{\varrho,\sps L}(t)\sps \xi_{\varrho,\sps L}}}{\norm{\mathcal{N}^{\frac{1}{2}}\mathcal{W}(\Psi_{\!\varrho,\sps L})\sps \xi_{\varrho,\sps L}}^2}\sps +\\
        &+\conjugate*{\Psi^{\sps t}_{\!\varrho,\sps L}(\vec{y})}\,\frac{\scalar{\sps \mathcal{U}_{\varrho,\sps L}(t)\sps \xi_{\varrho,\sps L}}{a_{\vec{x}}\,\mathcal{U}_{\varrho,\sps L}(t)\sps \xi_{\varrho,\sps L}}}{\norm{\mathcal{N}^{\frac{1}{2}}\mathcal{W}(\Psi_{\!\varrho,\sps L})\sps \xi_{\varrho,\sps L}}^2}.
    \end{split}
\end{equation}
Taking the trace of both sides of the equation yields
\begin{equation*}
    \norm{\mathcal{N}^{\frac{1}{2}}\mathcal{W}(\Psi_{\!\varrho,\sps L})\sps \xi_{\varrho,\sps L}}^2-\varrho L^3=\scalar{\sps \mathcal{U}_{\varrho,\sps L}(t)\sps \xi_{\varrho,\sps L}}{\big(\mathcal{N}\sns +\phi(\Psi^{\sps t}_{\!\varrho,\sps L})\big)\,\mathcal{U}_{\varrho,\sps L}(t)\sps \xi_{\varrho,\sps L}}.
\end{equation*}
Hence, since $\mathcal{W}(\Psi_{\!\varrho,\sps L})\sps \xi_{\varrho,\sps L}$ is quasi-coherent
\begin{equation}\label{eq:vanishingExpectationNplusSegal}
    \lim_{\varrho\to\infty}\limsup_{L\to\infty}\sps \frac{1}{\varrho L^3}\sns \abs*{\mathbb{E}_{\,\mathcal{U}_{\varrho,\sps L}(t)\sps \xi_{\varrho,\sps L}}[\sps \mathcal{N}\sns +\phi(\Psi^{\sps t}_{\!\varrho,\sps L})]}=0.
\end{equation}
However, this alone does not establish that $\mathcal{U}_{\varrho,\sps L}(t)\sps \xi_{\varrho,\sps L}$ is a quasi-vacuum state with respect to $\Psi^{\sps t}_{\!\varrho,\sps L}\sps $, since we need the same statement for the expectation value to hold for the sole observable $\mathcal{N}$.
This is the content of the following lemma.
\begin{lemma}\label{th:fluctuationsNumber}
    Let $\xi_{\varrho,\sps L}\!\in\sns \fdom{\mathcal{H}_{\varrho,\sps L}}$ be a quasi-vacuum state with respect to $\Psi_{\!\varrho,\sps L}\!\in\sns \mathfrak{A}^2(\Lambda_L)$ such that Assumptions~\ref{ass:initialBEC},~\ref{ass:energyConsistence},~\ref{ass:translationalInvariance},~\ref{ass:tailCondition}, and~\ref{ass:kineticTailCondition} are fulfilled.
    Moreover, we introduce the shortcuts
    \begin{gather*}
        \mathfrak{n}_{\varrho,\sps L}\vcentcolon=\frac{1}{\varrho L^3}\,\mathbb{E}_{\,\xi_{\varrho,\sps L}}[\sps \mathcal{N}\sps ],\\
        \mathfrak{e}_{\varrho,\sps L}\vcentcolon=\frac{1}{\varrho L^3}\abs*{\mathcal{H}_{\varrho,\sps L}\big[\mathcal{W}(\Psi_{\!\varrho,\sps L})\sps \xi_{\varrho,\sps L}\big]\sns -\mathscr{E}_{\varrho,\sps L}[\Psi_{\!\varrho,\sps L}]},
    \end{gather*}
    both vanishing in the iterated limit, where $\mathscr{E}_{\varrho,\sps L}$ is the Hartree functional~\eqref{def:HartreeFunctional}.\newline
    Then, given $S^{\sps 0}_{\sns \varrho,\sps L}$ defined by~\eqref{def:shortcutSumFourierCoefficiets} and a fixed $0\sns <\sns T\sns <\sns \big(2\sps \FT{V}_\infty(\vec{0})\sps (S^{\sps 0}_{\sns \varrho,\sps L})^2\big)^{-1}$, there exist $c_{\varrho,\sps L},\,\omega_{\varrho,\sps L} \sns >\sns 0$ such that
    \begin{gather}
        \limsup_{\varrho\to\infty}\limsup_{L\to\infty} \,c_{\varrho,\sps L}<\infty,\qquad \limsup_{\varrho\to\infty}\limsup_{L\to\infty} \,\omega_{\varrho,\sps L}<\infty,\nonumber\\
        \frac{1}{\varrho L^3}\,\mathbb{E}_{\,\mathcal{U}_{\varrho,\sps L}(t)\sps \xi_{\varrho,\sps L}}[\sps \mathcal{N}\sps ]\leq c_{\varrho,\sps L} \,e^{\omega_{\varrho,\sps L}\, t}\!\left(\!\sqrt{\mathfrak{n}_{\varrho,\sps L}\sps }+\mathfrak{n}_{\varrho,\sps L}+\mathfrak{e}_{\varrho,\sps L}+\frac{1}{\varrho}\right)\!,\qquad \forall t\in[0,T],\label{eq:excitationsControl}
    \end{gather}
    provided $\mathcal{U}_{\varrho,\sps L}(t)$ defined by~\eqref{def:fluctuationDynamics1}, with $t\longmapsto\Psi_{\!\varrho,\sps L}^{\sps t}\!\in\sns  C^{\mspace{0.75mu}1}\big(\mspace{0.75mu}[0,\infty),\mathfrak{A}^0(\Lambda_L)\mspace{-0.75mu}\big)\mspace{-0.75mu}\cap C^{\sps 0}\big(\mspace{0.75mu}[0,\infty), \mathfrak{A}^2(\Lambda_L)\mspace{-0.75mu}\big)$ solving the Hartree equation~\eqref{eq:HartreePDE}.
\end{lemma}
On the basis of the result of Lemma~\ref{th:fluctuationsNumber}, one can show the convergence of the one-particle reduced density matrix, at least for a finite time interval (see Remark~\ref{rmk:finiteTimeIntervalValidity} for additional comments on this issue).
\begin{theo}\label{th:gamma1Convergence}
    Under the hypotheses of Lemma~\ref{th:fluctuationsNumber}, let $\varphi^{\sps t}_{\varrho,\sps L}\!\in\sns \fdom{\mathcal{H}_{\varrho,\sps L}}$ be the quasi-coherent state defined by~\eqref{def:evolutionQ-CCS}.
    Then, given $\gamma^{(1)}_{\varphi^{\sps t}_{\varrho,\sps L}}\!\!\in\sns \schatten{1}\big(\Lp{2}(\Lambda_L)\sns \big)$ the integral operator whose kernel is provided by~\eqref{def:gamma1Kernel}, for every $0\sns <\sns T\sns <\sns (2\sps \FT{V}_\infty(\vec{0}))^{-1}$ %there exist $c,\omega>0$ such that for all $t\in[0,T]$
    \begin{equation}\label{wts:HS-Convergence}
        \lim_{\varrho\to\infty}\limsup_{L\to\infty} \norm*{\gamma^{(1)}_{\varphi^{\sps t}_{\varrho,\sps L}}\!\!-\:\tfrac{|\Psi^{\sps t}_{\!\varrho,\sps L}\rangle\langle\Psi^{\sps t}_{\!\varrho,\sps L}|}{\varrho L^3}\,}[\mathrm{HS}] \!=\sps  0,\qquad\forall t\in[0,T].
    \end{equation}
\end{theo}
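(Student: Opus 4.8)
The plan is to derive~\eqref{wts:HS-Convergence} from the excitation bound of Lemma~\ref{th:fluctuationsNumber}, feeding it into the exact expansion~\eqref{eq:gamma1MinusLimit} of the one-particle reduced density matrix in terms of the excitation vector $\mathcal{U}_{\varrho,\+L}(t)\+\xi_{\varrho,\+L}$. Write $\mathcal{D}_{\varrho,L}\vcentcolon=\norm{\mathcal{N}^{\frac{1}{2}}\mathcal{W}(\Psi_{\!\varrho,\+L})\+\xi_{\varrho,\+L}}^2$; this is the normalisation appearing both in~\eqref{def:gamma1Kernel} and in~\eqref{eq:gamma1MinusLimit}, and it is $t$-independent because $\mathcal{N}$ commutes with $\mathcal{H}_{\varrho,\+L}$. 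The first step is the splitting
$$\gamma^{(1)}_{\varphi^{\+t}_{\varrho,\+L}}\!-\tfrac{|\Psi^{\+t}_{\!\varrho,\+L}\rangle\langle\Psi^{\+t}_{\!\varrho,\+L}|}{\varrho L^3}=\Big(\gamma^{(1)}_{\varphi^{\+t}_{\varrho,\+L}}\!-\tfrac{|\Psi^{\+t}_{\!\varrho,\+L}\rangle\langle\Psi^{\+t}_{\!\varrho,\+L}|}{\mathcal{D}_{\varrho,L}}\Big)+\Big(\tfrac{1}{\mathcal{D}_{\varrho,L}}-\tfrac{1}{\varrho L^3}\Big)\,|\Psi^{\+t}_{\!\varrho,\+L}\rangle\langle\Psi^{\+t}_{\!\varrho,\+L}|.$$
The last summand is rank one, so its Hilbert--Schmidt norm is $\norm{\Psi^{\+t}_{\!\varrho,\+L}}[2]^2\,\lvert\mathcal{D}_{\varrho,L}^{-1}-(\varrho L^3)^{-1}\rvert=\lvert\varrho L^3/\mathcal{D}_{\varrho,L}-1\rvert$ by mass conservation; by~\eqref{eq:WeylNConjugation}, $\mathcal{D}_{\varrho,L}=\varrho L^3+\mathbb{E}_{\,\xi_{\varrho,\+L}}[\mathcal{N}+\phi(\Psi_{\!\varrho,\+L})]$, and since $\xi_{\varrho,\+L}$ is a quasi-vacuum state with respect to $\Psi_{\!\varrho,\+L}$, the Remark following Definition~\ref{def:Q-Omega} gives $\mathbb{E}_{\,\xi_{\varrho,\+L}}[\mathcal{N}+\phi(\Psi_{\!\varrho,\+L})]=\oSmall{\varrho L^3}$ in the iterated limit. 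Hence $\mathcal{D}_{\varrho,L}=\varrho L^3\,(1+\oSmall{1})$ and the last summand is negligible.

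For the first summand on the right-hand side I would insert~\eqref{eq:gamma1MinusLimit}, which presents it as $\mathcal{D}_{\varrho,L}^{-1}$ times the sum of a diagonal operator, with kernel $\scalar{a_{\vec{y}}\,\mathcal{U}_{\varrho,\+L}(t)\+\xi_{\varrho,\+L}}{a_{\vec{x}}\,\mathcal{U}_{\varrho,\+L}(t)\+\xi_{\varrho,\+L}}$, and two mutually adjoint cross terms built from $\Psi^{\+t}_{\!\varrho,\+L}(\vec{x})\,\scalar{a_{\vec{y}}\,\mathcal{U}_{\varrho,\+L}(t)\+\xi_{\varrho,\+L}}{\mathcal{U}_{\varrho,\+L}(t)\+\xi_{\varrho,\+L}}$. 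Put $\mathfrak{m}_{\varrho,\+L}(t)\vcentcolon=\mathbb{E}_{\,\mathcal{U}_{\varrho,\+L}(t)\+\xi_{\varrho,\+L}}[\mathcal{N}]=\norm{\mathcal{N}^{\frac{1}{2}}\mathcal{U}_{\varrho,\+L}(t)\+\xi_{\varrho,\+L}}^2$. The diagonal operator is non-negative with trace $\mathfrak{m}_{\varrho,\+L}(t)$ by~\eqref{eq:NumberFromIntegral}, so its Hilbert--Schmidt norm is at most its trace norm, giving $\mathfrak{m}_{\varrho,\+L}(t)/\mathcal{D}_{\varrho,L}$. Each cross term is a rank-one operator having $\Psi^{\+t}_{\!\varrho,\+L}$ as one factor and, as the other, a function $h_t$ with $\norm{h_t}[2]^2\le\mathcal{D}_{\varrho,L}^{-2}\int_{\Lambda_L}\norm{a_{\vec{y}}\,\mathcal{U}_{\varrho,\+L}(t)\+\xi_{\varrho,\+L}}^2\,d\vec{y}=\mathfrak{m}_{\varrho,\+L}(t)/\mathcal{D}_{\varrho,L}^2$, by Cauchy--Schwarz in Fock space (using $\norm{\mathcal{U}_{\varrho,\+L}(t)\+\xi_{\varrho,\+L}}=1$) and~\eqref{eq:NumberFromIntegral}; hence each cross term has Hilbert--Schmidt norm at most $\norm{\Psi^{\+t}_{\!\varrho,\+L}}[2]\,\norm{h_t}[2]=(\varrho L^3)^{1/2}\,\mathfrak{m}_{\varrho,\+L}(t)^{1/2}/\mathcal{D}_{\varrho,L}$. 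Adding the three pieces and the normalisation term, the triangle inequality yields, for every $t$, $\norm*{\gamma^{(1)}_{\varphi^{\+t}_{\varrho,\+L}}-(\varrho L^3)^{-1}|\Psi^{\+t}_{\!\varrho,\+L}\rangle\langle\Psi^{\+t}_{\!\varrho,\+L}|}[\mathrm{HS}]\le\lvert\varrho L^3/\mathcal{D}_{\varrho,L}-1\rvert+\mathfrak{m}_{\varrho,\+L}(t)/\mathcal{D}_{\varrho,L}+2\,(\varrho L^3)^{1/2}\mathfrak{m}_{\varrho,\+L}(t)^{1/2}/\mathcal{D}_{\varrho,L}$.

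It then remains to show that the right-hand side tends to $0$ in the iterated limit, uniformly on $[0,T]$. Since $\mathcal{D}_{\varrho,L}=\varrho L^3(1+\oSmall{1})$, it suffices that $\mathfrak{m}_{\varrho,\+L}(t)/(\varrho L^3)\to0$ in the iterated limit (first $L\to\infty$, then $\varrho\to\infty$), because then the second summand vanishes and the third behaves like $2\,(\mathfrak{m}_{\varrho,\+L}(t)/(\varrho L^3))^{1/2}\to0$. Given $0<T<(2\,\FT{V}_\infty(\vec{0}))^{-1}$, one has $(2\,\FT{V}_\infty(\vec{0})\,T)^{-1}>1$, while $\limsup_{\varrho\to\infty}\limsup_{L\to\infty}(S^{\+0}_{\-\varrho,\+L})^2=1$ with $S^{\+0}_{\-\varrho,\+L}$ defined in~\eqref{def:shortcutSumFourierCoefficiets}; hence, for $\varrho$ and then $L$ large, $T$ lies in the window $\big[0,(2\,\FT{V}_\infty(\vec{0})(S^{\+0}_{\-\varrho,\+L})^2)^{-1}\big)$ where Lemma~\ref{th:fluctuationsNumber} applies, giving $\mathfrak{m}_{\varrho,\+L}(t)/(\varrho L^3)\le c_{\varrho,\+L}\,e^{\omega_{\varrho,\+L}\,t}\big(\sqrt{\mathfrak{n}_{\varrho,\+L}}+\mathfrak{n}_{\varrho,\+L}+\mathfrak{e}_{\varrho,\+L}+\varrho^{-1}\big)$ for $t\in[0,T]$, with $c_{\varrho,\+L},\omega_{\varrho,\+L}$ bounded in the iterated limit and $\mathfrak{n}_{\varrho,\+L},\mathfrak{e}_{\varrho,\+L}\to0$ (this last by the quasi-vacuum property of $\xi_{\varrho,\+L}$ and the energy quasi-self-consistency of $\varphi^{\+0}_{\varrho,\+L}$, \ie~Assumptions~\ref{ass:initialBEC}--\ref{ass:kineticTailCondition}). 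Taking $\limsup_{L\to\infty}$ and then $\lim_{\varrho\to\infty}$ therefore sends the whole bound to $0$, uniformly on $[0,T]$, which proves~\eqref{wts:HS-Convergence}. I expect no genuine obstacle in this deduction: all the analytic difficulty is concentrated in Lemma~\ref{th:fluctuationsNumber} --- the Gr\"onwall control of the number of excitations along the fluctuation dynamics~\eqref{def:fluctuationDynamics1}, combined with the propagation of the order-parameter/Hartree estimates over $[0,T]$ --- while the step above is essentially bookkeeping with Schatten norms and the quasi-vacuum definitions of Section~\ref{sec:qcBEC}.
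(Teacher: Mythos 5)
Your proposal is correct and follows essentially the same route as the paper's proof: both start from the kernel identity~\eqref{eq:gamma1MinusLimit}, bound the Hilbert--Schmidt norms of the resulting three operators in terms of $\mathbb{E}_{\,\mathcal{U}_{\varrho,\+L}(t)\+\xi_{\varrho,\+L}}[\+\mathcal{N}\+]$ and its square root, invoke Lemma~\ref{th:fluctuationsNumber}, and finally trade the normalisation $\norm{\mathcal{N}^{1/2}\mathcal{W}(\Psi_{\!\varrho,\+L})\+\xi_{\varrho,\+L}}^2$ for $\varrho L^3$ via the quasi-coherence of the initial state (\cfr~\eqref{eq:limsupOfInverse}). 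The only differences are cosmetic -- you apply the triangle inequality to the three operators and recompute the normalisation through the Weyl conjugation identity~\eqref{eq:WeylNConjugation}, where the paper uses the power-mean inequality on the kernels and cites~\eqref{eq:limsupOfInverse} directly -- and your explicit remark that $T<(2\+\FT{V}_\infty(\vec{0})\+(S^{\+0}_{\-\varrho,\+L})^2)^{-1}$ holds for $\varrho$ and $L$ large is a welcome clarification the paper leaves implicit.
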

\begin{note}
    The upper bound~\eqref{eq:excitationsControl} actually permits evaluating the rate of convergence of the one-particle reduced density matrix.
    Specifically, given $c_{\varrho,\sps L}, \omega_{\varrho,\sps L}$ and $T$ as in Lemma~\ref{th:fluctuationsNumber}, we set
    $$c=\limsup_{\varrho\to\infty}\limsup_{L\to\infty} \,c_{\varrho,\sps L},\qquad\omega=\limsup_{\varrho\to\infty}\limsup_{L\to\infty} \,\omega_{\varrho,\sps L}.$$
    Then, we have in the iterated limit
    $$\norm*{\gamma^{(1)}_{\varphi^{\sps t}_{\varrho,\sps L}}\!\!-\:\tfrac{|\Psi^{\sps t}_{\!\varrho,\sps L}\rangle\langle\Psi^{\sps t}_{\!\varrho,\sps L}|}{\varrho L^3}\,}[\mathrm{HS}]\!\sim \sqrt{\mathfrak{n}_{\varrho,\sps L}}+\sqrt{6\sps c\sps }\sps e^{\omega\,t/2} \sqrt{\sqrt{\mathfrak{n}_{\varrho,\sps L}}+\mathfrak{e}_{\varrho,L}+\tfrac{1}{\varrho}\sps }\sps ,\qquad \forall t\in[0,T].$$
\end{note}
\begin{note}
    As pointed out in~\cite[Remark 1.4]{RoSc09} or~\cite[Footnote 3, p. 8]{BePoSc16}, one has that the trace norm is controlled from above by twice the Hilbert-Schmidt norm, since $\tr\gamma^{(1)}_{\varphi^{\sps t}_{\varrho,\sps L}}\!=1$ and $\frac{1}{\varrho L^3}\tr|\Psi^{\sps t}_{\!\varrho,\sps L}\rangle\langle\Psi^{\sps t}_{\!\varrho,\sps L}|=1$ as well.
    As a consequence,~\eqref{wts:HS-Convergence} implies
    \begin{equation}\label{wts:TN-Convergence}
        \lim_{\varrho\to\infty}\limsup_{L\to\infty}\norm*{\gamma^{(1)}_{\varphi^{\sps t}_{\varrho,\sps L}}\!\!-\:\tfrac{|\Psi^{\sps t}_{\!\varrho,\sps L}\rangle\langle\Psi^{\sps t}_{\!\varrho,\sps L}|}{\varrho L^3}\,}[\mathrm{Tr}] \!=\sps 0.
    \end{equation}
    Additionally, for any \emph{intensive} one-particle observable $J_{\varrho,\sps L}\!\in\sns \schatten{2}\big(\Lp{2}[\Lambda_L]\big)$ (meaning that both $\norm{J_{\varrho,\sps L}}[\mathrm{HS}]$ and $\norm{J_{\varrho,\sps L}}[\mathrm{Op}]$ do not depend on $L$) such that either its operator norm or its Hilbert-Schmidt norm do not grow too fast in $\varrho$, one has that its expectation can be approximately computed by replacing the one-particle reduced density matrix with the projection onto the (normalised) quasi-complete Bose-Einstein condensate, since
    $$\tr\abs*{\,J_{\varrho,\sps L}\!\left(\gamma^{(1)}_{\varphi^{\sps t}_{\varrho,\sps L}}\!\!-\:\tfrac{|\Psi^{\sps t}_{\!\varrho,\sps L}\rangle\langle\Psi^{\sps t}_{\!\varrho,\sps L}|}{\varrho L^3}\right)} \!\leq\sps  \min\Big\{2\sps \norm{J_{\varrho,\sps L}}[\mathrm{Op}],\sps  \norm{J_{\varrho,\sps L}}[\mathrm{HS}]\Big\}\norm*{\gamma^{(1)}_{\varphi^{\sps t}_{\varrho,\sps L}}\!\!-\:\tfrac{|\Psi^{\sps t}_{\!\varrho,\sps L}\rangle\langle\Psi^{\sps t}_{\!\varrho,\sps L}|}{\varrho L^3}\,}[\mathrm{HS}]\!.$$
\end{note}
\hide{\begin{note}
    Because of Proposition~\ref{th:propagationExistenceBEC}, one could rephrase the result of Theorem~\ref{th:gamma1Convergence} by replacing $\Psi^{\sps t}_{\!\varrho,\sps L}$ with $\sqrt{\varrho\sps }\sps \Psi^{\sps t}\big(\tfrac{\vec{\cdot}}{L}\big)$, since
    $$\tfrac{1}{\varrho L^3}\norm*{\sps |\Psi^{\sps t}_{\!\varrho,\sps L}\rangle\langle\Psi^{\sps t}_{\!\varrho,\sps L}|-\varrho \,|\Psi^{\sps t}\big(\tfrac{\vec{\cdot}}{L}\big)\rangle\langle\Psi^{\sps t}\big(\tfrac{\vec{\cdot}}{L}\big)|\sps }[\mathrm{HS}]^2\leq \tfrac{3}{(\varrho L^3)^2}\norm{\Psi^{\sps t}_{\!\varrho,\sps L}\sns -\sqrt{\varrho\sps }\sps \Psi^{\sps t}\big(\tfrac{\vec{\cdot}}{L}\big)}[2]^4+\tfrac{6}{\varrho L^3}\norm{\Psi^{\sps t}_{\!\varrho,\sps L}\sns -\sqrt{\varrho\sps }\sps \Psi^{\sps t}\big(\tfrac{\vec{\cdot}}{L}\big)}[2]^2,$$
    which is small.
\end{note}
}
\begin{note}\label{rmk:notProvingTD}
    We cannot expect to prove the convergence of the one-particle reduced density matrix as $L$ grows large, that is, regardless of the value of $\varrho$, since this would imply solving the full thermodynamic problem.
\end{note}
\begin{cor}\label{th:propertiesPreservation}
    As a consequence of Lemma~\ref{th:fluctuationsNumber}, one has that $\mathcal{U}_{\varrho,\sps L}(t)\sps \xi_{\varrho,\sps L}$ is a quasi-vacuum state with respect to $\Psi^{\sps t}_{\!\varrho,\sps L}\sps $, and $\varphi^{\sps t}_{\varrho,\sps L}\!=e^{-i\,\mathcal{H}_{\varrho,\sps L}\sps t}\,\mathcal{W}(\Psi_{\!\varrho,\sps L})\sps \xi_{\varrho,\sps L}$ is quasi-canonical coherent and exhibits the quasi-complete Bose-Einstein condensate $\Psi^{\sps t}_{\!\varrho,\sps L}\sps $.
\end{cor}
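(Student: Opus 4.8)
The plan is to verify directly the three items of Definition~\ref{def:Q-Omega} for the excitation state $\mathcal{U}_{\varrho,\+L}(t)\+\xi_{\varrho,\+L}$ relative to the Hartree-evolved order parameter $\Psi^{\+t}_{\!\varrho,\+L}$, and then to read off the two assertions about $\varphi^{\+t}_{\varrho,\+L}$ from the structural correspondences already recorded in Section~\ref{sec:qcBEC}. Throughout one uses mass conservation along the Hartree flow, $\norm{\Psi^{\+t}_{\!\varrho,\+L}}[2]^2=\norm{\Psi_{\!\varrho,\+L}}[2]^2=\varrho L^3$, so that every normalisation appearing in Definition~\ref{def:Q-Omega} is by $\varrho L^3$; one also records that $\limsup_{\varrho\to\infty}\limsup_{L\to\infty}S^{\+0}_{\-\varrho,\+L}=1$, so that any fixed $T<(2\+\FT{V}_\infty(\vec{0}))^{-1}$ eventually satisfies $T<\big(2\+\FT{V}_\infty(\vec{0})\+(S^{\+0}_{\-\varrho,\+L})^2\big)^{-1}$, which is what makes the iterated limits below legitimate.

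Item \textit{i)} is immediate: $\mathcal{U}_{\varrho,\+L}(t)$ is a product of unitaries, hence $\norm{\mathcal{U}_{\varrho,\+L}(t)\+\xi_{\varrho,\+L}}=\norm{\xi_{\varrho,\+L}}=1$. For item \textit{ii)} one rewrites
$$\frac{\norm{\mathcal{N}^{\frac{1}{2}}\mathcal{U}_{\varrho,\+L}(t)\+\xi_{\varrho,\+L}}^2}{\norm{\Psi^{\+t}_{\!\varrho,\+L}}[2]^2}=\frac{1}{\varrho L^3}\,\mathbb{E}_{\,\mathcal{U}_{\varrho,\+L}(t)\+\xi_{\varrho,\+L}}[\+\mathcal{N}\+]$$
and invokes Lemma~\ref{th:fluctuationsNumber}: the bound~\eqref{eq:excitationsControl}, together with $\mathfrak{n}_{\varrho,\+L},\mathfrak{e}_{\varrho,\+L}\to 0$ in the iterated limit, $\tfrac{1}{\varrho}\to 0$, and $\limsup_{\varrho\to\infty}\limsup_{L\to\infty}c_{\varrho,\+L}<\infty$, $\limsup_{\varrho\to\infty}\limsup_{L\to\infty}\omega_{\varrho,\+L}<\infty$ (which keeps the prefactor $c_{\varrho,\+L}\+e^{\omega_{\varrho,\+L}t}$ bounded on $[0,T]$), forces this ratio to vanish as $\varrho\to\infty$ after $L\to\infty$, for each fixed $t\in[0,T]$. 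Item \textit{iii)} needs no new argument: as already shown in the discussion preceding Lemma~\ref{th:fluctuationsNumber}, the Weyl conjugation identities~\eqref{eqs:WeylConjugation} together with the fact that $\mathcal{N}$ and $\varrho L^3\+\Id$ commute with $\mathcal{H}_{\varrho,\+L}$ give ${\var_{\,\mathcal{U}_{\varrho,\+L}(t)\+\xi_{\varrho,\+L}}}[\+\mathcal{N}\-+\phi(\Psi^{\+t}_{\!\varrho,\+L})]={\var_{\,\xi_{\varrho,\+L}}}[\+\mathcal{N}\-+\phi(\Psi_{\!\varrho,\+L})]$, and the quasi-vacuum property of $\xi_{\varrho,\+L}$ then yields the required limit. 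Hence $\mathcal{U}_{\varrho,\+L}(t)\+\xi_{\varrho,\+L}$ is a quasi-vacuum state with respect to $\Psi^{\+t}_{\!\varrho,\+L}$.

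Finally, one uses $\varphi^{\+t}_{\varrho,\+L}=\mathcal{W}(\Psi^{\+t}_{\!\varrho,\+L})\,\mathcal{U}_{\varrho,\+L}(t)\+\xi_{\varrho,\+L}$ — which follows from $\mathcal{W}(\Psi^{\+t}_{\!\varrho,\+L})\,\mathcal{U}_{\varrho,\+L}(t)=e^{-i\,\mathcal{H}_{\varrho,\+L}\+t}\,\mathcal{W}(\Psi_{\!\varrho,\+L})$ in~\eqref{def:fluctuationDynamics1} — so that $\varphi^{\+t}_{\varrho,\+L}$ is the Weyl transform of a quasi-vacuum state. Proposition~\ref{th:WeylCoherent} then yields at once that $\varphi^{\+t}_{\varrho,\+L}$ is quasi-canonical coherent and a quasi-eigenstate of $a\big(\Psi^{\+t}_{\!\varrho,\+L}/\!\sqrt{\varrho L^3\+}\big)$ with quasi-eigenvalue $\scalar{\Psi^{\+t}_{\!\varrho,\+L}/\!\sqrt{\varrho L^3\+}}{\Psi^{\+t}_{\!\varrho,\+L}}[2]=\sqrt{\varrho L^3\+}$ (the underlying computation being~\eqref{eq:WeylaConjugation} with $f=\Psi^{\+t}_{\!\varrho,\+L}$ and $g=\Psi^{\+t}_{\!\varrho,\+L}/\!\sqrt{\varrho L^3\+}$). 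Proposition~\ref{th:propagationExistenceBEC} guarantees the persistence at positive times of the macroscopic limit of $\Psi^{\+t}_{\!\varrho,\+L}$, so the hypothesis of Proposition~\ref{th:eigenfunctionOfaMeansQBEC} is met, and that proposition delivers quasi-complete condensation of $\varphi^{\+t}_{\varrho,\+L}$ with $\Psi^{\+t}_{\!\varrho,\+L}$ as a quasi-complete Bose-Einstein condensate. I do not expect a genuine obstacle here: the Corollary is a repackaging of Lemma~\ref{th:fluctuationsNumber} through the quasi-vacuum/quasi-coherent dictionary of Propositions~\ref{th:WeylCoherent},~\ref{th:eigenfunctionOfaMeansQBEC} and~\ref{th:propagationExistenceBEC}; the only point demanding care is the compatibility of the $(\varrho,L)$-dependent time window of Lemma~\ref{th:fluctuationsNumber} with the iterated limit, which is handled by $\limsup S^{\+0}_{\-\varrho,\+L}=1$.
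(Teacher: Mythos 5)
Your proposal is correct and follows essentially the same route the paper intends: item \textit{ii)} of Definition~\ref{def:Q-Omega} from Lemma~\ref{th:fluctuationsNumber} together with mass conservation, items \textit{i)} and \textit{iii)} from unitarity and the variance identity established before Lemma~\ref{th:fluctuationsNumber}, and then Propositions~\ref{th:WeylCoherent},~\ref{th:propagationExistenceBEC} and~\ref{th:eigenfunctionOfaMeansQBEC} (via~\eqref{eq:limsupOfInverse}) to obtain quasi-canonical coherence and quasi-complete condensation, exactly as sketched in Section~\ref{sec:qcBEC}. Your observation on the compatibility of the $(\varrho,L)$-dependent time window with the iterated limit via $\limsup S^{\+0}_{\-\varrho,\+L}=1$ is also the intended resolution.
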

The results established in this section ultimately demonstrate that a Bose-Einstein condensate -- whose completeness increases with the density of the system -- is preserved by the many-body Schr{\"o}dinger evolution, at least over finite time intervals.
In particular, we have shown that the high-density thermodynamic limit of the reduced one-particle density matrix exists, in the sense that the difference between the lower and upper limits in $L$ vanishes as $\varrho\sns \to\sns \infty$.
This behaviour entails that the approximation obtained in the high-density regime becomes progressively more accurate as the density increases, without the need to invoke the unphysical limit of infinite density.
Indeed, the rate of convergence provides not only a mathematically controlled derivation of the effective dynamics but also a quantitative measure of the validity of the approximation for any fixed -- yet sufficiently large -- finite density.

\subsection*{Notation Adopted}
For reader's convenience, we collect here some of the notation used in the paper.
\begin{itemize}
    \item Given the $n$-dimensional Euclidean space $(\R^n,\sps \cdot\sps )$, $\vec{x}$ denotes a vector in $\R^n$ and $\abs{\vec{x}}$ its magnitude.
    \item For any $p\in \sns [1,\infty]$ and Borel set $\Omega\subset\R^n$, $\Lp{p}(\Omega)$ is the Banach space of $p\sps $-integrable functions with respect to the Lebesgue measure.
    We write $\define{\norm{\sps \cdot\sps }[p]\sns ; \sns \norm{\sps \cdot\sps }[\Lp{\sps p}[\Lambda_L]]}$.\newline
    If $\Omega$ is countable, $\ell_p(\Omega)$ is the Banach space of $p$-summable sequences.
    \item For a Borel set $\Omega\subset\R^n$, $W^{r,\sps p}(\Omega)$ is the Bessel potential space of functions with $r\sns >\sns 0$ fractional derivatives (defined via Fourier transform) in $\Lp{p}[\Omega]$.\newline
    Moreover, $H^r(\Omega)\vcentcolon= W^{r,\sps 2}(\Omega)$ denotes the Hilbert-Sobolev space of order $r$.
    \item If $\hilbert*$ is a complex Hilbert space, $\scalar{\cdot\sps }{\!\cdot}_{\hilbert*}\sps $ and  $\norm{\sps \cdot\sps }[\hilbert*]\!\vcentcolon= \!\sqrt{\scalar{\cdot\sps }{\!\cdot}_{\hilbert*}}$ denote its inner product and the induced norm, respectively.\newline
    We simply write $\scalar{\cdot\sps }{\!\cdot}$ and $\norm{\sps \cdot\sps }$ when $\hilbert*=\fockS\big(\Lp{2}[\Lambda_L]\big)$.
    \item Given two Hilbert spaces $\hilbert*_1$, $\hilbert*_2$, $\linear{\hilbert*_1,\hilbert*_2}$ and $\bounded{\hilbert*_1,\hilbert*_2}$ denote, respectively, the set of linear operators and the Banach space of bounded operators from $\hilbert*_1$ to $\hilbert*_2$.\newline
    We also set $\linear{\hilbert*_1}\vcentcolon=\linear{\hilbert*_1,\hilbert*_1}$ and $\bounded{\hilbert*_1}\vcentcolon=\bounded{\hilbert*_1,\hilbert*_1}$.
    \item Given two Hilbert spaces $\hilbert*_1$, $\hilbert*_2$ and an operator $A,\dom{A}\!\in\!\linear{\hilbert*_1,\hilbert*_2}$, the linear subspace $\dom{A}\sns \subseteq\sns \hilbert*_1$ stands for its domain, and $\fdom{A}\supseteq \dom{A}$ its form domain.
    \item For a Hilbert space $\hilbert*$ and a compact operator $K\!\sns \in\!\bounded{\hilbert*}$, we write $K\!\sns \in\sns \schatten{p}(\hilbert*)$ if it has finite $p$-Schatten norm $\norm{K}[\schatten{p}[\hilbert*]]\!=\big(\tr{\abs{K}^p}\big)^{1/p}\!<\infty$, with $p\sns \in\sns [1,\infty)$, and $\norm{K}[\schatten{\infty}[\hilbert*]]=\norm{K}[\linear{\hilbert*}]$.\newline
    In particular, $\norm{K}[\mathrm{Op}]\vcentcolon=\norm{K}[\schatten{\infty}[\Lp{2}[\Lambda_L]]]$, $\norm{K}[\mathrm{HS}]\vcentcolon=\norm{K}[\schatten{2}[\Lp{2}[\Lambda_L]]]$ and $\norm{K}[\mathrm{Tr}]\vcentcolon=\norm{K}[\schatten{1}[\Lp{2}[\Lambda_L]]]$ denote the operator norm, the Hilbert-Schmidt norm and the trace norm, respectively.
    \item Given a complex Hilbert space $\hilbert*$ and the self-adjoint operator $A,\dom{A}\in\linear{\hilbert*}$, $\mathbb{E}_\psi[A\sps ]\!\vcentcolon=\!\scalar{\psi}{A\psi}[\hilbert*]$ denotes the expectation value of the observable associated with $A,\dom{A}$ in a quantum state $\psi\!\in\sns \dom{A}$, and $\var_\psi[A\sps ]\! \vcentcolon= \sns \norm{A\psi}[\hilbert*]^2\!-\sns (\mathbb{E}_\psi[A\sps ])^2$ stands for its variance.
\end{itemize}

%---------------------------------------------
%  ######  ########  ######          #######  
% ##    ## ##       ##    ##        ##     ## 
% ##       ##       ##                     ## 
%  ######  ######   ##               #######  
%       ## ##       ##                     ## 
% ##    ## ##       ##    ## ###    ##     ## 
%  ######  ########  ######  ###     #######  
%---------------------------------------------

\section{Generator of the Fluctuation Dynamics}\label{sec:Generator}

The expectation of the number of excitations plays a crucial role in our analysis.
This section is therefore dedicated to the detailed study of the fluctuation dynamics.\newline
Motivated by~\eqref{def:fluctuationDynamics1}, we define the two-parameter unitary evolution
\begin{equation}\label{def:fluctuationDynamics2}
    \mathcal{U}_{\varrho,\sps L}(t,s)\vcentcolon=\adj{\mathcal{W}\sps }\sns (\Psi^{\sps t}_{\!\varrho,\sps L})\sps e^{-i\,\mathcal{H}_{\varrho,\sps L}(t\sps -\sps s)}\sps \mathcal{W}(\Psi^{\sps s}_{\!\varrho,\sps L}),\qquad t,s\geq 0.
\end{equation}
For any $t_0,t,s\sns \geq\sns  0$, these operators satisfy
\begin{gather*}
    \mathcal{U}_{\varrho,\sps L}(s,s)=\Id,\\
    \adj{\mathcal{U}_{\varrho,\sps L}(t,s)}\!=\mathcal{U}_{\varrho,\sps L}(s,t),\\
    \mathcal{U}_{\varrho,\sps L}(t,t_0)=\mathcal{U}_{\varrho,\sps L}(t,s)\,\mathcal{U}_{\varrho,\sps L}(s,t_0).
\end{gather*}
Clearly, one has $\mathcal{U}_{\varrho,\sps L}(t)\sns =\mathcal{U}_{\varrho,\sps L}(t,0)$.
Moreover, $\mathcal{U}_{\varrho,\sps L}(\sps \cdot\sps ,\cdot\sps )$ is strongly continuous in both variables, permitting the definition of the \emph{generator} of the fluctuation dynamics
\begin{align}
    i\sps \partial_t\,\mathcal{U}_{\varrho,\sps L}(t,s)&=\left[i\sps \partial_t\adj{\mathcal{W}\sps }\sns (\Psi^{\sps t}_{\!\varrho,\sps L})\right]\sns \mathcal{W}(\Psi^{\sps t}_{\!\varrho,\sps L})\,\mathcal{U}_{\varrho,\sps L}(t,s)+\adj{\mathcal{W}\sps }\sns (\Psi^{\sps t}_{\!\varrho,\sps L})\sps \mathcal{H}_{\varrho,\sps L} \mathcal{W}(\Psi^{\sps t}_{\!\varrho,\sps L})\,\mathcal{U}_{\varrho,\sps L}(t,s)\nonumber\\
    &=\vcentcolon \mathcal{L}_{\varrho,\sps L}(t)\,\mathcal{U}_{\varrho,\sps L}(t,s),\label{def:generator}
\end{align}
where this makes sense on an appropriate domain of the Fock space (corresponding to $\dom{\mathcal{H}_{\varrho,\sps L}}$), with the time derivative taken in the strong-operator topology.
The derivation of the expression for such a generator is well known in the literature (see \eg~\cite[Chapter 3, p. 19]{BePoSc16}); however, we outline the key steps for completeness.

\smallskip

\noindent The first term can be computed by exploiting the infinitesimal form of the \emph{Baker–Campbell–Hausdorff formula}, that is, for our case
\begin{equation*}
    i\sps \partial_t \,e^{i\sps \phi(i\sps \Psi^{\sps t}_{\!\varrho,\sps L})}= -e^{i\sps \phi(i\sps \Psi^{\sps t}_{\!\varrho,\sps L})} \phi(i\sps \partial_t\sps \Psi^{\sps t}_{\!\varrho,\sps L}) + \frac{e^{i\sps \phi(i\sps \Psi^{\sps t}_{\!\varrho,\sps L})}\mspace{-9mu}}{2}\;\big[a(\Psi^{\sps t}_{\!\varrho,\sps L})\sns -\adj{a}(\Psi^{\sps t}_{\!\varrho,\sps L}), \,\phi(i\sps \partial_t\Psi^{\sps t}_{\!\varrho,\sps L})\big].
\end{equation*}
Then, combining the above equation with~\eqref{eq:CCR} and~\eqref{eq:WeylaConjugation} yields
\begin{equation}
    \left[i\sps \partial_t\adj{\mathcal{W}\sps }\sns (\Psi^{\sps t}_{\!\varrho,\sps L})\right]\sns \mathcal{W}(\Psi^{\sps t}_{\!\varrho,\sps L})=-\phi(i\sps \partial_t\Psi^{\sps t}_{\!\varrho,\sps L})-\Re\sps \scalar{\Psi^{\sps t}_{\!\varrho,\sps L}}{i\sps \partial_t\Psi^{\sps t}_{\!\varrho,\sps L}}[2]\sps \Id.
\end{equation}
The second term is obtained by considering~\eqref{eq:formHamiltonianFock} and applying the distributional version of equation~\eqref{eq:WeylaConjugation}, iteratively, \ie
\begin{equation}
    \adj{\mathcal{W}\sps }\sns (f)\,a_{\vec{x}}\sps \mathcal{W}(f)=a_{\vec{x}}+f(\vec{x}).
\end{equation}
The result can be written in the sense of quadratic forms, for vectors $\psi\in\fdom{\mathcal{H}_{\varrho,\sps L}}$
\begin{equation}\label{eq:HamiltonianWeylSandwich}
    \mathcal{H}_{\varrho,\sps L}\!\left[\mathcal{W}(\Psi^{\sps t}_{\!\varrho,\sps L})\sps \psi\right]\sns =\sns \left(\mathcal{H}_{\varrho,\sps L}+\mathcal{Q}^{(1)}_{\varrho,\sps L}(t)+\mathcal{C}^{(2)}_{\varrho,\sps L}(t)+\mathcal{Q}^{(2)}_{\varrho,\sps L}(t)+\mathcal{Q}^{(3)}_{\varrho,\sps L}(t)\right)\![\psi]+\mathscr{E}_{\varrho,\sps L}[\Psi^{\sps t}_{\!\varrho,\sps L}]\,\norm{\psi}^2,
\end{equation}
where $\mathscr{E}_{\varrho,\sps L}$ is the Hartree functional~\eqref{def:HartreeFunctional}, and the Hermitian quadratic forms introduced here have the following expressions in terms of the distributional-valued operator~\eqref{def:annihilationDistribution} for $\psi\sns \in\sns \dom{\mathcal{N}}$
\begin{equation}
    \mathcal{Q}^{(1)}_{\varrho,\sps L}(t)[\psi]\vcentcolon=2\sps \Re\integrate[\Lambda_L]{\Big[\sns -\sns \Delta_{\vec{x}}\Psi^{\sps t}_{\!\varrho,\sps L}(\vec{x})+\tfrac{1}{\varrho}\big(V_L\!\ast\sns \abs{\Psi^{\sps t}_{\!\varrho,\sps L}}^2\big)\sns (\vec{x})\,\Psi^{\sps t}_{\!\varrho,\sps L}(\vec{x})\Big]\scalar{a_{\vec{x}}\psi}{\psi};\!d\vec{x}},
\end{equation}

\vspace{-0.425cm}

\begin{subequations}\label{eqs:generatorTerms}
    \begin{gather}
        \begin{split}
            \mathcal{C}^{(2)}_{\varrho,\sps L}(t)[\psi]\vcentcolon=&\:\frac{1}{\varrho}\sns \integrate[\Lambda_L]{\big(V_L\!\ast\sns \abs{\Psi^{\sps t}_{\!\varrho,\sps L}}^2\big)\sns (\vec{x})\,\norm{a_{\vec{x}}\psi}^2;\!d\vec{x}}\,+\\
            &+\frac{1}{\varrho}\sns \integrate[\Lambda^2_L]{V_L(\vec{x}\sns -\sns \vec{y})\,\conjugate*{\Psi^{\sps t}_{\!\varrho,\sps L}(\vec{x})}\sps \Psi^{\sps t}_{\!\varrho,\sps L}(\vec{y})\,\scalar{a_{\vec{y}}\psi}{a_{\vec{x}}\psi};\!d\vec{x}d\vec{y}},\label{def:quadraticGeneratorC}
        \end{split}\\
        \mathcal{Q}^{(2)}_{\varrho,\sps L}(t)[\psi]\vcentcolon=\frac{1}{\varrho}\,\Re\!\integrate[\Lambda^2_L]{V_L(\vec{x}\sns -\sns \vec{y})\,\Psi^{\sps t}_{\!\varrho,\sps L}(\vec{x})\Psi^{\sps t}_{\!\varrho,\sps L}(\vec{y})\,\scalar{a_{\vec{y}}a_{\vec{x}}\psi}{\psi};\!d\vec{x}d\vec{y}},\label{def:quadraticGeneratorQ}\\
        \mathcal{Q}^{(3)}_{\varrho,\sps L}(t)[\psi]\vcentcolon=\frac{2}{\varrho}\,\Re\!\integrate[\Lambda^2_L]{V_L(\vec{x}\sns -\sns \vec{y})\,\Psi^{\sps t}_{\!\varrho,\sps L}(\vec{y})\,\scalar{a_{\vec{y}}a_{\vec{x}}\psi}{a_{\vec{x}}\psi};\!d\vec{x}d\vec{y}}.\label{def:cubicGeneratorQ}
    \end{gather}
\end{subequations}
Therefore, since $\Psi^{\sps t}_{\!\varrho,\sps L}$ solves the Hartree equation~\eqref{eq:HartreePDE}, one has a simplification by virtue of the identity
\begin{equation}\label{eq:linearTermSimplified}
    \mathcal{Q}^{(1)}_{\varrho,\sps L}(t)[\psi]=\scalar{\psi}{\phi(i\sps \partial_t\Psi^{\sps t}_{\!\varrho,\sps L})\sps \psi},\qquad\psi\in\fdom{\mathcal{N}}.
\end{equation}
We stress that the term $\mathcal{Q}^{(1)}_{\varrho,\sps L}(t)$ is too large to be controlled in our setting.
This simplification explains our need to select $\Psi^{\sps t}_{\!\varrho,\sps L}$ as a solution to the Hartree equation~\eqref{eq:HartreePDE}.\newline
Taking account of~\eqref{eq:linearTermSimplified}, the Hermitian quadratic form associated with the generator of the fluctuation dynamics becomes, for $\psi\in\fdom{\mathcal{H}_{\varrho,\sps L}}$
\begin{equation}\label{eq:generatorDecomposition}
    \mathcal{L}_{\varrho,\sps L}(t)[\psi]=\left(\mathcal{H}_{\varrho,\sps L}+\mathcal{C}^{(2)}_{\varrho,\sps L}(t)+\mathcal{Q}^{(2)}_{\varrho,\sps L}(t)+\mathcal{Q}^{(3)}_{\varrho,\sps L}(t)\sns \right)\![\psi]-\tfrac{1}{2\varrho}\scalar{V_L\!\ast\sns \abs{\Psi^{\sps t}_{\!\varrho,\sps L}}^2}{\abs{\Psi^{\sps t}_{\!\varrho,\sps L}}^2}[2]\sps \norm{\psi}^2.
\end{equation}
Since the last term will not play any role, we also define the operator
\begin{equation}\label{def:effectiveGenerator}
    \mathcal{G}_{\varrho,\sps L}(t)\vcentcolon=\mathcal{L}_{\varrho,\sps L}(t)+\tfrac{1}{2\varrho}\scalar{V_L\!\ast\sns \abs{\Psi^{\sps t}_{\!\varrho,\sps L}}^2}{\abs{\Psi^{\sps t}_{\!\varrho,\sps L}}^2}[2]\sps \Id,\qquad\dom{\mathcal{G}_{\varrho,\sps L}}=\dom{\mathcal{H}_{\varrho,\sps L}}.
\end{equation}
The Hermitian quadratic forms~\eqref{eqs:generatorTerms} satisfy the following \emph{a priori} bounds.
\begin{prop}\label{th:generatorTermsAPEstimates}
    Given the Hermitian quadratic forms defined in equations~\eqref{eqs:generatorTerms}, one has for all $\psi\sns \in\sns \dom{\mathcal{N}}$
    \begin{subequations}\label{eqs:generatorTermsEstimates}
        \begin{gather}
            \abs*{\sps \mathcal{C}^{(2)}_{\varrho,\sps L}(t)[\psi]}\leq \tfrac{2}{\varrho}\,\norm{V_L\!\ast\sns \abs{\Psi^{\sps t}_{\!\varrho,\sps L}}^2}[\infty]\,\norm{\mathcal{N}^{\frac{1}{2}}\psi}^2,\label{eq:APestimateQuadraticGeneratorC}\\
            \abs*{\mathcal{Q}^{(2)}_{\varrho,\sps L}(t)[\psi]}\leq \tfrac{1}{\varrho}\,\norm{V_L\!\ast\sns \abs{\Psi^{\sps t}_{\!\varrho,\sps L}}^2}[\infty]\,\norm{\mathcal{N}^{\frac{1}{2}}\psi}^2+\sqrt{\!\tfrac{\,L^3\!}{\varrho}\sps }\sps  \norm{V_L^{\sps 2}\sns \ast\sns \abs{\Psi^{\sps t}_{\!\varrho,\sps L}}^2}[\infty]^{\frac{1}{2}}\, \norm{\mathcal{N}^{\frac{1}{2}}\psi}\,\norm{\psi},\label{eq:APestimateQuadraticGeneratorQ}\\
            \abs*{\sps \mathcal{Q}^{(3)}_{\varrho,\sps L}(t)[\psi]}\leq\tfrac{\sqrt{8}}{\varrho}\sps \norm{V_L\!\ast\sns \abs{\Psi^{\sps t}_{\!\varrho,\sps L}}^2}[\infty]^{\frac{1}{2}} \sqrt{\mathcal{V}_{\sns L}[\psi]\sps }\,\norm{\mathcal{N}^{\frac{1}{2}}\psi},\label{eq:APestimateCubicGeneratorQ}
        \end{gather}
    \end{subequations}
    where $\mathcal{V}_{\sns L}[\sps \cdot\sps ]$ stands for the Hermitian quadratic form associated with the potential of the Hamiltonian~\eqref{eq:formHamiltonianFock}.
    \begin{proof}
        Concerning~\eqref{def:quadraticGeneratorC}, one has
        \begin{align*}
            \abs*{\mathcal{C}^{(2)}_{\varrho,\sps L}(t)[\psi]}\leq&\:\frac{1}{\varrho}\sns \integrate[\Lambda_L]{\big(V_L\!\ast\sns \abs{\Psi^{\sps t}_{\!\varrho,\sps L}}^2\big)\sns (\vec{x})\,\norm{a_{\vec{x}}\psi}^2;\!d\vec{x}}\,+\\
            &+\frac{1}{\varrho}\sns \integrate[\Lambda^2_L]{V_L(\vec{x}\sns -\sns \vec{y})\,\abs{\Psi^{\sps t}_{\!\varrho,\sps L}(\vec{x})}\,\abs{\Psi^{\sps t}_{\!\varrho,\sps L}(\vec{y})}\:\norm{a_{\vec{y}}\psi}\,\norm{a_{\vec{x}}\psi};\!d\vec{x}d\vec{y}}\\
            \leq&\:\frac{2}{\varrho}\sns \integrate[\Lambda_L]{\big(V_L\!\ast\sns \abs{\Psi^{\sps t}_{\!\varrho,\sps L}}^2\big)\sns (\vec{x})\,\norm{a_{\vec{x}}\psi}^2;\!d\vec{x}},
        \end{align*}
        having adopted a Cauchy-Schwarz inequality and exploited the symmetry of exchange between $\vec{x}$ and $\vec{y}$ in the integrand.
        Inequality~\eqref{eq:APestimateQuadraticGeneratorC} is then obtained by exploiting~\eqref{eq:NumberFromIntegral}.
        \newline
        Next, taking account of~\eqref{def:quadraticGeneratorQ}
        \begin{equation*}
            \mathcal{Q}^{(2)}_{\varrho,\sps L}(t)[\psi]=\frac{1}{\varrho}\,\Re\!\integrate[\Lambda_L]{\conjugate*{\Psi^{\sps t}_{\!\varrho,\sps L}(\vec{x})\!}\;\scalar{\sps \adj{a}\big(V_L(\vec{x}-\vec{\cdot})\sps \Psi^{\sps t}_{\!\varrho,\sps L}\big)\sps \psi}{a_{\vec{x}}\psi};\!d\vec{x}},
        \end{equation*}
        hence,
        \begin{align*}
            \abs*{\mathcal{Q}^{(2)}_{\varrho,\sps L}(t)[\psi]}&\leq\frac{1}{\varrho}\sns  \integrate[\Lambda_L]{\abs{\Psi^{\sps t}_{\!\varrho,\sps L}(\vec{x})}\:\norm{\adj{a}\big(V_L(\vec{x}-\vec{\cdot})\sps \Psi^{\sps t}_{\!\varrho,\sps L}\big)\sps \psi}\,\norm{a_{\vec{x}}\psi};\!d\vec{x}}\\
            &\leq\frac{1}{\varrho}\sns  \integrate[\Lambda_L]{\abs{\Psi^{\sps t}_{\!\varrho,\sps L}(\vec{x})}\,\Big[\norm{a\big(V_L(\vec{x}-\vec{\cdot})\sps \Psi^{\sps t}_{\!\varrho,\sps L}\big)\sps \psi}+\sns \sqrt{\sns \big(V^{\sps 2}_L\sns \ast\sns \abs{\Psi^{\sps t}_{\!\varrho,\sps L}}^2\big)\sns (\vec{x})\sps }\sps \norm{\psi}\Big]\norm{a_{\vec{x}}\psi};\!d\vec{x}}.
        \end{align*}
        In the last step we have used
        \begin{equation}
            \norm{\adj{a}(f)\sps \psi}^2\!=\sns \norm{a(f)\sps \psi}^2\sns +\norm{f}[2]^2\sps \norm{\psi}^2,
        \end{equation}
        which is due to~\eqref{eq:CCR}.
        Then, making use of
        \begin{equation}
            \norm{a(f)\sps \psi}\leq \integrate[\Lambda_L]{\abs{f(\vec{x})}\,\norm{a_{\vec{x}}\psi};\!d\vec{x}},
        \end{equation}
        we obtain
        \begin{align*}
            \abs*{\mathcal{Q}^{(2)}_{\varrho,\sps L}(t)[\psi]}\leq&\:\frac{1}{\varrho}\sns  \integrate[\Lambda^2_L]{V_L(\vec{x}\sns -\sns \vec{y})\,\abs{\Psi^{\sps t}_{\!\varrho,\sps L}(\vec{x})}\,\abs{\Psi^{\sps t}_{\!\varrho,\sps L}(\vec{y})}\:\norm{a_{\vec{y}}\psi}\,\norm{a_{\vec{x}}\psi};\!d\vec{x}d\vec{y}}\,+\\
            &+\frac{1}{\varrho}\,\norm{V^{\sps 2}_L\sns \ast\sns \abs{\Psi^{\sps t}_{\!\varrho,\sps L}}^2}[\infty]^{\frac{1}{2}}\,\norm{\psi}\sns \integrate[\Lambda_L]{\abs{\Psi^{\sps t}_{\!\varrho,\sps L}(\vec{x})}\:\norm{a_{\vec{x}}\psi};\!d\vec{x}}\\[-2.5pt]
            \leq&\:\frac{1}{\varrho}\sns  \integrate[\Lambda_L]{\big(V_L\!\ast\sns \abs{\Psi^{\sps t}_{\!\varrho,\sps L}}^2\big)\sns (\vec{x})\:\norm{a_{\vec{x}}\psi}^2;\!d\vec{x}}+\sqrt{\sns \tfrac{\,L^3\!}{\varrho}\sps }\sps \norm{V^{\sps 2}_L\sns \ast\sns \abs{\Psi^{\sps t}_{\!\varrho,\sps L}}^2}[\infty]^{\frac{1}{2}}\,\norm{\psi}\,\sqrt{\sns \integrate[\Lambda_L]{\norm{a_{\vec{x}}\psi}^2;\!d\vec{x}}\sps },
        \end{align*}
        which yields~\eqref{eq:APestimateQuadraticGeneratorQ}.\newline
        In conclusion, by means of a Cauchy-Schwarz inequality, one exploits the non-negativity of the potential to estimate $\mathcal{Q}^{(3)}_{\varrho,\sps L}(t)[\psi]$ in terms of $\mathcal{V}_{\sns L}[\psi]$
        \begin{align*}
            \abs*{\mathcal{Q}^{(3)}_{\varrho,\sps L}(t)[\psi]}&\leq\frac{2}{\varrho}\,\sqrt{\integrate[\Lambda^2_L]{V_L(\vec{x}\sns -\sns \vec{y})\:\norm{a_{\vec{y}}a_{\vec{x}}\psi}^2;\!d\vec{x}d\vec{y}}\sps }\sps \sqrt{\integrate[\Lambda^2_L]{V_L(\vec{x}\sns -\sns \vec{y})\,\abs{\Psi^{\sps t}_{\!\varrho,\sps L}(\vec{y})}^2\:\norm{a_{\vec{x}}\psi}^2;\!d\vec{x}d\vec{y}}\sps }\\
            &\leq \frac{\sqrt{8}}{\varrho}\sqrt{\mathcal{V}_{\sns L}[\psi]\sps }\sps \norm{V_L\!\ast\sns \abs{\Psi^{\sps t}_{\!\varrho,\sps L}}^2}[\infty]^{\frac{1}{2}}\,\sqrt{\integrate[\Lambda_L]{\norm{a_{\vec{x}}\psi}^2;\!d\vec{x}}\sps }\sps ,
        \end{align*}
        which completes the proof.
        
    \end{proof}
\end{prop}
Combining these estimates, one gets an \emph{a priori} bound on the generator in terms of the Hamiltonian.
\begin{cor}\label{th:generatorAPEstimate}
    Let $\mathcal{G}_{\varrho,\sps L}(t)[\sps \cdot\sps ]$ be the Hermitian quadratic form associated with the operator defined in~\eqref{def:effectiveGenerator}.
    Then, for all $\varepsilon, \varsigma\sns >\sns 0$ and $\psi\in\fdom{\mathcal{H}_{\varrho,\sps L}}$, one has
    \begin{equation*}
        \abs*{\mathcal{G}_{\varrho,\sps L}(t)[\psi]\sns -\sns \mathcal{H}_{\varrho,\sps L}[\psi]}\mspace{-0.75mu}\leq \varepsilon\sps \mathcal{H}_{\varrho,\sps L}[\psi]\sps +\tfrac{1}{\varrho}\Big[\norm{V_L\!\ast\sns \abs{\Psi^{\sps t}_{\!\varrho,\sps L}}^2}[\infty]\big(3+\tfrac{2}{\varepsilon}\big)\sns +\sns \tfrac{\varsigma}{2}\sps \norm{V^{\sps 2}_L\sns \ast\sns \abs{\Psi^{\sps t}_{\!\varrho,\sps L}}^2}[\infty]\Big]\sps \norm{\mathcal{N}^{\frac{1}{2}}\psi}^2\sns +\sns \tfrac{\,L^3\!}{2\sps \varsigma}\sps \norm{\psi}^2.
    \end{equation*}
    \begin{proof}
        The result follows directly from Proposition~\ref{th:generatorTermsAPEstimates} by simply applying Young's inequality for products.

    \end{proof}
\end{cor}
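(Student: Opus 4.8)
The plan is to unfold the definitions~\eqref{eq:generatorDecomposition}--\eqref{def:effectiveGenerator}, estimate the three ``error'' forms by Proposition~\ref{th:generatorTermsAPEstimates}, and absorb the mixed terms by Young's inequality for products. The statement is essentially a bookkeeping exercise, so the only care needed is in the choice of the splitting weights.

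First I would record that, combining~\eqref{eq:generatorDecomposition} with the definition~\eqref{def:effectiveGenerator} of $\mathcal{G}_{\varrho,\+L}(t)$, one has the quadratic-form identity $\mathcal{G}_{\varrho,\+L}(t)[\psi]-\mathcal{H}_{\varrho,\+L}[\psi]=\mathcal{C}^{(2)}_{\varrho,\+L}(t)[\psi]+\mathcal{Q}^{(2)}_{\varrho,\+L}(t)[\psi]+\mathcal{Q}^{(3)}_{\varrho,\+L}(t)[\psi]$ for $\psi\in\fdom{\mathcal{H}_{\varrho,\+L}}\subseteq\dom{\mathcal{N}}$, so that the triangle inequality together with the bounds~\eqref{eqs:generatorTermsEstimates} gives
\begin{align*}
\abs*{\mathcal{G}_{\varrho,\+L}(t)[\psi]-\mathcal{H}_{\varrho,\+L}[\psi]}\leq{}&\tfrac{3}{\varrho}\norm{V_L\!\ast\-\abs{\Psi^{\+t}_{\!\varrho,\+L}}^2}[\infty]\,\norm{\mathcal{N}^{\frac{1}{2}}\psi}^2+\sqrt{\tfrac{\,L^3\!}{\varrho}}\,\norm{V^{\+2}_L\-\ast\-\abs{\Psi^{\+t}_{\!\varrho,\+L}}^2}[\infty]^{\frac{1}{2}}\norm{\mathcal{N}^{\frac{1}{2}}\psi}\,\norm{\psi}\\
&+\tfrac{\sqrt{8}}{\varrho}\norm{V_L\!\ast\-\abs{\Psi^{\+t}_{\!\varrho,\+L}}^2}[\infty]^{\frac{1}{2}}\sqrt{\mathcal{V}_{\-L}[\psi]}\,\norm{\mathcal{N}^{\frac{1}{2}}\psi},
\end{align*}
where the factor $3$ already collects the two ``pure $\norm{\mathcal{N}^{1/2}\psi}^2$'' contributions coming from~\eqref{eq:APestimateQuadraticGeneratorC} and from the first summand of~\eqref{eq:APestimateQuadraticGeneratorQ}.

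It then remains to treat the two mixed terms. For the $\mathcal{Q}^{(2)}$ one, I would write it as $\bigl(\varrho^{-1/2}\norm{V^{\+2}_L\-\ast\-\abs{\Psi^{\+t}_{\!\varrho,\+L}}^2}[\infty]^{1/2}\norm{\mathcal{N}^{\frac{1}{2}}\psi}\bigr)\cdot\bigl(L^{3/2}\norm{\psi}\bigr)$ and apply $ab\leq\tfrac{\varsigma}{2}a^2+\tfrac{1}{2\varsigma}b^2$, which yields precisely $\tfrac{\varsigma}{2\varrho}\norm{V^{\+2}_L\-\ast\-\abs{\Psi^{\+t}_{\!\varrho,\+L}}^2}[\infty]\norm{\mathcal{N}^{\frac{1}{2}}\psi}^2+\tfrac{\,L^3\!}{2\varsigma}\norm{\psi}^2$. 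For the $\mathcal{Q}^{(3)}$ one, the key elementary observation is that $\mathcal{V}_{\-L}[\psi]\leq\varrho\,\mathcal{H}_{\varrho,\+L}[\psi]$, since by~\eqref{eq:formHamiltonianFock} one has $\mathcal{H}_{\varrho,\+L}[\psi]=\mathcal{K}_L[\psi]+\tfrac{1}{\varrho}\mathcal{V}_{\-L}[\psi]$ with $\mathcal{K}_L[\psi]\geq0$; writing it as $\bigl(\varrho^{-1/2}\sqrt{\mathcal{V}_{\-L}[\psi]}\bigr)\cdot\bigl(\sqrt{8}\,\varrho^{-1/2}\norm{V_L\!\ast\-\abs{\Psi^{\+t}_{\!\varrho,\+L}}^2}[\infty]^{1/2}\norm{\mathcal{N}^{\frac{1}{2}}\psi}\bigr)$ and applying $ab\leq\varepsilon a^2+\tfrac{1}{4\varepsilon}b^2$ bounds it by $\tfrac{\varepsilon}{\varrho}\mathcal{V}_{\-L}[\psi]+\tfrac{2}{\varepsilon\varrho}\norm{V_L\!\ast\-\abs{\Psi^{\+t}_{\!\varrho,\+L}}^2}[\infty]\norm{\mathcal{N}^{\frac{1}{2}}\psi}^2\leq\varepsilon\,\mathcal{H}_{\varrho,\+L}[\psi]+\tfrac{2}{\varepsilon\varrho}\norm{V_L\!\ast\-\abs{\Psi^{\+t}_{\!\varrho,\+L}}^2}[\infty]\norm{\mathcal{N}^{\frac{1}{2}}\psi}^2$. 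Summing the three contributions produces the coefficient $\tfrac{1}{\varrho}\bigl[(3+\tfrac{2}{\varepsilon})\norm{V_L\!\ast\-\abs{\Psi^{\+t}_{\!\varrho,\+L}}^2}[\infty]+\tfrac{\varsigma}{2}\norm{V^{\+2}_L\-\ast\-\abs{\Psi^{\+t}_{\!\varrho,\+L}}^2}[\infty]\bigr]$ in front of $\norm{\mathcal{N}^{\frac{1}{2}}\psi}^2$, together with $\varepsilon\,\mathcal{H}_{\varrho,\+L}[\psi]$ and $\tfrac{\,L^3\!}{2\varsigma}\norm{\psi}^2$, which is exactly the asserted inequality. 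I do not expect any genuine obstacle here: the only points deserving attention are the estimate $\mathcal{V}_{\-L}[\psi]\leq\varrho\,\mathcal{H}_{\varrho,\+L}[\psi]$ and the precise tuning of the Young weights so that the constants line up with those displayed in the statement.
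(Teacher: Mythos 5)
Your proposal is correct and follows essentially the same route as the paper: decompose $\mathcal{G}_{\varrho,\+L}(t)-\mathcal{H}_{\varrho,\+L}$ into $\mathcal{C}^{(2)}+\mathcal{Q}^{(2)}+\mathcal{Q}^{(3)}$, invoke the bounds of Proposition~\ref{th:generatorTermsAPEstimates}, and absorb the two mixed terms via Young's inequality, using $\mathcal{V}_{\-L}[\psi]\leq\varrho\,\mathcal{H}_{\varrho,\+L}[\psi]$ (non-negativity of $\mathcal{K}_L$) to produce the $\varepsilon\,\mathcal{H}_{\varrho,\+L}[\psi]$ term. The choice of Young weights reproduces exactly the constants $3+\tfrac{2}{\varepsilon}$, $\tfrac{\varsigma}{2}$ and $\tfrac{L^3}{2\varsigma}$ in the statement, so no gap remains.
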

In Section~\ref{sec:excitationsControl}, the time derivative of the Hermitian quadratic form $\mathcal{G}_{\varrho,\sps L}(t)[\sps \cdot\sps ]$ will be central to the proof of our main result.
Clearly, one has the decomposition
\begin{equation}\label{eq:derivativeEffectiveGenerator}
    \partial_t\sps \mathcal{G}_{\varrho,\sps L}(t)[\psi]=\vcentcolon\dot{\mathcal{G}}_{\varrho,\sps L}(t)[\psi]=\left(\partial_t\mspace{2.25mu}\mathcal{C}^{(2)}_{\varrho,\sps L}(t)+\partial_t\mspace{0.75mu}\mathcal{Q}^{(2)}_{\varrho,\sps L}(t)+\partial_t\mspace{0.75mu}\mathcal{Q}^{(3)}_{\varrho,\sps L}(t)\sns \right)\![\psi],\qquad\psi\in\dom{\mathcal{N}}.
\end{equation}
In the following proposition, we compute the expressions of these objects.
\begin{prop}\label{th:timeDerivativeIdentities}
    Assume $t\longmapsto\Psi_{\!\varrho,\sps L}^{\sps t}\!\in C^{\mspace{0.75mu}1}\big(\mspace{0.75mu}[0,\infty),\mathfrak{A}^0(\Lambda_L)\mspace{-0.75mu}\big)\mspace{-0.75mu}\cap C^{\sps 0}\big(\mspace{0.75mu}[0,\infty), \mathfrak{A}^2(\Lambda_L)\mspace{-0.75mu}\big)$ solves the Hartree equation~\eqref{eq:HartreePDE} and take account of definitions~\eqref{eqs:generatorTerms}.
    Then, for every $T>0$ one has for all $\psi\sns \in\sns \dom{\mathcal{N}}$ and $t\in[0,T]$
    \begin{subequations}\label{eqs:generatorTermsDerivative}
    \begin{gather*}
        \begin{split}
            &\partial_t\mspace{2.25mu}\mathcal{C}^{(2)}_{\varrho,\sps L}(t)[\psi]=-\frac{2}{\varrho}\sns \integrate[\Lambda_L]{\big(V_L\!\ast\Im\sps (\sps \conjugate*{\Psi}^{\sps t}_{\!\varrho,\sps L}\sps \Delta\Psi^{\sps t}_{\!\varrho,\sps L})\big)\sns (\vec{x})\,\norm{a_{\vec{x}}\psi}^2;\!d\vec{x}}\,+\\
            &\qquad+\frac{2}{\varrho}\,\Im\!\integrate[\Lambda^2_L]{V_L(\vec{x}\sns -\sns \vec{y})\,\conjugate*{\Psi^{\sps t}_{\!\varrho,\sps L}(\vec{x})}\!\left[-\Delta_{\vec{y}}\Psi^{\sps t}_{\!\varrho,\sps L}(\vec{y})\sns +\tfrac{1}{\varrho} \big(V_L\!\ast\sns \abs{\Psi^{\sps t}_{\!\varrho,\sps L}}^2\big)(\vec{y})\Psi^{\sps t}_{\!\varrho,\sps L}(\vec{y})\right]\!\scalar{a_{\vec{y}}\psi}{a_{\vec{x}}\psi};\!d\vec{x}d\vec{y}},
        \end{split}\\
        \partial_t\mspace{0.75mu}\mathcal{Q}^{(2)}_{\varrho,\sps L}(t)[\psi]\sns =\frac{2}{\varrho}\,\Im\!\integrate[\Lambda^2_L]{V_L(\vec{x}\sns -\sns \vec{y})\,\Psi^{\sps t}_{\!\varrho,\sps L}(\vec{x})\!\left[-\Delta_{\vec{y}}\Psi^{\sps t}_{\!\varrho,\sps L}(\vec{y})\sns +\tfrac{1}{\varrho} \big(V_L\!\ast\sns \abs{\Psi^{\sps t}_{\!\varrho,\sps L}}^2\big)(\vec{y})\Psi^{\sps t}_{\!\varrho,\sps L}(\vec{y})\right]\!\scalar{a_{\vec{y}}a_{\vec{x}}\psi}{\psi};\!d\vec{x}d\vec{y}},\label{def:quadraticGeneratorQDerivative}\\
        \partial_t\mspace{0.75mu}\mathcal{Q}^{(3)}_{\varrho,\sps L}(t)[\psi]=\frac{2}{\varrho}\,\Im\!\integrate[\Lambda^2_L]{V_L(\vec{x}\sns -\sns \vec{y})\!\left[-\Delta_{\vec{y}}\Psi^{\sps t}_{\!\varrho,\sps L}(\vec{y})\sns +\tfrac{1}{\varrho} \big(V_L\!\ast\sns \abs{\Psi^{\sps t}_{\!\varrho,\sps L}}^2\big)(\vec{y})\Psi^{\sps t}_{\!\varrho,\sps L}(\vec{y})\right]\!\scalar{a_{\vec{y}}a_{\vec{x}}\psi}{a_{\vec{x}}\psi};\!d\vec{x}d\vec{y}}.\label{def:cubicGeneratorQDerivative}
    \end{gather*}
    \end{subequations}
    \begin{proof}
        The result is obtained by differentiating with respect to time inside the integrals appearing in the quantities given by~\eqref{eqs:generatorTerms}, and by recalling that $\Psi^{\sps t}_{\!\varrho,\sps L}$ solves the Hartree equation~\eqref{eq:HartreePDE}.
        To this end, we ensure that the Leibniz integral rule holds in these cases by exhibiting integrable majorants (uniformly in time) of the time-derivative of the integrands (\cfr~\eg~\cite[Theorem 2.27]{Fo99}).\newline
        Concerning $\partial_t\mspace{2.25mu}\mathcal{C}^{(2)}_{\varrho,\sps L}(t)$, the integrand of the first term can be bounded from above by exploiting Young's convolution inequality, so that an integrable majorant in $\Lambda_L$ is (\cfr~\eqref{eq:NumberFromIntegral})
        $$\vec{x}\;\longmapsto\,\norm{V_L}[1]\!\!\sup\limits_{s\,\in\sps [0,T]}\!\sns \left(\sps \norm{\Psi^{\sps s}_{\!\varrho,\sps L}}[\infty]\sps \norm{\Delta\Psi^{\sps s}_{\!\varrho,\sps L}}[\infty]\sns \right)\!\norm{a_{\vec{x}}\psi}^2.$$
        We stress that, since $t\longmapsto\Psi_{\!\varrho,\sps L}^{\sps t}$ belongs to $C^{\mspace{0.75mu}1}\big(\mspace{0.75mu}[0,\infty),\mathfrak{A}^0(\Lambda_L)\mspace{-0.75mu}\big)\mspace{-0.75mu}\cap C^{\sps 0}\big(\mspace{0.75mu}[0,\infty), \mathfrak{A}^2(\Lambda_L)\mspace{-0.75mu}\big)$, both $\norm{\Psi^{\sps t}_{\!\varrho,\sps L}}[\infty]$ and $\norm{\Delta\Psi^{\sps t}_{\!\varrho,\sps L}}[\infty]$ cannot blow up in a finite time interval for fixed $\varrho,L\sns >\sns 0$.\newline
        The second term is bounded by $$(\vec{x},\vec{y})\;\longmapsto\,\sup\limits_{s\,\in\sps [0,\,T]}\!\sns \left(\norm{\Psi^{\sps s}_{\!\varrho,\sps L}}[\infty]\sps \norm{\Delta\Psi^{\sps s}_{\!\varrho,\sps L}}[\infty]+\tfrac{1}{\varrho}\norm{V_L}[1] \norm{\Psi^{\sps s}_{\!\varrho,\sps L}}[\infty]^4\right)\!V_L(\vec{x}\sns -\sns \vec{y})\,\abs{\scalar{a_{\vec{y}}\psi}{a_{\vec{x}}\psi}},$$
        which is integrable in $\Lambda_L^2$, as shown by means of a Cauchy-Schwarz inequality.

        \smallskip
        
        \noindent Analogously, for $\partial_t\mspace{0.75mu}\mathcal{Q}^{(2)}_{\varrho,\sps L}$ we exhibit the time-independent majorant $$(\vec{x},\vec{y})\;\longmapsto\,\sup\limits_{s\,\in\sps [0,\,T]}\!\sns \left(\norm{\Psi^{\sps s}_{\!\varrho,\sps L}}[\infty]\sps \norm{\Delta\Psi^{\sps s}_{\!\varrho,\sps L}}[\infty]+\tfrac{1}{\varrho}\norm{V_L}[1] \norm{\Psi^{\sps s}_{\!\varrho,\sps L}}[\infty]^4\right)\!V_L(\vec{x}\sns -\sns \vec{y})\,\abs{\scalar{a_{\vec{y}}a_{\vec{x}}\psi}{\psi}},$$
        whose integral in $\Lambda_L^2$ is finite, recalling the definition of the potential part $\mathcal{V}_{\sns L}[\psi]$ in~\eqref{eq:formHamiltonianFock}.

        \smallskip

        \noindent In conclusion, taking the supremum over $\vec{y}\sns \in\sns \Lambda_L$ and $t\sns \in\sns  [0,T]$ of the square bracket in the integrand of $\partial_t\mspace{0.75mu}\mathcal{Q}^{(3)}_{\varrho,\sps L}$, one similarly obtains a time-independent integrable majorant in $\Lambda_L^2$.
        
    \end{proof}
\end{prop}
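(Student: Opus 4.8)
The plan is to differentiate each of the three Hermitian quadratic forms in~\eqref{eqs:generatorTerms} under the integral sign and then to eliminate $\partial_t\Psi^{\+t}_{\!\varrho,\+L}$ through the Hartree equation~\eqref{eq:HartreePDE}. Concretely, I would introduce the shorthand $h^{\+t}_{\varrho,\+L}\vcentcolon=-\Delta\Psi^{\+t}_{\!\varrho,\+L}+\tfrac{1}{\varrho}\big(V_L\!\ast\abs{\Psi^{\+t}_{\!\varrho,\+L}}^2\big)\Psi^{\+t}_{\!\varrho,\+L}$, so that~\eqref{eq:HartreePDE} reads $\partial_t\Psi^{\+t}_{\!\varrho,\+L}=-i\,h^{\+t}_{\varrho,\+L}$ and $\partial_t\conjugate*{\Psi^{\+t}_{\!\varrho,\+L}}=i\,\conjugate*{h^{\+t}_{\varrho,\+L}}$. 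In each integrand the time-dependence resides only in finitely many factors of $\Psi^{\+t}_{\!\varrho,\+L}$ or $\conjugate*{\Psi^{\+t}_{\!\varrho,\+L}}$, so the product rule turns the derivative into a finite sum of integrals, each obtained by replacing one such factor with $h^{\+t}_{\varrho,\+L}$ (or its conjugate); the identities stated in the proposition are then read off after two elementary simplifications.

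The genuinely delicate point is the licence to differentiate under the integral sign. Fixing $T>0$ and $\psi\in\dom{\mathcal{N}}$, I would produce for each integrand a majorant of its time-derivative that is integrable on $\Lambda_L$ (resp. on $\Lambda_L^2$) and independent of $t\in[0,T]$, and then invoke the differentiation-under-the-integral form of Leibniz's rule. This is exactly where the hypothesis $t\longmapsto\Psi^{\+t}_{\!\varrho,\+L}\in C^{\+1}\big([0,\infty),\mathfrak{A}^0(\Lambda_L)\big)\cap C^{\+0}\big([0,\infty),\mathfrak{A}^2(\Lambda_L)\big)$ is used: the weighted Wiener algebra controls $\norm{\Psi^{\+t}_{\!\varrho,\+L}}[\infty]$ and $\norm{\Delta\Psi^{\+t}_{\!\varrho,\+L}}[\infty]$ uniformly on $[0,T]$ for fixed $\varrho,L>0$, so these cannot blow up. For $\partial_t\mathcal{C}^{(2)}_{\varrho,\+L}$ one dominates the first integrand via Young's convolution inequality and~\eqref{eq:NumberFromIntegral} (the $\norm{a_{\vec{x}}\psi}^2$-integral is finite for $\psi\in\dom{\mathcal{N}}$), and the $\Lambda_L^2$-integral by a Cauchy--Schwarz step; for $\partial_t\mathcal{Q}^{(2)}_{\varrho,\+L}$ the $\Lambda_L^2$-integral of $V_L(\vec{x}-\vec{y})\,\abs{\scalar{a_{\vec{y}}a_{\vec{x}}\psi}{\psi}}$ is controlled by $\mathcal{V}_{\-L}[\psi]$ and $\norm{\psi}$; and for $\partial_t\mathcal{Q}^{(3)}_{\varrho,\+L}$ one pulls $\sup_{\vec{y}\in\Lambda_L}\abs{h^{\+t}_{\varrho,\+L}(\vec{y})}$ out and again bounds by $\mathcal{V}_{\-L}[\psi]$.

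Granted the Leibniz step, only algebra remains, and I would use two observations. First, because $V_L\!\ast\abs{\Psi^{\+t}_{\!\varrho,\+L}}^2$ is real, one has $\Im\big(h^{\+t}_{\varrho,\+L}\,\conjugate*{\Psi^{\+t}_{\!\varrho,\+L}}\big)=-\Im\big(\conjugate*{\Psi^{\+t}_{\!\varrho,\+L}}\Delta\Psi^{\+t}_{\!\varrho,\+L}\big)$, whence $\partial_t\abs{\Psi^{\+t}_{\!\varrho,\+L}}^2=2\Im\big(h^{\+t}_{\varrho,\+L}\,\conjugate*{\Psi^{\+t}_{\!\varrho,\+L}}\big)=-2\Im\big(\conjugate*{\Psi^{\+t}_{\!\varrho,\+L}}\Delta\Psi^{\+t}_{\!\varrho,\+L}\big)$, which, after convolution with $V_L$, produces the first term of $\partial_t\mathcal{C}^{(2)}_{\varrho,\+L}$. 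Second, for the remaining terms I would exploit the exchange symmetry $\vec{x}\leftrightarrow\vec{y}$ of the kernels: $V_L(\vec{x}-\vec{y})$ is even, $\scalar{a_{\vec{y}}\psi}{a_{\vec{x}}\psi}$ goes to its complex conjugate, and $\scalar{a_{\vec{y}}a_{\vec{x}}\psi}{\psi}$, $\scalar{a_{\vec{y}}a_{\vec{x}}\psi}{a_{\vec{x}}\psi}$ are invariant since $a_{\vec{y}}a_{\vec{x}}=a_{\vec{x}}a_{\vec{y}}$ by~\eqref{eq:CCR}. Relabelling then recombines the terms coming from the product rule into twice the imaginary part of a single reference integral of exactly the shape stated in the proposition — this handles the second term of $\partial_t\mathcal{C}^{(2)}_{\varrho,\+L}$ and $\partial_t\mathcal{Q}^{(2)}_{\varrho,\+L}$ — while for $\partial_t\mathcal{Q}^{(3)}_{\varrho,\+L}$ a single factor of $\Psi^{\+t}_{\!\varrho,\+L}$ appears, so no symmetrisation is needed and $\Re(-i\,\cdot)=\Im(\cdot)$ gives the result directly.

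I expect the main obstacle to be precisely the justification of the Leibniz rule — constructing the uniform-in-time integrable majorants above while correctly handling the operator-valued distributions $a_{\vec{x}}$ and $a_{\vec{y}}a_{\vec{x}}$ — since once termwise differentiation is permitted, the stated identities follow by the bookkeeping just described.
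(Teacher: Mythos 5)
Your proposal is correct and follows essentially the same route as the paper: differentiation under the integral sign justified by uniform-in-time integrable majorants (using $\norm{\Psi^{\+t}_{\!\varrho,\+L}}[\infty]$, $\norm{\Delta\Psi^{\+t}_{\!\varrho,\+L}}[\infty]$ bounded on $[0,T]$, Young's convolution inequality, Cauchy--Schwarz and the form $\mathcal{V}_{\-L}[\psi]$), followed by substitution of the Hartree equation. The additional algebraic bookkeeping you spell out -- the cancellation of the real potential term in $\partial_t\abs{\Psi^{\+t}_{\!\varrho,\+L}}^2$ and the $\vec{x}\leftrightarrow\vec{y}$ symmetrisation recombining the product-rule terms into twice an imaginary part -- is exactly what the paper leaves implicit, and it checks out.
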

We conclude this section by providing an \emph{a priori} estimate for $\dot{\mathcal{G}}_{\varrho,\sps L}(t)[\sps \cdot\sps ]$.
\begin{cor}\label{th:generatorDerivativeAPEstimates}
    Consider the quantities derived in Proposition~\ref{th:timeDerivativeIdentities}.
    Then, for every $T\sns >\sns 0$, one has for all $\psi\in\dom{\mathcal{N}}$ and $t\in [0,T]$
    \begin{gather*}
        \abs*{\partial_t\mspace{2.25mu}\mathcal{C}^{(2)}_{\varrho,\sps L}(t)[\psi]}\leq \left(\tfrac{4}{\varrho}\sqrt{\norm{V_L\!\ast\sns \abs{\Psi^{\sps t}_{\!\varrho,\sps L}}^2}[\infty]\,\norm{V_L\!\ast\sns \abs{\Delta\Psi^{\sps t}_{\!\varrho,\sps L}}^2}[\infty]\!}\,+\tfrac{2}{\varrho^2}\sps \norm{V_L\!\ast\sns \abs{\Psi^{\sps t}_{\!\varrho,\sps L}}^2}[\infty]^2\right)\sns \norm{\mathcal{N}^{\frac{1}{2}}\psi}^2,\\[5pt]
        \begin{split}
            \abs*{\partial_t\mspace{0.75mu}\mathcal{Q}^{(2)}_{\varrho,\sps L}(t)[\psi]}\leq &\:\left(\tfrac{2}{\varrho}\sqrt{\norm{V_L\!\ast\sns \abs{\Psi^{\sps t}_{\!\varrho,\sps L}}^2}[\infty]\,\norm{V_L\!\ast\sns \abs{\Delta\Psi^{\sps t}_{\!\varrho,\sps L}}^2}[\infty]\!}\,+\tfrac{2}{\varrho^2}\sps \norm{V_L\!\ast\sns \abs{\Psi^{\sps t}_{\!\varrho,\sps L}}^2}[\infty]^2\right)\sns \norm{\mathcal{N}^{\frac{1}{2}}\psi}^2+\\
            &+2\sqrt{\!\tfrac{\,L^3\!}{\varrho}\sps }\sns \left(\norm{V^{\sps 2}_L\sns \ast\sns \abs{\Delta\Psi^{\sps t}_{\!\varrho,\sps L}}^2}[\infty]^{\frac{1}{2}}+\tfrac{1}{\varrho}\sps \norm{V_L\!\ast\sns \abs{\Psi^{\sps t}_{\!\varrho,\sps L}}^2}[\infty]\,\norm{V^{\sps 2}_L\sns \ast\sns \abs{\Psi^{\sps t}_{\!\varrho,\sps L}}^2}[\infty]^{\frac{1}{2}}\right)\!\norm{\mathcal{N}^{\frac{1}{2}}\psi}\,\norm{\psi},
        \end{split}\\
        \abs*{\partial_t\mspace{0.75mu}\mathcal{Q}^{(3)}_{\varrho,\sps L}(t)[\psi]}\leq\tfrac{\sqrt{8}}{\varrho}\left(\norm{V_L\!\ast\sns \abs{\Delta\Psi^{\sps t}_{\!\varrho,\sps L}}^2}[\infty]^{\frac{1}{2}}+\tfrac{1}{\varrho}\sps \norm{V_L\!\ast\sns \abs{\Psi^{\sps t}_{\!\varrho,\sps L}}^2}[\infty]^{\frac{3}{2}} \right)\!\sqrt{\mathcal{V}_{\sns L}[\psi]\sps }\,\norm{\mathcal{N}^{\frac{1}{2}}\psi},
    \end{gather*}
    where $\mathcal{V}_{\sns L}[\sps \cdot\sps ]$ stands for the Hermitian quadratic form associated with the potential of the Hamiltonian~\eqref{eq:formHamiltonianFock}.
    \begin{proof}
        The argument follows exactly the same steps as the proof of Proposition~\ref{th:generatorTermsAPEstimates}.

    \end{proof}
\end{cor}

%---------------------------------------------
%  ######  ########  ######         ##        
% ##    ## ##       ##    ##        ##    ##  
% ##       ##       ##              ##    ##  
%  ######  ######   ##              ##    ##  
%       ## ##       ##              ######### 
% ##    ## ##       ##    ## ###          ##  
%  ######  ########  ######  ###          ##  
%---------------------------------------------

\section{Properties of Quasi-Complete Bose-Einstein Condensates}\label{sec:qcBECproperties}

In this section, we collect propositions that clarify the relationship among the objects defined in Section~\ref{sec:qcBEC}, validating key properties of quasi-vacuum and quasi-canonical coherent states.

\smallskip

\noindent The first result states that applying the Weyl operator~\eqref{def:Weyl} on a quasi-vacuum state yields a quasi-canonical coherent state.
\begin{prop}\label{th:WeylCoherent}
    Let $f_{\varrho,\sps L}\!\in\sns \Lp{2}(\Lambda_L)$ satisfy $\norm{f_{\varrho,\sps L}}[2]^2=\varrho L^3$, and let $\Omega_{\varrho,\sps L}\!\in\sns \dom{\mathcal{N}}$ be a quasi-vacuum state with respect to $f_{\varrho,\sps L}\sps $.
    Then, $\phi_{\varrho,\sps L}\!=\mathcal{W}(f_{\varrho,\sps L})\sps \Omega_{\varrho,\sps L}$ is a quasi-canonical coherent state.\newline
    Moreover, for any $g_{\varrho,\sps L}\sns \in\!\Lp{2}(\Lambda_L)$ such that $\norm{g_{\varrho,\sps L}}[2]=1$ for all $\varrho,L\sns >\sns 0$, $\phi_{\varrho,\sps L}$ is a quasi-eigenstate of $a(g_{\varrho,\sps L})$ with quasi-eigenvalue $\scalar{g_{\varrho,\sps L}}{f_{\varrho,\sps L}}[2]\sps $.
    \begin{proof}
        We first show that items~\textit{ii)} and~\textit{iii)} of Definition~\ref{def:Q-CCS} hold.\newline
        Using equations~\eqref{eqs:WeylConjugation}, we obtain
        $$\mathbb{E}_{\phi_{\varrho,\sps L}}[\sps \mathcal{N}\sps ]=\scalar{\Omega_{\varrho,\sps L}}{\adj{\mathcal{W}\sps }\sns (f_{\varrho,\sps L})\sps \mathcal{N}\sps \mathcal{W}(f_{\varrho,\sps L})\sps \Omega_{\varrho,\sps L}}=\norm{\mathcal{N}^{\frac{1}{2}}\Omega_{\varrho,\sps L}}^2+\scalar{\Omega_{\varrho,\sps L}}{\phi(f_{\varrho,\sps L})\sps \Omega_{\varrho,\sps L}}+\varrho L^3.$$
        Therefore,
        $$\frac{1}{\varrho L^3}\sps \mathbb{E}_{\phi_{\varrho,\sps L}}[\sps \mathcal{N}\sps ]\leq \frac{\norm{\mathcal{N}^{\frac{1}{2}}\Omega_{\varrho,\sps L}}^2\!}{\varrho L^3}+2\,\frac{\norm{\mathcal{N}^{\frac{1}{2}}\Omega_{\varrho,\sps L}}}{\sqrt{\varrho L^3}}+1,$$
        which implies
        $$\lim_{\varrho\to\infty}\limsup_{L\to\infty}\abs*{\frac{1}{\varrho L^3}\sps \mathbb{E}_{\phi_{\varrho,\sps L}}[\sps \mathcal{N}\sps ]-1}=0,$$
        because of item~$ii)$ of Definition~\ref{def:Q-Omega}.
        Analogously,
        \begin{align*}
            {\textstyle \var_{\sps \phi_{\varrho,\sps L}}[\sps \mathcal{N}\sps ]}&=\sns \norm{\mathcal{N}\sps \mathcal{W}(f_{\varrho,\sps L})\sps \Omega_{\varrho,\sps L}}^2\sns -\big(\scalar{\Omega_{\varrho,\sps L}}{\adj{\mathcal{W}\sps }\sns (f_{\varrho,\sps L})\sps \mathcal{N}\sps \mathcal{W}(f_{\varrho,\sps L})\sps \Omega_{\varrho,\sps L}}\big)^2\\
            &=\sns \norm{(\mathcal{N}\sns +\phi(f_{\varrho,\sps L})+\varrho L^3\sps \Id)\sps \Omega_{\varrho,\sps L}}^2\sns -\big(\scalar{\Omega_{\varrho,\sps L}}{(\mathcal{N}\sns +\phi(f_{\varrho,\sps L})+\varrho L^3\sps \Id)\sps \Omega_{\varrho,\sps L}}\big)^2.
        \end{align*}
        A direct computation shows that some terms cancel in the difference, obtaining
        \begin{equation}
            {\textstyle \var_{\sps \phi_{\varrho,\sps L}}[\sps \mathcal{N}\sps ]}=\var_{\sps \Omega_{\varrho,\sps L}}[\sps \mathcal{N}\sns +\phi(f_{\varrho,\sps L})],
        \end{equation}
        and, due to item~\textit{iii)} of Definition~\ref{def:Q-Omega}, one gets
        $$\lim_{\varrho\to\infty}\limsup_{L\to\infty} \abs*{\frac{\var_{\sps \phi_{\varrho,\sps L}}[\sps \mathcal{N}\sps ]}{\varrho L^3}-1}=0.$$
        Next, we notice that for all $g_{\varrho,\sps L}\sns \in\sns \Lp{2}(\Lambda_L)$ with $\norm{g_{\varrho,\sps L}}[2]=1$, identity~\eqref{eq:WeylaConjugation} implies
        \begin{equation}\label{eq:aMinusEigenvalueOnQCS}
            \norm*{\big(a(g_{\varrho,\sps L})-\scalar{g_{\varrho,\sps L}}{f_{\varrho,\sps L}}[2]\sns \big)\sps \phi_{\varrho,\sps L}}=\norm{a(g_{\varrho,\sps L})\sps \Omega_{\varrho,\sps L}},
        \end{equation}
        from which we deduce
        $$\frac{\norm*{\big(a(g_{\varrho,\sps L})\sns -\mspace{-0.75mu}\scalar{g_{\varrho,\sps L}}{f_{\varrho,\sps L}}[2]\sns \big)\sps \phi_{\varrho,\sps L}}}{\norm{\mathcal{N}^{\frac{1}{2}}\phi_{\varrho,\sps L}}}\leq\sns  \frac{\norm{\mathcal{N}^{\frac{1}{2}}\Omega_{\varrho,\sps L}}}{\sqrt{\varrho L^3\sps }}\:\frac{\sqrt{\varrho L^3\sps }}{\norm{\mathcal{N}^{\frac{1}{2}}\phi_{\varrho,\sps L}}}.$$
        Since $\phi_{\varrho,\sps L}$ is a quasi-coherent state, its expectation value is close to $\varrho L^3$.
        More precisely, the lower limit of the expectation of the number of particles in $\phi_{\varrho,\sps L}$ differs from its upper limit by a quantity that vanishes as $\varrho\to\infty$.
        In other words, let
        $$\underline{\ell}_{\,\varrho}=\liminf_{L\to\infty}\abs*{\frac{\norm{\mathcal{N}^{\frac{1}{2}}\phi_{\varrho,\sps L}}^2\!\sns }{\varrho L^3}-1},\qquad \overline{\ell}_{\sps \varrho}=\limsup_{L\to\infty}\abs*{\frac{\norm{\mathcal{N}^{\frac{1}{2}}\phi_{\varrho,\sps L}}^2\!\sns }{\varrho L^3}-1},$$
        satisfying $\underline{\ell}_{\,\varrho},\,\overline{\ell}_{\,\varrho}\!\longrightarrow\sns  0$ as $\varrho\!\to\!\infty$, so that for all $\epsilon\sns >\sns 0$ we know there exists $L_\epsilon\sns >\sns 0$ such that $\forall L \sns >\sns  L_\epsilon$
        \begin{equation*}
            \underline{\ell}_{\,\varrho}\sns -\epsilon<\abs*{\frac{\norm{\mathcal{N}^{\frac{1}{2}}\phi_{\varrho,\sps L}}^2\!\sns }{\varrho L^3}-1}<\overline{\ell}_{\sps \varrho}+\epsilon.
        \end{equation*}
        Equivalently,
        \begin{equation*}
            \frac{\norm{\mathcal{N}^{\frac{1}{2}}\phi_{\varrho,\sps L}}^2\!\!}{\varrho L^3}\in\begin{dcases}
                \big(\sns \max\sns \big\{0,\,1\!-\sns \epsilon\sns -\!\overline{\ell}_{\sps \varrho}\big\},1\mspace{-2.25mu}+\sns \epsilon\sns -\sns \underline{\ell}_{\,\varrho}\big)\cup\big(1\!-\sns \epsilon\sns +\sns \underline{\ell}_{\,\varrho},1\mspace{-2.25mu}+\sns \epsilon\sns +\!\overline{\ell}_{\sps \varrho}\big),\qquad &\text{if }\,\epsilon\sns <\sns \underline{\ell}_{\,\varrho},\\
                \big(\sns \max\sns \big\{0,\,1\!-\sns \epsilon\sns -\!\overline{\ell}_{\sps \varrho}\big\},1\mspace{-2.25mu}+\sns \epsilon\sns +\!\overline{\ell}_{\sps \varrho}\big), &\text{otherwise}.
            \end{dcases}
        \end{equation*}
        In any case, for all $\epsilon\sns >\sns 0$ and $L$ large enough we have $\frac{1}{\!\sns \sqrt{1+\sps \epsilon\,+\overline{\ell}_{\sps \varrho}\sps }\sps }\sns <\frac{\sqrt{\varrho L^3\sps }}{\vphantom{\big|}\norm*{\mathcal{N}^{1/2}\phi_{\varrho,\sps L}}} < \frac{1}{\!\sns \sqrt{\max\{0,\,1-\sps \epsilon\,-\overline{\ell}_{\sps \varrho}\}\sps }\sps }$.
        Hence,
        \begin{gather*}
            \frac{1}{\!\sns \sqrt{1+\sps \epsilon\sps }\sps }\leq \liminf_{\varrho\to\infty}\sps \limsup_{L\to\infty}\frac{\sqrt{\varrho L^3\sps }}{\norm{\mathcal{N}^{\frac{1}{2}}\phi_{\varrho,\sps L}}}\leq \limsup_{\varrho\to\infty}\sps \limsup_{L\to\infty}\frac{\sqrt{\varrho L^3\sps }}{\norm{\mathcal{N}^{\frac{1}{2}}\phi_{\varrho,\sps L}}} \leq \frac{1}{\!\sns \sqrt{\max\{0,\,1-\sps \epsilon\}\sps }\sps },\\
            \frac{1}{\!\sns \sqrt{1+\sps \epsilon\sps }\sps }\leq \liminf_{\varrho\to\infty}\sps \liminf_{L\to\infty}\frac{\sqrt{\varrho L^3\sps }}{\norm{\mathcal{N}^{\frac{1}{2}}\phi_{\varrho,\sps L}}}\leq \limsup_{\varrho\to\infty}\sps \liminf_{L\to\infty}\frac{\sqrt{\varrho L^3\sps }}{\norm{\mathcal{N}^{\frac{1}{2}}\phi_{\varrho,\sps L}}} \leq \frac{1}{\!\sns \sqrt{\max\{0,\,1-\sps \epsilon\}\sps }\sps }.
        \end{gather*}
        Since the inequality holds for all $\epsilon\sns >\sns 0$, this means that the limit in $\varrho$ exists both for the upper and the lower limit in $L$, obtaining
        \begin{equation}\label{eq:limsupOfInverse}
            \lim_{\varrho\to\infty}\limsup_{L\to\infty}\abs*{\frac{\mathbb{E}_{\phi_{\varrho,\sps L}}[\mathcal{N}]}{\varrho L^3}-1}=0\quad\implies\quad\lim_{\varrho\to\infty}\limsup_{L\to\infty}\abs*{\frac{\sqrt{\varrho L^3\sps }}{\norm{\mathcal{N}^{\frac{1}{2}}\phi_{\varrho,\sps L}}} -1}=0.
        \end{equation}
        Therefore, combining this result with~\eqref{eq:aMinusEigenvalueOnQCS},
        \begin{equation*}
            \begin{split}
                \limsup_{\varrho\to\infty}\limsup_{L\to\infty}\frac{\norm*{\big(a(g_{\varrho,\sps L})\sns -\mspace{-0.75mu}\scalar{g_{\varrho,\sps L}}{f_{\varrho,\sps L}}[2]\sns \big)\sps \phi_{\varrho,\sps L}}}{\norm{\mathcal{N}^{\frac{1}{2}}\phi_{\varrho,\sps L}}}\leq&\sps  \lim_{\varrho\to\infty}\limsup_{L\to\infty}\sps \frac{\norm{\mathcal{N}^{\frac{1}{2}}\Omega_{\varrho,\sps L}}}{\sqrt{\varrho L^3\sps }}\sps \times\\
                &\times\!\lim_{\varrho\to\infty}\limsup_{L\to\infty}\left(\sns 1+\abs*{\frac{\sqrt{\varrho L^3\sps }}{\norm{\mathcal{N}^{\frac{1}{2}}\phi_{\varrho,\sps L}}} -1}\right)\!,
            \end{split}
        \end{equation*}
        that vanishes in the iterated limit.

    \end{proof}
\end{prop}
\begin{note}
    Item~\textit{ii)} of Definition~\ref{def:Q-Omega}  is required to prove quasi-canonical coherence in Proposition~\ref{th:WeylCoherent}, whereas the analogous weaker condition involving $\mathbb{E}_{\,\Omega_{\varrho,\sps L}}[\sps \mathcal{N}\sns +\phi(f_{\varrho,\sps L})]$ suffices only for quasi-coherence.
\end{note}
With the following proposition, we analyse how the Definition~\ref{def:QCBEC} of a quasi-complete Bose-Einstein condensate reads in terms of the Fourier coefficients with respect to a given basis.
\begin{prop}\label{th:BECmomentum}
    Consider $\psi_{\varrho,\sps L}\!\in\sns \fdom{\mathcal{N}}\sns \smallsetminus\sns \ker{\mathcal{N}}$, with $\norm{\psi_{\varrho,\sps L}}\sns =\sns 1$ for all $\varrho,L\sns >\sns 0$, exhibiting quasi-complete condensation, and let $\Phi_{\!\varrho,\sps L}\!\in\sns H^1(\Lambda_L)$ be a quasi-complete Bose-Einstein condensate for $\psi_{\varrho,\sps L}\sps $ with a macroscopic counterpart $\Phi\in H^1(\Lambda_1)$.
    Assume there exists an orthonormal basis $\{\mathrm{e}_{L;\,\vec{n}}\}_{\vec{n}\sps \in\,\Z^3}$ of $\Lp{2}(\Lambda_L)$ and a sequence $\{\alpha(\vec{n})\sns \}_{\vec{n}\sps \in\,\Z^3}$ such that $\norm{\alpha}[\ell_2(\Z^3)]=1$ and
    \begin{gather}\label{eq:tailConvergenceFourierCoefficients}
        \lim_{L\to\infty}\sum_{\vec{n}\sps \in\,\Z^3} \abs*{\scalar{\mathrm{e}_{L;\,\vec{n}}}{\tfrac{1}{\!\sns \sqrt{L^3\sps }\sps } \Phi\big(\tfrac{\vec{\cdot}}{L}\big)}[2]-\alpha(\vec{n})}^2\!=0,\\
        \lim_{\varrho\to\infty}\limsup_{L\to\infty}\, \frac{1}{\norm{\mathcal{N}^\frac{1}{2}\psi_{\varrho,\sps L}}^2\!\sns }\sum_{\substack{\vec{n},\sps \vec{m}\sps \in\,\Z^3\\[-1.5pt] \vec{n}\neq\sps  \vec{m}}}\!\abs{\scalar{\psi_{\varrho,\sps L}}{\adj{a}(\mathrm{e}_{L;\,\vec{m}})\sps a(\mathrm{e}_{L;\,\vec{n}})\sps \psi_{\varrho,\sps L}}}^2=0.\label{eq:translationalInvarianceFourier}
    \end{gather}
    Then, there exists a unique $\vec{k}_0\!\in\sns \Z^3$ and $\vartheta\in[0,2\pi)$ such that
    \begin{equation}\label{eq:becFourierCoefficients1}
        \lim_{\varrho\to\infty}\limsup_{L\to\infty}\sps \sum_{\vec{n}\sps \in\,\Z^3}\abs*{\sps \tfrac{1}{\!\sns \sqrt{\varrho L^3\sps }\sps }\scalar{\mathrm{e}_{L;\,\vec{n}}}{\Phi_{\!\varrho,\sps L}}[2]\sns -e^{i\sps \vartheta}\delta_{\vec{n},\sps \vec{k}_0}}^2\!=0.
    \end{equation}
    \begin{note}\label{rmk:likeFourierBasis}
        In case the complete orthonormal system $\{\mathrm{e}_{L;\,\vec{n}}\}_{\vec{n}\sps \in\,\Z^3}$ is the Fourier basis~\eqref{def:FourierBasis}, condition~\eqref{eq:tailConvergenceFourierCoefficients} holds true because the Fourier coefficients of the macroscopic order parameter compose a square summable sequence independent of $L$.
        Moreover, Assumption~\ref{ass:translationalInvariance} in momentum space representation exactly reproduces hypothesis~\eqref{eq:translationalInvarianceFourier}.
        In fact, one can check that
        $$\scalar{\mathrm{e}_{L;\,\vec{m}}}{\langle\gamma^{(1)}_{\varphi^{\sps 0}_{\varrho,\sps L}}\rangle\,\mathrm{e}_{L;\,\vec{n}}}[2]=\delta_{\vec{m},\sps\vec{n}}\,\scalar{\mathrm{e}_{L;\,\vec{m}}}{\gamma^{(1)}_{\varphi^{\sps 0}_{\varrho,\sps L}}\mathrm{e}_{L;\,\vec{n}}}[2].$$
    \end{note}
    \begin{proof}[Proof of Proposition~\ref{th:BECmomentum}]
        We decompose the quantity $\frac{\Phi_{\varrho,\sps L}}{\!\sns \sqrt{\varrho L^3\sps }\sps }$ in terms of the orthonormal basis $\{\mathrm{e}_{L;\,\vec{n}}\}_{\vec{n}\sps \in\,\Z^3}\sps $, so that
        \begin{equation}\label{eq:basisDecomposition}
            \lim_{\varrho\to\infty}\limsup_{L\to\infty}\abs*{\frac{\norm*{a\Big(\sps \tfrac{\Phi_{\varrho,\sps L}}{\!\sns \sqrt{\varrho L^3\sps }\sps }\!\Big)\psi_{\varrho,\sps L}}^2\!\sns }{\norm{\mathcal{N}^\frac{1}{2}\psi_{\varrho,\sps L}}^2}\!\sns -\frac{1}{\varrho L^3}\!\sum_{\vec{n}\sps \in\,\Z^3}\abs{\scalar{\mathrm{e}_{L;\,\vec{n}}}{\Phi_{\varrho,\sps L}}[2]}^2\,\frac{\norm{a(\mathrm{e}_{L;\,\vec{n}})\sps \psi_{\varrho,\sps L}}^2}{\norm{\mathcal{N}^\frac{1}{2}\psi_{\varrho,\sps L}}^2}}=0,
        \end{equation}
        since the cross terms vanish in the limit by hypothesis~\eqref{eq:translationalInvarianceFourier}.
        Indeed,
        \begin{align*}
            \bigg\lvert\frac{1}{\varrho L^3}\!\!\sum_{\substack{\vec{n},\sps \vec{m}\sps \in\,\Z^3\\[-1.5pt] \vec{n}\neq\sps \vec{m}}}\!&\scalar{\mathrm{e}_{L;\,\vec{m}}}{\Phi_{\varrho,\sps L}}[2]\scalar{\Phi_{\varrho,\sps L}}{\mathrm{e}_{L;\,\vec{n}}}[2]\,\frac{\scalar{a(\mathrm{e}_{L;\,\vec{m}})\sps \psi_{\varrho,\sps L}}{a(\mathrm{e}_{L;\,\vec{n}})\sps \psi_{\varrho,\sps L}}}{\norm{\mathcal{N}^\frac{1}{2}\psi_{\varrho,\sps L}}^2}\bigg\rvert\\
            \leq\!\frac{1}{\varrho L^3}&\sqrt{\sum_{\substack{\vec{n},\sps \vec{m}\sps \in\,\Z^3\\[-1.5pt] \vec{n}\neq\sps \vec{m}}}\!\!\abs{\scalar{\mathrm{e}_{L;\,\vec{m}}}{\Phi_{\varrho,\sps L}}[2]\scalar{\Phi_{\varrho,\sps L}}{\mathrm{e}_{L;\,\vec{n}}}[2]}^2\sps }\;\frac{1}{\norm{\mathcal{N}^\frac{1}{2}\psi_{\varrho,\sps L}}^2\!\sns }\,\sqrt{\sum_{\substack{\vec{n},\sps \vec{m}\sps \in\,\Z^3\\[-1.5pt] \vec{n}\neq\sps \vec{m}}}\!\!\abs{\scalar{a(\mathrm{e}_{L;\,\vec{m}})\sps \psi_{\varrho,\sps L}}{a(\mathrm{e}_{L;\,\vec{n}})\sps \psi_{\varrho,\sps L}}}^2\sps }\\
            \leq &\,\frac{1}{\norm{\mathcal{N}^\frac{1}{2}\psi_{\varrho,\sps L}}^2\!\sns }\,\sqrt{\sum_{\substack{\vec{n},\sps \vec{m}\sps \in\,\Z^3\\[-1.5pt] \vec{n}\neq\sps \vec{m}}}\!\abs{\scalar{\psi_{\varrho,\sps L}}{\adj{a}(\mathrm{e}_{L;\,\vec{m}})\sps a(\mathrm{e}_{L;\,\vec{n}})\sps \psi_{\varrho,\sps L}}}^2\sps },
        \end{align*}
        by Parseval's identity.
        Recalling identity~\eqref{eq:numberAsSumOfOccupations}, we have
        \begin{equation*}
            \norm{\mathcal{N}^\frac{1}{2}\psi_{\varrho,\sps L}}^2\!=\!\sns \sum_{\vec{n}\sps \in\,\Z^3}\norm{a(\mathrm{e}_{L;\,\vec{n}})\sps \psi_{\varrho,\sps L}}^2,
        \end{equation*}
        which we then use in decomposition~\eqref{eq:basisDecomposition}.
        Hence,
        \begin{equation}\label{eq:doNotRuinLiminf}
            1=\!\lim_{\varrho\to\infty}\liminf_{L\to\infty}\frac{\norm*{a\Big(\sps \tfrac{\Phi_{\varrho,\sps L}}{\!\sns \sqrt{\varrho L^3\sps }\sps }\!\Big)\psi_{\varrho,\sps L}}^2\!\sns }{\norm{\mathcal{N}^\frac{1}{2}\psi_{\varrho,\sps L}}^2}\leq\lim_{\varrho\to\infty}\liminf_{L\to\infty}\!\sum_{\vec{n}\sps \in\,\Z^3}\!\frac{\abs{\scalar{\mathrm{e}_{L;\,\vec{n}}}{\Phi_{\varrho,\sps L}}[2]}^2\,\norm{a(\mathrm{e}_{L;\,\vec{n}})\sps \psi_{\varrho,\sps L}}^2\!\sns }{\varrho L^3\!\!\!\sum\limits_{\vec{m}\sps \in\,\Z^3} \norm{a(\mathrm{e}_{L;\,\vec{m}})\sps \psi_{\varrho,\sps L}}^2}.
        \end{equation}
        Let us stress that the $\Lp{2}$-convergence of $\Phi_{\varrho,\sps L}$ in the high-density thermodynamic limit is equivalent to the following convergence to some $\{\alpha_L(\vec{n})\sns \}_{\vec{n}\sps \in\,\Z^3}\sns \in\ell_2(\Z^3)$ such that $\norm{\alpha_L}[\ell_2(\Z^3)]=1$:
        \begin{equation}\label{eq:L2convergenceForFourierCoefficients}
            \lim_{\varrho\to\infty}\limsup_{L\to\infty}\sum_{\vec{n}\sps \in\,\Z^3}\abs*{\sps \tfrac{1}{\!\sns \sqrt{\varrho L^3\sps }\sps }\scalar{\mathrm{e}_{L;\,\vec{n}}}{\Phi_{\varrho,\sps L}}[2]-\alpha_L(\vec{n})}^2\!=0.
        \end{equation}
        More precisely, $\alpha_L(n)\sns \vcentcolon=\sns \scalar{\mathrm{e}_{L;\,\vec{n}}}{\tfrac{1}{\!\sns \sqrt{L^3\sps }\sps }\Phi\big(\tfrac{\vec{\cdot}}{L}\big)}[2]$.
        This means that assumption~\eqref{eq:tailConvergenceFourierCoefficients} causes the $\ell_2$-convergence of $\tfrac{1}{\!\sns \sqrt{\varrho L^3\sps }\sps }\scalar{\mathrm{e}_{L;\,\vec{\cdot}}}{\Phi_{\varrho,\sps L}}[2]$ towards $\alpha$.
        Therefore, setting for short the probability measure $\mathfrak{m}_{\varrho,\sps L}$ on $\Z^3$ as
        $$\mathfrak{m}_{\varrho,\sps L}(\Omega)=\sum_{\vec{n}\sps \in\,\Omega}\;\frac{\norm{a(\mathrm{e}_{L;\,\vec{n}})\sps \psi_{\varrho,\sps L}}^2}{\!\!\sum\limits_{\vec{m}\sps \in\,\Z^3} \norm{a(\mathrm{e}_{L;\,\vec{m}})\sps \psi_{\varrho,\sps L}}^2},\qquad \Omega\subset\Z^3,$$
        we rewrite inequality~\eqref{eq:doNotRuinLiminf} as
        \begin{equation*}
            1\leq\lim_{\varrho\to\infty}\liminf_{L\to\infty}\!\sum_{\vec{n}\sps \in\,\Z^3}\!\abs*{\tfrac{\scalar{\mathrm{e}_{L;\,\vec{n}}}{\Phi_{\varrho,\sps L}}[2]}{\sqrt{\varrho L^3\sps }}}^2\!\mathfrak{m}_{\varrho,\sps L}(\mspace{-0.75mu}\{\sns \vec{n}\sns \}\mspace{-0.75mu}).
        \end{equation*}
        Since $\Z^3$ is a countable set and $\mathfrak{m}_{\varrho,\sps L}$ is a probability measure (in particular, it takes values within a compact set), it is possible to exploit the Bolzano-Weierstrass theorem to exhibit two positive sequences $\{L_j\}_{j\sps \in\,\N}, \{\varrho_i\}_{i\sps \in\,\N}$, such that $\lim\limits_{i\to\infty}\varrho_i=\lim\limits_{j\to\infty}L_j=\infty$ and 
        \begin{equation}\label{eq:tyBolzanoWeierstrass}
            \exists \lim_{i\to\infty}\lim_{j\to\infty} \mathfrak{m}_{\varrho_i,\sps L_j}(\mspace{-0.75mu}\{\sns \vec{n}\sns \}\mspace{-0.75mu})\in[0,1],\qquad\forall \vec{n}\in\Z^3.
        \end{equation}

        To this end, let $\{\tilde{\varrho}_{\sps m}\}_{m\sps \in\sps \N}$ be any increasing, positive sequence diverging as $m$ approaches infinity.
        Let us also enumerate the sites of the lattice $\Z^3\!=\sns \{\vec{n}_1,\vec{n}_2,\ldots\}$ so that we can select, by Bolzano-Weierstrass, a diverging sequence $L^{\sns (1)}_{1;\,j}$ such that $\mathfrak{m}_{\tilde{\varrho}_1,\sps L^{\sns (1)}_{1;\,j}}(\vec{n}_1)$ converges for $j\sns \to\sns \infty$.
        Here, 
        The same can be done for $\vec{n}_2$, and the convergence on both sites can be guaranteed for fixed $\tilde{\varrho}_1$ if we choose $\{L^{\sns (1)}_{2\sps ;\,j}\}_{j\sps \in\sps \N}\mspace{-2.25mu}\subset\! \{L^{\sns (1)}_{1;\,j}\}_{j\sps \in\sps \N}$.
        This can be repeated by induction, so that along the diagonal sequence $L^{\sns (1)}_j\!\vcentcolon=L^{\sns (1)}_{j\sps ;\,j}$ we have the convergence of each $\mathfrak{m}_{\tilde{\varrho}_1,\sps L^{\sns (1)}_j}(\mspace{-0.75mu}\{\sns \vec{n}_k\sns \}\mspace{-0.75mu})$, for fixed $\vec{n}_k\!\in\sns \Z^3$.\newline
        Now we can reiterate the argument to find another set of sequences $\{L^{\sns (m_0\sps +1)}_j\}_{j\sps \in\sps \N}\mspace{-2.25mu}\subset\!\{L^{\sns (m_0)}_j\}_{j\sps \in\sps \N}$ for which $\mathfrak{m}_{\tilde{\varrho}_{\sps m},\sps L^{\sns (m)}_j}(\mspace{-2.25mu}\{\sns \vec{n}\sns \}\mspace{-2.25mu})$ converges as $j$ grows to infinity, for any fixed $m\leq m_0\!+\sns 1$, $\vec{n}\sns \in\sns \Z^3$.
        This proves that along the diagonal sequence $L_j\!\vcentcolon=L^{\sns (j)}_{j}$, the limit $\lim\limits_{j\to\infty}\mathfrak{m}_{\tilde{\varrho}_{\sps m},\sps L_j}(\mspace{-2.25mu}\{\sns \vec{n}\sns \}\mspace{-2.25mu})$ exists for all $\vec{n}\sns \in\sns \Z^3$ and $m\sns \in\sns \N$, and is contained within the set $[0,1]$.
        However, we cannot ensure that the limit in $m\sns \to\sns \infty$ exists as well.
        Therefore, we must extract a diverging subsequence $\{m_{1;\,i}\}_{i\sps \in\sps \N}$ (for which $\tilde{\varrho}_{\sps m_{1;\sps i}}$ is still diverging because of the monotonicity of $\{\tilde{\varrho}_{\sps m}\}_{m\sps \in\sps \N}$) along which $\lim\limits_{j\to\infty}\mathfrak{m}_{\tilde{\varrho}_{\sps m_{1;\sps i}},\sps L_j}(\mspace{-2.25mu}\{\sns \vec{n}_1\sns \}\mspace{-2.25mu})$ converges as $i\sns \to\sns \infty$.
        Repeating the same construction above, we can find the diagonal sequence $\varrho_i\vcentcolon=\tilde{\varrho}_{\sps m_{i,\sps i}}$ which makes~\eqref{eq:tyBolzanoWeierstrass} true.
        We stress that $L_j$ and $\varrho_i$ are independent: although $L_j$ depends on $\tilde{\varrho}_m$, it is not related to the specific way in which $\varrho_i$ is extracted.

        \medskip
        
        \noindent Since the limit achieved along any subsequence is larger than the limit inferior, from~\eqref{eq:doNotRuinLiminf} we have
        \begin{align*}
            1&\leq\liminf_{\varrho\to\infty}\liminf_{L\to\infty}\!\sum_{\vec{n}\sps \in\,\Z^3}\!\left[\abs{\alpha(\vec{n})}^2\!+\abs*{\tfrac{\scalar{\mathrm{e}_{L;\,\vec{n}}}{\Phi_{\varrho,\sps L}}[2]}{\sqrt{\varrho L^3\sps }}-\alpha(\vec{n})}^2 \!\!+2\sps \Re{\:\conjugate*{\alpha(\vec{n})}\sns \left(\sns  \tfrac{\scalar{\mathrm{e}_{L;\,\vec{n}}}{\Phi_{\varrho,\sps L}}[2]}{\sqrt{\varrho L^3\sps }}-\alpha(\vec{n})\!\right)}\! \right]\sns \mathfrak{m}_{\varrho,\sps L}(\mspace{-0.75mu}\{\sns \vec{n}\sns \}\mspace{-0.75mu})\\
            &\leq \limsup_{i\to\infty}\limsup_{j\to\infty}\!\sum_{\vec{n}\sps \in\,\Z^3}\!\left[\abs*{\tfrac{\scalar{\mathrm{e}_{L_j;\,\vec{n}}}{\Phi_{\varrho_i,\sps L_j}}[2]}{\sqrt{\varrho_i L_j^3\sps }}-\alpha(\vec{n})}^2 \!\!+2\sps \Re{\:\conjugate*{\alpha(\vec{n})}\sns \left(\sns  \tfrac{\scalar{\mathrm{e}_{L_j;\,\vec{n}}}{\Phi_{\varrho_i,\sps L_j}}[2]}{\sqrt{\varrho_i L_j^3\sps }}-\alpha(\vec{n})\!\right)}+\right.\\
            &\mspace{180mu}\left.\vphantom{\abs*{\tfrac{\scalar{\mathrm{e}_{L_j;\,\vec{n}}}{\Phi_{\varrho_i,\sps L_j}}[2]}{\sqrt{\varrho_i L_j^3\sps }}}^2}+\abs{\alpha(\vec{n})}^2 \right]\sns \mathfrak{m}_{\varrho_i,\sps L_j}(\mspace{-0.75mu}\{\sns \vec{n}\sns \}\mspace{-0.75mu}).
        \end{align*}
        Similarly, the limit along a specific sequence can be estimated from above by the limit superior, so that 
        \begin{align*}
            1\leq&\,\limsup_{\varrho\to\infty}\limsup_{L\to\infty}\!\sum_{\vec{n}\sps \in\,\Z^3}\!\left[\abs*{\tfrac{\scalar{\mathrm{e}_{L;\,\vec{n}}}{\Phi_{\varrho,\sps L}}[2]}{\sqrt{\varrho L^3\sps }}-\alpha(\vec{n})}^2 \!\!+2\sps \Re{\:\conjugate*{\alpha(\vec{n})}\sns \left(\sns  \tfrac{\scalar{\mathrm{e}_{L;\,\vec{n}}}{\Phi_{\varrho,\sps L}}[2]}{\sqrt{\varrho L^3\sps }}-\alpha(\vec{n})\!\right)}\! \right]\sns \mathfrak{m}_{\varrho,\sps L}(\mspace{-0.75mu}\{\sns \vec{n}\sns \}\mspace{-0.75mu})\,+\\
            &+\!\!\sum_{\substack{\vec{m}\sps \in\,\Z^3\sps :\\[1.5pt] \abs{\vec{m}}\leq M}}\limsup_{i\to\infty}\limsup_{j\to\infty}\,\abs{\alpha(\vec{m})}^2\,\mathfrak{m}_{\varrho_i,\sps L_j}(\mspace{-0.75mu}\{\sns \vec{m}\sns \}\mspace{-0.75mu})+\limsup_{i\to\infty}\limsup_{j\to\infty}\!\!\sum_{\substack{\vec{m}\sps \in\,\Z^3\sps :\\[1.5pt] \abs{\vec{m}}> M}} \abs{\alpha(\vec{m})}^2\,\mathfrak{m}_{\varrho_i,\sps L_j}(\mspace{-0.75mu}\{\sns \vec{m}\sns \}\mspace{-0.75mu})\\
            \leq & \,\lim_{\varrho\to\infty}\limsup_{L\to\infty}\left[\sum_{\vec{n}\sps \in\,\Z^3}\abs*{\tfrac{\scalar{\mathrm{e}_{L;\,\vec{n}}}{\Phi_{\varrho,\sps L}}[2]}{\sqrt{\varrho L^3\sps }}-\alpha(\vec{n})}^2 \!\!+2\sps \norm{\alpha(\vec{n})}[\ell_2(\Z^3)]\sqrt{\sum_{\vec{n}\sps \in\,\Z^3}\abs*{\tfrac{\scalar{\mathrm{e}_{L;\,\vec{n}}}{\Phi_{\varrho,\sps L}}[2]}{\sqrt{\varrho L^3\sps }}-\alpha(\vec{n})}^2\sps }\,\right]\!+\\
            &+\!\!\sum_{\substack{\vec{m}\sps \in\,\Z^3\sps :\\[1.5pt] \abs{\vec{m}}\leq M}}\abs{\alpha(\vec{m})}^2\lim_{i\to\infty}\lim_{j\to\infty}\mathfrak{m}_{\varrho_i,\sps L_j}(\mspace{-0.75mu}\{\sns \vec{m}\sns \}\mspace{-0.75mu})+\!\!\sum_{\substack{\vec{m}\sps \in\,\Z^3\sps :\\[1.5pt] \abs{\vec{m}}> M}} \abs{\alpha(\vec{m})}^2.
        \end{align*}        
        In the last inequalities, we have made use of the sub-additivity of the limit superior.
        Thus, because of~(\ref{eq:tailConvergenceFourierCoefficients},~\ref{eq:L2convergenceForFourierCoefficients}), we can take the limit $M\sns \to\sns \infty$ in both sides to obtain
        \begin{equation*}
            1\leq \!\sns \sum_{\vec{n}\sps \in\,\Z^3}\abs{\alpha(\vec{n})}^2\lim_{i\to\infty}\lim_{j\to\infty}\mathfrak{m}_{\varrho_i,\sps L_j}(\mspace{-0.75mu}\{\sns \vec{n}\sns \}\mspace{-0.75mu}).
        \end{equation*}
        Now, we can adopt a H\"older inequality to get
        \begin{equation*}
            1\leq\! \sum_{\vec{n}\sps \in\,\Z^3}\abs{\alpha(\vec{n})}^2\lim_{i\to\infty}\lim_{j\to\infty}\mathfrak{m}_{\varrho_i,\sps L_j}(\mspace{-0.75mu}\{\sns \vec{n}\sns \}\mspace{-0.75mu})\leq \norm*{\sps \abs{\alpha}^2}[\ell_1(\Z^3)]\norm*{\sps \lim_{i\to\infty}\lim_{j\to\infty}\mathfrak{m}_{\varrho_i,\sps L_j}(\mspace{-0.75mu}\{\vec{\cdot}\}\mspace{-0.75mu})}[\ell_\infty(\Z^3)]\!\!\leq 1.
        \end{equation*}
        In other words, the definition of a quasi-complete Bose-Einstein condensate has required the previous H\"older inequality to be valid with the equality sign.
        For conjugate exponents $\{1,\infty\}$, this can be true if and only if the sequence in $\ell_\infty(\Z^3)$ attains its maximum (that is, the value $1$) for all $\vec{n}$ such that the sequence in $\ell_1(\Z^3)$ is non-zero at $\vec{n}$.
        This means that\footnote{Without extracting converging subsequences, we would have only obtained $\limsup\limits_{\varrho\to\infty}\limsup\limits_{L\to\infty}\, \mathfrak{m}_{\varrho,\sps L}(\mspace{-0.75mu}\{\sns \vec{k}_0\sns \}\mspace{-0.75mu})\sns =\sns 1$.
        Since $\mathfrak{m}_{\varrho,\sps L}$ is a probability measure, this implies $0\sns \leq\sns  \limsup\limits_{\varrho\to\infty}\limsup\limits_{L\to\infty} \,\mathfrak{m}_{\varrho,\sps L}(\mspace{-0.75mu}\{\sns \vec{n}\sns \}\mspace{-0.75mu})\sns \leq\sns  1-\liminf\limits_{\varrho\to\infty}\liminf\limits_{L\to\infty}\, \mathfrak{m}_{\varrho,\sps L}(\mspace{-0.75mu}\{\sns \vec{k}_0\sns \}\mspace{-0.75mu})$ for all $\vec{n}\neq\vec{k}_0$, which is not enough to conclude the concentration of the mass in a single mode, unless the limits exist.}
        $$\exists* \vec{k}_0\sns \in\Z^3:\quad\lim_{i\to\infty}\lim_{j\to\infty}\mathfrak{m}_{\varrho_i,\sps L_j}(\mspace{-0.75mu}\{\vec{k}_0\}\mspace{-0.75mu})=1,\qquad\lim_{i\to\infty}\lim_{j\to\infty}\mathfrak{m}_{\varrho_i,\sps L_j}(\mspace{-0.75mu}\{\vec{n}\}\mspace{-0.75mu})=0,\qquad \forall \vec{n}\neq \vec{k}_0.$$
        Therefore, $\alpha$ must be supported only on $\{\vec{k}_0\}$ with absolute value equal to $1$ because of its normalisation, and it vanishes anywhere else.
        In other words,~\eqref{eq:becFourierCoefficients1} has been proven.

    \end{proof}
\end{prop}
\begin{note}\label{rmk:consequencesOfCOnvergence}
    Since the convergence of the macroscopic order parameter actually takes place in $H^1(\Lambda_L)$ (item \textit{i)} of Definition~\ref{def:QCBEC}), equation~\eqref{eq:becFourierCoefficients1} reads
    $$\lim_{\varrho\to\infty}\limsup_{L\to\infty}\sps \sum_{\vec{n}\sps \in\,\Z^3}\!\big(1+\tfrac{4\pi^2\abs{\vec{n}}^2\!}{L^2}\sps \big)\sns \abs*{\sps \tfrac{1}{\!\sns \sqrt{\varrho L^3\sps }\sps }\scalar{\mathrm{e}_{L;\,\vec{n}}}{\Phi_{\!\varrho,\sps L}}[2]\sns -e^{i\sps \vartheta}\delta_{\vec{n},\sps \vec{k}_0}}^2\!=0.$$
    In particular, this forces the kinetic energy per particle to vanish in the high-density thermodynamic limit:
    \begin{align*}
        \lim_{\varrho\to\infty}\limsup_{L\to\infty}\sps &\sum_{\vec{n}\sps \in\,\Z^3}\tfrac{4\pi^2\abs{\vec{n}}^2\!}{L^2}  \abs*{\sps \tfrac{1}{\!\sns \sqrt{\varrho L^3\sps }\sps }\scalar{\mathrm{e}_{L;\,\vec{n}}}{\Phi_{\!\varrho,\sps L}}[2]}^2\!=\\[-5pt]
        &=\lim_{\varrho\to\infty}\limsup_{L\to\infty}\Bigg[\tfrac{4\pi^2\abs{\vec{k}_0}^2\!}{L^2} \abs*{\sps \tfrac{1}{\!\sns \sqrt{\varrho L^3\sps }\sps }\scalar{\mathrm{e}_{L;\,\vec{k}_0}}{\Phi_{\!\varrho,\sps L}}[2]}^2\!+\!\!\sum_{\substack{\vec{n}\sps \in\,\Z^3\sps :\\[1.5pt] \vec{n}\neq \vec{k}_0}}\!\tfrac{4\pi^2\abs{\vec{n}}^2\!}{L^2} \abs*{\sps \tfrac{1}{\!\sns \sqrt{\varrho L^3\sps }\sps }\scalar{\mathrm{e}_{L;\,\vec{n}}}{\Phi_{\!\varrho,\sps L}}[2]}^2\!\Bigg]\\[-7.5pt]
        &\leq \lim_{\varrho\to\infty}\limsup_{L\to\infty}\Bigg[\tfrac{4\pi^2\abs{\vec{k}_0}^2\!}{L^2}+\!\!\sum_{\vec{n}\sps \in\,\Z^3}\!\tfrac{4\pi^2\abs{\vec{n}}^2\!}{L^2} \abs*{\sps \tfrac{1}{\!\sns \sqrt{\varrho L^3\sps }\sps }\scalar{\mathrm{e}_{L;\,\vec{n}}}{\Phi_{\!\varrho,\sps L}}[2]\sns -e^{i\sps \vartheta} \delta_{\vec{n},\sps \vec{k}_0}}^2\!\Bigg]\!=0.
    \end{align*}
\end{note}
Next, we establish that quasi-canonical coherent states exhibit quasi-complete Bose-Einstein condensation.
\begin{prop}\label{th:eigenfunctionOfaMeansQBEC}
    Let $\phi_{\varrho,\sps L}\!\sns \in\sns \fdom{\mathcal{N}}\sns \smallsetminus\sns \ker{\mathcal{N}}$ be a quasi-eigenfunction of $a(g_{\varrho,\sps L})$, where $g_{\varrho,\sps L}\!\sns \in\!H^1(\Lambda_L)$ is s. t.
    \begin{enumerate}[label=\roman*)]
        \item $\norm{g_{\varrho,\sps L}}[2]=1$ for all $\varrho,L\sns >\sns 0$;
        \item $\exists g \sns \in\! H^1(\Lambda_1)$ for which $\lim\limits_{\varrho\to\infty}\limsup\limits_{L\to\infty}\,\norm*{g_{\varrho,\sps L}-\tfrac{1}{L^{3/2}}\,g\big(\tfrac{\vec{\cdot}}{L}\big)}[H^1(\Lambda_L)]\!=0$.
    \end{enumerate}
    Moreover, the quasi-eigenvalue $z_{\varrho,\sps L}\!\in\C$ associated with $\phi_{\varrho,\sps L}$ satisfies
    $$\lim_{\varrho\to\infty}\limsup_{L\to\infty}\abs*{1-\frac{\abs{z_{\varrho,\sps L}}^2\,\norm{\phi_{\varrho,\sps L}}^2\!\sns }{\norm{\mathcal{N}^{\frac{1}{2}}\phi_{\varrho,\sps L}}^2}\,}=0.$$
    Then, $\phi_{\varrho,\sps L}$ exhibits quasi-complete condensation, and $\sqrt{\varrho L^3\sps }g_{\varrho,\sps L}$ is a quasi-complete Bose-Einstein condensate for $\phi_{\varrho,\sps L}\sps $.
    \begin{note}
        The assumption on the quasi-eigenvalue of the annihilation operators ensures that its square behaves the same as the expectation of the number of particles in the state $\phi_{\varrho,\sps L}$.
        This is what happens for exact canonical coherent states, where Bose-Einstein condensation occurs.
    \end{note}
    \begin{proof}[Proof of Proposition~\ref{th:eigenfunctionOfaMeansQBEC}]
        One has
        
        \begin{equation*}
            \begin{split}
                \norm{a(g_{\varrho,\sps L})\sps \phi_{\varrho,\sps L}}^2\!=\norm*{\big(a(g_{\varrho,\sps L})\sns -\sns z_{\varrho,\sps L}\big)\sps \phi_{\varrho,\sps L}}^2\!+\abs{z_{\varrho,\sps L}}^2\sps \norm{\phi_{\varrho,\sps L}}^2+\\
                +2\sps \Re\left[\conjugate{z}_{\varrho,\sps L}\scalar{\phi_{\varrho,\sps L}}{\big(a(g_{\varrho,\sps L})\sns -\sns z_{\varrho,\sps L}\big)\sps \phi_{\varrho,\sps L}}\right]\!.
            \end{split}
        \end{equation*}
        Thus,
        \begin{align*}
            \limsup_{\varrho\to\infty}\limsup_{L\to\infty} \left(\sns 1-\frac{\norm{a(g_{\varrho,\sps L})\sps \phi_{\varrho,\sps L}}^2\!\sns }{\norm{\mathcal{N}^\frac{1}{2}\phi_{\varrho,\sps L}}^2}\,\right)\sns \leq \lim_{\varrho\to\infty}\limsup_{L\to\infty}\abs*{1-\frac{\abs{z_{\varrho,\sps L}}^2\norm{\phi_{\varrho,\sps L}}^2\!}{\norm{\mathcal{N}^\frac{1}{2}\phi_{\varrho,\sps L}}^2}}+\\
            -\!\lim_{\varrho\to\infty}\liminf_{L\to\infty}\frac{\norm*{\big(a(g_{\varrho,\sps L})\sns -\sns z_{\varrho,\sps L}\big)\sps \phi_{\varrho,\sps L}}^2\!}{\norm{\mathcal{N}^\frac{1}{2}\phi_{\varrho,\sps L}}^2}\,+\\
            +2\limsup_{\varrho\to\infty}\limsup_{L\to\infty}\frac{\abs{z_{\varrho,\sps L}}\sps \norm{\phi_{\varrho,\sps L}}\,\norm{\big(a(g_{\varrho,\sps L})\sns -\sns z_{\varrho,\sps L}\big)\sps \phi_{\varrho,\sps L}}}{\norm{\mathcal{N}^\frac{1}{2}\phi_{\varrho,\sps L}}^2}.
        \end{align*}
        By definition of quasi-eigenfunction,
        $$\lim_{\varrho\to\infty}\limsup_{L\to\infty} \frac{\norm*{\big(a(g_{\varrho,\sps L})-z_{\varrho,\sps L}\big)\sps \phi_{\varrho,\sps L}}}{\norm{\mathcal{N}^\frac{1}{2}\phi_{\varrho,\sps L}}}=0,$$
        and therefore the quasi-eigenvalue must obey the bound
        $$\limsup_{\varrho\to\infty}\limsup_{L\to\infty} \frac{\abs{z_{\varrho,\sps L}}\,\norm{\phi_{\varrho,\sps L}}}{\norm{\mathcal{N}^{\frac{1}{2}}\phi_{\varrho,\sps L}}}\leq 1.$$
        Additionally, our assumption on $z_{\varrho,\sps L}$ implies that the limit in $\varrho$ exists in the previous equation and the equality sign holds (to be precise we ask that the same must be true for the lower limit in $L$, as well).
        Thus,
        $$\lim_{\varrho\to\infty}\limsup_{L\to\infty} \left(\sns 1-\frac{\norm{a(g_{\varrho,\sps L})\sps \phi_{\varrho,\sps L}}^2\!\sns }{\norm{\mathcal{N}^\frac{1}{2}\phi_{\varrho,\sps L}}^2}\,\right)\sns =0.$$

    \end{proof}
\end{prop}
We conclude this section by quantifying the energy of a quasi-vacuum state under suitable assumptions.
\begin{prop}\label{th:energyOfVoid}
    Let $\Omega_{\varrho,\sps L}\!\in\sns \fdom{\mathcal{H}_{\varrho,\sps L}}$ be a quasi-vacuum state with respect to $\Phi_{\varrho,\sps L}\!\in\sns W^{2,\sps \infty}(\Lambda_L)$, satisfying $\norm{\Phi_{\varrho,\sps L}}[2]^2=\varrho L^3$ and $\lim\limits_{\varrho\to\infty}\limsup\limits_{L\to\infty} \,\frac{1}{\!\sns \sqrt{\varrho L^3\sps }\sps }\norm{\Phi_{\varrho,\sps L}\sns -\sqrt{\varrho\sps }\sps \Phi\big(\frac{\vec{\cdot}}{L}\big)}[H^1(\Lambda_L)]\sns =0$ for some $\Phi\sns \in\! H^1(\Lambda_1)$.\newline
    Moreover, assume that
    \begin{enumerate}[label=\roman*)]
        \item $\limsup\limits_{\varrho\to\infty}\limsup\limits_{L\to\infty}\frac{\norm{\Phi_{\varrho,\sps L}}[\infty]\!\sns }{\sqrt{\varrho\sps }}\,<\infty$;
        \item $\limsup\limits_{\varrho\to\infty}\limsup\limits_{L\to\infty}\frac{\norm{\Delta\Phi_{\varrho,\sps L}}[\infty]\!\sns }{\sqrt{\varrho\sps }}\,<\infty$;
        \item the quasi-canonical coherent state $\mathcal{W}(\Phi_{\varrho,\sps L})\sps \Omega_{\varrho,\sps L}\!\in\sns \fdom{\mathcal{H}_{\varrho,\sps L}}$ is energetically quasi-self-consistent.
    \end{enumerate}
    Then,
    $$\lim_{\varrho\to\infty}\limsup_{L\to\infty} \frac{1}{\varrho L^3} \,\mathcal{H}_{\varrho,\sps L}[\Omega_{\varrho,\sps L}]=0.$$
    \begin{proof}
        Considering the quantity
        $\mathcal{H}_{\varrho,\sps L}[\sps \mathcal{W}(\Phi_{\varrho,\sps L})\sps \Omega_{\varrho,\sps L}]$, we want to estimate the expectation of the energy of the quasi-vacuum state.
        This computation has already been carried out in~\eqref{eq:HamiltonianWeylSandwich}; therefore, taking into account Proposition~\ref{th:generatorTermsAPEstimates}
        \begin{equation*}
            \begin{split}
                \mathcal{H}_{\varrho,\sps L}[\Omega_{\varrho,\sps L}]\leq &\:\mathcal{H}[\sps \mathcal{W}(\Phi_{\varrho,\sps L})\sps \Omega_{\varrho,\sps L}]-\mathscr{E}_{\varrho,\sps L}[\Phi_{\varrho,\sps L}]+\tfrac{1}{2\sps \varrho}\mathcal{V}_{\sns L}[\Omega_{\varrho,\sps L}]+\tfrac{\,L^3\!\sns }{2}\sps +\\
                &+\sns \left(\tfrac{7}{\varrho}\sps \norm{V_L\!\ast\sns \abs{\Phi_{\varrho,\sps L}}^2}[\infty]\sns +\tfrac{1}{2\sps \varrho}\sps \norm{V^{\sps 2}_L\sns \ast\sns \abs{\Phi_{\varrho,\sps L}}^2}[\infty]\right)\sns \mathbb{E}_{\,\Omega_{\varrho,\sps L}}[\sps \mathcal{N}\sps ]\sps +\\
                &+\left(\norm{\Delta\Phi_{\varrho,\sps L}}[\infty]\sns +\tfrac{1}{\varrho}\sps \norm{V_L\!\ast\sns \abs{\Phi_{\varrho,\sps L}}^2}[\infty]\norm{\Phi_{\varrho,\sps L}}[\infty]\right)\!\sqrt{L^3\,\mathbb{E}_{\,\Omega_{\varrho,\sps L}}[\sps \mathcal{N}\sps ]\sps }\sps .
            \end{split}
        \end{equation*}
        In the last row, we have estimated the linear contribution (in terms of the operator-valued distribution~\eqref{def:annihilationDistribution}) coming from~\eqref{eq:HamiltonianWeylSandwich}.
        Then, making use of Young's inequality for convolutions and recalling $\norm{V_L}[1]\sns \leq\sns  \FT{V}_\infty(\vec{0})=\mathfrak{b}$
        \begin{align*}
                \mathcal{H}_{\varrho,\sps L}[\Omega_{\varrho,\sps L}]\leq &\:\mathcal{H}[\sps \mathcal{W}(\Phi_{\varrho,\sps L})\sps \Omega_{\varrho,\sps L}]-\mathscr{E}_{\varrho,\sps L}[\Phi_{\varrho,\sps L}]+\tfrac{1}{2}\mathcal{H}_{\varrho,\sps L}[\Omega_{\varrho,\sps L}]+\tfrac{\,L^3\!\sns }{2}\sps +\\
                &+\sns \left(7\sps \mathfrak{b}\sps \tfrac{\norm{\Phi_{\varrho,\sps L}}[\infty]^2\!\sns }{\varrho}\sns +\tfrac{\norm{V_L}[2]^2}{2}\,\tfrac{\norm{\Phi_{\varrho,\sps L}}[\infty]^2\!\sns }{\varrho}\right)\sns \mathbb{E}_{\,\Omega_{\varrho,\sps L}}[\sps \mathcal{N}\sps ]\sps +\\
                &+\left(\tfrac{\norm{\Delta\Phi_{\varrho,\sps L}}[\infty]\!\sns }{\sqrt{\varrho\sps }}\sns +\mathfrak{b}\sps \tfrac{\norm{\Phi_{\varrho,\sps L}}[\infty]^3\!\sns }{\varrho^{\sps 3/2}}\right)\!\sqrt{\varrho L^3\,\mathbb{E}_{\,\Omega_{\varrho,\sps L}}[\sps \mathcal{N}\sps ]\sps }\\[5pt]
                \implies \frac{1}{\varrho L^3}\sps \mathcal{H}_{\varrho,\sps L}[\Omega_{\varrho,\sps L}]\leq &\:\tfrac{2}{\varrho L^3\!\sns }\abs*{\mathcal{H}[\sps \mathcal{W}(\Phi_{\varrho,\sps L})\sps \Omega_{\varrho,\sps L}]-\mathscr{E}_{\varrho,\sps L}[\Phi_{\varrho,\sps L}]}+\tfrac{1}{\varrho}\sps +\\
                &+\sns \left(14\sps \mathfrak{b}\sps \tfrac{\norm{\Phi_{\varrho,\sps L}}[\infty]^2\!\sns }{\varrho}\sns +\norm{V_L}[2]^2\,\tfrac{\norm{\Phi_{\varrho,\sps L}}[\infty]^2\!\sns }{\varrho}\right)\sns \tfrac{\mathbb{E}_{\,\Omega_{\varrho,\sps L}}[\sps \mathcal{N}\sps ]}{\varrho L^3}\sps +\\
                &+2\sns \left(\tfrac{\norm{\Delta\Phi_{\varrho,\sps L}}[\infty]\!\sns }{\sqrt{\varrho\sps }}\sns +\mathfrak{b}\sps \tfrac{\norm{\Phi_{\varrho,\sps L}}[\infty]^3\!\sns }{\varrho^{\sps 3/2}}\right)\!\sqrt{\tfrac{\mathbb{E}_{\,\Omega_{\varrho,\sps L}}[\sps \mathcal{N}\sps ]}{\varrho L^3}\sps }\sps .
        \end{align*}
        Hypotheses~\textit{i) -- iii)} ensure that the r.h.s. of the previous equation vanishes in the iterated limit, since $\Omega_{\varrho,\sps L}$ is a quasi-vacuum state with respect to $\Phi_{\varrho,\sps L}$.
        
    \end{proof}
\end{prop}
\begin{note}\label{rmk:potentialL2}
    In the proof of the previous proposition, we took the boundedness of $\{\norm{V_L}[2]\}_{L>\sps 0}$ for granted.
    This is actually the case, since an immediate estimate yields
    $$\norm{V_L}[2]\leq \sns \sqrt{\norm{V_L}[1]\norm{V_L}[\infty]\!\sns }\:\leq\sns \sqrt{\,\mathfrak{b}\sps \norm{V_\infty}[\Lp{\infty}[\R^3]]\!\sns }\;+\oBig{L^{-3-\delta_1}}.$$
    This is enough already, but for the sake of completeness, this estimate can be refined as follows
    \begin{align*}
        \norm{V_L}[2]^2&=\norm{V_\infty}[\Lp{2}[\R^3]]^2+\!\! \sum_{\substack{\vec{k}\sps \in\,\Z^3 \sps :\\[1.5pt] \vec{k}\sps \neq\sps  \vec{0}}}\!\left[\integrate[\abs{\vec{y}}\sps \leq\sps \abs{\vec{k}}L/2]{V_\infty(\vec{y})\sps V_\infty(\vec{y}\sns -\sns \vec{k}L);\mspace{-52mu}d\vec{y}}+\integrate[\abs{\vec{y}}\sps >\sps \abs{\vec{k}}L/2]{V_\infty(\vec{y})\sps V_\infty(\vec{y}\sns -\sns \vec{k}L);\mspace{-52mu}d\vec{y}}\right]\\
        &\leq \norm{V_\infty}[\Lp{2}[\R^3]]^2+\!\! \sum_{\substack{\vec{k}\sps \in\,\Z^3 \sps :\\[1.5pt] \vec{k}\sps \neq\sps  \vec{0}}}\!\left[\integrate[\abs{\vec{y}}\sps \leq\sps \abs{\vec{k}}L/2]{V_\infty(\vec{y})\,\frac{C}{(1+\abs{\vec{k}L\sns -\sns \vec{y}})^{\sps 3\sps +\sps \delta_1}\nquad}\quad;\mspace{-52mu}d\vec{y}}+\integrate[\abs{\vec{y}}\sps >\sps \abs{\vec{k}}L/2]{V_\infty(\vec{y}\sns -\sns \vec{k}L)\,\frac{C}{(1+\abs{\vec{y}})^{\sps 3\sps +\sps \delta_1}\nquad}\quad;\mspace{-52mu}d\vec{y}}\right]\\[-2.5pt]
        &\leq\norm{V_\infty}[\Lp{2}[\R^3]]^2\sns +2\sps \mathfrak{b}\!\!\sum_{\substack{\vec{k}\sps \in\,\Z^3 \sps :\\[1.5pt] \vec{k}\sps \neq\sps  \vec{0}}}\frac{C}{(1+\abs{\vec{k}}L/2)^{\sps 3\sps +\sps \delta_1}\nquad}\quad\leq \norm{V_\infty}[\Lp{2}[\R^3]]^2\sns +\oBig{L^{-3-\delta_1}}.
    \end{align*}
\end{note}

%--------------------------------------------
%  ######  ########  ######         ######## 
% ##    ## ##       ##    ##        ##       
% ##       ##       ##              ##       
%  ######  ######   ##              #######  
%       ## ##       ##                    ## 
% ##    ## ##       ##    ## ###    ##    ## 
%  ######  ########  ######  ###     ######  
%--------------------------------------------

\section{The Hartree Equation on the Torus}\label{sec:hartreeTorus}
In this section, we study in detail the properties of the time evolution on the three-dimensional torus driven by the Hartree equation~\eqref{eq:HartreePDE}.
Specifically, we first investigate the well-posedness in a suitable Banach space; then we provide the formulation of~\eqref{eq:HartreePDE} in terms of the momentum distribution, and finally, we propagate Assumption~\ref{ass:initialBEC} on the initial data for positive times.

\subsection{Global Well-Posedness}\label{sec:HartreeWellPosedness}

We introduce the \emph{weighted Wiener algebra} for $r\geq 0$
\begin{equation}\label{def:Wiener}
    \begin{split}
    \mathfrak{A}^r(\Lambda_L)\vcentcolon=\Big\{\Phi\in\Lp{\infty}[\Lambda_L] \,\Big|\: \norm{\Phi}[\mathfrak{A}^r(\Lambda_L)]\sns \vcentcolon=\!\!\sum_{\vec{m}\sps \in\,\Z^3}\!\sns \big(1\sns +\mspace{-0.75mu}\tfrac{2^r\sns \pi^r\!}{L^r}\,\abs{\vec{m}}^r\big)\sps \abs{\FT{\Phi}_{\vec{m}}}<\infty\Big\},\\
    \text{where}\qquad\FT{\Phi}_{\vec{m}}=\frac{1}{L^3}\!\integrate[\Lambda_L]{e^{-i\sps \frac{2\pi}{L} \vec{x}\,\cdot\sps \vec{m}}\,\Phi(\vec{x}); \!d\vec{x}}.
    \end{split}
\end{equation}
Clearly, $\big(\mathfrak{A}^r(\Lambda_L), \norm{\sps \cdot\sps }[\mathfrak{A}^2(\Lambda_L)]\big)$ is a Banach space because it is isomorphic to $\ell_1(\Z^3)$, which is complete.
\begin{note}\label{rmk:derivativeLowerWienerOrder}
    For any three-dimensional multi-index $\alpha$ and $\Phi\sns \in\sns \mathfrak{A}^{|\alpha|}(\Lambda_L)$, one has
    $$\norm{\partial^\alpha\Phi}[\mathfrak{A}^0(\Lambda_L)]\sns \leq \norm{\Phi}[\mathfrak{A}^{|\alpha|}(\Lambda_L)].$$
In particular, this implies that $\mathfrak{A}^r(\Lambda_L)\subset C^{\lfloor r\rfloor}(\Lambda_L)$, since the Wiener algebra $\mathfrak{A}^0(\Lambda_L)$ is a subset of $C^{\sps 0}(\Lambda_L)$, given the uniform convergence in $\Lambda_L$ of the Fourier series.
In general, there holds the following chain of continuous embeddings
$$H^s(\Lambda_L)\hookrightarrow \mathfrak{A}^r(\Lambda_L)\hookrightarrow W^{r,\sps \infty}(\Lambda_L),\qquad \forall s>r+\tfrac{3}{2}.$$
\end{note}
In the following, given $\Psi_{\!\varrho,\sps L}\!\in\sns  \mathfrak{A}^2(\Lambda_L)\sns \subset\sns  C^2(\Lambda_L)$, we aim to prove that there exists a unique function
$t\longmapsto\Psi^{\sps t}_{\!\varrho,\sps L}\!\in C^{\mspace{0.75mu}1}\big(\mspace{0.75mu}[0,\infty),\sps \mathfrak{A}^0(\Lambda_L)\mspace{-0.75mu}\big)\mspace{-0.75mu}\cap C^{\sps 0}\big(\mspace{0.75mu}[0,\infty),\sps  \mathfrak{A}^2(\Lambda_L)\mspace{-0.75mu}\big)$ solving the Hartree equation~\eqref{eq:HartreePDE} on the torus $\Lambda_L$ with initial datum $\Psi^{\sps 0}_{\!\varrho,\sps L}\!=\Psi_{\!\varrho,\sps L}$.
The overarching strategy -- relying on the Banach fixed point theorem -- is standard (we refer, for instance, to~\cite[Chapter 3]{Tao06}). However, in the absence\footnote{The most closely related work is~\cite{CaMo14}, which proves the global well-posedness of the Hartree equation in the whole space for square integrable functions whose Fourier transform is integrable in $\R^3$.} of a precise reference addressing the specifics of our problem, we present a complete presentation of the argument to ensure the exposition is self-contained.
\begin{note}\label{rmk:finiteLifeSpan}
    By Assumptions~\ref{ass:tailCondition} and~\ref{ass:kineticTailCondition}, we have (see equation~\eqref{eq:SumLowerBound} below and the proof of Proposition~\ref{th:LaplaceNonlinearityControl})
    $$\tfrac{1}{\!\sns \sqrt{\varrho\sps }\sps } \norm{\Psi_{\!\varrho,\sps L}}[\mathfrak{A}^2(\Lambda_L)]\geq 1, \qquad \limsup_{\varrho\to\infty}\limsup_{L\to\infty} \, \tfrac{1}{\!\sns \sqrt{\varrho\sps }\sps }\norm{\Psi_{\!\varrho,\sps L}}[\mathfrak{A}^2(\Lambda_L)]<\infty.$$
\end{note}
We first prove that the Banach space $\mathfrak{A}^r(\Lambda_L)$ endowed with the pointwise product is a Banach algebra.
\begin{prop}\label{th:WienerBanachAlgebra}
    Let $r\sns \geq\sns 0$ and consider the Banach space $\mathfrak{A}^r(\Lambda_L)$ defined by~\eqref{def:Wiener}.
    Then, there exists $c_r\!>\sns 0$ such that
    $$\norm{\Phi \Psi}[\mathfrak{A}^r(\Lambda_L)]\leq c_r \norm{\Phi}[\mathfrak{A}^r(\Lambda_L)]\norm{\Psi}[\mathfrak{A}^r(\Lambda_L)],\qquad \forall\Phi,\Psi\in\mathfrak{A}^r(\Lambda_L).$$
    Specifically, $c_2=\frac{4}{3}$.
    \begin{proof}
        Consider the elementary inequality
        \begin{equation}\label{eq:WienerAlgebraConstant}
            1+\big(\tfrac{2\pi}{L}\abs{\vec{m}}\big)^r\!\leq c_r \big(1+\tfrac{2^r\pi^r\!}{L^r}\,\abs{\vec{n}}^r\big)\big(1+\tfrac{2^r\pi^r\!}{L^r}\,\abs{\vec{m}\sns -\sns \vec{n}}^r\big),\qquad \forall \vec{n},\vec{m}\in\Z^3.
        \end{equation}
        In particular, one has the sharp constant
        $$c_r=\max_{a,\sps b\,\in\sps [0,\sps \infty)}\frac{1+(a+b)^r}{(1+a^r)(1+b^r)},\hide{=\max\{1,\frac{4^{r-1}}{2^r-1}\} }$$
        which is equal to $\frac{4}{3}$ in case $r=2$, attained at $a=b=\frac{1}{\!\sns \sqrt{2\sps }\sps }$.
        By explicit computation,
        \begin{align*}
            \norm{\Phi \Psi}[\mathfrak{A}^r(\Lambda_L)]&=\! \sum_{\vec{m}\sps \in\,\Z^3} \!\big(1+\tfrac{2^r\pi^r}{L^r}\abs{\vec{m}}^r\big)\,\bigg\lvert\sns \sum_{\vec{n}\sps \in\,\Z^3} \!\FT{\Phi}_{\vec{n}}\,\FT{\Psi}_{\vec{m}-\vec{n}}\sps \bigg\rvert\\
            &\leq \!\!\!\sns \sum_{\vec{n},\sps \vec{m}\sps \in\,\Z^3}\!\!\!\big(1+\tfrac{2^r\pi^r}{L^r}\abs{\vec{m}}^r\big)\,\abs{\FT{\Phi}_{\vec{n}}}\,\abs{\FT{\Psi}_{\vec{m}-\vec{n}}},
        \end{align*}
        and the result follows by~\eqref{eq:WienerAlgebraConstant}.

    \end{proof}
\end{prop}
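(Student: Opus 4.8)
The plan is to reduce the bound to a pointwise inequality between the weights $w_r(\vec{m})\vcentcolon=1+\tfrac{2^r\pi^r}{L^r}\abs{\vec{m}}^r$ that define the norm on $\mathfrak{A}^r(\Lambda_L)$, so that $\norm{\Phi}[\mathfrak{A}^r(\Lambda_L)]=\sum_{\vec{m}\in\Z^3}w_r(\vec{m})\,\abs{\FT{\Phi}_{\vec{m}}}$. First I would use that any $\Phi\in\mathfrak{A}^r(\Lambda_L)\subset\Lp{\infty}[\Lambda_L]$ has Fourier coefficients in $\ell_1(\Z^3)$, so the product of two such functions again lies in $\Lp{\infty}[\Lambda_L]$ and its Fourier coefficients are given by the absolutely convergent discrete convolution $\FT{\Phi\Psi}_{\vec{m}}=\sum_{\vec{n}\in\Z^3}\FT{\Phi}_{\vec{n}}\,\FT{\Psi}_{\vec{m}-\vec{n}}$. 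Substituting this into the definition of the norm and applying the triangle inequality in $\ell_1$ yields $\norm{\Phi\Psi}[\mathfrak{A}^r(\Lambda_L)]\le\sum_{\vec{n},\vec{m}\in\Z^3}w_r(\vec{m})\,\abs{\FT{\Phi}_{\vec{n}}}\,\abs{\FT{\Psi}_{\vec{m}-\vec{n}}}$.

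The key step — and, to the extent that there is one, the main point to get right — is a ``submultiplicativity up to a constant'' of the weight: I claim $w_r(\vec{m})\le c_r\,w_r(\vec{n})\,w_r(\vec{m}-\vec{n})$ for all $\vec{n},\vec{m}\in\Z^3$, with $c_r\vcentcolon=\sup_{a,b\ge0}\tfrac{1+(a+b)^r}{(1+a^r)(1+b^r)}$. Because $\abs{\vec{m}}\le\abs{\vec{n}}+\abs{\vec{m}-\vec{n}}$ by the triangle inequality in $\R^3$ and $t\mapsto t^r$ is non-decreasing on $[0,\infty)$, this reduces to the scalar inequality $1+(a+b)^r\le c_r(1+a^r)(1+b^r)$, which holds by the very definition of $c_r$ once one checks $c_r<\infty$; this is a short real-variable check (for $0<r\le1$ subadditivity of $t\mapsto t^r$ even forces $c_r=1$, and for $r>1$ the quotient is continuous on $[0,\infty)^2$ and stays bounded). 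Plugging the bound $w_r(\vec{m})\le c_r\,w_r(\vec{n})\,w_r(\vec{m}-\vec{n})$ into the double sum and changing variables $\vec{k}=\vec{m}-\vec{n}$ decouples it into a product, giving $\norm{\Phi\Psi}[\mathfrak{A}^r(\Lambda_L)]\le c_r\big(\sum_{\vec{n}\in\Z^3}w_r(\vec{n})\abs{\FT{\Phi}_{\vec{n}}}\big)\big(\sum_{\vec{k}\in\Z^3}w_r(\vec{k})\abs{\FT{\Psi}_{\vec{k}}}\big)=c_r\,\norm{\Phi}[\mathfrak{A}^r(\Lambda_L)]\,\norm{\Psi}[\mathfrak{A}^r(\Lambda_L)]$. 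Combined with the completeness of $\mathfrak{A}^r(\Lambda_L)$ (already observed via its isomorphism with $\ell_1(\Z^3)$), this is precisely the Banach-algebra property.

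Finally I would identify the sharp constant in the case $r=2$ by solving $\max_{a,b\ge0}\tfrac{1+(a+b)^2}{(1+a^2)(1+b^2)}$. By the symmetry in $a$ and $b$ and a one-line critical-point computation, the maximiser is $a=b=\tfrac{1}{\sqrt{2}}$, where the numerator equals $1+2=3$ and the denominator equals $\big(\tfrac{3}{2}\big)^2=\tfrac{9}{4}$, so $c_2=\tfrac{4}{3}$; inspecting the boundary behaviour ($a$ or $b$ equal to $0$ or tending to $\infty$) confirms this is a global maximum. The whole argument is elementary, all the geometric content being the triangle inequality in $\R^3$; the only place that needs any care is this last optimisation — and, for general $r$, the verification that $c_r$ is finite — which is the sole, and mild, obstacle.
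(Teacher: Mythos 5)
Your proposal is correct and follows essentially the same route as the paper: bounding $\norm{\Phi\Psi}[\mathfrak{A}^r(\Lambda_L)]$ by the double sum over Fourier coefficients and reducing to the weight inequality $1+\big(\tfrac{2\pi}{L}\abs{\vec{m}}\big)^r\leq c_r\big(1+\tfrac{2^r\pi^r}{L^r}\abs{\vec{n}}^r\big)\big(1+\tfrac{2^r\pi^r}{L^r}\abs{\vec{m}-\vec{n}}^r\big)$, with the same optimisation giving $c_2=\tfrac{4}{3}$ at $a=b=\tfrac{1}{\sqrt{2}}$. The only difference is that you spell out a few details the paper leaves implicit (the convolution formula for the product, finiteness of $c_r$ for general $r$, and the boundary check in the maximisation), which is fine.
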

Given $\mathrm{T}\sns >\sns 0$, we define the Banach space
\begin{equation}\label{def:BanachSpaceOfMaps}
    \X\vcentcolon= \big\{\Phi: t\longmapsto \Phi^t\in C^{\sps 0}\big(\mspace{0.75mu}[0,\mathrm{T}],\sps  \mathfrak{A}^2(\Lambda_L)\mspace{-0.75mu}\big)\sns \big\},\qquad\text{with }\,\norm{\Phi}[\X] = \!\sup_{s\,\in\sps  [0,\sps \mathrm{T}]} \norm{\Phi^{\sps s}}[\mathfrak{A}^2(\Lambda_L)].
\end{equation}
Notice that $\Phi\in\X$ implies $\norm{\Phi}[\X]<\infty$, by continuity (whereas the converse is obviously false).\newline
The Duhamel formula associated with the Hartree equation is
\begin{equation}\label{eq:DuhamelHartree}
    \Phi^{\sps t}=e^{i\sps t\sps  \Delta}\sps \Phi^{\sps 0}-\sps \frac{i}{\varrho}\integrate[0;t]{e^{i\sps (t-s)\sps \Delta} \big(V_L\!\ast\sns \abs{\Phi^{\sps s}}^2\big)\sps \Phi^{\sps s}; \,ds}.
\end{equation}
We observe that the map $F_{\sns \varrho,\sps L}(\Phi):\,t\:\longmapsto F_{\sns \varrho,\sps L}(\Phi)^{\sps t}\vcentcolon=\frac{1}{\varrho}\big(V_L\!\ast\sns \abs{\Phi^{\sps t}}^2\big)\Phi^{\sps t}$ is in $\X$, provided $\Phi\in\X$:
\begin{align*}
    \norm{F_{\sns \varrho,\sps L}(\Phi)}[\X]&=\tfrac{1}{\varrho}\!\sup_{s\,\in\sps  [0,\sps \mathrm{T}]} \norm{\big(V_L\!\ast\sns \abs{\Phi^{\sps s}}^2\big)\Phi^{\sps s}}[\mathfrak{A}^2(\Lambda_L)]\leq \tfrac{4}{3\sps \varrho}\!\sup_{s\,\in\sps  [0,\sps \mathrm{T}]} \norm{V_L\!\ast\sns \abs{\Phi^{\sps s}}^2}[\mathfrak{A}^2(\Lambda_L)]\norm{\Phi}[\X]\\
    &\leq \tfrac{4}{3\sps \varrho}\sps \FT{V}_\infty(\vec{0})\!\sup_{s\,\in\sps  [0,\sps \mathrm{T}]} \norm{\sps \abs{\Phi^{\sps s}}^2}[\mathfrak{A}^2(\Lambda_L)]\norm{\Phi}[\X]\leq\tfrac{16}{9\sps \varrho}\sps \mathfrak{b}\,\norm{\Phi}[\X]^3.
\end{align*}
In the last row, we exploited the identity
$$\frac{1}{L^3}\!\integrate[\Lambda_L]{e^{-i\sps \frac{2\pi}{L} \vec{x}\,\cdot\sps \vec{m}}\,\big(V_L\!\ast \mspace{-0.75mu}\Phi\big)(\vec{x});\!d\vec{x}}=\FT{V}_\infty\sns \big(\tfrac{2\pi}{L}\sps \vec{m}\big)\sps \FT{\Phi}_{\vec{m}},\qquad\forall \Phi\in\mathfrak{A}^0(\Lambda_L),$$
combined with the estimate $\norm{\FT{V}_\infty}[\Lp{\infty}(\R^3)]\sns \leq \norm{V_\infty}[\Lp{1}[\R^3]]=\mathfrak{b}$.\newline
Analogously, the map
\begin{equation}\label{def:contractionMap}
    G_{\sns \varrho,\sps L}(\Phi):\,t\:\longmapsto G_{\sns \varrho,\sps L}(\Phi)^{\sps t}\vcentcolon=e^{i\sps t\sps  \Delta}\sps \Phi^{\sps 0}-i\!\integrate[0;t]{e^{i\sps (t-s)\sps \Delta} F_{\sns \varrho,\sps L}(\Phi)^{\sps s}; \,ds}
\end{equation}
is in $\X$ as well, since
\begin{align*}
    \norm{G_{\sns \varrho,\sps L}(\Phi)}[\X]&\leq \norm{\Phi^{\sps 0}}[\mathfrak{A}^2(\Lambda_L)]+\sup_{t\sps \in\sps [0,\sps \mathrm{T}]}\norm*{\integrate[0;t]{e^{i\sps (t-s)\sps \Delta} F_{\sns \varrho,\sps L}(\Phi)^{\sps s};\,ds}\,}[\mathfrak{A}^2(\Lambda_L)]\\
    &\leq \norm{\Phi^{\sps 0}}[\mathfrak{A}^2(\Lambda_L)]+\sup_{t\sps \in\sps [0,\sps \mathrm{T}]}\integrate[0;t]{\norm{F_{\sns \varrho,\sps L}(\Phi)^{\sps s}}[\mathfrak{A}^2(\Lambda_L)];\,ds}.
\end{align*}
Indeed, $e^{i \sps t\sps \Delta}$ is an isometry on $\mathfrak{A}^r(\Lambda_L)$ for all $r\sns \geq\sns  0$ and $t\in\R$.
Thus,
\begin{equation}\label{eq:contractionMapControl}
    \norm{G_{\sns \varrho,\sps L}(\Phi)}[\X]\leq \norm{\Phi^{\sps 0}}[\mathfrak{A}^2(\Lambda_L)]+\tfrac{16\sps \mathfrak{b}}{9\sps \varrho}\,\mathrm{T}\,\norm{\Phi}[\X]^3.
\end{equation}
Because of~\eqref{eq:DuhamelHartree}, we aim to prove that $G_{\sns \varrho,\sps L}(\Phi)=\Phi$ admits a unique solution in the set $\{\Phi\sns \in\sns \X \,|\: \Phi^{\sps 0}\sns =\Psi_{\!\varrho,\sps L}\}$.
To this end, we start by making use of the Banach fixed point theorem on a suitable complete metric space (with respect to the distance induced by the norm $\norm{\sps \cdot\sps }[\X]$) contained in $\X$.
This requires showing that $\maps{G_{\sns \varrho,\sps L}}{\X;\X}$ is a strict contraction on that space.
This is the content of the following Lemma.
\begin{lemma}\label{th:contractiveMap}
    Let $\X$ be defined by~\eqref{def:BanachSpaceOfMaps} and $$B_{\sps \Phi_0}(R)\sns \vcentcolon=\sns \{\Phi\sns \in\sns \X\,|\:\Phi^{\sps 0}\!=\sns \Phi_0,\, \norm{\Phi}[\X]\!\leq\sns R\},\qquad R\sns >\sns 0,\,\Phi_0\!\in\sns \mathfrak{A}^2(\Lambda_L).$$
    Then, the map $\maps{G_{\sns \varrho,\sps L}}{B_{\sps \Phi_0}\big(\tau\sps \norm{\Phi_0}[\mathfrak{A}^2(\Lambda_L)]\big);B_{\sps \Phi_0}\big(\tau\sps \norm{\Phi_0}[\mathfrak{A}^2(\Lambda_L)]\big)}$ defined by~\eqref{def:contractionMap} is a strict contraction for all $\tau\sns >\sns 1$ if
    \begin{equation}
        \mathrm{T}<\frac{3\min\{3\sps (\tau\sns -\sns 1),\tau\}}{16\sps \mathfrak{b}\,\tau^3}\,\frac{\varrho}{\:\norm{\Phi_0}[\mathfrak{A}^2(\Lambda_L)]^2\nqquad}.
    \end{equation}
    \begin{proof}
        First, we check that $G_{\sns \varrho,\sps L}(\Phi)$ is in $B_{\sps \Phi_0}\big(\tau\sps \norm{\Phi_0}[\mathfrak{A}^2(\Lambda_L)]\big)$, provided $\Phi\sns \in\! B_{\sps \Phi_0}\big(\tau\sps \norm{\Phi_0}[\mathfrak{A}^2(\Lambda_L)]\big)$.
        Because of~\eqref{eq:contractionMapControl}, 
        $$\mathrm{T}\leq \frac{9\sps (\tau\sns -\sns 1)}{16\sps \mathfrak{b}\,\tau^3}\,\frac{\varrho}{\:\norm{\Phi_0}[\mathfrak{A}^2(\Lambda_L)]^2\nqquad}\quad\implies\quad \norm{G_{\sns \varrho,\sps L}(\Phi)}[\X]\leq \tau\sps \norm{\Phi_0}[\mathfrak{A}^2(\Lambda_L)].$$
        Next, given $\Phi_1,\Phi_2\!\in\sns B_{\sps \Phi_0}\big(\tau\sps \norm{\Phi_0}[\mathfrak{A}^2(\Lambda_L)]\big)$, we have
        \begin{equation*}
            F_{\sns \varrho,\sps L}(\Phi_1)-F_{\sns \varrho,\sps L}(\Phi_2)= \tfrac{1}{\varrho}\big(V_L\!\ast\sns (\abs{\Phi_1}^2\sns -\abs{\Phi_2}^2)\big)\sps \Phi_1+\tfrac{1}{\varrho}\big(V_L\!\ast\sns \abs{\Phi_2}^2\big)\sps (\Phi_1\sns -\Phi_2).
        \end{equation*}
        Therefore,
        \begin{align*}
            \norm{F_{\sns \varrho,\sps L}(\Phi_1)-F_{\sns \varrho,\sps L}(\Phi_2)}[\X]&\leq\tfrac{1}{\varrho}\norm*{\big(V_L\!\ast\sns (\abs{\Phi_1}^2\sns -\abs{\Phi_2}^2)\big)\sps \Phi_1}[\X]+\tfrac{1}{\varrho}\norm*{\big(V_L\!\ast\sns \abs{\Phi_2}^2\big)\sps (\Phi_1\sns -\Phi_2)}[\X]\\
            &\leq \tfrac{4\sps \mathfrak{b}}{3\sps \varrho}\left(\norm{\sps \abs{\Phi_1}^2\sns -\abs{\Phi_2}^2}[\X]\,\norm{\Phi_1}[\X]+\norm{\sps \abs{\Phi_2}^2}[\X]\,\norm{\Phi_1\sns -\Phi_2}[\X]\right)\\
            &\leq\tfrac{16\sps \mathfrak{b}}{9\sps \varrho}\,\norm{\Phi_1\sns -\Phi_2}[\X]\left(\norm{\Phi_1}[\X]^2+\norm{\Phi_1}[\X]\,\norm{\Phi_2}[\X]+\norm{\Phi_2}[\X]^2\right)\!,
        \end{align*}
        since $\abs{w}^2\sns -\abs{z}^2=(w-z)\conjugate{w}+\conjugate*{(w-z)}z\,$ for all $w,z\in\C$.
        Hence, we have obtained
        $$\norm{F_{\sns \varrho,\sps L}(\Phi_1)-F_{\sns \varrho,\sps L}(\Phi_2)}[\X]\leq \tfrac{16\sps \mathfrak{b}}{3\sps \varrho}\,\tau^2\,\norm{\Phi_0}[\mathfrak{A}^2(\Lambda_L)]^2\,\norm{\Phi_1\sns -\Phi_2}[\X].$$
        Consequently,
        \begin{align*}
            \norm{G_{\sns \varrho,\sps L}(\Phi_1)-G_{\sns \varrho,\sps L}(\Phi_2)}[\X]&\leq \sup_{t\sps \in\sps [0,\sps \mathrm{T}]}\integrate[0;t]{\norm{F_{\sns \varrho,\sps L}(\Phi_1)^{\sps s}\!-F_{\sns \varrho,\sps L}(\Phi_2)^{\sps s}}[\mathfrak{A}^2(\Lambda_L)];\,ds}\\
            &\leq\tfrac{16\sps \mathfrak{b}}{3\sps \varrho}\,\tau^2\,\mathrm{T}\,\norm{\Phi_0}[\mathfrak{A}^2(\Lambda_L)]^2\,\norm{\Phi_1\sns -\Phi_2}[\X].
        \end{align*}
        In order for $G_{\sns \varrho,\sps L}$ to be a strict contraction, we must require
        $$\mathrm{T}<\frac{3}{16\sps \mathfrak{b}\,\tau^2}\,\frac{\varrho}{\:\norm{\Phi_0}[\mathfrak{A}^2(\Lambda_L)]^2\nqquad},$$
        and the result follows.
        
    \end{proof}
\end{lemma}
By means of Lemma~\ref{th:contractiveMap}, we can apply the Banach fixed point theorem, since the set $B_{\Phi_0}(R)$ is a complete metric space with respect to the distance induced by the norm $\norm{\sps \cdot\sps }[\X]$.
Therefore, we have obtained that the initial value problem for the Hartree equation on the torus has a unique solution $\Psi^{\sps t}_{\!\varrho,\sps L}$, for instance, in the space\footnote{This specific choice of the value of $\tau$ in Lemma~\ref{th:contractiveMap} maximises the lifespan of the solution.} $\big\{\Phi\sns \in\sns \X\,|\:\Phi^{\sps 0}\sns =\Psi_{\!\varrho,\sps L},\, \norm{\Phi}[\X]\!\leq\sns \tfrac{3}{2}\norm{\Psi_{\!\varrho,\sps L}}[\mathfrak{A}^2(\Lambda_L)]\big\}$, with
$$0\leq t\leq\mathrm{T}<\frac{\varrho}{12\sps \mathfrak{b}\,\norm{\Psi_{\!\varrho,\sps L}}[\mathfrak{A}^2(\Lambda_L)]^2\nqquad}.$$
By Remark~\ref{rmk:finiteLifeSpan}, the lifespans of each solution in the family $\{\Psi^{\sps t}_{\!\varrho,\sps L}\}_{\varrho, \,L \sps >\sps  0}$ eventually share in common all the intervals $[0,T]$ in the iterated limit, with $T\sns < T_\ast$ given by
\begin{equation}\label{eq:liminfLifeSpan}
    T_\ast=\frac{1}{12\sps \mathfrak{b}}\sps \frac{1}{\big(\limsup\limits_{\varrho\to\infty}\limsup\limits_{L\to\infty} \sps \frac{1}{\!\sns \sqrt{\varrho\sps }\sps }\norm{\Psi^{\sps 0}_{\!\varrho,\sps L}}[\mathfrak{A}^2(\Lambda_L)]\big)^{\sns 2\vphantom{|}}}.
\end{equation}
Next, we must ensure that there still exists a unique solution in the space $\{\Phi\in \X \,|\:\Phi^{\sps 0}=\Psi_{\!\varrho,\sps L}\}$ and that there are no other solutions to the Hartree equation with a larger $\X$-norm.
\begin{cor}\label{th:LWP}
    There exists a unique solution to the Hartree equation~\eqref{eq:HartreePDE}
    $$t\longmapsto\Psi^{\sps t}_{\!\varrho,\sps L}\!\in C^{\mspace{0.75mu}1}\big(\mspace{0.75mu}[0,\mathrm{T}],\sps \mathfrak{A}^0(\Lambda_L)\mspace{-0.75mu}\big)\mspace{-0.75mu}\cap C^{\sps 0}\big(\mspace{0.75mu}[0,\mathrm{T}],\sps  \mathfrak{A}^2(\Lambda_L)\mspace{-0.75mu}\big)$$
    for some $\mathrm{T}\sns >\sns 0$, with initial datum $\Psi^{\sps 0}_{\!\varrho,\sps L}\sns =\Psi_{\!\varrho,\sps L}\!\in\sns \mathfrak{A}^2(\Lambda_L)$.
    \begin{proof}
        The existence of $\Psi^{\sps t}_{\!\varrho,\sps L}$ in $C^{\sps 0}\big(\mspace{0.75mu}[0,\mathrm{T}],\sps  \mathfrak{A}^2(\Lambda_L)\mspace{-0.75mu}\big)$ has already been provided by the Banach fixed point theorem for $\mathrm{T}$ small enough.
        By contradiction, suppose there are two solutions $\Psi_{\!\varrho,\sps L},\Phi_{\!\varrho,\sps L}\!\in \sns C^{\sps 0}\big(\mspace{0.75mu}[0,\mathrm{T}],\sps  \mathfrak{A}^2(\Lambda_L)\mspace{-0.75mu}\big)$ with the same initial datum $\Psi^{\sps 0}_{\!\varrho,\sps L}=\Phi^{\sps 0}_{\!\varrho,\sps L}=\Psi_{\!\varrho,\sps L}$.
        By means of the Duhamel formula~\eqref{eq:DuhamelHartree},
        \begin{align*}
            \norm{\Phi^{\sps t}_{\!\varrho,\sps L}\sns -\Psi^{\sps t}_{\!\varrho,\sps L}}[\mathfrak{A}^2(\Lambda_L)]&\!\leq \integrate[0;t]{\norm{F(\Phi_{\!\varrho,\sps L})^{\sps s}\!-F(\Psi_{\!\varrho,\sps L})^{\sps s}}[\mathfrak{A}^2(\Lambda_L)];\,ds}\\
            &\leq \tfrac{16\sps \mathfrak{b}}{9\sps \varrho} \!\sup_{s\sps \in\sps [0,\sps \mathrm{T}]}\!\big(\norm{\Phi^{\sps s}_{\!\varrho,\sps L}}[\mathfrak{A}^2(\Lambda_L)]^2\!+\norm{\Phi^{\sps s}_{\!\varrho,\sps L}}[\mathfrak{A}^2(\Lambda_L)]\norm{\Psi^{\sps s}_{\!\varrho,\sps L}}[\mathfrak{A}^2(\Lambda_L)]\!+\norm{\Psi^{\sps s}_{\!\varrho,\sps L}}[\mathfrak{A}^2(\Lambda_L)]^2\big)\times\\[-10pt]
            &\mspace{345mu}\times\!\sns \integrate[0;t]{\norm{\Phi^{\sps s}_{\!\varrho,\sps L}\sns -\Psi^{\sps s}_{\!\varrho,\sps L}}[\mathfrak{A}^2(\Lambda_L)];\,ds},
        \end{align*}
        and Gr\"onwall's lemma yields $\Phi^{\sps t}_{\!\varrho,\sps L}=\Psi^{\sps t}_{\!\varrho,\sps L}$ for all $t\in[0,\mathrm{T}]$.\newline
        Ultimately, due to Remark~\ref{rmk:derivativeLowerWienerOrder} and the control of the nonlinearity $\norm{F_{\sns \varrho,\sps L}(\Phi)^{\sps t}}[\mathfrak{A}^0(\Lambda_L)]\!\leq\sns  \tfrac{16\sps \mathfrak{b}}{9\sps \varrho}\sps \norm{\Phi^{\sps t}}[\mathfrak{A}^2(\Lambda_L)]^3\!<\infty$, we have $\partial_t\Psi^{\sps t}_{\!\varrho,\sps L}\!\in\sns \mathfrak{A}^0(\Lambda_L)$ for all $t\in[0,\mathrm{T}]$, and the proof is complete.

    \end{proof}
\end{cor}
We emphasise that the space of solutions we are investigating is continuous in time in a certain compact interval $[0,\mathrm{T}]$ and there is no loss of the $\mathfrak{A}^2$-regularity along the evolution.
Because of these properties, a \emph{blow-up alternative} principle holds: if such solutions exist only up to some finite maximal time $T_{\mathrm{max}}$, the associated $\mathfrak{A}^2$-norm must diverge at that point.
Importantly, local well-posedness implies that
\begin{equation}\label{def:maxT}
    T_{\mathrm{max}}\vcentcolon=\sup\big\{T\sns >\sns 0 \, \big|\: \exists* \,t\longmapsto\Psi^{\sps t}_{\!\varrho,\sps L}\!\in\sns  C^{\sps 0}\big([0,T], \mathfrak{A}^2(\Lambda_L)\sns \big) \text{ solving the Hartree equation \eqref{eq:HartreePDE}}\big\}
\end{equation}
is positive.
This remains true also when $L\sns \to\sns \infty$ and $\varrho$ is large, since we have $T_{\mathrm{max}}\!\geq T_\ast\!>\sns 0$.\newline
If we were unable to prove global well-posedness, or at least to quantify $T_{\mathrm{max}}$ in terms of $\varrho$ and $L$, the positivity of $T_{\ast}$ would stand as the sole ingredient we have to ensure that the lifespan of the solution does not shrink in the iterated limit.\newline
Let us prove the blow-up alternative principle.
\begin{prop}\label{th:blowupAlternative}
    Let $T_{\mathrm{max}}$ be defined by~\eqref{def:maxT} and consider $t\sns \in\sns [0, T_{\mathrm{max}})$ so that $\Psi^{\sps t}_{\!\varrho,\sps L}$ is the unique solution to the Hartree equation~\eqref{eq:HartreePDE} with initial datum $\Psi_{\!\varrho,\sps L}\!\in\sns \mathfrak{A}^2(\Lambda_L)$.
    Then, either
    \begin{itemize}
        \item $T_{\mathrm{max}}=\infty$;
        \item $T_{\mathrm{max}}<\infty$ and $\lim\limits_{t\to T_{\mathrm{max}}^-}\norm{\Psi^{\sps t}_{\!\varrho,\sps L}}[\mathfrak{A}^2(\Lambda_L)]\sns =\infty$, \:for all $\varrho, L \sns >\sns 0$.
    \end{itemize}
    \begin{proof}
        Assume $T_{\mathrm{max}}\sns <\infty$ and, by contradiction,
        $$\limsup_{t\to T_{\mathrm{max}}^-} \,\norm{\Psi^{\sps t}_{\!\varrho,\sps L}}[\mathfrak{A}^2(\Lambda_L)]\sns = M_{\varrho,\sps L} \sns \in (0,\infty).$$
        In particular, this means that $\forall\epsilon\sns >\sns 0$ there exists $t_0\in[0,T_{\mathrm{max}})$ such that $\!\!\sup\limits_{t\sps \in\sps [t_0,\sps T_{\mathrm{max}})}\norm{\Psi^{\sps t}_{\!\varrho,\sps L}}[\mathfrak{A}^2(\Lambda_L)]\!<\sns  M_{\varrho,\sps L}\sns +\epsilon$.
        Fix $\epsilon\sns =\sns 1$ for the sake of simplicity.
        Then, select $t_1\!\in\sns [t_0,T_{\mathrm{max}})$ close enough to the upper bound of the interval, namely such that $t_1>\max\big\{t_0, \,T_{\mathrm{max}}-\frac{\varrho}{12\sps \mathfrak{b}\,(M_{\varrho,\sps L}+1)^2\!\vphantom{|^|}}\big\}$.
        We consider the initial value problem starting at $t_1\!\in\sns (t_0,T_{\mathrm{max}})$ with initial datum $\Psi^{\sps t_1}_{\!\varrho,\sps L}\!\in\sns \mathfrak{A}^2(\Lambda_L)$.
        Clearly, the new initial datum is controlled regardless of the choice of $t_1$, since $\norm{\Psi^{\sps t_1}_{\!\varrho,\sps L}}[\mathfrak{A}^2(\Lambda_L)]\sns < M_{\varrho,\sps L}\sns +1$.
        By Lemma~\ref{th:contractiveMap}, the map $G_{\sns \varrho,\sps L}$ identified by~\eqref{def:contractionMap} is a strict contraction on the complete metric space $$\big\{t\longmapsto\Phi^{\sps t}_{\!\varrho,\sps L}\!\in C^{\sps 0}\big([t_1, t_1\sns +T],\mathfrak{A}^2(\Lambda_L)\big) \,\big|\: \Phi^{\sps t_1}_{\!\varrho,\sps L}\sns =\Psi^{\sps t_1}_{\!\varrho,\sps L},\,\sup_{t\sps \in\sps [t_1,\,t_1+\sps T]} \norm{\Phi^{\sps t}_{\!\varrho,\sps L}}[\mathfrak{A}^2(\Lambda_L)]\leq\tfrac{3}{2}\norm{\Psi^{\sps t_1}_{\!\varrho,\sps L}}[\mathfrak{A}^2(\Lambda_L)]\big\},$$
        in case $T<\frac{\varrho}{12\sps \mathfrak{b}\,\norm{\Psi^{\sps t_1}_{\!\varrho,\sps L}}[\mathfrak{A}^2(\Lambda_L)]^2\nqquad}$.\newline
        Hence, choosing $T=\frac{\varrho}{12\sps \mathfrak{b}\,(M_{\!\varrho,\sps L}+1)^2\!\vphantom{|^|}}<\frac{\varrho}{12\sps \mathfrak{b}\,\norm{\Psi^{\sps t_1}_{\!\varrho,\sps L}}[\mathfrak{A}^2(\Lambda_L)]^2\nqquad},\quad$ we have a unique solution $\Phi^{\sps t}_{\!\varrho,\sps L}$ for $t\in[t_1, \sps  t_1\sns +T]$, where $t_1\sns +T\sns >\sns T_{\mathrm{max}}$.
        By uniqueness, $\Psi^{\sps t}_{\!\varrho,\sps L}\!=\Phi^{\sps t}_{\!\varrho,\sps L}$ for all $t\in[t_1, T_{\mathrm{max}})$, and we have shown a contradiction with the maximality of the lifespan.
        Consequently,
        $$\lim_{t\to T_{\mathrm{max}}^-} \norm{\Psi^{\sps t}_{\!\varrho,\sps L}}_{\mathfrak{A}^2(\Lambda_L)}=\infty,\qquad \forall \varrho, L \sns >\sns 0.$$
        
    \end{proof}
\end{prop}
In the following, we prove that the $\mathfrak{A}^2$-norm of the solution we found in Corollary~\ref{th:LWP} actually stays finite for each finite time; hence, a blow-up can occur only at $T_{\mathrm{max}}\!=\sns \infty$, so that our local-in-time solution can be promoted to a global one, by gluing several solutions at different time steps by continuity.
\begin{proof}[Proof of Lemma~\ref{th:GWP}]
    Let $t\sns \in\sns [0,T_{\mathrm{max}})$, with $T_{\mathrm{max}}$ defined by~\eqref{def:maxT}.
    Then, by the Duhamel formula~\eqref{eq:DuhamelHartree}, we have
    \begin{align*}
        &\norm{\Psi^{\sps t}_{\!\varrho,\sps L}}[\mathfrak{A}^2(\Lambda_L)]\!\leq \norm{\Psi_{\!\varrho,\sps L}}[\mathfrak{A}^2(\Lambda_L)]+\frac{1}{\varrho}\integrate[0;t]{\norm{\big(V_L\!\ast\sns \abs{\Psi^{\sps s}_{\!\varrho,\sps L}}^2\big)\sps \Psi^{\sps s}_{\!\varrho,\sps L}}[\mathfrak{A}^2(\Lambda_L)];\,ds}\\
        &\mspace{36mu}\leq \norm{\Psi_{\!\varrho,\sps L}}[\mathfrak{A}^2(\Lambda_L)]+\frac{2}{\varrho}\integrate[0;t]{\Big[\norm{V_L\!\ast\sns \abs{\Psi^{\sps s}_{\!\varrho,\sps L}}^2}[\mathfrak{A}^0(\Lambda_L)]\,\norm{\Psi^{\sps s}_{\!\varrho,\sps L}}[\mathfrak{A}^2(\Lambda_L)]\!+\norm{V_L\!\ast\sns \abs{\Psi^{\sps s}_{\!\varrho,\sps L}}^2}[\mathfrak{A}^2(\Lambda_L)]\,\norm{\Psi^{\sps s}_{\!\varrho,\sps L}}[\mathfrak{A}^0(\Lambda_L)]\Big];\,ds}.
    \end{align*}
    This follows from the elementary inequality
    $1+\abs{\vec{a}}^2\!\leq 2\sps (1\sns +\abs{\vec{b}}^2)+2\sps (1\sns +\abs{\vec{a}\sns -\sns \vec{b}}^2)$ for $\vec{a},\vec{b}\sns \in\sns \R^3$ used in the same fashion as in Proposition~\ref{th:WienerBanachAlgebra}.
    Next, let us estimate the quantity $\norm{V_L\!\ast\sns \abs{\Phi^{\sps s}}^2}[\mathfrak{A}^r(\Lambda_L)]$ for $r\in\{0,2\}$ and $\Phi^{\sps s}\!\in\sns \mathfrak{A}^2(\Lambda_L)$.
    Computing the Fourier coefficients, one has (\cfr~equations~(\ref{eq:nonlinearityFourierCoefficients},~\ref{def:autocorrelation}) below)
    \begin{align*}
        \norm{V_L\!\ast\sns \abs{\Phi^{\sps s}}^2}[\mathfrak{A}^r(\Lambda_L)]&=\!\sns \sum_{\vec{m}\sps \in\,\Z^3} \!\big(1\sns +\mspace{-0.75mu}\tfrac{2^r\pi^r\!}{L^r}\,\abs{\vec{m}}^r\big)\sps \abs*{\FT{V}_\infty\sns \big(\tfrac{2\pi}{L}\sps \vec{m}\big)}\bigg\lvert\mspace{-2.25mu} \sum_{\vec{n}\sps \in\,\Z^3}\! \conjugate*{\FT{\Phi}^{\sps s}_{\vec{n}}}\,\FT{\Phi}^{\sps s}_{\vec{n}+\vec{m}}\bigg\rvert\\
        &\leq\sqrt{\sum_{\vec{m}\sps \in\,\Z^3}\!\big(1\sns +\mspace{-0.75mu}\tfrac{2^r\pi^r\!}{L^r}\,\abs{\vec{m}}^r\big)^2\abs*{\FT{V}_\infty\sns \big(\tfrac{2\pi}{L}\sps \vec{m}\big)\sns }\sps }\sps \sqrt{\sum_{\vec{m}\sps \in\,\Z^3}\!\abs*{\FT{V}_\infty\sns \big(\tfrac{2\pi}{L}\sps \vec{m}\big)\sns }\bigg\lvert\mspace{-2.25mu}\sum_{\vec{n}\sps \in\,\Z^3}\! \conjugate*{\FT{\Phi}^{\sps s}_{\vec{n}}}\,\FT{\Phi}^{\sps s}_{\vec{n}+\vec{m}}\bigg\rvert^2}.
    \end{align*}
    Given the assumption $\FT{V}_\infty\geq 0$, the second factor in the last inequality can be bounded from above in terms of the energy $\mathscr{E}_{\varrho,\sps L}[\Psi^{\sps s}_{\!\varrho,\sps L}]$ (\cfr~Proposition~\ref{th:energyInFourierCoefficients}), which is a conserved quantity:
    \begin{equation*}
        \norm{V_L\!\ast\sns \abs{\Psi^{\sps s}_{\!\varrho,\sps L}}^2}[\mathfrak{A}^r(\Lambda_L)]\leq \varrho\,\sqrt{\sps 2\,e_{\varrho,\sps L}\!\!\sum_{\vec{m}\sps \in\,\Z^3}\!\big(1\sns +\mspace{-0.75mu}\tfrac{2^r\pi^r\!}{L^r}\,\abs{\vec{m}}^r\big)^2\,\FT{V}_\infty\sns \big(\tfrac{2\pi}{L}\sps \vec{m}\big)\sps }\sps ,
    \end{equation*}
    where $e_{\varrho,\sps L}\!>\sns 0$ stands for the energy per particle $\mathscr{E}_{\varrho,\sps L}[\Psi^{\sps s}_{\!\varrho,\sps L}]/\varrho L^3$, which is close to $\frac{1}{2}\,\FT{V}_\infty(\vec{0})$ in the high-density thermodynamic limit (as shown in Proposition~\ref{th:quantifyTotalEnergy}).
    Finally, taking into account the decay of the potential~\eqref{eq:potentialDecay}, provided $\delta_2\!>\sns 2r$, one has for all $t\sns \in\sns [0,T_{\mathrm{max}})$
    $$\norm{V_L\!\ast\sns \abs{\Psi^{\sps t}_{\!\varrho,\sps L}}^2}[\mathfrak{A}^r(\Lambda_L)]\leq \varrho\,\sqrt{\sps 2\,e_{\varrho,\sps L}\,C\!\!\sum_{\vec{m}\sps \in\,\Z^3}\!\frac{\big(1\sns +\mspace{-0.75mu}\tfrac{2^r\pi^r\!}{L^r}\,\abs{\vec{m}}^r\big)^2\!\sns }{\big(1+\frac{2\pi}{L}\,\abs{\vec{m}}\big)^{\sps 3\sps +\sps \delta_2}\nquad}\mspace{9mu}}\mspace{6mu}=\vcentcolon \sqrt{e_{\varrho,\sps L}\sps }\sps  \varrho\, c_{L,\sps r}(V_\infty)<\infty.$$
    We point out that $c_{L,\sps r}(V_\infty)=\oBig{L^{\frac{3}{2}}}$, when $L\to\infty$.\newline
    Making use of this bound, we have obtained for all $t\sns \in\sns [0,T_{\mathrm{max}})$
    \begin{align*}
        \norm{\Psi^{\sps t}_{\!\varrho,\sps L}}[\mathfrak{A}^2(\Lambda_L)]&\leq \norm{\Psi_{\!\varrho,\sps L}}[\mathfrak{A}^2(\Lambda_L)]+2\sqrt{e_{\varrho,\sps L}\sps }\!\integrate[0;t]{\Big[c_{L,\sps 0}(V_\infty)\,\norm{\Psi^{\sps s}_{\!\varrho,\sps L}}[\mathfrak{A}^2(\Lambda_L)]\!+c_{L,\sps 2}(V_\infty)\,\norm{\Psi^{\sps s}_{\!\varrho,\sps L}}[\mathfrak{A}^0(\Lambda_L)]\Big];\,ds},\\
        &\leq \norm{\Psi_{\!\varrho,\sps L}}[\mathfrak{A}^2(\Lambda_L)]+2\sqrt{e_{\varrho,\sps L}\sps }\sps \big[c_{L,\sps 0}(V_\infty)\sns +c_{L,\sps 2}(V_\infty)\big]\!\integrate[0;t]{\norm{\Psi^{\sps s}_{\!\varrho,\sps L}}[\mathfrak{A}^2(\Lambda_L)];\,ds}.
    \end{align*}
    since $\norm{\sps \cdot\sps }[\mathfrak{A}^2(\Lambda_L)]$ is a stronger norm than $\norm{\sps \cdot\sps }[\mathfrak{A}^0(\Lambda_L)]$.
    By Gr\"onwall's inequality,
    $$\norm{\Psi^{\sps t}_{\!\varrho,\sps L}}[\mathfrak{A}^2(\Lambda_L)]\leq \norm{\Psi_{\!\varrho,\sps L}}[\mathfrak{A}^2(\Lambda_L)]\, e^{\sps 2\big[c_{L,\sps 0}(V_\infty)+\sps c_{L,\sps 2}(V_\infty)\big]\sqrt{e_{\varrho,\sps L}\sps }\sps t},$$
    which proves the result by means of Proposition~\ref{th:blowupAlternative}, since
    $$T_{\mathrm{max}}<\infty\quad\implies\quad\sup_{t\sps \in\sps [0,\sps T_{\mathrm{max}})} \norm{\Psi^{\sps t}_{\!\varrho,\sps L}}[\mathfrak{A}^2(\Lambda_L)]<\infty,$$
    which is a contradiction.
    Hence, $T_{\mathrm{max}}=\infty$.
    
\end{proof}

\bigskip

The above discussion on the well-posedness suggests that the space of Fourier coefficients is a natural playground for the analysis of the Hartree equation.
Consequently, we proceed to reformulate equation~\eqref{eq:HartreePDE} in this setting.

\subsection{The Hartree Equation in Momentum Space}\label{sec:HartreeInMomentumRepresentation}

    Motivated by Proposition~\ref{th:BECmomentum}, we study the Hartree equation~\eqref{eq:HartreePDE} in the momentum representation.
    More precisely, since the system is defined on a torus, we use the Fourier basis $\{\mathrm{f}_{L;\,\vec{n}}\}_{\vec{n}\sps \in\,\Z^3}\sns \subset\sns  \Lp{2}[\Lambda_L]$ as a complete orthonormal set to decompose the time-dependent order parameter $\Psi^{\sps t}_{\!\varrho,\sps L}$, where 
    \begin{gather}\label{def:FourierBasis}
        \mathrm{f}_{L;\,\vec{n}}:\: \vec{x}\,\longmapsto\;\tfrac{1}{\!\sns \sqrt{L^3\sps }\sps }\,e^{\frac{2\pi\sps  i}{L} \mspace{2.25mu}\vec{n}\sps \cdot\,\vec{x}},\qquad \vec{x}\in\Lambda_L,\\
        \Psi^{\sps t}_{\!\varrho,\sps L}=\sqrt{\varrho L^3\sps }\!\sum_{\vec{n}\sps \in\,\Z^3}\sns  \mathrm{f}_{L;\,\vec{n}}\,\alpha^{\sps t}_{\varrho,\sps L}(\vec{n}).\label{eq:decompositionBEC}
    \end{gather}
    Here, the Fourier coefficients $\alpha^{\sps t}_{\varrho,\sps L}\sns :\, \vec{n}\:\longmapsto\frac{1}{\!\sns \sqrt{\varrho L^3}\sps }\scalar{\mathrm{f}_{L;\,\vec{n}}}{\Psi^{\sps t}_{\!\varrho,\sps L}}[2]$ have been chosen in such a way that normalisation~\eqref{eq:becNormalization} implies
    \begin{equation}\label{eq:FourierCoefficientsNormalization}
        \sum_{\vec{n}\sps \in\,\Z^3} \abs{\alpha^{\sps t}_{\varrho,\sps L}(\vec{n})}^2=1,\qquad \forall t\geq 0,\, \varrho,L \sns >\sns 0.
    \end{equation}
    \begin{note}\label{rmk:unitTorusFourierCoefficients}
        By definition, we have
        $$\alpha^{\sps t}_{\varrho,\sps L}(\vec{n})=\scalar{\mathrm{f}_{1;\,\vec{n}}}{\tfrac{1}{\!\sns \sqrt{\varrho\sps }\sps }\Psi^{\sps t}_{\!\varrho,\sps L}(L\,\vec{\cdot}\sps )}[\Lp{2}[\Lambda_1]].$$
        In other words, the Fourier coefficients we are adopting correspond to those associated with the macroscopic order parameter defined on the unit torus, whose norm is now $\tfrac{1}{\!\sns \sqrt{\varrho\sps }\sps }\norm{\Psi^{\sps t}_{\!\varrho,\sps L}(L\,\vec{\cdot}\sps )}[\Lp{2}[\Lambda_1]]=1$.
    \end{note}
    Concerning the nonlinearity of the Hartree equation, one has
    \begin{align*}
        \big(V_L\!\ast\sns \abs{\Psi^{\sps t}_{\!\varrho,\sps L}}^2\big)\sns (\vec{x})&=\sns \integrate[\Lambda_L]{V_L(\vec{x}\sns -\sns \vec{y})\abs{\Psi^{\sps t}_{\!\varrho,\sps L}(\vec{y})}^2;\!d\vec{y}}\\
        &=\sns \integrate[\Lambda_L]{\frac{1}{L^3}\sns \sum_{\vec{k}\sps \in\,\Z^3}e^{i\sps \frac{2\pi}{L} \vec{k}\,\cdot\sps (\vec{x}-\vec{y})}\,\FT{V}_\infty\sns \big(\tfrac{2\pi}{L}\sps \vec{k}\big)\: \varrho \mspace{-12mu}\sum_{\vec{m},\sps \vec{n}\sps \in\,\Z^3}\mspace{-9mu}e^{-i\sps \frac{2\pi}{L} \vec{y}\,\cdot\sps (\vec{m}-\vec{n})}\,\conjugate*{\alpha^{\sps t}_{\varrho,\sps L}(\vec{m})}\sps \alpha^{\sps t}_{\varrho,\sps L}(\vec{n});\!d\vec{y}}.
    \end{align*}
    For each fixed $\varrho,L\sns >\sns 0$, the well-posedness ensures that $\alpha^t_{\varrho,\sps L}\!\in\sns \ell_1(\Z^3)$, while the sum of the potential is finite, owing to the decay~\eqref{eq:potentialDecay}.
    This allows the use of Fubini's theorem, that yields
    \begin{equation*}
        \big(V_L\!\ast\sns \abs{\Psi^{\sps t}_{\!\varrho,\sps L}}^2\big)\sns (\vec{x})=\varrho\!\sum_{\vec{k}\sps \in\,\Z^3}e^{i\sps \frac{2\pi}{L} \vec{k}\,\cdot\,\vec{x}}\,\FT{V}_\infty\sns \big(\tfrac{2\pi}{L}\sps \vec{k}\big)\mspace{-12mu}\sum_{\vec{m},\sps \vec{n}\sps \in\,\Z^3}\mspace{-9mu}\conjugate*{\alpha^{\sps t}_{\varrho,\sps L}(\vec{m})}\sps \alpha^{\sps t}_{\varrho,\sps L}(\vec{n})\sns \integrate[\Lambda_L]{\frac{e^{-i\sps \frac{2\pi}{L} \vec{y}\,\cdot\sps (\vec{m}-\vec{n}+\vec{k})}}{L^3};\!d\vec{y}}.
    \end{equation*}
    Computing the integral, one gets
    \begin{equation}\label{eq:nonlinearityFourierCoefficients}
        \big(V_L\!\ast\sns \abs{\Psi^{\sps t}_{\!\varrho,\sps L}}^2\big)\sns (\vec{x})=\varrho\!\sum_{\vec{k}\sps \in\,\Z^3}e^{i\sps \frac{2\pi}{L} \vec{k}\,\cdot\,\vec{x}}\,\FT{V}_\infty\sns \big(\tfrac{2\pi}{L}\sps \vec{k}\big)\sps \beta^{\sps t}_{\varrho,\sps L}(\vec{k}).
    \end{equation}
    Here $\beta^{\sps t}_{\varrho,\sps L}\!\in\sns \ell_1(\Z^3)$ denotes the quantity
    \begin{equation}\label{def:autocorrelation}
        \beta^{\sps t}_{\varrho,\sps L}(\vec{k})\vcentcolon=\!\!\sum_{\vec{m}\sps \in\,\Z^3}\!\conjugate*{\alpha^{\sps t}_{\varrho,\sps L}(\vec{m})}\sps \alpha^{\sps t}_{\varrho,\sps L}(\vec{m}+\vec{k}),
    \end{equation}
    and we shall refer to it as the \emph{auto-correlation} of $\alpha^{\sps t}_{\varrho,\sps L}$.
    \begin{prop}\label{th:autocorrelationproperties}
        Provided $\beta^{\sps t}_{\varrho,\sps L}$ defined by~\eqref{def:autocorrelation}, one has for all $t\sns \geq\sns 0$ and $\varrho,L\sns >\sns 0$
        \begin{enumerate}[label=\roman*), itemsep=4pt]
            \item $\beta^{\sps t}_{\varrho,\sps L}(\vec{0})=1$;
            \item $\conjugate*{\beta^{\sps t}_{\varrho,\sps L}(\vec{k})}=\beta^{\sps t}_{\varrho,\sps L}(-\vec{k})$, for all $\vec{k}\in\Z^3$;
            \item $\norm{\beta^{\sps t}_{\varrho,\sps L}}[\ell_\infty(\Z^3)]\leq 1$;
            \item $\varrho\,\norm{\beta^{\sps t}_{\varrho,\sps L}}[\ell_2(\Z^3)]^2\sns =\frac{1}{\varrho L^3}\norm{\Psi^{\sps t}_{\!\varrho,\sps L}}[4]^4\sps $.
        \end{enumerate}
        \begin{proof}
            Point \textit{i)} is a consequence of normalisation~\eqref{eq:FourierCoefficientsNormalization}.\newline
            Point \textit{ii)} can be proven by changing the variable inside the sum $\vec{m}+\vec{k}\longmapsto \vec{m}'$.\newline
            Point \textit{iii)} can be deduced by means of the Cauchy-Schwarz inequality.\newline
            In conclusion, one has
            \begin{align*}
                \integrate[\Lambda_L]{\abs{\Psi^{\sps t}_{\!\varrho,\sps L}(\vec{x})}^4;\!d\vec{x}}\sns &=\varrho^2\!\integrate[\Lambda_L]{\!\bigg(\sum_{\vec{m},\sps \vec{n}\sps \in\,\Z^3}\mspace{-9mu}e^{-i\sps \frac{2\pi}{L} \vec{x}\,\cdot\sps (\vec{m}-\vec{n})}\,\conjugate*{\alpha^{\sps t}_{\varrho,\sps L}(\vec{m})}\sps \alpha^{\sps t}_{\varrho,\sps L}(\vec{n})\!\bigg)^{\!2};\!d\vec{x}}\\
                &=\varrho^2\mspace{-33mu}\sum_{\vec{m},\sps \vec{m}'\sns ,\sps \vec{n},\sps \vec{n}'\in\,\Z^3}\mspace{-27mu}\conjugate*{\alpha^{\sps t}_{\varrho,\sps L}(\vec{m})}\sps \conjugate*{\alpha^{\sps t}_{\varrho,\sps L}(\vec{m}')}\sps \alpha^{\sps t}_{\varrho,\sps L}(\vec{n})\sps \alpha^{\sps t}_{\varrho,\sps L}(\vec{n}')\!\integrate[\Lambda_L]{e^{-i\sps \frac{2\pi}{L} \vec{x}\,\cdot\sps (\vec{m}\sps +\sps \vec{m}'-\sps \vec{n}-\sps \vec{n}')};\!d\vec{x}}\\
                &=\varrho^2 L^3\mspace{-33mu} \sum_{\vec{m},\sps \vec{m}'\sns ,\sps \vec{n},\sps \vec{n}'\in\,\Z^3}\mspace{-27mu}\conjugate*{\alpha^{\sps t}_{\varrho,\sps L}(\vec{m})}\sps \conjugate*{\alpha^{\sps t}_{\varrho,\sps L}(\vec{m}')}\sps \alpha^{\sps t}_{\varrho,\sps L}(\vec{n})\sps \alpha^{\sps t}_{\varrho,\sps L}(\vec{n}') \,\delta_{\vec{m}+\vec{m}'\sns ,\,\vec{n}+\vec{n}'}.
            \end{align*}
            With the substitution $(\vec{m}', \vec{n}')\longmapsto (\vec{n}+\vec{k}', \vec{m}+\vec{k})$, one gets
            \begin{align*}
                \integrate[\Lambda_L]{\abs{\Psi^{\sps t}_{\!\varrho,\sps L}(\vec{x})}^4;\!d\vec{x}}\sns &=\varrho^2 L^3\mspace{-27mu} \sum_{\vec{m},\sps \vec{n},\sps \vec{k},\sps \vec{k}'\in\,\Z^3}\nquad\conjugate*{\alpha^{\sps t}_{\varrho,\sps L}(\vec{m})}\sps \conjugate*{\alpha^{\sps t}_{\varrho,\sps L}(\vec{n}+\vec{k}')}\sps \alpha^{\sps t}_{\varrho,\sps L}(\vec{n})\sps \alpha^{\sps t}_{\varrho,\sps L}(\vec{m}+\vec{k}) \,\delta_{\vec{k},\,\vec{k}'}\\
                &=\varrho^2 L^3\!\sum_{\vec{k}\sps \in\,\Z^3}\sum_{\vec{m}\sps \in\,\Z^3}\conjugate*{\alpha^{\sps t}_{\varrho,\sps L}(\vec{m})}\sps \alpha^{\sps t}_{\varrho,\sps L}(\vec{m}+\vec{k})\!\sum_{\vec{n}\sps \in\,\Z^3}\conjugate*{\alpha^{\sps t}_{\varrho,\sps L}(\vec{n}+\vec{k})}\sps \alpha^{\sps t}_{\varrho,\sps L}(\vec{n})\\
                &=\varrho^2 L^3\!\sum_{\vec{k}\sps \in\,\Z^3}\abs{\beta^{\sps t}_{\varrho,\sps L}(\vec{k})}^2,
            \end{align*}
            which proves item~\textit{iv)}.

        \end{proof}
    \end{prop}
    We now derive the Hartree equation in momentum space.
    \begin{prop}\label{th:HartreeInMomentaRepresentaion}
        Given $t\longmapsto\Psi^{\sps t}_{\!\varrho,\sps L}\!\in C^{\mspace{0.75mu}1}\big(\mspace{0.75mu}[0,\infty),\sps \mathfrak{A}^0(\Lambda_L)\mspace{-0.75mu}\big)\mspace{-0.75mu}\cap C^{\sps 0}\big(\mspace{0.75mu}[0,\infty),\sps  \mathfrak{A}^2(\Lambda_L)\mspace{-0.75mu}\big)$ solving the Hartree equation~\eqref{eq:HartreePDE}, one has that the Fourier coefficients $\alpha^{\sps t}_{\varrho,\sps L}\!\in\sns \ell_1(\Z^3)$ defined by~\eqref{eq:decompositionBEC} fulfil
        \begin{equation}\label{eq:HartreeMomenta}
            i\sps \partial_t\sps \alpha^{\sps t}_{\varrho,\sps L}(\vec{n})=\tfrac{4\pi^2\abs{\vec{n}}^2\!\sns }{L^2}\,\alpha^{\sps t}_{\varrho,\sps L}(\vec{n})+\!\sum_{\vec{k}\sps \in\,\Z^3}\alpha^{\sps t}_{\varrho,\sps L}(\vec{n}\sns -\sns \vec{k})\,\FT{V}_\infty\sns \big(\tfrac{2\pi}{L}\sps \vec{k}\big)\sps \beta^{\sps t}_{\varrho,\sps L}(\vec{k}),
        \end{equation}
        where $\beta^{\sps t}_{\varrho,\sps L}\!\in\sns \ell_1(\Z^3)$ is the auto-correlation of $\alpha^{\sps t}_{\varrho,\sps L}$ introduced in~\eqref{def:autocorrelation}.
        \begin{note}\label{rmk:uselessPotential}
            This representation of the Hartree equation emphasises how the nonlinearity enables the coupling between low and high momenta, driven by the term involving all Fourier coefficients.
            Although the decay of the potential might seem to suppress this momentum transfer, significant suppression only occurs when $\abs{\vec{k}}\sns \gg\sns  L$, which is not that relevant, since our goal is to consider $L$ large.
        \end{note}
        \begin{proof}[Proof of Proposition~\ref{th:HartreeInMomentaRepresentaion}]
            First of all, we discuss the kinetic term.
            The well-posedness entails the uniform convergence in $\Lambda_L$ of the Fourier series~\eqref{eq:decompositionBEC}, since $\Psi^{\sps t}_{\!\varrho,\sps L}\!\in \sns \mathfrak{A}^2(\Lambda_L)$.
            Furthermore, the Fourier series associated with $\Delta \Psi^{\sps t}_{\!\varrho,\sps L}\!\in\sns \mathfrak{A}^0(\Lambda_L)$ converges uniformly in $\Lambda_L$ as well.
            Hence,
            \begin{equation}\label{eq:LaplaceBEC}
                -\Delta\Psi^{\sps t}_{\!\varrho,\sps L}=4\pi^2\sqrt{\varrho L^3\sps }\!\sum_{\vec{n}\sps \in\,\Z^3}\sns  \tfrac{\abs{\vec{n}}^2\!\sns }{L^2}\,\mathrm{f}_{L;\,\vec{n}} \,\alpha^{\sps t}_{\varrho,\sps L}(\vec{n}).
            \end{equation}
            Concerning the potential term, we can exploit the absolute convergence of the series~\eqref{eq:decompositionBEC} and~\eqref{eq:nonlinearityFourierCoefficients} to compute their product
            \begin{align*}
                \frac{1}{\varrho} \big(V_L\!\ast\sns \abs{\Psi^{\sps t}_{\!\varrho,\sps L}}^2\big)\sps  \Psi^{\sps t}_{\!\varrho,\sps L}&=\sqrt{\varrho L^3\sps }\!\!\sum_{\vec{n},\sps \vec{k}\sps \in\,\Z^3}\!\!\mathrm{f}_{L;\,\vec{k}+\vec{n}}\,\FT{V}_\infty\sns \big(\tfrac{2\pi}{L}\sps \vec{k}\big)\sps \beta^{\sps t}_{\varrho,\sps L}(\vec{k})\,\alpha^{\sps t}_{\varrho,\sps L}(\vec{n})\\
                &=\sqrt{\varrho L^3\sps }\!\sum_{\vec{m}\sps \in\,\Z^3}\mathrm{f}_{L;\,\vec{m}}\!\sum_{\vec{k}\sps \in\,\Z^3}\!\FT{V}_\infty\sns \big(\tfrac{2\pi}{L}\sps \vec{k}\big)\sps \beta^{\sps t}_{\varrho,\sps L}(\vec{k})\,\alpha^{\sps t}_{\varrho,\sps L}(\vec{m}\sns -\sns \vec{k}).
            \end{align*}
            To complete the proof, we have to justify the interchanging between the time derivative and summation.
            This requires the following identity to hold
            \begin{equation}\label{wts:derivativeInsideIntegral}
                i\sps \partial_t\sps  \alpha^{\sps t}_{\varrho,\sps L}(\vec{n})=\tfrac{1}{\!\sns \sqrt{\varrho L^3}\sps }\scalar{\mathrm{f}_{L;\,\vec{n}}}{i\sps \partial_t \Psi^{\sps t}_{\!\varrho,\sps L}}[2].
            \end{equation}
            If it is the case, the series $\sqrt{\varrho L^3\sps }\!\sns \sum\limits_{\vec{n}\sps \in\,\Z^3}\sns  \mathrm{f}_{L;\,\vec{n}} \,i\sps \partial_t\sps \alpha^{\sps t}_{\varrho,\sps L}(\vec{n})$ converges to $ i\sps \partial_t\Psi^{\sps t}_{\!\varrho,\sps L}$ in $\Lp{2}[\Lambda_L]$.\newline
            Since the map $t\longmapsto\partial_t\Psi^{\sps t}_{\!\varrho,\sps L}\!\in\sns  C^{\sps 0}\big(\mspace{0.75mu}[0,\infty),\sps  \mathfrak{A}^0(\Lambda_L)\mspace{-0.75mu}\big)$ is continuous both in time and in space -- being $\mathfrak{A}^0(\Lambda_L)$ a subset of $C^{\sps 0}(\Lambda_L)$ -- we can apply the Leibniz rule to prove the validity of~\eqref{wts:derivativeInsideIntegral}.
            This means that one has
            \begin{equation}\label{eq:HartreeDecomposedInMomenta}
                \begin{split}
                    \sqrt{\varrho L^3\sps }\!\sum\limits_{\vec{n}\sps \in\,\Z^3}\sns  \mathrm{f}_{L;\,\vec{n}} \,i\sps \partial_t\sps \alpha^{\sps t}_{\varrho,\sps L}(\vec{n})=&\:4\pi^2\sqrt{\varrho L^3\sps }\!\sum_{\vec{n}\sps \in\,\Z^3}\sns  \tfrac{\abs{\vec{n}}^2\!\sns }{L^2}\,\mathrm{f}_{L;\,\vec{n}} \,\alpha^{\sps t}_{\varrho,\sps L}(\vec{n})\sps +\\
                    &+\sns \sqrt{\varrho L^3\sps }\!\sum_{\vec{n}\sps \in\,\Z^3}\mathrm{f}_{L;\,\vec{n}}\!\sum_{\vec{k}\sps \in\,\Z^3}\alpha^{\sps t}_{\varrho,\sps L}(\vec{n}\sns -\sns \vec{k})\,\FT{V}_\infty\sns \big(\tfrac{2\pi}{L}\sps \vec{k}\big)\sps \beta^{\sps t}_{\varrho,\sps L}(\vec{k}),
                \end{split}
            \end{equation}
            in the sense that both sides of the equation converge to the same limit in the $\Lp{2}$-topology, owing to the fact that $\Psi^{\sps t}_{\!\varrho,\sps L}$ solves the Hartree equation~\eqref{eq:HartreePDE}.\newline
            In conclusion, since $\{\mathrm{f}_{L;\,\vec{n}}\}_{\vec{n}\sps \in\,\Z^3}\!$ is a complete orthonormal system, identity~\eqref{eq:HartreeDecomposedInMomenta} implies equality between the coefficients and the proof is complete.

        \end{proof}
    \end{prop}
    \begin{prop}\label{th:energyInFourierCoefficients}
       Consider $t\longmapsto\Psi^{\sps t}_{\!\varrho,\sps L}\!\in C^{\mspace{0.75mu}1}\big(\mspace{0.75mu}[0,\infty),\sps \mathfrak{A}^0(\Lambda_L)\mspace{-0.75mu}\big)\mspace{-0.75mu}\cap C^{\sps 0}\big(\mspace{0.75mu}[0,\infty),\sps  \mathfrak{A}^2(\Lambda_L)\mspace{-0.75mu}\big)$ solving the Hartree equation~\eqref{eq:HartreePDE}.
       Given the Hartree energy functional $\mathscr{E}_{\varrho,\sps L}$ introduced in~\eqref{def:HartreeFunctional}, one has
       \begin{equation}\label{eq:energtFourierCoefficients}
            \mathscr{E}_{\varrho,\sps L}[\Psi^{\sps t}_{\!\varrho,\sps L}]=\varrho L^3\!\sum_{\vec{n}\sps \in\,\Z^3} \!\left[\tfrac{4\pi^2\abs{\vec{n}}^2\!\sns }{L^2}\,\abs{\alpha^{\sps t}_{\varrho,\sps L}(\vec{n})}^2+\tfrac{1}{2}\sps \FT{V}_\infty\big(\tfrac{2\pi}{L}\sps \vec{n}\big)\abs{\beta^{\sps t}_{\varrho,\sps L}(\vec{n})}^2\right]\!,
        \end{equation}
        where $\alpha^{\sps t}_{\varrho,\sps L},\sps \beta^{\sps t}_{\varrho,\sps L}\!\in\sns \ell_1(\Z^3)$ have been defined in~\eqref{eq:decompositionBEC} and~\eqref{def:autocorrelation}, respectively.
        \begin{proof}
            For the kinetic term, one has
            \begin{align*}
                \integrate[\Lambda_L]{\abs{\nabla_{\!\vec{x}}\Psi^{\sps t}_{\!\varrho,\sps L}(\vec{x})}^2;\!d\vec{x}}&=\varrho\!\integrate[\Lambda_L]{\mspace{-12mu}\sum_{\vec{m},\sps \vec{n}\sps \in\,\Z^3}\!\!\tfrac{4\pi^2\!\sns }{L^2}\,\vec{m}\sns \cdot\sns \vec{n}\;\conjugate*{\alpha^{\sps t}_{\varrho,\sps L}(\vec{m})}\sps \alpha^{\sps t}_{\varrho,\sps L}(\vec{n})\,e^{-i\sps \frac{2\pi}{L} \vec{x}\,\cdot\sps (\vec{m}\sps -\sps \vec{n})} ;\!d\vec{x}}\\
                &=\varrho\mspace{-12mu}\sum_{\vec{m},\sps \vec{n}\sps \in\,\Z^3}\!\!\tfrac{4\pi^2\!\sns }{L^2}\,\vec{m}\sns \cdot\sns \vec{n}\;\conjugate*{\alpha^{\sps t}_{\varrho,\sps L}(\vec{m})}\sps \alpha^{\sps t}_{\varrho,\sps L}(\vec{n})\!\integrate[\Lambda_L]{e^{-i\sps \frac{2\pi}{L} \vec{x}\,\cdot\sps (\vec{m}\sps -\sps \vec{n})} ;\!d\vec{x}}.
            \end{align*}
            Here the exchange of the integral and the sum is justified by Fubini's theorem, since the Fourier series associated with $\nabla\Psi^{\sps t}_{\!\varrho,\sps L}\!\in\sns  \mathfrak{A}^0(\Lambda_L)$ converges absolutely.
            Hence,
            \begin{equation*}
                \integrate[\Lambda_L]{\abs{\nabla_{\!\vec{x}}\Psi^{\sps t}_{\!\varrho,\sps L}(\vec{x})}^2;\!d\vec{x}}=\varrho L^3\mspace{-12mu}\sum_{\vec{m},\sps \vec{n}\sps \in\,\Z^3}\!\!\tfrac{4\pi^2\!\sns }{L^2}\,\vec{m}\sns \cdot\sns \vec{n}\;\conjugate*{\alpha^{\sps t}_{\varrho,\sps L}(\vec{m})}\sps \alpha^{\sps t}_{\varrho,\sps L}(\vec{n})\,\delta_{\vec{m},\,\vec{n}}.
            \end{equation*}
            On the other hand, for the component involving the potential, one has
            \begin{align*}
                \frac{1}{2\varrho}\sns \integrate[\Lambda_L]{\big(V_L\!\ast\sns \abs{\Psi^{\sps t}_{\!\varrho,\sps L}}^2\big)\sns (\vec{x})\,\abs{\Psi^{\sps t}_{\!\varrho,\sps L}(\vec{x})}^2;\!d\vec{x}}=\frac{1}{2}\sns \integrate[\Lambda_L]{\abs{\Psi^{\sps t}_{\!\varrho,\sps L}(\vec{x})}^2\!\sum_{\vec{k}\sps \in\,\Z^3}e^{i\sps \frac{2\pi}{L} \vec{k}\,\cdot\,\vec{x}}\,\FT{V}_\infty\sns \big(\tfrac{2\pi}{L}\sps \vec{k}\big)\sps \beta^{\sps t}_{\varrho,\sps L}(\vec{k});\!d\vec{x}}\\
                =\frac{\varrho}{2}\sns \integrate[\Lambda_L]{\mspace{-15mu}\sum_{\vec{m},\sps \vec{n},\sps \vec{k}\sps \in\,\Z^3}\nquad e^{i\sps \frac{2\pi}{L} \vec{x}\,\cdot\sps (\vec{k}\sps +\sps \vec{n}-\vec{m})}\,\conjugate*{\alpha^{\sps t}_{\varrho,\sps L}(\vec{m})}\sps \alpha^{\sps t}_{\varrho,\sps L}(\vec{n})\sps \FT{V}_\infty\sns \big(\tfrac{2\pi}{L}\sps \vec{k}\big)\sps \beta^{\sps t}_{\varrho,\sps L}(\vec{k});\!d\vec{x}}\\
                =\frac{\varrho}{2}\!\sum_{\vec{k}\sps \in\,\Z^3}\!\FT{V}_\infty\sns \big(\tfrac{2\pi}{L}\sps \vec{k}\big)\sps \beta^{\sps t}_{\varrho,\sps L}(\vec{k})\mspace{-12mu}\sum_{\vec{m},\sps \vec{n}\sps \in\,\Z^3}\mspace{-6mu}\conjugate*{\alpha^{\sps t}_{\varrho,\sps L}(\vec{m})}\sps \alpha^{\sps t}_{\varrho,\sps L}(\vec{n})\!\integrate[\Lambda_L]{e^{i\sps \frac{2\pi}{L} \vec{x}\,\cdot\sps (\vec{k}\sps +\sps \vec{n}-\vec{m})};\!d\vec{x}}.
            \end{align*}
            Once again, the absolute convergence of each series allows the use of Fubini's theorem.
            Therefore, one has obtained
            \begin{align*}
                \frac{1}{2\varrho}\sns \integrate[\Lambda_L]{\big(V_L\!\ast\sns \abs{\Psi^{\sps t}_{\!\varrho,\sps L}}^2\big)\sns (\vec{x})\,\abs{\Psi^{\sps t}_{\!\varrho,\sps L}(\vec{x})}^2;\!d\vec{x}}&=\frac{\varrho L^3\!\sns }{2}\!\sum_{\vec{k}\sps \in\,\Z^3}\!\FT{V}_\infty\sns \big(\tfrac{2\pi}{L}\sps \vec{k}\big)\sps \beta^{\sps t}_{\varrho,\sps L}(\vec{k})\mspace{-12mu}\sum_{\vec{m},\sps \vec{n}\sps \in\,\Z^3}\mspace{-6mu}\conjugate*{\alpha^{\sps t}_{\varrho,\sps L}(\vec{m})}\sps \alpha^{\sps t}_{\varrho,\sps L}(\vec{n})\,\delta_{\vec{m},\,\vec{k}+\vec{n}}\\
                &=\frac{\varrho L^3\!\sns }{2}\!\sum_{\vec{k}\sps \in\,\Z^3}\!\FT{V}_\infty\sns \big(\tfrac{2\pi}{L}\sps \vec{k}\big)\sps \beta^{\sps t}_{\varrho,\sps L}(\vec{k})\sps \conjugate*{\beta^{\sps t}_{\varrho,\sps L}(\vec{k})},
            \end{align*}
            and the proof is concluded.
            
        \end{proof}
    \end{prop}
    Since $\mathscr{E}_{\varrho,\sps L}[\Psi^{\sps t}_{\!\varrho,\sps L}]$ is a conserved quantity, we can quantify the total energy by exploiting the information on the initial datum $\Psi_{\!\varrho,\sps L}$.
    \begin{prop}\label{th:quantifyTotalEnergy}
        Take into account Assumptions~\ref{ass:translationalInvariance},~\ref{ass:initialBEC},~\ref{ass:tailCondition} and the hypotheses of Proposition~\ref{th:energyInFourierCoefficients}.
        Then, one has
        $$\lim_{\varrho\to\infty}\limsup_{L\to\infty}\abs*{\frac{\mathscr{E}_{\varrho,\sps L}[\Psi^{\sps t}_{\!\varrho,\sps L}]}{\varrho L^3}-\frac{\mathfrak{b}}{2}}=0,\qquad \forall t\geq 0,$$
        where $\mathfrak{b}=\FT{V}_\infty(\vec{0})$.
        \begin{proof}
            Once we prove that
            \begin{equation}\label{wts:convergenceOfBeta}
                \lim_{\varrho\to\infty}\limsup_{L\to\infty} \norm*{\sps \abs{\beta^{\sps 0}_{\varrho,\sps L}}^2-\delta_{\vec{0}}}[\ell_1(\Z^3)]=0,
            \end{equation}
            the result is proven by exploiting Proposition~\ref{th:energyInFourierCoefficients} and Remark~\ref{rmk:consequencesOfCOnvergence}:
            \begin{align*}
                \limsup_{\varrho\to\infty}\limsup_{L\to\infty} \abs*{\tfrac{1}{\varrho L^3} \,\mathscr{E}_{\varrho,\sps L}[\Psi^{\sps t}_{\!\varrho,\sps L}]-\tfrac{\mathfrak{b}}{2}}\leq & \lim_{\varrho\to\infty}\limsup_{L\to\infty}\!\sum_{\vec{n}\sps \in\,\Z^3} \!\tfrac{4\pi^2\abs{\vec{n}}^2\!\sns }{L^2}\,\abs{\alpha^{\sps 0}_{\varrho,\sps L}(\vec{n})}^2\sps +\\[-5pt]
                &+\tfrac{1}{2}\limsup_{\varrho\to\infty}\limsup_{L\to\infty}\abs*{\sum_{\vec{n}\sps \in\,\Z^3}\!\FT{V}_\infty\big(\tfrac{2\pi}{L}\sps \vec{n}\big)\abs{\beta^{\sps 0}_{\varrho,\sps L}(\vec{n})}^2\sns -\mathfrak{b}}\\
                \leq & \, \tfrac{1}{2}\limsup_{\varrho\to\infty}\limsup_{L\to\infty}\!\sum_{\vec{n}\sps \in\,\Z^3}\!\abs*{\FT{V}_\infty\big(\tfrac{2\pi}{L}\sps \vec{n}\big)}\abs*{\sps \abs{\beta^{\sps 0}_{\varrho,\sps L}(\vec{n})}^2\sns -\delta_{\vec{n},\sps 
                \vec{0}}}\\
                \leq &\, \tfrac{\mathfrak{b}}{2}\limsup_{\varrho\to\infty}\limsup_{L\to\infty} \norm*{\sps \abs{\beta^{\sps 0}_{\varrho,\sps L}}^2\sns -\delta_{\vec{0}}}[\ell_1(\Z^3)]\sns .
            \end{align*}
            Therefore, we proceed to show~\eqref{wts:convergenceOfBeta}.
            To this end, we observe that for all $t\geq 0$, $\beta^{\sps t}_{\varrho,\sps L}(\vec{0})=1$ (item \textit{i)} of Proposition~\ref{th:autocorrelationproperties}), hence
            \begin{align*}
                \limsup_{\varrho\to\infty}\limsup_{L\to\infty} \norm*{\sps \abs{\beta^{\sps 0}_{\varrho,\sps L}}^2\sns -\delta_{\vec{0}}}[\ell_1(\Z^3)]\sns &=\limsup_{\varrho\to\infty}\limsup_{L\to\infty} \norm*{\beta^{\sps 0}_{\varrho,\sps L}}[\ell_2(\Z^3\smallsetminus\{\vec{0}\})]^2\\[-2.5pt]
                &\leq \limsup_{\varrho\to\infty}\limsup_{L\to\infty}\!\sum_{\substack{\vec{k}\sps \in\,\Z^3 \sps :\\[1.5pt] \vec{k}\neq\sps \vec{0}}}\abs*{\sum_{\vec{n}\sps \in\,\Z^3}\conjugate*{\alpha^{\sps 0}_{\varrho,\sps L}(\vec{n})}\sps \alpha^{\sps 0}_{\varrho,\sps L}(\vec{n}+\vec{k})}^2\\[-5pt]
                &\mspace{-168mu}\leq\limsup_{\varrho\to\infty}\limsup_{L\to\infty}\!\sum_{\substack{\vec{k}\sps \in\,\Z^3 \sps :\\[1.5pt] \vec{k}\neq\sps \vec{0}}}\abs*{\sps \sum_{\vec{n}\sps \in\,\Z^3}\!\conjugate*{r_{\varrho,\sps L}(\vec{n})}\sps r_{\varrho,\sps L}(\vec{n}+\vec{k})+e^{-i\sps \vartheta}\alpha^{\sps 0}_{\varrho,\sps L}(\vec{k}_0\sns +\vec{k})+e^{i\sps \vartheta}\,\conjugate*{\alpha^{\sps 0}_{\varrho,\sps L}(\vec{k}_0\sns -\vec{k})}\sps }^2\!\sns ,
            \end{align*}
            where $r_{\varrho,\sps L}(\vec{n})\vcentcolon=\alpha^{\sps 0}_{\varrho,\sps L}(\vec{n})-e^{i\sps \vartheta}\delta_{\vec{n},\sps \vec{k}_0}$.
            We recall that equation~\eqref{eq:becFourierCoefficients1} implies
            $$\lim_{\varrho\to\infty}\limsup_{L\to\infty}\,\norm{r_{\varrho,\sps L}}[\ell_2(\Z^3)]=0.$$
            We have obtained
            \begin{align*}
                \limsup_{\varrho\to\infty}\limsup_{L\to\infty} \norm*{\sps \abs{\beta^{\sps 0}_{\varrho,\sps L}}^2\sns -\delta_{\vec{0}}}[\ell_1(\Z^3)]\leq \limsup_{\varrho\to\infty}\limsup_{L\to\infty}\!\sum_{\substack{\vec{k}\sps \in\,\Z^3 \sps :\\[1.5pt] \vec{k}\neq\sps \vec{0}}} &\left\lvert\sps \sum_{\vec{n}\sps \in\,\Z^3}\!\conjugate*{r_{\varrho,\sps L}(\vec{n})}\sps r_{\varrho,\sps L}(\vec{n}+\vec{k})\sps +\right.\\
                &\left.\vphantom{\sum_{\Z^3}}\,+e^{-i\sps \vartheta}r_{\varrho,\sps L}(\vec{k}_0\sns +\vec{k})+e^{i\sps \vartheta}\,\conjugate*{r_{\varrho,\sps L}(\vec{k}_0\sns -\vec{k})}\sps \right\rvert^2\\
                &\mspace{-180mu}\leq 3\limsup_{\varrho\to\infty}\limsup_{L\to\infty}\, \norm{\sps \conjugate*{r_{\varrho,\sps L}}\ast r_{\varrho,\sps L}}[\ell_2(\Z^3)]^2\sns +6\lim_{\varrho\to\infty}\limsup_{L\to\infty}\,\norm{r_{\varrho,\sps L}}[\ell_2(\Z^3)]^2.
            \end{align*}
            Because of Young's inequality for convolutions, 
            $$\limsup_{\varrho\to\infty}\limsup_{L\to\infty} \norm*{\sps \abs{\beta^{\sps 0}_{\varrho,\sps L}}^2\sns -\delta_{\vec{0}}}[\ell_1(\Z^3)]\leq 3\limsup_{\varrho\to\infty}\limsup_{L\to\infty}\, \norm{r_{\varrho,\sps L}}[\ell_1(\Z^3)]^2\norm{r_{\varrho,\sps L}}[\ell_2(\Z^3)]^2.$$
            The proposition is therefore proven as soon as we show $\limsup\limits_{\varrho\to\infty}\limsup\limits_{L\to\infty}\, \norm{r_{\varrho,\sps L}}[\ell_1(\Z^3)]<\infty$.
            But we can even prove that such an iterated limit exists and is $0$, since the $\ell_2$-convergence implies pointwise convergence.
            Moreover, Assumption~\ref{ass:tailCondition} entails the control of high momenta, whereas we can make use of the sub-additivity of the limit superior for low momenta to compute the limit inside the sum (which is zero, by pointwise convergence).

        \end{proof}
    \end{prop}

    The next step is to obtain time-dependent estimates of the nonlinearity in terms of the initial datum.

    \subsection{Control of the Nonlinearity}\label{sec:controlOfNonlinearity}

    A key recurring quantity requiring estimation throughout our analysis is the supremum of the nonlinearity in the Hartree equation.
    Specifically, we must ensure this remains finite when $L$ and $\varrho$ are large, at least over a finite time interval. 
    In the momentum representation, identity~\eqref{eq:nonlinearityFourierCoefficients} yields the immediate upper bound
    \begin{equation}
        \tfrac{1}{\varrho}\,\norm{V_L\!\ast\sns \abs{\Psi^{\sps t}_{\!\varrho,\sps L}}^2}[\infty]\leq \!\sum_{\vec{k}\sps \in\,\Z^3}\!\abs*{\FT{V}_\infty\sns \big(\tfrac{2\pi}{L}\sps \vec{k}\big)\sns }\sps \abs{\beta^{\sps t}_{\varrho,\sps L}(\vec{k})}.
    \end{equation}
    Thus, controlling the r.h.s. suffices.
    We have already pointed out in Remark~\ref{rmk:uselessPotential} that the factor $\FT{V}_\infty\big(\tfrac{2\pi}{L}\sps \vec{k}\big)$ behaves effectively as a constant for large $L$; therefore, the decay of the potential is not helpful in controlling the nonlinearity.
    This suggests that we must rely solely on the summability of the auto-correlation of $\alpha^{\sps t}_{\varrho,\sps L}$, which is consistent with the preservation of the structure of a quasi-complete Bose-Einstein condensate.
    Indeed, Proposition~\ref{th:BECmomentum} demonstrates that quasi-complete condensation requires the momentum distribution to accumulate near a single mode when $\varrho$ and $L$ are large.
    This implies that the corresponding autocorrelation must converge to $\delta_{\vec{k},\sps \vec{0}}$, provided the decay in $\varrho$ and $L$ is sufficiently fast (so that the iterated limit can be computed inside the sum).
    Heuristically, in case the autocorrelation is close to being supported around zero, the nonlinearity is comparable with the energy per particle (since the autocorrelation and its square behave the same), and a uniform control in $\varrho,L\sns >\sns 0$ can be provided.
    Anyway, we must ensure that this structure behaves well along the time evolution.

    \smallskip
    
    \noindent Motivated thus, we aim to establish an upper bound of the $\ell_1$ norm of $\alpha^{\sps t}_{\varrho,\sps L}$, since
    \begin{equation}
        \tfrac{1}{\varrho}\,\norm{V_L\!\ast\sns \abs{\Psi^{\sps t}_{\!\varrho,\sps L}}^2}[\infty]\leq \big(S^{\sps t}_{\sns \varrho,\sps L}\big)^2\sps \mathfrak{b},
    \end{equation}
    where we recall that $\mathfrak{b}=\FT{V}_\infty(\vec{0})$ and we have introduced the shortcut
    \begin{equation}\label{def:FourierCoefficientsSumShortcut}
        S^{\sps t}_{\sns \varrho,\sps L}\vcentcolon=\norm{\alpha^{\sps t}_{\varrho,\sps L}}[\ell_1(\Z^3)].
    \end{equation}
    Indeed,
    \begin{equation}\label{eq:autoCorrelationControlledByS2}
        \norm{\beta^{\sps t}_{\varrho,\sps L}}[\ell_1(\Z^3)]\leq \sns \big(S^{\sps t}_{\sns \varrho,\sps L}\big)^2.
    \end{equation}
    Additionally, one has the lower bound
    \begin{equation}\label{eq:SumLowerBound}
        1=\norm{\alpha^{\sps t}_{\varrho,\sps L}}[\ell_2(\Z^3)]\leq \norm{\alpha^{\sps t}_{\varrho,\sps L}}[\ell_1(\Z^3)]=S^{\sps t}_{\sns \varrho,\sps L},
    \end{equation}
    and the quantity~\eqref{def:FourierCoefficientsSumShortcut} controls directly the supremum of the order parameter
    \begin{equation}\label{eq:SumLinftyControlled}
        \norm{\Psi^{\sps t}_{\!\varrho,\sps L}}[\infty]\leq \sqrt{\varrho\sps }\sps S^{\sps t}_{\sns \varrho,\sps L}.
    \end{equation}
    \begin{note}
        The following Propositions~\ref{th:nonlinearityControl} and~\ref{th:LaplaceNonlinearityControl} have already been implicitly addressed in Lemma~\ref{th:contractiveMap}.
        Such a lemma guarantees we can indeed exhibit a unique solution whose supremum in $t\sns \in\sns  [0,T]$ of the $\mathfrak{A}^2$-norm is controlled in terms of the initial data, provided $T\!<\sns T_\ast$ -- defined in~\eqref{eq:liminfLifeSpan} -- when $\varrho$ and $L$ are large enough.\newline
        In the following, we derive slightly more refined estimates providing an explicit pointwise control in time, which works for a strictly longer time interval.
        However, the main point of the discussion is to highlight the details of the momentum distribution of a quasi-complete Bose-Einstein condensate to achieve a deeper understanding of its time-dependent structure.
    \end{note}
    \begin{prop}\label{th:nonlinearityControl}
        Let $t\longmapsto\Psi^{\sps t}_{\!\varrho,\sps L}\!\in C^{\mspace{0.75mu}1}\big(\mspace{0.75mu}[0,\infty),\sps \mathfrak{A}^0(\Lambda_L)\mspace{-0.75mu}\big)\mspace{-0.75mu}\cap C^{\sps 0}\big(\mspace{0.75mu}[0,\infty),\sps  \mathfrak{A}^2(\Lambda_L)\mspace{-0.75mu}\big)$ solving the Hartree equation~\eqref{eq:HartreePDE}, and assume $\Psi^{\sps 0}_{\!\varrho,\sps L}$ is such that
        \begin{enumerate}[label=\roman*)]
            \item $\forall \epsilon > 0\quad \exists M\sns >\sns 0\,:\quad\limsup\limits_{\varrho\to\infty}\limsup\limits_{L\to\infty} \!\sum\limits_{\substack{\vec{m}\sps \in\,\Z^3 \sps :\\[1.5pt] \abs{\vec{m}}\sps  > M}} \abs{\alpha^{\sps 0}_{\varrho,\sps L}(\vec{m})}<\epsilon$;
            \item $\exists* \vec{k}_0\in\Z^3,\,\vartheta\in[0,2\pi)\,:\quad \lim\limits_{\varrho\to\infty}\limsup\limits_{L\to\infty}\sps \abs*{\alpha^{\sps 0}_{\varrho,\sps L}(\vec{m})-e^{i\sps \vartheta}\delta_{\vec{m},\sps \vec{k}_0}}\sns =0,\quad \forall\vec{m}\in\Z^3$;
        \end{enumerate}
        where $\alpha^{\sps t}_{\varrho,\sps L}$ has been introduced in~\eqref{eq:decompositionBEC}.
        Then, for all $0\leq t< (2\sps \mathfrak{b})^{-1}$, with $\mathfrak{b}=\FT{V}_\infty(\vec{0})$, we have
        $$\limsup_{\varrho\to\infty}\limsup_{L\to\infty} S^{\sps t}_{\sns \varrho,\sps L}\leq \frac{1}{\!\sns \sqrt{1-2\sps \mathfrak{b}\,t\sps }\sps },$$
        where $S^{\sps t}_{\sns \varrho,\sps L}$ has been defined in~\eqref{def:FourierCoefficientsSumShortcut}.
        \begin{note}\label{rmk:equivalenceOfConditions}
            Condition \textit{i)} $\land$ \textit{ii)} is equivalent to
            $$\lim_{\varrho\to\infty}\limsup_{L\to\infty}\,\norm*{\alpha^{\sps 0}_{\varrho,\sps L}\sns -\sps e^{i\sps \vartheta}\delta_{\vec{k}_0}}[\ell_1(\Z^3)]\!=0.$$
            In particular, the $\Rightarrow\!)$ direction is a consequence of the sub-additivity of the limit superior, while for $\Leftarrow\sns )$ direction one has only to prove the validity of \textit{i)}, since \textit{ii)} is trivial, owing to the fact that $\ell_1$-convergence implies pointwise convergence.
            Note that for any $M$ large enough so that $\abs{\vec{k}_0}\sns \leq\sns  M$, one has
            \begin{align*}
                \limsup_{\varrho\to\infty}\limsup_{L\to\infty}\!\!\sum_{\substack{\vec{m}\sps \in\,\Z^3 \sps :\\[1.5pt] \abs{\vec{m}}\sps >M}} \!\abs{\alpha^{\sps 0}_{\varrho,\sps L}(\vec{m})}&\leq \limsup_{\varrho\to\infty}\limsup_{L\to\infty}\!\!\sum_{\substack{\vec{m}\sps \in\,\Z^3 \sps :\\[1.5pt] \vec{m}\neq\sps  \vec{k}_0}} \!\abs{\alpha^{\sps 0}_{\varrho,\sps L}(\vec{m})}\\[-2.5pt]
                &\leq \lim_{\varrho\to\infty}\limsup_{L\to\infty}\,\norm*{\alpha^{\sps 0}_{\varrho,\sps L}\sns -\sps e^{i\sps \vartheta}\delta_{\vec{k}_0}}[\ell_1(\Z^3)]\!=0.
            \end{align*}
            Hence,
            $$\lim_{M\to\infty}\limsup_{\varrho\to\infty}\limsup_{L\to\infty}\!\!\sum_{\substack{\vec{m}\sps \in\,\Z^3 \sps :\\[1.5pt] \abs{\vec{m}}\sps >M}} \!\abs{\alpha^{\sps 0}_{\varrho,\sps L}(\vec{m})}=0,$$
            which corresponds to condition \textit{i)}.
        \end{note}
        \begin{proof}[Proof of Proposition~\ref{th:nonlinearityControl}]
            First, we claim that the Duhamel formula associated with equation~\eqref{eq:HartreeMomenta} is given by
            \begin{equation}\label{eq:DuhamelFourierCoefficients}
                \alpha^{\sps t}_{\varrho,\sps L}(\vec{n})=e^{-i \sps \frac{4\pi^2\abs{\vec{n}}^2\!\sns }{L^2}\,t}\,\alpha^{\sps 0}_{\varrho,\sps L}(\vec{n})-i\!\integrate[0;t]{e^{-i\sps \frac{4\pi^2\abs{\vec{n}}^2\!\sns }{L^2}\sps (t-s)}\!\sum_{\vec{k}\sps \in\,\Z^3}\alpha^{\sps s}_{\varrho,\sps L}(\vec{n}\sns -\sns \vec{k})\,\FT{V}_\infty\sns \big(\tfrac{2\pi}{L}\sps \vec{k}\big)\sps \beta^{\sps s}_{\varrho,\sps L}(\vec{k});\sps ds}.
            \end{equation}
            This can be verified by multiplying both sides of the equation by the factor $e^{i\sps \frac{4\pi^2\abs{\vec{n}}^2\!\sns }{L^2}\,t}$, and differentiating with respect to time.\newline
            Summing absolute values over $\vec{n}\in\Z^3$ across both sides of~\eqref{eq:DuhamelFourierCoefficients} gives
            \begin{align*}
                S^{\sps t}_{\sns \varrho,\sps L}\leq S^{\sps 0}_{\sns \varrho,\sps L}+\integrate[0;t]{\!\sns \sum_{\vec{k}\sps \in\,\Z^3} \abs*{\FT{V}_\infty\sns \big(\tfrac{2\pi}{L}\sps \vec{k}\big)\sns }\sps \abs{\beta^{\sps s}_{\varrho,\sps L}(\vec{k})}\,S^{\sps s}_{\sns \varrho,\sps L};\sps ds}.
            \end{align*}
            Hence, applying~\eqref{eq:autoCorrelationControlledByS2}
            \begin{equation*}
                S^{\sps t}_{\sns \varrho,\sps L}\leq S^{\sps 0}_{\sns \varrho,\sps L}+\mathfrak{b}\!\integrate[0;t]{(S^{\sps s}_{\sns \varrho,\sps L})^3;\sps ds}=\vcentcolon \tilde{S}^{\sps t}_{\sns \varrho,\sps L}.
            \end{equation*}
            This implies
            \begin{align*}
                S^{\sps t}_{\sns \varrho,\sps L}\sns \leq \tilde{S}^{\sps t}_{\sns \varrho,\sps L},\qquad S^{\sps 0}_{\sns \varrho,\sps L}\sns = \tilde{S}^{\sps 0}_{\sns \varrho,\sps L}, && \partial_t\mspace{0.75mu} \tilde{S}^{\sps t}_{\sns \varrho,\sps L} \!=  (S^{\sps t}_{\sns \varrho,\sps L})^3\sps \mathfrak{b}\leq  (\tilde{S}^{\sps t}_{\sns \varrho,\sps L})^3\sps \mathfrak{b}.
            \end{align*}
            Equivalently, since $\tilde{S}^{\sps t}_{\sns \varrho,\sps L}\sns \geq 1$ (\cfr~equation~\eqref{eq:SumLowerBound})
            $$\partial_t\sns  \left[-\frac{1}{2(\tilde{S}^{\sps t}_{\sns \varrho,\sps L})^2}-\mathfrak{b}\, t\right]\sns \leq 0,$$
            which means that the function of $t$ in the square bracket is non-increasing.
            Therefore,
            \begin{align}
                &-\frac{1}{2(\tilde{S}^{\sps t}_{\sns \varrho,\sps L})^2}-\mathfrak{b}\, t\leq -\frac{1}{2(S^{\sps 0}_{\sns \varrho,\sps L})^2}\nonumber\\
                \implies\quad &S^{\sps t}_{\sns \varrho,\sps L}\leq \tilde{S}^{\sps t}_{\sns \varrho,\sps L}\leq \frac{S^{\sps 0}_{\sns \varrho,\sps L}}{\!\sns \sqrt{1-2\sps (S^{\sps 0}_{\sns \varrho,\sps L})^2\sps \mathfrak{b}\,t\sps }\sps },\qquad t < \frac{1}{2\sps (S^{\sps 0}_{\sns \varrho,\sps L})^2\sps \mathfrak{b}}.\label{eq:SumNonlinearityControlInTime}
            \end{align}
            Due to the summability of $\alpha^{\sps t}_{\varrho,\sps L}$, we know that for fixed $\varrho,L\sns >\sns 0$ the sum $S^{\sps t}_{\sns \varrho,\sps L}$ cannot actually blow up at finite times.
            However, in principle $S^{\sps t}_{\sns \varrho,\sps L}$ could grow indefinitely with $\varrho$ and/or $L$, without further information.
            Denoting by $\sps \overline{\sns S}^{\,0}\vcentcolon=\limsup\limits_{\varrho\to\infty}\limsup\limits_{L\to\infty} S^{\sps 0}_{\sns \varrho,\sps L}$, we can only deduce that
            $$\limsup_{\varrho\to\infty}\limsup_{L\to\infty} S^{\sps t}_{\sns \varrho,\sps L}\leq \frac{\sps \overline{\sns S}^{\,0}}{\!\sns \sqrt{1-2\sps (\sps \overline{\sns S}^{\,0})^2\sps \mathfrak{b}\,t\sps }\sps }, \qquad t \sns <\sns  \frac{1}{\vphantom{\big|}2\sps (\sps \overline{\sns S}^{\,0})^2\sps \mathfrak{b}}.$$
            In the last step we exploited the fact that the limit superior commutes with non-decreasing, continuous functions.\newline
            To conclude, we prove that $\sps \overline{\sns S}^{\,0}\!=1$.
            To this end, we split the sum
            $$\sps \overline{\sns S}^{\,0}\!\leq\limsup_{\varrho\to\infty}\limsup_{L\to\infty} \!\sum_{\substack{\vec{m}\sps \in\,\Z^3 \sps :\\[1.5pt] \abs{\vec{m}} \leq M}} \abs{\alpha^{\sps 0}_{\varrho,\sps L}(\vec{m})}+\limsup_{\varrho\to\infty}\limsup_{L\to\infty} \!\sum_{\substack{\vec{m}\sps \in\,\Z^3 \sps :\\[1.5pt] \abs{\vec{m}}\sps  > M}} \abs{\alpha^{\sps 0}_{\varrho,\sps L}(\vec{m})}.$$
            For the first term, we make use of the sub-additivity of the limit superior, while the second is controlled by assumption, so that
            $$\sps \overline{\sns S}^{\,0}\leq \!\!\sum_{\substack{\vec{m}\sps \in\,\Z^3 \sps :\\[1.5pt] \abs{\vec{m}} \leq M}} \limsup_{\varrho\to\infty}\limsup_{L\to\infty}\,\abs{\alpha^{\sps 0}_{\varrho,\sps L}(\vec{m})}+\epsilon.$$
            Furthermore, by assumption \textit{ii)} we have concentration in a single mode, namely
            $$\sps \overline{\sns S}^{\,0}\leq \!\!\sum_{\substack{\vec{m}\sps \in\,\Z^3 \sps :\\[1.5pt] \abs{\vec{m}} \leq M}}\mspace{9mu} \underbrace{\mspace{-9mu}\limsup_{\varrho\to\infty}\limsup_{L\to\infty}\,\abs*{\abs{\alpha^{\sps 0}_{\varrho,\sps L}(\vec{m})}-\delta_{\vec{m},\sps \vec{k}_0}}\mspace{-9mu}}_{=\sps 0}\mspace{9mu}\,+\!\!\sum_{\substack{\vec{m}\sps \in\,\Z^3 \sps :\\[1.5pt] \abs{\vec{m}} \leq M}}\!\!\delta_{\vec{m},\sps \vec{k}_0}+\epsilon.$$
            Thus,
            $$1\leq\sps \overline{\sns S}^{\,0}\leq 1+\epsilon,$$
            but since the inequality holds for any arbitrary $\epsilon\sns >\sns 0$, the result follows.

        \end{proof}
    \end{prop}
    We stress that having the closedness of $\abs{\beta^{\sps 0}_{\varrho,\sps L}}$ to $\{\delta_{\vec{k},\sps \vec{0}}\}_{\vec{k}\sps \in\,\Z^3}$ does not, by itself, imply the existence of a quasi-complete Bose-Einstein condensate.
    For instance, take into account the example given by
    $$\alpha^{\sps 0}_{\varrho,\sps L}(\vec{n})=\delta_{\vec{n},\sps (\lfloor L\rfloor,\sps 0,\sps 0)},$$
    where we have only one mode directed along the $\mathrm{x}$-axis which increases with $L$.
    Clearly, this satisfies the conditions $\norm{\alpha^{\sps 0}_{\varrho,\sps L}}[\ell_2(\Z^3)]=1$ and $\beta^{\sps 0}_{\varrho,\sps L}(\vec{k})=\delta_{\vec{k},\sps \vec{0}}$, but there is no condensation, since this mode escapes to infinity, and therefore
    $$\lim_{L\to\infty} \,\abs{\alpha^{\sps 0}_{\varrho,\sps L}(\vec{n})}=0,\qquad\forall \vec{n}\in\Z^3,\,\varrho>0,$$
    meaning that condition~\eqref{eq:becFourierCoefficients1} is not fulfilled.
    On the other hand, it is not true that we want to forbid escaping modes in general.
    Indeed, a prototype of momentum distribution we allow in our problem is
    $$\alpha^{\sps 0}_{\varrho,\sps L}(\vec{n})=\sqrt{\tfrac{\varrho}{\varrho\mspace{2.25mu}+1}\sps }\sps \delta_{\vec{n},\sps \vec{k}_0}+\sqrt{\tfrac{1}{\varrho\mspace{2.25mu}+1}\sps }\sps \delta_{\vec{n},\sps  (\lfloor L\rfloor,\sps 0,\sps 0)},\qquad\vec{k}_0\sns \in\Z^3.$$
    In this case, the first term is the one that yields the complete Bose-Einstein condensate when $\varrho$ is large, while the second term provides a non-vanishing contribution to the kinetic energy per particle when $\varrho$ is fixed
    $$\lim_{L\to\infty}\sum_{\vec{n}\sps \in\,\Z^3}\!\tfrac{4\pi^2\abs{\vec{n}}^2\!\sns }{L^2}\,\abs{\alpha^{\sps 0}_{\varrho,\sps L}(\vec{n})}^2=\tfrac{4\pi^2}{\varrho\,+\sps 1}.$$
    Clearly, we expect this kind of excitation energy to vanish when $\varrho$ also goes to infinity and the condensate is complete.
    This means that we want to avoid too fast escaping momenta (\cfr~the hypotheses of Proposition~\ref{th:LaplaceNonlinearityControl} below) whose corresponding kinetic energy does not vanish in the iterated limit.\newline
    However, in order for $\alpha^{\sps 0}_{\varrho,\sps L}$ to be associated with a quasi-complete Bose-Einstein condensate, it is necessary that the amplitudes associated with every mode -- but a certain $\vec{k}_0\sns \in\Z^3$ -- vanish in the iterated limit.

    \bigskip

    Another important object we need to control is the supremum of the convolution between the potential and the square of the Laplacian of the order parameter; and by Young's convolution inequality   \begin{equation}
        \tfrac{1}{\varrho}\,\norm{V_L\!\ast\sns \abs{\Delta\Psi^{\sps t}_{\!\varrho,\sps L}}^2}[\infty]\leq \!\tfrac{1}{\varrho}\,\norm{V_L}[1]\,\norm{\Delta\Psi^{\sps t}_{\!\varrho,\sps L}}[\infty]^2.
    \end{equation}
    On the one hand, we recall that the $\Lp{1}$-norm of the potential on the torus is estimated from above by the $\Lp{1}$-norm of the function $V_\infty$, \ie~$\mathfrak{b}$.
    On the other hand, introducing the shortcut    \begin{equation}\label{def:FourierCoefficientsKineticShortcut}
        T^{\sps t}_{\sns \varrho,\sps L}\vcentcolon=\!\sum_{\vec{n}\sps \in\,\Z^3} \frac{4\pi^2\abs{\vec{n}}^2\!\sns }{L^2}\,\abs{\alpha^{\sps t}_{\varrho,\sps L}(\vec{n})},
    \end{equation}
    one has, by identity~\eqref{eq:LaplaceBEC}
    \begin{equation}\label{eq:KineticLinftyControlled}
        \norm{\Delta\Psi^{\sps t}_{\!\varrho,\sps L}}[\infty]\leq \sqrt{\varrho\sps }\sps  T^{\sps t}_{\sns \varrho,\sps L}.
    \end{equation}
    Thus, in order to prove that
    \begin{equation}
        \limsup_{\varrho\to\infty}\limsup_{L\to\infty} \tfrac{1}{\varrho}\,\norm{V_L\!\ast\sns \abs{\Delta\Psi^{\sps t}_{\!\varrho,\sps L}}^2}[\infty]<\infty,
    \end{equation}
    it is enough to prove the boundedness of $T^{\sps t}_{\sns \varrho,\sps L}$ as $\varrho$ and $L$ grow.
    This is the content of the following proposition.
    \begin{prop}\label{th:LaplaceNonlinearityControl}
        Consider the same assumptions of Proposition~\ref{th:nonlinearityControl}, and suppose also that there exists $c\sns >\sns 0$ such that
        $$\limsup_{\varrho\to\infty}\limsup_{L\to\infty}\!\sum_{\substack{\vec{m}\sps \in\,\Z^3 \sps :\\[1.5pt] \abs{\vec{m}}>\sps  c\sps  L}} \mspace{-6mu}\tfrac{4\pi^2\abs{\vec{m}}^2\!\sns }{L^2} \,\abs{\alpha^{\sps 0}_{\varrho,\sps L}(\vec{m})}<\infty.$$
        Then, one has for all $0\sns \leq\sns t\sns <\sns (2\sps \mathfrak{b})^{-1}$
        $$\limsup_{\varrho\to\infty}\limsup_{L\to\infty} T^{\sps t}_{\sns \varrho,\sps L}\sns <\infty, $$
        where $T^{\sps t}_{\sns \varrho,\sps L}$ has been defined in~\eqref{def:FourierCoefficientsKineticShortcut}.
        \begin{note}\label{rmk:necessityOfKineticTailCondition}
            In principle, it could seem that the \emph{extensivity} of the energy ensured by Proposition~\ref{th:quantifyTotalEnergy}, together with items~\textit{i)} and~\textit{ii)} of Proposition~\ref{th:nonlinearityControl}, could by themselves prove the result of Proposition~\ref{th:LaplaceNonlinearityControl} without further assumptions.
            Actually, this cannot be the case and therefore the additional hypothesis introduced in Proposition~\ref{th:LaplaceNonlinearityControl} is really needed.
            Indeed, consider the example
            $$\alpha^{\sps 0}_{\varrho,\sps L}(\vec{n})=\sqrt{\tfrac{\varrho}{\varrho\mspace{2.25mu}+1}\sps }\sps \delta_{\vec{n},\sps \vec{k}_0}+\sqrt{\tfrac{1}{\varrho\mspace{2.25mu}+1}\sps }\sps \delta_{\vec{n},\sps  (\lfloor\varrho^{\sps 3/8} L\rfloor,\sps 0,\sps 0)},\qquad\vec{k}_0\sns \in\Z^3.$$
            One can verify that this distribution of momenta is compatible with the definition of a quasi-complete Bose-Einstein condensate (\ie~it satisfies items~\textit{i)} and~\textit{ii)} of Proposition~\ref{th:nonlinearityControl}), and at the same time, the corresponding kinetic energy is finite for fixed $\varrho$ and vanishes when the density grows
            $$\lim_{L\to\infty} \sum_{\vec{n}\sps \in\,\Z^3}\!\tfrac{4\pi^2 \abs{\vec{n}}^2\!\sns }{L^2}\,\abs{\alpha^{\sps 0}_{\varrho,\sps L}(\vec{n})}^2=\tfrac{4\pi^2\varrho^{3/4}\!\sns }{\varrho\mspace{2.25mu}+1}.$$
            However, one has
            $$\lim_{L\to\infty} T^{\sps 0}_{\sns \varrho,\sps L}=\lim_{L\to\infty} \sum_{\vec{n}\sps \in\,\Z^3}\!\tfrac{4\pi^2 \abs{\vec{n}}^2\!\sns }{L^2}\,\abs{\alpha^{\sps 0}_{\varrho,\sps L}(\vec{n})}=\tfrac{4\pi^2\varrho^{3/4}\!\sns }{\!\sns \sqrt{\varrho\mspace{2.25mu}+1\sps }\sps },$$
            which diverges for $\varrho$ large.\newline
            As a matter of fact, the fastest possible escaping momentum we can allow in this example has magnitude of order $\varrho^{1/4} L\sps $.
        \end{note}
        \begin{proof}[Proof of Proposition~\ref{th:LaplaceNonlinearityControl}]
            Multiply both sides of equation~\eqref{eq:DuhamelFourierCoefficients} by the factor $\frac{4\pi^2\abs{\vec{n}}^2\!\sns }{L^2}$ and then take the absolute value and sum over $\vec{n}\in\Z^3$, to obtain
            \begin{align*}
                T^{\sps t}_{\sns \varrho,\sps L}\leq T^{\sps 0}_{\sns \varrho,\sps L}\,+&\!\sns \integrate[0;t]{\!\sns \sum_{\vec{k}\sps \in\,\Z^3}\sns \abs*{\FT{V}_\infty\sns \big(\tfrac{2\pi}{L}\sps \vec{k}\big)\sns }\sps \abs{\beta^{\sps s}_{\varrho,\sps L}(\vec{k})}\!\sum_{\vec{n}\sps \in\,\Z^3}\!\sns \tfrac{4\pi^2\abs{\vec{n}-\vec{k}\sps +\vec{k}}^2\!\sns }{L^2}\,\abs{\alpha^{\sps s}_{\varrho,\sps L}(\vec{n}\sns -\sns \vec{k})};\sps ds}\\
                \leq T^{\sps 0}_{\sns \varrho,\sps L}\,+&\:2\!\sns \integrate[0;t]{\!\sns \sum_{\vec{k}\sps \in\,\Z^3}\sns \abs*{\FT{V}_\infty\sns \big(\tfrac{2\pi}{L}\sps \vec{k}\big)\sns }\sps \abs{\beta^{\sps s}_{\varrho,\sps L}(\vec{k})}\,T^{\sps s}_{\sns \varrho,\sps L};\sps ds}\,+\\
                +&\:2\!\sns \integrate[0;t]{\!\sns \sum_{\vec{k}\sps \in\,\Z^3}\!\tfrac{4\pi^2\abs{\vec{k}}^2\!\sns }{L^2}\sps \abs*{\FT{V}_\infty\sns \big(\tfrac{2\pi}{L}\sps \vec{k}\big)\sns }\sps \abs{\beta^{\sps s}_{\varrho,\sps L}(\vec{k})}\,S^{\sps s}_{\sns \varrho,\sps L};\sps ds}\\
                \leq T^{\sps 0}_{\sns \varrho,\sps L}\,+&\:2\sps \mathfrak{b}\!\integrate[0;t]{\big(S^{\sps s}_{\sns \varrho,\sps L}\big)^2\,T^{\sps s}_{\sns \varrho,\sps L};\sps ds}+8\pi^2\!\sup_{\vec{k}\sps \in\,\Z^3} \!\Big[\tfrac{\abs{\vec{k}}^2\!}{L^2}\sps \abs*{\FT{V}_\infty\sns \big(\tfrac{2\pi}{L}\sps \vec{k}\big)\sns }\sns \Big]\sns \integrate[0;t]{\big(S^{\sps s}_{\sns \varrho,\sps L}\big)^3;\sps ds},
            \end{align*}
            where in the last step we have made use of~\eqref{eq:autoCorrelationControlledByS2}.
            Taking into account the decay of the potential~\eqref{eq:potentialDecay}, let us compute the maximum of the function
            $$[0,\infty)\sns \ni p\:\longmapsto \,\frac{4\pi^2 p^2\!\sns }{L^2}\,\frac{C}{\;\big(1+\tfrac{2\pi}{L}\sps p\big)^{3\sps +\sps \delta_2}\nquad}\quad.$$
            This function is non-negative, vanishing at $0$ and infinity; consequently, the unique critical point $p_c=\frac{L}{\pi(1+\delta_2)}$ is where the supremum is attained.
            Thus,
            \begin{equation}
                \sup_{\vec{k}\sps \in\,\Z^3} \!\Big[\tfrac{4\pi^2\abs{\vec{k}}^2\!}{L^2}\sps \abs*{\FT{V}_\infty\sns \big(\tfrac{2\pi}{L}\sps \vec{k}\big)\sns }\sns \Big]\leq 4\sps C\,\frac{\;(1+\sps \delta_2)^{1\sps +\sps \delta_2}\nquad}{\;(3+\sps \delta_2)^{\sps 3\sps +\sps \delta_2}\nquad}\quad < \frac{4}{27} \sps  C,
            \end{equation}
            which implies
            \begin{equation*}
                T^{\sps t}_{\sns \varrho,\sps L}\leq \underbrace{T^{\sps 0}_{\sns \varrho,\sps L}+\tfrac{8}{27}\sps C\!\sns \integrate[0;t]{\big(S^{\sps s}_{\sns \varrho,\sps L}\big)^3;\sps ds}}_{=\vcentcolon f_{\varrho,\sps L}(t)}\sns +\,2\sps \mathfrak{b}\!\integrate[0;t]{\big(S^{\sps s}_{\sns \varrho,\sps L}\big)^2\,T^{\sps s}_{\sns \varrho,\sps L};\sps ds}.
            \end{equation*}
            Here we can apply the integral version of Gr\"onwall's lemma, and since $t\longmapsto f_{\varrho,\sps L}(t)$ is non-decreasing regardless of $\varrho, L\sns >\sns 0$, one has
            \begin{equation*}
                T^{\sps t}_{\sns \varrho,\sps L}\leq f_{\varrho,\sps L}(t) \exp\sns \left[2\mathfrak{b}\!\sns \integrate[0;t]{\big(S^{\sps s}_{\sns \varrho,\sps L}\big)^2;\sps ds}\right]\sns .
            \end{equation*}
            Exploiting the upper bound~\eqref{eq:SumNonlinearityControlInTime}, one has for $0\sns \leq\sns  t\sns <\frac{1}{\vphantom{|^|}2\sps (S^{\sps 0}_{\sns \varrho,\sps L})^2\sps \mathfrak{b}}$
            \begin{align}
                T^{\sps t}_{\sns \varrho,\sps L}&\leq \left[T^{\sps 0}_{\sns \varrho,\sps L}+\tfrac{8}{27}\sps C\!\sns \integrate[0;t]{\frac{\big(S^{\sps 0}_{\sns \varrho,\sps L}\big)^3}{\big(1-2\sps (S^{\sps 0}_{\sns \varrho,\sps L})^2\sps \mathfrak{b}\,s\big)^{3/2}};\sps ds}\right] \exp\sns \left[\integrate[0;t]{\frac{2\sps \big(S^{\sps 0}_{\sns \varrho,\sps L}\big)^2\sps \mathfrak{b}}{1-2\sps (S^{\sps 0}_{\sns \varrho,\sps L})^2\sps \mathfrak{b}\,s};\sps ds}\right]\nonumber\\
                &=\left[T^{\sps 0}_{\sns \varrho,\sps L}+\tfrac{8}{27\,\mathfrak{b}}\sps C\, S^{\sps 0}_{\sns \varrho,\sps L} \left(\tfrac{1}{\!\sns \sqrt{1-2\sps (S^{\sps 0}_{\sns \varrho,\sps L})^2\sps \mathfrak{b}\,t\sps }\sps }-1\right)\!\right] \exp\!\Big[\sns -\ln\!\left(1-2\sps (S^{\sps 0}_{\sns \varrho,\sps L})^2\sps \mathfrak{b}\,t\right)\!\Big]\nonumber\\
                \implies T^{\sps t}_{\sns \varrho,\sps L}&\leq\frac{T^{\sps 0}_{\sns \varrho,\sps L}}{1-2\sps (S^{\sps 0}_{\sns \varrho,\sps L})^2\sps \mathfrak{b}\,t}+\tfrac{8}{27\,\mathfrak{b}}\sps C\, S^{\sps 0}_{\sns \varrho,\sps L}\! \left(\tfrac{1}{\left[1-2\sps (S^{\sps 0}_{\sns \varrho,\sps L})^2\sps \mathfrak{b}\,t\right]^{3/2}}-\tfrac{1}{\vphantom{\big[}1-2\sps (S^{\sps 0}_{\sns \varrho,\sps L})^2\sps \mathfrak{b}\,t}\right)\sns .\label{eq:SumKineticNonlinearityControlInTime}
            \end{align}
            The well-posedness prevents $T^{\sps t}_{\sns \varrho,\sps L}$ from blowing up at finite times, for fixed $\varrho,L\sns >\sns 0$.
            Nevertheless, recalling that $\limsup\limits_{\varrho\to\infty}\limsup\limits_{L\to\infty}\sps  S^{\sps 0}_{\sns \varrho,\sps L}=1$, one can only deduce that for $0\sns \leq \sns \ t\sns <\sns (2\sps \mathfrak{b})^{-1}$
            \begin{equation}
                \limsup_{\varrho\to\infty}\limsup_{L\to\infty} T^{\sps t}_{\sns \varrho,\sps L}\leq \frac{\overline{T}^{\sps 0}}{1-2\sps \mathfrak{b}\,t}+\tfrac{8}{27\,\mathfrak{b}}\sps C\sns  \left(\tfrac{1}{\left[1-2\sps \mathfrak{b}\,t\right]^{3/2}}-\tfrac{1}{\vphantom{[^{/}}1-2\sps \mathfrak{b}\,t}\right)\!,
            \end{equation}
            where $\overline{T}^{\sps 0}\sns \vcentcolon=\limsup\limits_{\varrho\to\infty}\limsup\limits_{L\to\infty}T^{\sps 0}_{\sns \varrho,\sps L}\sps $.
            We stress that the last step is justified by the increase of the function $\big[0,\frac{1}{a}\big)\sns \ni x\,\longmapsto\: (1-a\sps x)^{-3/2}-(1-a\sps x)^{-1}$, for any $a\sns >\sns 0$.

            \smallskip
            
            \noindent Finally, we demonstrate that $\overline{T}^{\sps 0}$ is finite.
            To this end, we split the sum into three pieces
            \begin{equation*}
                \overline{T}^{\sps 0}\!=\limsup_{\varrho\to\infty}\limsup_{L\to\infty}\Bigg[\,\sum_{\substack{\vec{m}\sps \in\,\Z^3 \sps :\\[1.5pt] \abs{\vec{m}}\leq M}}\!\!\tfrac{4\pi^2\abs{\vec{m}}^2\!\sns }{L^2}\,\abs{\alpha^{\sps 0}_{\varrho,\sps L}(\vec{m})}\,+\nquad\sum_{\substack{\vec{m}\sps \in\,\Z^3 \sps :\\[1.5pt] M<\abs{\vec{m}}\leq \,c\sps L}}\nquad\tfrac{4\pi^2\abs{\vec{m}}^2\!\sns }{L^2}\,\abs{\alpha^{\sps 0}_{\varrho,\sps L}(\vec{m})}\,+\!\!\!\sum_{\substack{\vec{m}\sps \in\,\Z^3 \sps :\\[1.5pt] \abs{\vec{m}}>\sps  c\sps L}}\!\!\tfrac{4\pi^2\abs{\vec{m}}^2\!\sns }{L^2}\,\abs{\alpha^{\sps 0}_{\varrho,\sps L}(\vec{m})}\Bigg].
            \end{equation*}
            The first term is a finite sum and we can exploit the sub-additivity of the limit superior together with the pointwise convergence of $\alpha^{\sps 0}_{\varrho,\sps L}$ (item~\textit{ii)} of the hypotheses of Proposition~\ref{th:nonlinearityControl}) to get
            \begin{equation*}
                \overline{T}^{\sps 0}\!\leq\limsup_{\varrho\to\infty}\limsup_{L\to\infty}\Bigg[\,\tfrac{4\pi^2\abs{\vec{k}_0}^2\!\sns }{L^2}\,\Char{\abs{\vec{k}_0}\leq M}\,+\,4\pi^2 c^2\mspace{-24mu}\sum_{\substack{\vec{m}\sps \in\,\Z^3 \sps :\\[1.5pt] M<\abs{\vec{m}}\leq \,c\sps L}}\nquad \abs{\alpha^{\sps 0}_{\varrho,\sps L}(\vec{m})}\,+\!\!\!\sum_{\substack{\vec{m}\sps \in\,\Z^3 \sps :\\[1.5pt] \abs{\vec{m}}>\sps  c\sps L}}\!\!\tfrac{4\pi^2\abs{\vec{m}}^2\!\sns }{L^2}\,\abs{\alpha^{\sps 0}_{\varrho,\sps L}(\vec{m})}\Bigg].
            \end{equation*}
            For the second term we adopt the item~\textit{i)} of the hypotheses of Proposition~\ref{th:nonlinearityControl}, so that
            \begin{equation*}
                \overline{T}^{\sps 0}\!\leq\limsup_{\varrho\to\infty}\limsup_{L\to\infty}\Bigg[\,\tfrac{4\pi^2\abs{\vec{k}_0}^2\!\sns }{L^2}\,\Char{\abs{\vec{k}_0}\leq M}\,+\,4\pi^2 c^2\,\epsilon\,+\!\!\!\sum_{\substack{\vec{m}\sps \in\,\Z^3 \sps :\\[1.5pt] \abs{\vec{m}}>\sps  c\sps L}}\!\!\tfrac{4\pi^2\abs{\vec{m}}^2\!\sns }{L^2}\,\abs{\alpha^{\sps 0}_{\varrho,\sps L}(\vec{m})}\Bigg].
            \end{equation*}
            In other words, we have proved that
            \begin{equation*}
                \overline{T}^{\sps 0}\!\leq4\pi^2\,\limsup_{\varrho\to\infty}\limsup_{L\to\infty}\!\!\sum_{\substack{\vec{m}\sps \in\,\Z^3 \sps :\\[1.5pt] \abs{\vec{m}}>\sps  c\sps L}}\!\!\tfrac{\abs{\vec{m}}^2\!\sns }{L^2}\,\abs{\alpha^{\sps 0}_{\varrho,\sps L}(\vec{m})},
            \end{equation*}
            which is finite by assumption.

        \end{proof}
    \end{prop}
    Now we are ready to prove\footnote{We would like to thank our colleague and friend Giuseppe Lipardi for helpful suggestions on this topic.} the existence of a macroscopic wave function playing the role of a Bose-Einstein condensate in the high-density thermodynamic limit for $t\sns >\sns 0$.
    In particular, we know that initially the macroscopic counterpart of the order parameter is a plane wave with momentum $\vec{k}_0\!\in\sns \Z^3$, owing to Proposition~\ref{th:BECmomentum}.
    In the following, we prove the closedness of the time-evolved macroscopic order parameter to a time-dependent plane wave with frequency $\mathfrak{b}$.
    \begin{prop}\label{th:propagationExistenceBEC}
        Consider Assumption~\ref{ass:initialBEC} and the hypotheses of Propositions~\ref{th:nonlinearityControl} and~\ref{th:LaplaceNonlinearityControl}, and let
        $$\Psi^{\sps t}: \,\vec{y}\:\longmapsto e^{2\pi \sps i\, \vec{k}_0\sps  \cdot\, \vec{y}\sps -\sps i\sps \mathfrak{b}\sps  t\sps +\sps i\sps \vartheta}\in H^1(\Lambda_1),\quad t\in\R,\,\vartheta\in[0,2\pi)$$
        be the vector satisfying
        $$\lim_{\varrho\to\infty}\limsup_{L\to\infty}\,\frac{1}{\varrho L^3}\norm*{\Psi_{\!\varrho,\sps L}\sns -\sqrt{\varrho\sps }\sps \Psi^{\sps 0}\sns \big(\mspace{-0.75mu}\tfrac{\vec{\cdot}}{L}\mspace{-0.75mu}\big)\sns }[H^1(\Lambda_L)]^2\sns =0.$$
        Then, for all $t< (2\sps \mathfrak{b})^{-1}$
        $$\lim_{\varrho\to\infty}\limsup_{L\to\infty}\,\frac{1}{\varrho L^3}\norm*{\Psi^{\sps t}_{\!\varrho,\sps L}\sns -\sqrt{\varrho\sps }\sps \Psi^{\sps t}\sns \big(\mspace{-0.75mu}\tfrac{\vec{\cdot}}{L}\mspace{-0.75mu}\big)\sns }[H^1(\Lambda_L)]^2\sns =0.$$
        \begin{proof}
            Define the wave function $$\Phi^{\sps t}_{\!\varrho,\sps L}:\,\vec{x}\:\longmapsto \sqrt{\varrho\sps }\sps  e^{\frac{2\pi \sps i}{L}\mspace{2.25mu} \vec{k}_0\sps  \cdot\, \vec{x}\sps -\sps i\sps \omega_L t\sps +\sps i\sps \vartheta}\in\mathfrak{A}^2(\Lambda_L), \qquad\text{with }\,\omega_L\vcentcolon=\tfrac{4\pi^2\abs{\vec{k}_0}^2\!}{L^2}+\norm{V_L}[1].$$
            By the dominated convergence theorem ($V_L\leq V_\infty$ because of the non-negativity of $V_\infty$), we have for all $\varrho\sns >\sns 0$
            \begin{equation}
                \lim_{L\to\infty}\frac{1}{\varrho L^3}\norm*{\Phi^{\sps t}_{\!\varrho,\sps L}-\sqrt{\varrho\sps }\sps \Psi^{\sps t}\big(\tfrac{\vec{\cdot}}{L}\big)}[H^1(\Lambda_L)]^2\!=0.
            \end{equation}
            Therefore, in order to test the convergence, it suffices to prove that the $H^1$-distance between $\Psi^{\sps t}_{\!\varrho,\sps L}$ and $\Phi^{\sps t}_{\!\varrho,\sps L}$ is smaller than $\sqrt{\varrho L^3}$.
            As an advantage, $\Phi^{\sps t}_{\!\varrho,\sps L}$ is the unique solution to the Hartree equation~\eqref{eq:HartreePDE} with the plane wave $\sqrt{\varrho\sps }\sps \Psi^{\sps 0}\mspace{-0.75mu}\big(\tfrac{\vec{\cdot}}{L})\!\in\sns \mathfrak{A}^2(\Lambda_L)$ as initial datum.\newline
            Let $u^{\sps t}_{\varrho,\sps L}\!\in\sns \mathfrak{A}^2(\Lambda_L)$ denote the difference $u^{\sps t}_{\varrho,\sps L}\!\vcentcolon=\Psi^{\sps t}_{\!\varrho,\sps L}\!-\Phi^{\sps t}_{\!\varrho,\sps L}$.
            It is straightforward to check that $u^{\sps t}_{\varrho,\sps L}$ solves the following nonlinear PDE
            \begin{equation}
                \begin{dcases}
                    i\sps \partial_t\sps  u^{\sps t}_{\varrho,\sps L}=(-\Delta + \norm{V_L}[1])\sps u^{\sps t}_{\varrho,\sps L}+\frac{1}{\varrho} \Big[V_L\!\ast\sns  \Big(\abs{u^{\sps t}_{\varrho,\sps L}}^2\!+2\sps \Re\sps \big(\conjugate{\Phi}^{\sps t}_{\!\varrho,\sps L}\, u^{\sps t}_{\varrho,\sps L}\big)\!\Big)\sns \Big](\Phi^{\sps t}_{\!\varrho,\sps L}\sns +u^{\sps t}_{\varrho,\sps L}),\qquad\text{in }\,\Lambda_L\\
                    u^{\sps 0}_{\varrho,\sps L}=\Psi_{\!\varrho,\sps L}-\sqrt{\varrho\sps }\sps \Psi^{\sps 0}\big(\tfrac{\vec{\cdot}}{L}\big)\in\mathfrak{A}^2(\Lambda_L).
                \end{dcases}
            \end{equation}
            We recall that by Assumption~\ref{ass:initialBEC} (see equation~\ref{eq:becFourierCoefficients1})
            \begin{equation}\label{eq:initialMassIsCloseToZero}
                \lim_{\varrho\to\infty}\limsup_{L\to\infty}\,\tfrac{1}{\!\sns \sqrt{\varrho L^3\sps }\sps }\norm{u^{\sps 0}_{\varrho,\sps L}}[2]=0.
            \end{equation}
            Our first goal is to prove that the same result holds for $t\sns >\sns 0$.\newline
            We introduce the shortcut $v^{\sps t}_{\varrho,\sps L}\vcentcolon=\abs{u^{\sps t}_{\varrho,\sps L}}^2+2\sps \Re\sps \big(\conjugate{\Phi}^{\sps t}_{\!\varrho,\sps L}\sps  u^{\sps t}_{\varrho,\sps L}\big)$, so that
            \begin{align*}
                \frac{d}{d\sps t}\norm{u^{\sps t}_{\varrho,\sps L}}[2]^2&=2\sps \Re\,\scalar{u^{\sps t}_{\varrho,\sps L}}{\partial_t \sps u^{\sps t}_{\varrho,\sps L}}[2]=2\sps \Im\,\scalar{u^{\sps t}_{\varrho,\sps L}}{(-\Delta+\norm{V_L}[1]) \sps u^{\sps t}_{\varrho,\sps L}+\tfrac{1}{\varrho} \big(V_L\!\ast\sns v^{\sps t}_{\varrho,\sps L}\big)(\Phi^{\sps t}_{\!\varrho,\sps L}\mspace{-2.25mu}+u^{\sps t}_{\varrho,\sps L})}[2]\\
                &=2\sps \Im\,\scalar{u^{\sps t}_{\varrho,\sps L}}{\tfrac{1}{\varrho} \big(V_L\!\ast\sns v^{\sps t}_{\varrho,\sps L}\big)\sps \Phi^{\sps t}_{\!\varrho,\sps L}}[2].
            \end{align*}
            By means of a Cauchy-Schwarz inequality and Young's inequality for convolutions, one gets
            $$\frac{d}{d\sps t}\norm{u^{\sps t}_{\varrho,\sps L}}[2]^2\leq \tfrac{2}{\varrho}\sps \norm{u^{\sps t}_{\varrho,\sps L}}[2]\,\norm{V_L}[1]\sps \norm{v^{\sps t}_{\varrho,\sps L}}[2]\,\norm{\Phi^{\sps t}_{\!\varrho,\sps L}}[\infty]\leq \tfrac{2\sps \mathfrak{b}}{\!\sns \sqrt{\varrho\sps }\sps }\sps \norm{u^{\sps t}_{\varrho,\sps L}}[2]\sps \norm{v^{\sps t}_{\varrho,\sps L}}[2].$$
            Hence, let us estimate the $\Lp{2}$ norm of $v^{\sps t}_{\varrho,\sps L}$
            $$\norm{v^{\sps t}_{\varrho,\sps L}}[2]\leq\norm{u^{\sps t}_{\varrho,\sps L}}[4]^2+2\sps \norm{\conjugate{\Phi}^{\sps t}_{\!\varrho,\sps L}\sps u^{\sps t}_{\varrho,\sps L}}[2]\leq \big(\norm{u^{\sps t}_{\varrho,\sps L}}[\infty]\sns +2\sps \sqrt{\varrho\sps }\sps \big)\norm{u^{\sps t}_{\varrho,\sps L}}[2].$$
            Since $\norm{u^{\sps t}_{\varrho,\sps L}}[\infty]\leq \norm{\Psi^{\sps t}_{\!\varrho,\sps L}}[\infty]+\norm{\Phi^{\sps t}_{\!\varrho,\sps L}}[\infty]\leq \sqrt{\varrho\sps }\sps \big(S^{\sps t}_{\sns \varrho,\sps L}\sns +1\big)$,
            \begin{equation}
                \norm{v^{\sps t}_{\varrho,\sps L}}[2]\leq\sqrt{\varrho\sps }\sps \big(S^{\sps t}_{\sns \varrho,\sps L}\sns +3\big)\norm{u^{\sps t}_{\varrho,\sps L}}[2].
            \end{equation}
            Thus,
            \begin{equation*}
                \frac{d}{d\sps t}\norm{u^{\sps t}_{\varrho,\sps L}}[2]^2\leq 2\sps \mathfrak{b}\big(S^{\sps t}_{\sns \varrho,\sps L}\sns +3\big)\norm{u^{\sps t}_{\varrho,\sps L}}[2]^2\sps ,
            \end{equation*}
            and by Gr{\"o}nwall's inequality
            \begin{equation}
                \norm{u^{\sps t}_{\varrho,\sps L}}[2]^2\leq \norm{u^{\sps 0}_{\varrho,\sps L}}[2]^2 \,\exp\!\bigg(6\sps \mathfrak{b}\sps t+2\mathfrak{b}\!\integrate[0;t]{S^{\sps s}_{\sns \varrho,\sps L};\sps  ds}\bigg).
            \end{equation}
            For an explicit estimate, we make use of~\eqref{eq:SumNonlinearityControlInTime} to obtain
            \begin{equation*}
                \norm{u^{\sps t}_{\varrho,\sps L}}[2]^2\leq \norm{u^{\sps 0}_{\varrho,\sps L}}[2]^2 \,\exp\!\bigg(6\sps \mathfrak{b}\sps t+\tfrac{2-2\sqrt{1-2\sps (S^{\sps 0}_{\sns \varrho,\sps L})^2\sps \mathfrak{b}\,t\sps }\sps }{S^{\sps 0}_{\sns \varrho,\sps L}}\bigg),\qquad 0\leq t<\big(2\sps (S^{\sps 0}_{\sns \varrho,\sps L})^2\sps \mathfrak{b}\big)^{-1}.
            \end{equation*}
            Since the argument of the exponential is an increasing function with respect to $S^{\sps 0}_{\sns \varrho,\sps L}$, one has for $t\sns <\sns  (2\sps \mathfrak{b})^{-1}$
            \begin{equation}\label{eq:convergenceBEC}
                \limsup_{\varrho\to\infty}\limsup_{L\to\infty}\,\tfrac{1}{\varrho L^3}\norm{u^{\sps t}_{\varrho,\sps L}}[2]^2\leq \lim_{\varrho\to\infty}\limsup_{L\to\infty}\,\tfrac{1}{\varrho L^3}\norm{u^{\sps 0}_{\varrho,\sps L}}[2]^2 \,\exp\!\Big(6\sps \mathfrak{b}\sps t+2-2\sqrt{1-2\sps \mathfrak{b}\,t\sps }\sps \Big),
            \end{equation}
            which implies the convergence of $\frac{1}{\!\sns \sqrt{\varrho\sps }\sps }\Psi^{\sps t}_{\!\varrho,\sps L}(L\,\vec{\cdot}\sps )$ in $\Lp{2}[\Lambda_1]$ because of~\eqref{eq:initialMassIsCloseToZero}.

            \smallskip
            
            \noindent Similarly, one can compute
            \begin{align*}
                \frac{d}{d\sps t}\norm{\nabla u^{\sps t}_{\varrho,\sps L}}[2]^2&=-2\sps \Re\,\scalar{\partial_t \sps u^{\sps t}_{\varrho,\sps L}}{\Delta u^{\sps t}_{\varrho,\sps L}}[2]=2\sps \Im\,\scalar{(-\Delta+\norm{V_L}[1]) \sps u^{\sps t}_{\varrho,\sps L}\sns +\tfrac{1}{\varrho} \big(V_L\!\ast\sns v^{\sps t}_{\varrho,\sps L}\big)(\Phi^{\sps t}_{\!\varrho,\sps L}\mspace{-2.25mu}+u^{\sps t}_{\varrho,\sps L})}{\Delta u^{\sps t}_{\varrho,\sps L}}[2]\\
                &=2\sps \Im\,\scalar{\tfrac{1}{\varrho} \big(V_L\!\ast\sns v^{\sps t}_{\varrho,\sps L}\big)(\Phi^{\sps t}_{\!\varrho,\sps L}\sns +u^{\sps t}_{\varrho,\sps L})}{\Delta u^{\sps t}_{\varrho,\sps L}}[2],
            \end{align*}
            since $u^{\sps t}_{\varrho,\sps L}\sns \in\mathfrak{A}^2(\Lambda_L)\subset W^{2,\sps \infty}(\Lambda_L)\subset H^2(\Lambda_L)$.
            Therefore,
            \begin{align*}
                \frac{d}{d\sps t}\norm{\nabla u^{\sps t}_{\varrho,\sps L}}[2]^2&\leq \tfrac{2}{\varrho}\sps \norm{V_L\!\ast\sns v^{\sps t}_{\varrho,\sps L}}[2]\sps \norm{\Psi^{\sps t}_{\!\varrho,\sps L}}[\infty]\,\norm{\Delta u^{\sps t}_{\varrho,\sps L}}[2]\leq \tfrac{2\sps \mathfrak{b}}{\!\sns \sqrt{\varrho\sps }\sps }\sps  S^{\sps t}_{\sns \varrho,\sps L}\sps \norm{v^{\sps t}_{\varrho,\sps L}}[2]\,L^{\frac{3}{2}}\norm{\Delta u^{\sps t}_{\varrho,\sps L}}[\infty]\\
                &\leq 2\sps \mathfrak{b}\, S^{\sps t}_{\sns \varrho,\sps L}\big(S^{\sps t}_{\sns \varrho,\sps L}\sns +3\big)\sps \norm{u^{\sps t}_{\varrho,\sps L}}[2]\,L^{\frac{3}{2}}\big(\norm{\Delta \Psi^{\sps t}_{\!\varrho,\sps L}}[\infty]+\norm{\Delta \Phi^{\sps t}_{\!\varrho,\sps L}}[\infty]\big)\\
                &\leq 2\sps \mathfrak{b}\sqrt{\varrho L^3\sps }\sps  S^{\sps t}_{\sns \varrho,\sps L}\big(S^{\sps t}_{\sns \varrho,\sps L}\sns +3\big)\big(T^{\sps t}_{\sns \varrho,\sps L}+\tfrac{4\pi^2\abs{\vec{k}_0}^2\!}{L^2}\sps \big)\norm{u^{\sps t}_{\varrho,\sps L}}[2].
            \end{align*}
            The last inequality implies that the function
            $$t\:\longmapsto \norm{\nabla u^{\sps t}_{\varrho,\sps L}}[2]^2-2\sps \mathfrak{b}\sqrt{\varrho L^3\sps }\!\sns \integrate[0;t]{S^{\sps s}_{\sns \varrho,\sps L}\big(S^{\sps s}_{\sns \varrho,\sps L}\sns +3\big)\big(T^{\sps s}_{\sns \varrho,\sps L}+\tfrac{4\pi^2\abs{\vec{k}_0}^2\!}{L^2}\sps \big)\norm{u^{\sps s}_{\varrho,\sps L}}[2];\sps ds}$$
            is non-increasing.
            Thus,
            \begin{equation}
                \norm{\nabla u^{\sps t}_{\varrho,\sps L}}[2]^2\leq\norm{\nabla u^{\sps 0}_{\varrho,\sps L}}[2]^2+ 2\sps \mathfrak{b}\sqrt{\varrho L^3\sps }\!\sns \integrate[0;t]{S^{\sps s}_{\sns \varrho,\sps L}\big(S^{\sps s}_{\sns \varrho,\sps L}\sns +3\big)\big(T^{\sps s}_{\sns \varrho,\sps L}+\tfrac{4\pi^2\abs{\vec{k}_0}^2\!}{L^2}\sps \big)\norm{u^{\sps s}_{\varrho,\sps L}}[2];\sps ds}.
            \end{equation}
            By taking into account~\eqref{eq:SumNonlinearityControlInTime} and~\eqref{eq:SumKineticNonlinearityControlInTime}, we can estimate both $S^{\sps s}_{\sns \varrho,\sps L}$ and $T^{\sps s}_{\sns \varrho,\sps L}$ from above.
            Specifically, such an upper bound is a continuous, increasing function $t\longmapsto h_{\varrho,\sps L}(t)$ such that for all $0\leq \sns t\sns <(2\sps \mathfrak{b})^{-1}$
            $$\limsup_{\varrho\to\infty}\limsup_{L\to\infty}\, h_{\varrho,\sps L}(t) = \overline{h}(t)< \infty.$$
            Consequently, one obtains for all $0\leq t<\big(2\sps (S^{\sps 0}_{\sns \varrho,\sps L})^2\sps \mathfrak{b}\big)^{-1}$
            $$\norm{\nabla u^{\sps t}_{\varrho,\sps L}}[2]^2\leq\norm{\nabla u^{\sps 0}_{\varrho,\sps L}}[2]^2+ 2\sps \mathfrak{b}\sqrt{\varrho L^3\sps }h_{\varrho,\sps L}(t)\!\sns \integrate[0;t]{\norm{u^{\sps s}_{\varrho,\sps L}}[2];\sps ds}.$$
            By exploiting the conservation of mass, we point out that there exists the uniform integrable majorant
            $$s\longmapsto \frac{1}{\!\sns \sqrt{\varrho L^3\sps }\sps }\norm{u^{\sps s}_{\varrho,\sps L}}[2]\leq 2;$$
            therefore, by the reverse Fatou's lemma
            \begin{equation*}\begin{split}
                \limsup_{\varrho\to\infty}\limsup_{L\to\infty}\tfrac{1}{\varrho L^3}\norm{\nabla u^{\sps t}_{\varrho,\sps L}}[2]^2\leq &\,\lim_{\varrho\to\infty}\limsup_{L\to\infty}\tfrac{1}{\varrho L^3}\norm{\nabla u^{\sps 0}_{\varrho,\sps L}}[2]^2\,+\\ &+2\sps \mathfrak{b}\,\overline{h}(t)\!\sns \integrate[0;t]{\limsup_{\varrho\to\infty}\limsup_{L\to\infty}\tfrac{1}{\!\sns \sqrt{\varrho L^3\sps }\sps }\norm{u^{\sps s}_{\varrho,\sps L}}[2];\sps ds},\qquad 0\leq t<(2\sps \mathfrak{b})^{-1}.
            \end{split}
            \end{equation*}
            Making use of~\eqref{eq:convergenceBEC} and Assumption~\ref{ass:initialBEC}, the result is proven.

        \end{proof}
    \end{prop}
    We conclude this section with some remarks.
    \begin{note}\label{rmk:vanishingKineticBEC}
        The hypotheses of Proposition~\ref{th:LaplaceNonlinearityControl} (which are fulfilled by Assumptions~\ref{ass:translationalInvariance},~\ref{ass:tailCondition}, and~\ref{ass:kineticTailCondition}) force the kinetic energy per particle of the initial quasi-complete Bose-Einstein condensate to vanish when $\varrho$ is large.
        Indeed,
        \begin{align*}
            \sum_{\vec{m}\sps \in\,\Z^3}\!\tfrac{4\pi^2\abs{\vec{m}}^2\!\sns }{L^2} \,\abs{\alpha^{\sps 0}_{\varrho,\sps L}(\vec{m})}^2&=\tfrac{4\pi^2\abs{\vec{m}}^2\!\sns }{L^2} \,\abs{\alpha^{\sps 0}_{\varrho,\sps L}(\vec{k}_0)}^2+\nquad\sum_{\vec{m}\sps \in\,\Z^3\sps \smallsetminus\{\vec{k}_0\}}\nquad\tfrac{4\pi^2\abs{\vec{m}}^2\!\sns }{L^2} \,\abs{\alpha^{\sps 0}_{\varrho,\sps L}(\vec{m})}^2\\
            &\leq \tfrac{4\pi^2\abs{\vec{m}}^2\!\sns }{L^2} \,\abs{\alpha^{\sps 0}_{\varrho,\sps L}(\vec{k}_0)}^2+\max_{\vec{n}\sps \in\,\Z^3}\sns \left\{\tfrac{4\pi^2\abs{\vec{n}}^2\!\sns }{L^2} \,\abs{\alpha^{\sps 0}_{\varrho,\sps L}(\vec{n})}\right\}\nquad\sum_{\vec{m}\sps \in\,\Z^3\sps \smallsetminus\{\vec{k}_0\}}\nquad\abs{\alpha^{\sps 0}_{\varrho,\sps L}(\vec{m})}.
        \end{align*}
        The $\ell_1$-convergence to a single mode (see Remark~\ref{rmk:equivalenceOfConditions}) makes $\abs{\alpha^{\sps 0}_{\varrho,\sps L}(\vec{k}_0)}$ close to $1$ and the remaining series close to $0$, when both $L$ and $\varrho$ are large.
        Additionally, we know that $\big\{\tfrac{\abs{\vec{n}}^2\!\sns }{L^2} \,\abs{\alpha^{\sps 0}_{\varrho,\sps L}(\vec{n})}\big\}_{\vec{n}\sps \in\,\Z^3}$ has a maximum value that stays finite in the iterated limit, since $\limsup\limits_{\varrho\to\infty}\limsup\limits_{L\to\infty} T^{\sps 0}_{\sns \varrho,\sps L}<\infty$.\newline
        This argument shows that Assumption~\ref{ass:kineticTailCondition} strengthens the convergence to zero of the kinetic energy per particle, already provided by the convergence of the order parameter in the high-density thermodynamic limit:
        $$\lim_{\varrho\to\infty}\limsup_{L\to\infty}\sum_{\vec{m}\sps \in\,\Z^3}\!\sns \big(1+\tfrac{4\pi^2\abs{\vec{m}}^2\!}{L^2}\sps \big) \sns \abs*{\sps \alpha^{\sps 0}_{\varrho,\sps L}(\vec{m})-e^{i\sps \vartheta}\delta_{\vec{m},\sps \vec{k}_0}}^2\!=0.$$
        Furthermore, Proposition~\ref{th:propagationExistenceBEC} guarantees that, in the thermodynamic limit and for large density, the kinetic energy remains close to zero for the time evolution of the quasi-complete Bose-Einstein condensate as well.
    \end{note}
    \begin{note}
        We emphasise that all the results contained in this section, as well as Proposition~\ref{th:BECmomentum}, do not depend on the choice of the initial state, which is supposed to be the application of the Weyl operator to a quasi-vacuum state (a particular case of quasi-canonical coherent state) for the rest of the analysis.
        Indeed, the only requirement on which these outcomes rely (besides the technical Assumptions~\ref{ass:tailCondition},~\ref{ass:kineticTailCondition}) is the fact that the initial state exhibits quasi-complete condensation, \ie~Assumption~\ref{ass:initialBEC} (\cfr~Definition~\ref{def:QCBEC}).
    \end{note}
    \begin{note}\label{rmk:finiteTimeIntervalValidity}
        Propositions~\ref{th:nonlinearityControl},~\ref{th:LaplaceNonlinearityControl} and~\ref{th:propagationExistenceBEC} hold true for a finite time interval.
        At this stage, we cannot rule out that this constraint is just a technical issue, rather than a physical limitation.\newline
        These propositions would hold for all times if we had global-in-time control of the nonlinearity of the Hartree equation, specifically of $\norm{\beta^{\sps t}_{\varrho,\sps L}}[\ell_1(\Z^3)]$.
        What can be proven is the pointwise convergence of $\beta^{\sps t}_{\varrho,\sps L}$ to $\delta_{\vec{0}}\sps $, by computing a time derivative and then exploiting Gr\"onwall's lemma by means of energy conservation.
        However, to bound $\norm{\beta^{\sps t}_{\varrho,\sps L}}[\ell_1(\Z^3)]$ we need the same convergence in the stronger $\ell_1$-topology.\newline
        In other words, what is left to prove\footnote{Actually, it is sufficient to have control of high momenta $M\sns <\sns \abs{\vec{m}}\sns <\sns L^{1+\sps r}$, for some $r\sns >\sns 0$, since in that case there exists $\delta_2\! >\sns 0$ such that the potential suppresses the nonlinearity for $\abs{\vec{m}}\sns \geq\sns  L^{1+\sps r}$.} (maybe under additional suitable assumptions) is
        $$\lim_{M\to\infty}\limsup_{\varrho\to\infty}\limsup_{L\to\infty}\sum_{\substack{\vec{m}\sps \in\,\Z^3 \sps :\\[1.5pt] \abs{\vec{m}} > M}} \abs{\beta^{\sps t}_{\varrho,\sps L}(\vec{m})}=0,\qquad \forall t\sns \geq \sns  0.$$
    \end{note}

%---------------------------------------------
%  ######  ########  ######          #######  
% ##    ## ##       ##    ##        ##     ## 
% ##       ##       ##              ##        
%  ######  ######   ##              ########  
%       ## ##       ##              ##     ## 
% ##    ## ##       ##    ## ###    ##     ## 
%  ######  ########  ######  ###     #######  
%---------------------------------------------

\section{Control of Excitation Number}\label{sec:excitationsControl}

In this section, we discuss the development of the proof of Theorem~\ref{th:gamma1Convergence}.\newline
First, consider the operator
$$\adj{\mathcal{U}}_{\varrho,\sps L}(t)\big(\sps \mathcal{N}+\sps \mathcal{G}_{\varrho,\sps L}(t)\big)\sps \mathcal{U}_{\varrho,\sps L}(t),\,\dom{\mathcal{H}_{\varrho,\sps L}}\in\linear{\fockS\big(\Lp{2}[\Lambda_L]\big)},$$
where $\mathcal{U}_{\varrho,\sps L}\!\in\sns \bounded{\fockS\big(\Lp{2}[\Lambda_L]\big)}$ and $\mathcal{G}_{\varrho,\sps L},\dom{\mathcal{H}_{\varrho,\sps L}}$ have been defined in~\eqref{def:fluctuationDynamics1} and~\eqref{def:effectiveGenerator}, respectively.
Differentiating this operator with respect to time (in the strong-operator topology) yields (\cfr~\eqref{def:generator})
$$\adj{\mathcal{U}}_{\varrho,\sps L}(t)\big(\!-\sns i\,[\mathcal{N},\mathcal{L}_{\varrho,\sps L}(t)]+\sps \dot{\mathcal{G}}_{\varrho,\sps L}(t)\big)\sps \mathcal{U}_{\varrho,\sps L}(t),$$
because $\mathcal{G}_{\varrho,\sps L}(t)$ trivially commutes with $\mathcal{L}_{\varrho,\sps L}(t)$.
Here $\dot{\mathcal{G}}_{\varrho,\sps L}(t)$ denotes the densely defined operator associated with the (strong) time-derivative of $\mathcal{G}_{\varrho,\sps L}(t)$.
Indeed, $\mathcal{G}_{\varrho,\sps L}(t)$ is strongly continuous with respect to time, since $t\longmapsto \Psi^{\sps t}_{\!\varrho,\sps L}$ is continuously differentiable (and in particular, it is a Lipschitz continuous map).\newline
In principle, the domain of the commutator between two operators is the intersection between the domains of the two different compositions one can carry out with such operators; however, in this case, one has a significant cancellation in the difference, which allows defining an extension to a larger domain.
To avoid confusion, let $\mathrm{ad}_{\mathcal{N}}\big(\mathcal{L}_{\varrho,\sps L}(t)\mspace{-0.75mu}\big)$ denote such an extension of the commutator $[\mathcal{N}, \mathcal{L}_{\varrho,\sps L}(t)]$.
Its expression in the quadratic form representation is
\begin{equation}\label{def:commutatorNGenerator}
    \begin{split}
        \mathrm{ad}_{\mathcal{N}}\big(\mathcal{L}_{\varrho,\sps L}(t)\mspace{-0.75mu}\big)[\psi]\vcentcolon=&\:\frac{2\sps i}{\varrho}\,\Im\!\integrate[\Lambda^2_L]{V_L(\vec{x}\sns -\sns \vec{y})\,\Psi^{\sps t}_{\!\varrho,\sps L}(\vec{x})\Psi^{\sps t}_{\!\varrho,\sps L}(\vec{y})\,\scalar{a_{\vec{y}}a_{\vec{x}}\psi}{\psi};\!d\vec{x}d\vec{y}}\,+\\
        &+\frac{2\sps i}{\varrho}\,\Im\!\integrate[\Lambda^2_L]{V_L(\vec{x}\sns -\sns \vec{y})\,\Psi^{\sps t}_{\!\varrho,\sps L}(\vec{y})\,\scalar{a_{\vec{y}}a_{\vec{x}}\psi}{a_{\vec{x}}\psi};\!d\vec{x}d\vec{y}},\qquad \psi\in\dom{\mathcal{N}}.
    \end{split}
\end{equation}
By a density argument, for every $T\sns >\sns 0$ the above discussion yields the identity
\begin{equation}\label{eq:startGronwall}
    \frac{d}{d\sps t} \!\left(\mathbb{E}_{\,\mathcal{U}_{\varrho,\sps L}(t)\sps \xi}[\sps \mathcal{N}\sps ] + \mathcal{G}_{\varrho,\sps L}(t)[\sps \mathcal{U}_{\varrho,\sps L}(t)\sps \xi]\right)=-i\,\mathrm{ad}_{\mathcal{N}}\big(\mathcal{L}_{\varrho,\sps L}(t)\mspace{-0.75mu}\big)[\sps \mathcal{U}_{\varrho,\sps L}(t)\sps \xi]+\dot{\mathcal{G}}_{\varrho,\sps L}(t)[\sps \mathcal{U}_{\varrho,\sps L}(t)\sps \xi],
\end{equation}
for all $\xi\sns \in\sns \fdom{\mathcal{H}_{\varrho,\sps L}}\sns \subset\sns \dom{\mathcal{N}}$ and $t\sns \in\sns [0,T]$, where $\dot{\mathcal{G}}_{\varrho,\sps L}(t)[\sps \cdot\sps ]$ is defined in~\eqref{eq:derivativeEffectiveGenerator}.
Both sides of equation~\eqref{eq:startGronwall} are well-defined because $\mathcal{U}_{\varrho,\sps L}(t)$ leaves $\fdom{\mathcal{H}_{\varrho,\sps L}}$ invariant (see footnote~\ref{foo:invariantfdomH}, Section~\ref{sec:qcBEC}).

\medskip

\noindent We are now ready to control the expectation of the number operator in the fluctuation dynamics.
The first step is to find a proper upper bound for the r.h.s. of equation~\eqref{eq:startGronwall} in order to establish a Gr\"onwall inequality.
\begin{lemma}\label{th:numberPlusGeneratorGronwall}
    Let $t\longmapsto\Psi^{\sps t}_{\!\varrho,\sps L}\!\in C^{\mspace{0.75mu}1}\big(\mspace{0.75mu}[0,\infty),\mathfrak{A}^0(\Lambda_L)\mspace{-0.75mu}\big)\mspace{-0.75mu}\cap C^{\sps 0}\big(\mspace{0.75mu}[0,\infty), \mathfrak{A}^2(\Lambda_L)\mspace{-0.75mu}\big)$ be the unique solution to the Hartree equation~\eqref{eq:HartreePDE} fulfilling Assumptions~\ref{ass:translationalInvariance},~\ref{ass:tailCondition}, and~\ref{ass:kineticTailCondition}.\newline
    Then, given $S^{\sps t}_{\sns \varrho,\sps L}$ defined by~\eqref{def:FourierCoefficientsSumShortcut}, one has for every $0\sns <\sns T\sns <\sns \big(2\sps \FT{V}_\infty(\vec{0})\sps (S^{\sps 0}_{\sns \varrho,\sps L})^2\big)^{-1}$ that for all $\xi\sns \in\sns \fdom{\mathcal{H}_{\varrho,\sps L}}$ and $t\in[0,T]$ there exists $\omega_{\varrho,\sps L}\sns >\sns 0$ satisfying $\limsup\limits_{\varrho\to\infty}\limsup\limits_{L\to\infty}\omega_{\varrho,\sps L}<\infty$ such that
    $$\mathbb{E}_{\,\mathcal{U}_{\varrho,\sps L}(t)\sps \xi}[\sps \mathcal{N}\sps ] + \abs*{\mathcal{G}_{\varrho,\sps L}(t)[\sps \mathcal{U}_{\varrho,\sps L}(t)\sps \xi]}\leq e^{\omega_{\varrho,\sps L}\,t}\big(\,\mathbb{E}_{\,\xi}[\sps \mathcal{N}\sps ] + \abs*{\mathcal{G}_{\varrho,\sps L}(0)[\xi]}\sns \big)+\left(e^{\omega_{\varrho,\sps L}\,t}\sns -1\right)\!L^3\norm{\xi}^2,$$
    where $\mathcal{G}_{\varrho,\sps L}(t)[\sps \cdot\sps ]$ is the Hermitian quadratic form associated with the operator defined in~\eqref{def:effectiveGenerator}.
        
    \begin{proof}
        Observe that one can adopt the same computations carried out in Proposition~\eqref{th:generatorTermsAPEstimates} to have control of~\eqref{def:commutatorNGenerator}, obtaining
        \begin{equation}
            \begin{split}
                \abs*{\mathrm{ad}_{\mathcal{N}}\big(\mathcal{L}_{\varrho,\sps L}(t)\mspace{-0.75mu}\big)[\psi]}\leq &\: \tfrac{2}{\varrho}\,\norm{V_L\!\ast\sns \abs{\Psi^{\sps t}_{\!\varrho,\sps L}}^2}[\infty]\,\norm{\mathcal{N}^{\frac{1}{2}}\psi}^2+2\sqrt{\!\tfrac{\,L^3\!}{\varrho}\sps }\sps  \norm{V^{\sps 2}_L\sns \ast\sns \abs{\Psi^{\sps t}_{\!\varrho,\sps L}}^2}[\infty]^{\frac{1}{2}}\, \norm{\mathcal{N}^{\frac{1}{2}}\psi}\,\norm{\psi}\,+\\
                &+\tfrac{\sqrt{8}}{\varrho}\sps \norm{V_L\!\ast\sns \abs{\Psi^{\sps t}_{\!\varrho,\sps L}}^2}[\infty]^{\frac{1}{2}} \sqrt{\mathcal{V}_{\sns L}[\psi]\sps }\,\norm{\mathcal{N}^{\frac{1}{2}}\psi},\qquad \psi\in\dom{\mathcal{N}}.
            \end{split}
        \end{equation}
        Exploiting Young's inequality for convolutions and the fact that $\norm{V_L}[1]\!\leq\sns  \FT{V}_\infty(\vec{0})\sns =\sns \mathfrak{b}$, we get for $\psi\sns \in\sns \dom{\mathcal{N}}$
        \begin{align*}
                \abs*{\mathrm{ad}_{\mathcal{N}}\big(\mathcal{L}_{\varrho,\sps L}(t)\mspace{-0.75mu}\big)[\psi]}\leq &\: \tfrac{2\sps \mathfrak{b}}{\varrho}\,\norm{\Psi^{\sps t}_{\!\varrho,\sps L}}[\infty]^2\,\norm{\mathcal{N}^{\frac{1}{2}}\psi}^2+2\sqrt{\!\tfrac{\,L^3\!}{\varrho}\sps }\sps  \norm{V_L}[2]\,\norm{\Psi^{\sps t}_{\!\varrho,\sps L}}[\infty]\, \norm{\mathcal{N}^{\frac{1}{2}}\psi}\,\norm{\psi}\,+\\
                &+\tfrac{\!\sns \sqrt{8\sps \mathfrak{b}\sps }\sps }{\varrho}\sps \norm{\Psi^{\sps t}_{\!\varrho,\sps L}}[\infty] \sqrt{\mathcal{V}_{\sns L}[\psi]\sps }\,\norm{\mathcal{N}^{\frac{1}{2}}\psi}.\\
                \leq &\left[2\sns \left(1+\tfrac{2}{1-\epsilon}\sns \right)\!\mathfrak{b}+\tfrac{4\sps \varsigma}{2\sps \varsigma\sps -\sps 1}\sps \norm{V_L}[2]^2\right]\!\norm{\Psi^{\sps t}_{\!\varrho,\sps L}}[\infty]^2\,\mathbb{E}_{\sps \psi}[\sps \mathcal{N}\sps ]\sps +\\
                &+\tfrac{1-\varepsilon}{2\sps \varrho}\sps \mathcal{V}_{\sns L}[\psi]+\sns \left(\tfrac{1}{2}\sns -\sns \tfrac{1}{4\sps \varsigma}\right)\!L^3\norm{\psi}^2,\qquad\forall \varepsilon\in (0,1),\,\varsigma>\tfrac{1}{2}.
        \end{align*}
        Making use of~\eqref{eq:SumLinftyControlled}, we obtain
        \begin{equation*}
            \begin{split}
                \abs*{\mathrm{ad}_{\mathcal{N}}\big(\mathcal{L}_{\varrho,\sps L}(t)\mspace{-0.75mu}\big)[\psi]}\leq&\,\left[2\sns \left(1+\tfrac{2}{1-\epsilon}\sns \right)\!\mathfrak{b}+\tfrac{4\sps \varsigma}{2\sps \varsigma\sps -\sps 1}\sps \norm{V_L}[2]^2\right]\! \big(S^{\sps t}_{\sns \varrho,\sps L}\big)^2\,\mathbb{E}_{\sps \psi}[\sps \mathcal{N}\sps ]\sps +\\
                &+\tfrac{1-\varepsilon}{2}\sps \mathcal{H}_{\varrho,\sps L}[\psi]+\sns \left(\tfrac{1}{2}\sns -\sns \tfrac{1}{4\sps \varsigma}\right)\!L^3\norm{\psi}^2,
            \end{split}
        \end{equation*}
        where $S^{\sps t}_{\sns \varrho,\sps L}$ has been defined in~\eqref{def:FourierCoefficientsSumShortcut}.
        Here we can estimate the Hamiltonian in terms of $\mathcal{G}_{\varrho,\sps L}(t)[\sps \cdot\sps ]$, by means of Corollary~\ref{th:generatorAPEstimate}, so that for all $\varepsilon\in(0,1),$ and $\varsigma\sns >\sns \frac{1}{2}$
        \begin{equation*}
            \begin{split}
                \abs*{\mathrm{ad}_{\mathcal{N}}\big(\mathcal{L}_{\varrho,\sps L}(t)\mspace{-0.75mu}\big)[\psi]}\leq &\,\left[\sns \left(2+\tfrac{4}{1-\varepsilon}+\tfrac{3}{2}+\tfrac{1}{\varepsilon}\sns \right)\!\mathfrak{b}+\sns \left(\tfrac{4\sps \varsigma}{2\sps \varsigma\sps -\sps 1}+\tfrac{\varsigma\sps }{4}\sns \right)\!\norm{V_L}[2]^2\right]\! \big(S^{\sps t}_{\sns \varrho,\sps L}\big)^2\:\mathbb{E}_{\sps \psi}[\sps \mathcal{N}\sps ]\sps +\\
                &+\tfrac{1}{2}\,\mathcal{G}_{\varrho,\sps L}(t)[\psi]+\tfrac{\,L^3\!\sns }{2}\, \norm{\psi}^2,
            \end{split}
        \end{equation*}
        which implies that the following holds, by minimising with respect to $\varepsilon\in (0,1)$ and $\varsigma\sns >\sns \frac{1}{2}$
        \begin{equation}
            \abs*{\mathrm{ad}_{\mathcal{N}}\big(\mathcal{L}_{\varrho,\sps L}(t)\mspace{-0.75mu}\big)[\psi]}\leq\sns \left[\tfrac{25}{2}\sps \mathfrak{b}+\tfrac{25}{8}\,\norm{V_L}[2]^2\right]\! \big(S^{\sps t}_{\sns \varrho,\sps L}\big)^2\:\mathbb{E}_{\sps \psi}[\sps \mathcal{N}\sps ]+\tfrac{1}{2}\sps \mathcal{G}_{\varrho,\sps L}(t)[\psi]+\tfrac{\,L^3\!\sns }{2}\,\norm{\psi}^2.
        \end{equation}
        Applying Corollary~\ref{th:generatorDerivativeAPEstimates}, we follow the same procedure for $\dot{\mathcal{G}}_{\varrho,\sps L}(t)[\sps \cdot\sps ]$ (choosing $\varepsilon\sns =\sns \frac{1}{2}$ when using Corollary~\ref{th:generatorAPEstimate}), to obtain
        \begin{equation}
            \begin{split}
                \abs{\dot{\mathcal{G}}_{\varrho,\sps L}(t)[\psi]}\leq &\left[4\!\left(2\mathfrak{b}+\norm{V_L}[2]^2\right)\sns \mathfrak{b}^2\big(S^{\sps t}_{\sns \varrho,\sps L}\big)^6\!+4\mathfrak{b}^2\big(S^{\sps t}_{\sns \varrho,\sps L}\big)^4\!+\!\left(\tfrac{7\sps \mathfrak{b}}{2}+\norm{V_L}[2]^2\right)\!\big(S^{\sps t}_{\sns \varrho,\sps L}\big)^2+\right.\\
                &\left.+\,4\!\left(2\sps \mathfrak{b}+ \norm{V_L}[2]^2\right)\!\big(T^{\sps t}_{\sns \varrho,\sps L}\big)^2\!+6\sps \mathfrak{b}\,S^{\sps t}_{\sns \varrho,\sps L} T^{\sps t}_{\sns \varrho,\sps L}\right]\sns \mathbb{E}_{\sps \psi}[\sps \mathcal{N}\sps ]+\tfrac{1}{2}\sps \mathcal{G}_{\varrho,\sps L}(t)[\psi]+\tfrac{\,L^3\!\sns }{2}\,\norm{\psi}^2,
            \end{split}
        \end{equation}
        where $T^{\sps t}_{\sns \varrho,\sps L}$ has been introduced in~\eqref{def:FourierCoefficientsKineticShortcut}.
        Hence,
        \begin{align}
            \abs*{\mspace{-0.75mu}\left(\mspace{-0.75mu}\mathrm{ad}_{\mathcal{N}}\big(\mathcal{L}_{\varrho,\sps L}(t)\mspace{-0.75mu}\big)\sns +\dot{\mathcal{G}}_{\varrho,\sps L}(t)\!\right)\!\sns [\,\mathcal{U}_{\varrho,\sps L}(t)\sps \xi]}\sns \leq&\: \abs*{\mathrm{ad}_{\mathcal{N}}\big(\mathcal{L}_{\varrho,\sps L}(t)\mspace{-0.75mu}\big)[\,\mathcal{U}_{\varrho,\sps L}(t)\sps \xi]}+\abs{\dot{\mathcal{G}}_{\varrho,\sps L}(t)[\,\mathcal{U}_{\varrho,\sps L}(t)\sps \xi]}\nonumber\\
            \leq & \sns \left[4\!\left(2\sps \mathfrak{b}+\norm{V_L}[2]^2\right)\sns \mathfrak{b}^2\big(S^{\sps t}_{\sns \varrho,\sps L}\big)^6\!+4\mathfrak{b}^2\big(S^{\sps t}_{\sns \varrho,\sps L}\big)^4+\right.\nonumber\\
            &\;+\!\left(16\sps \mathfrak{b}+\tfrac{33}{8}\sps \norm{V_L}[2]^2\right)\!\big(S^{\sps t}_{\sns \varrho,\sps L}\big)^2+\nonumber\\
            &\;\left.+\,4\!\left(2\sps \mathfrak{b}+\norm{V_L}[2]^2\right)\!\big(T^{\sps t}_{\sns \varrho,\sps L}\big)^2\!+6\sps \mathfrak{b}\,S^{\sps t}_{\sns \varrho,\sps L} T^{\sps t}_{\sns \varrho,\sps L}\right]\sns \mathbb{E}_{\,\mathcal{U}_{\varrho,\sps L}(t)\sps \xi}[\sps \mathcal{N}\sps ]\sps +\\
            &\sns +\abs*{\mathcal{G}_{\varrho,\sps L}(t)[\sps \mathcal{U}_{\varrho,\sps L}(t)\sps \xi]}+L^3\sps  \norm{\xi}^2.\nonumber
        \end{align}
        Assumptions~\ref{ass:translationalInvariance},~\ref{ass:tailCondition}, and~\ref{ass:kineticTailCondition} fulfil the conditions of Propositions~\ref{th:nonlinearityControl} and~\ref{th:LaplaceNonlinearityControl}; therefore, taking into account equations~(\ref{eq:SumNonlinearityControlInTime}, \ref{eq:SumKineticNonlinearityControlInTime}), one finds that the time-dependent function in the square brackets can be bounded from above by an increasing function.
        Hereafter, $t\longmapsto h_{\varrho,\sps L}(t)\leq h_{\varrho,\sps L}(T)$ denotes such an increasing function, with $0\sns \leq\sns  t \sns \leq\sns T\sns <\sns \frac{1}{\vphantom{|^|}2\sps (S^{\sps 0}_{\sns \varrho,\sps L})^2\sps \mathfrak{b}}$.
        Thus,
        $$\abs*{\mspace{-0.75mu}\left(\mspace{-0.75mu}\mathrm{ad}_{\mathcal{N}}\big(\mathcal{L}_{\varrho,\sps L}(t)\mspace{-0.75mu}\big)\sns +\dot{\mathcal{G}}_{\varrho,\sps L}(t)\!\right)\!\sns [\,\mathcal{U}_{\varrho,\sps L}(t)\sps \xi]}\sns \leq \omega_{\varrho,\sps L}\!\left(\mathbb{E}_{\,\mathcal{U}_{\varrho,\sps L}(t)\sps \xi}[\sps \mathcal{N}\sps ]+\abs*{\mathcal{G}_{\varrho,\sps L}(t)[\sps \mathcal{U}_{\varrho,\sps L}(t)\sps \xi]}+\sns L^3\sps  \norm{\xi}^2\sns \right)\!,$$
        where
        \begin{equation}
            \omega_{\varrho,\sps L}\vcentcolon=\max\{1,h_{\varrho,\sps L}(T)\}
        \end{equation}
        satisfies $\limsup\limits_{\varrho\to\infty}\limsup\limits_{L\to\infty} \omega_{\varrho,\sps L}<\infty$.
        This follows because the limit superior commutes with non-decreasing, continuous functions such as $\max\{1, \sps \cdot\sps \}$, and because
        $$\limsup\limits_{\varrho\to\infty}\limsup\limits_{L\to\infty} S^{\sps 0}_{\sns \varrho,\sps L}\!=\sns 1,\qquad \limsup\limits_{\varrho\to\infty}\limsup\limits_{L\to\infty} T^{\sps 0}_{\sns \varrho,\sps L}\!<\sns \infty.$$
        Moreover, the $\Lp{2}$-norm of the potential is bounded in $L$, as pointed out in Remark~\ref{rmk:potentialL2}. 

        \smallskip

        \noindent Now, suppose $\xi\in\fdom{\mathcal{H}_{\varrho,\sps L}}$ is such that $\mathcal{G}_{\varrho,\sps L}(t)[\,\mathcal{U}_{\varrho,\sps L}(t)\sps \xi]=0$.
        In this case, we have
        $$\frac{d}{d\sps t} \,\mathbb{E}_{\,\mathcal{U}_{\varrho,\sps L}(t)\sps \xi}[\sps \mathcal{N}\sps ] = -i\,\mathrm{ad}_{\mathcal{N}}\big(\mathcal{L}_{\varrho,\sps L}(t)\mspace{-0.75mu}\big)[\sps \mathcal{U}_{\varrho,\sps L}(t)\sps \xi]\leq \omega_{\varrho,\sps L}\! \left(\mathbb{E}_{\,\mathcal{U}_{\varrho,\sps L}(t)\sps \xi}[\sps \mathcal{N}\sps ] + L^3 \norm{\xi}^2\right)$$
        and Gr\"onwall's lemma provides the result (one trivially has $0\sns \leq\sns  \abs{\mathcal{G}_{\varrho,\sps L}(0)[\xi]}$).\newline
        Finally, in case $\mathcal{G}_{\varrho,\sps L}(t)[\,\mathcal{U}_{\varrho,\sps L}(t)\sps \xi]\neq 0$, the following holds
        $$\frac{d}{d\sps t}\abs*{\mathcal{G}_{\varrho,\sps L}(t)[\,\mathcal{U}_{\varrho,\sps L}(t)\sps \xi]} \leq \abs*{\frac{d}{d\sps t}\, \mathcal{G}_{\varrho,\sps L}(t)[\,\mathcal{U}_{\varrho,\sps L}(t)\sps \xi]}=\abs*{\dot{\mathcal{G}}_{\varrho,\sps L}(t)[\,\mathcal{U}_{\varrho,\sps L}(t)\sps \xi]}\sns .$$
        Hence,
        \begin{align*}
            \frac{d}{d\sps t}\!\left(\mathbb{E}_{\,\mathcal{U}_{\varrho,\sps L}(t)\sps \xi}[\sps \mathcal{N}\sps ]+\abs*{\mathcal{G}_{\varrho,\sps L}(t)[\,\mathcal{U}_{\varrho,\sps L}(t)\sps \xi]}\right)\!&\leq \abs*{\mathrm{ad}_{\mathcal{N}}\big(\mathcal{L}_{\varrho,\sps L}(t)\mspace{-0.75mu}\big)[\,\mathcal{U}_{\varrho,\sps L}(t)\sps \xi]}+\abs*{\dot{\mathcal{G}}_{\varrho,\sps L}(t)[\,\mathcal{U}_{\varrho,\sps L}(t)\sps \xi]}\\
            &\leq \omega_{\varrho,\sps L}\!\left(\mathbb{E}_{\,\mathcal{U}_{\varrho,\sps L}(t)\sps \xi}[\sps \mathcal{N}\sps ]+\abs*{\mathcal{G}_{\varrho,\sps L}(t)[\sps \mathcal{U}_{\varrho,\sps L}(t)\sps \xi]}+\sns L^3\sps  \norm{\xi}^2\sns \right)\!.
        \end{align*}
        Also in this case, Gr\"onwall's lemma yields the result.

    \end{proof}
\end{lemma}
Now, all the ingredients required to control the number of excitations are in place.
\begin{proof}[Proof of Lemma~\ref{th:fluctuationsNumber}]
    Clearly, one has
    $$\mathbb{E}_{\,\mathcal{U}_{\varrho,\sps L}(t)\sps \xi_{\varrho,\sps L}}[\sps \mathcal{N}\sps ]\leq \mathbb{E}_{\,\mathcal{U}_{\varrho,\sps L}(t)\sps \xi_{\varrho,\sps L}}[\sps \mathcal{N}\sps ]+\abs*{\mathcal{G}_{\varrho,\sps L}(t)[\sps \mathcal{U}_{\varrho,\sps L}(t)\sps \xi_{\varrho,\sps L}]}\sns ,$$
    which is controlled by Lemma~\ref{th:numberPlusGeneratorGronwall} in terms of quantities evaluated for the initial quasi-vacuum state
    $$\mathbb{E}_{\,\mathcal{U}_{\varrho,\sps L}(t)\sps \xi_{\varrho,\sps L}}[\sps \mathcal{N}\sps ]\leq e^{\omega_{\varrho,\sps L}\,t}\big(\,\mathbb{E}_{\,\xi_{\varrho,\sps L}}[\sps \mathcal{N}\sps ] + \abs*{\mathcal{G}_{\varrho,\sps L}(0)[\xi_{\varrho,\sps L}]}\sns \big)+\left(e^{\omega_{\varrho,\sps L}\,t}\sns -1\right)\!L^3,$$
    where $\omega_{\varrho,\sps L}$ fulfils 
    $$\limsup_{\varrho\to\infty}\limsup_{L\to\infty} \omega_{\varrho,\sps L}<\infty.$$
    By Corollary~\ref{th:generatorAPEstimate}, for all $\psi\sns \in\sns \fdom{\mathcal{H}_{\varrho,\sps L}}$ and $\varepsilon\sns >\sns 0$
    \begin{equation}
        \abs*{\mathcal{G}_{\varrho,\sps L}(0)[\psi]}\leq (1\sns +\sns \varepsilon)\sps \mathcal{H}_{\varrho,\sps L}[\psi]+\!\left[\left(3+\tfrac{2}{\varepsilon}\right)\sns \norm{V_L}[1]+\tfrac{1}{4}\sps \norm{V_L}[2]^2\right]\sns \tfrac{\norm{\Psi_{\!\varrho,\sps L}}[\infty]^2\!\sns }{\varrho}\;\mathbb{E}_{\,\psi}[\sps \mathcal{N}\sps ]+L^3\norm{\psi}^2.
    \end{equation}
    For the sake of simplicity, take $\varepsilon=1$.
    Thus, recalling $\norm{V_L}[1]\leq \FT{V}_\infty(\vec{0})=\mathfrak{b}$
    $$\mathbb{E}_{\,\mathcal{U}_{\varrho,\sps L}(t)\sps \xi_{\varrho,\sps L}}[\sps \mathcal{N}\sps ]\leq e^{\omega_{\varrho,\sps L}\,t}\!\left[\,2\sps \mathcal{H}_{\varrho,\sps L}[\xi_{\varrho,\sps L}]+\sns \left(\sns \left(5\sps \mathfrak{b}+\tfrac{1}{4}\sps \norm{V_L}[2]^2\right)\sns \tfrac{\norm{\Psi_{\!\varrho,\sps L}}[\infty]^2}{\varrho}+1\right)\sns \mathbb{E}_{\,\xi_{\varrho,\sps L}}[\sps \mathcal{N}\sps ]\right]+\left(2\sps e^{\omega_{\varrho,\sps L}\,t}\sns -1\right)\!L^3.$$
    Dividing both sides by $\varrho L^3$ shows that the r.h.s. vanishes in the iterated limit, owing to Proposition~\ref{th:energyOfVoid} (whose conditions are satisfied under Assumptions~\ref{ass:initialBEC},~\ref{ass:energyConsistence},~\ref{ass:energyConsistence},~\ref{ass:tailCondition} and~\ref{ass:kineticTailCondition}), the boundedness of $\norm{V_L}[2]$ (see Remark~\ref{rmk:potentialL2}) and Assumption~\ref{ass:tailCondition}, which ensures that
    $$\limsup_{\varrho\to\infty}\limsup_{L\to\infty} \tfrac{\norm{\Psi_{\!\varrho,\sps L}}[\infty]\!\sns }{\sqrt{\varrho\sps }}\leq 1.$$

\end{proof}

\begin{proof}[Proof of Theorem~\ref{th:gamma1Convergence}]
    Consider the identity between kernels~\eqref{eq:gamma1MinusLimit}.
    Taking the Hilbert-Schmidt norms of the associated operators, applying the generalised triangle inequality for higher powers on the r.h.s.
    \begin{equation}
        \bigg\lvert\sum_{i\sps =1}^n a_i\bigg\rvert^p\!\!\leq n^{p-1}\sum_{i\sps =1}^n\, \abs{a_i}^p,\qquad \{a_i\}_{i\sps =1}^n\!\subset\C,\, p\geq 1,
    \end{equation}
    and recalling~\eqref{eq:NumberFromIntegral} yield
    $$\norm*{\gamma^{(1)}_{\varphi^{\sps t}_{\varrho,\sps L}}\!\sns -\tfrac{|\Psi^{\sps t}_{\!\varrho,\sps L}\rangle\langle \Psi^{\sps t}_{\!\varrho,\sps L}|}{\vphantom{|^{|}}\norm{\mathcal{N}^{1/2}\mathcal{W}(\Psi^{\sps 0}_{\!\varrho,\sps L})\sps \xi_{\varrho,\sps L}}^2\!\sns }\sps }[\mathrm{HS}]\!\leq\tfrac{\varrho L^3}{\norm{\vphantom{|^{|}}\mathcal{N}^{1/2}\mathcal{W}(\Psi^{\sps 0}_{\!\varrho,\sps L})\sps \xi_{\varrho,\sps L}}^2\!\sns }\,\sqrt{\tfrac{3}{(\varrho L^3)^2\!\sns }\sns \left(\mathbb{E}_{\,\mathcal{U}_{\varrho,\sps L}(t)\sps \xi_{\varrho,\sps L}}[\sps \mathcal{N}\sps ]\right)^{\!2}\!\sns +\tfrac{6}{\varrho L^3}\,\mathbb{E}_{\,\mathcal{U}_{\varrho,\sps L}(t)\sps \xi_{\varrho,\sps L}}[\sps \mathcal{N}\sps ]\sps }\sps .$$
    Since $\xi_{\varrho,\sps L}\!\in\sns \fdom{\mathcal{H}_{\varrho,\sps L}}$ is a quasi-vacuum state with respect to $\Psi^{\sps 0}_{\varrho,\sps L}$, the vector $\mathcal{W}(\Psi^{\sps 0}_{\!\varrho,\sps L})\sps \xi_{\varrho,\sps L}$ is a quasi-coherent state (with an expected number of particles close to $\varrho L^3$).
    Therefore, combining Lemma~\ref{th:fluctuationsNumber} and equation~\eqref{eq:limsupOfInverse}
    \begin{equation}
        \lim_{\varrho\to\infty}\limsup_{L\to\infty} \norm*{\gamma^{(1)}_{\varphi^{\sps t}_{\varrho,\sps L}}\!\sns -\tfrac{|\Psi^{\sps t}_{\!\varrho,\sps L}\rangle\langle \Psi^{\sps t}_{\!\varrho,\sps L}|}{\vphantom{|^{|}}\norm{\mathcal{N}^{1/2}\mathcal{W}(\Psi^{\sps 0}_{\!\varrho,\sps L})\sps \xi_{\varrho,\sps L}}^2\!\sns }\sps }[\mathrm{HS}]\!=0.
    \end{equation}
    Finally, making again use of~\eqref{eq:limsupOfInverse}, one has
    $$\norm*{\gamma^{(1)}_{\varphi^{\sps t}_{\varrho,\sps L}}\!\sns -\tfrac{|\Psi^{\sps t}_{\!\varrho,\sps L}\rangle\langle \Psi^{\sps t}_{\!\varrho,\sps L}|}{\varrho L^3}}[\mathrm{HS}]\!\leq \norm*{\gamma^{(1)}_{\varphi^{\sps t}_{\varrho,\sps L}}\!\sns -\tfrac{|\Psi^{\sps t}_{\!\varrho,\sps L}\rangle\langle \Psi^{\sps t}_{\!\varrho,\sps L}|}{\vphantom{|^{|}}\norm{\mathcal{N}^{1/2}\mathcal{W}(\Psi^{\sps 0}_{\!\varrho,\sps L})\sps \xi_{\varrho,\sps L}}^2\!\sns }\sps }[\mathrm{HS}]\! +\, \abs*{1-\tfrac{\varrho L^3}{\vphantom{|^{|}}\norm{\mathcal{N}^{1/2}\mathcal{W}(\Psi^{\sps 0}_{\!\varrho,\sps L})\sps \xi_{\varrho,\sps L}}^2\!\sns }\sps }\sns ,$$
    which vanishes in the iterated limit.
    
\end{proof}

\subsection*{Declarations}
\paragraph{Conflict of interest} There is no potential conflict of interest, as the authors have no financial or non-financial interests to disclose.
\paragraph{Data Availability} Data and code sharing is not applicable to this article, as no new data were created or analysed and no code was used in this study.
\paragraph{Funding} Both authors are financially supported by the European Research Council through the ERC Stg MaTCh, grant agreement no. 101117299.

%-----------------------------------------
%    ###    ########  ########  ##     ## 
%   ## ##   ##     ## ##     ##  ##   ##  
%  ##   ##  ##     ## ##     ##   ## ##   
% ##     ## ########  ########     ###    
% ######### ##        ##          ## ##   
% ##     ## ##        ##         ##   ##  
% ##     ## ##        ##        ##     ## 
%-----------------------------------------

%\appendix

%\vspace{2.5cm}
\bibliographystyle{plainnat}
\bibliography{references}

\end{document}